\begin{document}
\newtheorem{observation}{Observation}

\singlespacing

\title{Functional Forms for Tractable Economic Models \\
and the Cost Structure of International Trade\thanks{This paper replaces its previous working versions
 ``Pass-Through and Demand Forms''/``A Tractable Approach to Pass-Through Patterns''/``The Average-Marginal Relationship and Tractable Equilibrium Forms''. 
We are grateful to many colleagues and seminar participants
for helpful comments. We appreciate the research assistance of Konstantin
Egorov, Eric Guan, Franklin Liu, Eva Lyubich, Yali Miao, Daichi Ueda,
Ryo Takahashi, Huan Wang, and Xichao Wang. This research was funded by the Kauffman
Foundation, the Becker Friedman Institute for Research in Economics,
the Japan Science and Technology Agency and the Japan Society for
the Promotion of Science to which we are grateful. We are particularly
indebted to Jeremy Bulow for detailed discussion and for inspiring
this work and to James Heckman for advice on relevant theorems in
duration analysis and nonparametric estimation. All errors are our
own.}\vspace{5mm}}

\author{Michal Fabinger\thanks{Graduate School of Economics, University of Tokyo, 7-3-1 Hongo, Bunkyo-ku,
Tokyo 113-0033, Japan and CERGE-EI, Prague, the Czech Republic: fabinger@e.u-tokyo.ac.jp.}\and E. Glen Weyl\thanks{Microsoft Research New York City, 641 Avenue of the Americas, New York, 10011 USA and Department of Economics, Princeton University: glenweyl@microsoft.com.}}

\date{August 2018\vspace{0cm} }
\maketitle\thispagestyle{empty}
\begin{abstract}
We present functional forms allowing a broader range of analytic solutions to common economic equilibrium problems. These can increase the realism of pen-and-paper solutions or speed large-scale numerical solutions as computational subroutines. We use the latter approach to build a tractable heterogeneous firm model of international trade accommodating economies of scale in export and diseconomies of scale in production, providing a natural, unified solution to several puzzles concerning trade costs. We briefly highlight applications in a range of other fields. Our method of generating analytic solutions is a discrete approximation to a logarithmically modified Laplace transform of equilibrium conditions.

\end{abstract}
 \onehalfspacing

\newpage{}

 \section{Introduction}\label{SectionIntroduction}

Analytic solutions have played a major role in many fields of economics. They are useful both as closed-form, pencil-and-paper solutions to applied
theory models, and as components (subroutines) of larger models, making them more computationally tractable.\footnote{This later type of use is particularly important in the closely allied computationally-intensive field of Bayesian statistics
where closed-form tractable priors are typically used to approximate otherwise computationally intractable probability models.}
In this paper, we substantially expand the class of known analytic solutions to a broad class of standard economic models. { }We then illustrate
the application of these ideas to a computationally-intensive model of international trade that helps resolve a long-standing puzzle about trade
costs by allowing more realistic functional forms of such costs.\footnote{In this main, computationally
intensive application we find that analytic characterization of the solutions of sub-problems in larger-scale models is particularly useful in conjunction
with analytic-differentiation software, graphics processing units (GPUs), and related optimization algorithms. GPU computing has witnessed dramatic
developments over the last few years, which our work benefitted from.}

We observe that most frequently used functional forms that lead to analytic solutions in economics, namely linear and constant-elasticity functions,
share a convenient property: They preserve functional forms under transformations that we refer to as {``}average-marginal transformations{''}. That
is, the functional form of the average value of the function is the same as that of its derivative. { }Formally, we say that a functional form class
is preserved by average-marginal transformations if for any function \(F(q)\) the class also contains any linear combination of \(F(q)\) and \(q
F'(q)\).\footnote{A simple economic interpretation would be to identify \(F(q)\) with the price
\(P(q)\) of a good sold by a monopolist, i.e. with the \textit{ average} revenue the firm receives per unit sold, in which case \(F(q)+q F'(q)=P(q)+q
P'(q)\) is the \textit{ marginal} revenue. The name of the transformation is chosen to be consistent with this and similar examples. The Bulow-Pfleiderer
demand class \citep{bulow} discussed later is also invariant under average-marginal transformations.}
We then find all functions that have such property. 

Within this class, we identify functional forms that have a given level of {``}algebraic tractability{''}, a property we define. These are linear
combinations of power functions satisfying certain conditions. When used to represent demand and cost curves they lead to economic optimization conditions
that may be transformed to polynomial equations of a degree smaller than 5. These, in turn, may be solved explicitly by the method of radicals. This
substantially generalizes the simple analytic solutions that economists are familiar with in the case of constant marginal cost and either linear
or constant-elasticity demand. Even beyond degree 5, precise solutions to such polynomial equations are available at minimal cost in standard mathematical
software.

We show that elements of functional form classes preserved by average-marginal transformations also have advantageous properties when applied to
aggregation over heterogeneous firms in monopolistically competitive models: They lead to closed-form aggregation integrals under very flexible assumptions.
This means that a problem with a continuum of heterogeneous firms may be reduced to a set of explicit equations at the macroeconomic level.

In our method, the existence of closed-form solutions to optimization conditions sometimes requires parameter restrictions involving parameters both
from the supply side and the demand side. These restrictions may or may not be approximately satisfied in a particular market. { }If they are not
satisfied, one may be tempted to conclude that our method is not applicable. Most likely this is the reason why economists have not found (or have
not attempted to find) the kind of solutions we discuss in our paper.

We explain, however, that the range of applicability of our method is larger than it may seem at first sight as this issue does not pose a large
problem. Even if the parameter restrictions are not satisfied, analytic solutions { }at other parameter values may be used to construct an interpolation
that covers parameter values of interest. In this way one can extend the usefulness of our analytic method. Another way is to realize that a given
demand or cost function may be \textit{ approximated} by functions that satisfy our restrictions, in which case the restrictions are satisfied by
choice.\footnote{Yet way of extending the usefulness of the solutions is to use Taylor series
expansions around them, which may be useful for certain types of models.}

While our approach is useful in many fields of economics, as we illustrate, the main application we focus on in this paper belongs to the field of
international trade. Analytic tractability has been important for international trade to the extent that almost all models assumed constant marginal
costs of both production and logistics/shipping. Under such assumptions trade models are much more straightforward to solve. Yet, as we discuss in
detail, these assumptions contrast with models of cost used by the logistics managers that economists are presumably attempting to describe. As we
show, our functional forms preserve analytical tractability while allowing the realism of matching such models. 

Our primary application in this paper shows how such more realistic models of cost can help resolve the trade cost puzzle in a model of world trade
flows with heterogeneous firms.\footnote{Even though we do need considerable computational power
to fit our model to the data, without the tractability of our functional forms the computations would be significantly harder and we would not have
attempted to obtain a calibration of world trade flows that we discuss below.} Standard models of international
trade attribute the observed rapid falloff of trade flows with distance to trade costs that increase dramatically with distance. But we have no reason
to believe that such dramatic distance dependence of trade costs exists in the real world. Container shipping charges depend on distance only modestly,
and in any case, represent only a tiny fraction of the value of the transported goods. { }A similar statement holds for the so-called iceberg trade
costs, i.e., the damage of goods during transportation: We know goods typically do not get damaged during transport, and if they do, the damage probability
is unlikely to strongly increase with the distance a shipping container traveled over the ocean. While trade costs may be sizeable, they are much
more likely to be associated with shipment preparation and coordination or with loading and unloading, rather than with the distance traveled over
the ocean. For this reason, the rapid falloff of trade with distance represents a puzzle from the point of view of standard models of international
trade.\footnote{This puzzle in various forms has been discussed by many authors; see\, \,\citet{disdier2008puzzling} for an overview and \citet{head2013separates} for an in-depth discussion of the problem.}

Our model resolves this puzzle in a very natural way. Firms find it costly to produce larger quantities due to increasing marginal costs of production.
At the same time, they find it beneficial to concentrate their exports to a few destinations due to economies of scale in shipping. With this cost
structure, even a small cost advantage of a particular destination will be enough to make the firm export there instead of other destinations. If
trade costs are slightly smaller for closer destinations, this cost advantage will lead to substantially larger trade flows at smaller distances
and substantially smaller trade flows at larger distances. 

The model also resolves a puzzle related to firm entry into export markets. Although it is not as widely discussed as the trade cost puzzle, it is
a clear empirical regularity that models with constant marginal costs cannot address in a natural way. In the data, one can often see two similar
firms, say, from China, one exporting to, say, Portugal and not to Greece and the other exporting to Greece and not to Portugal. To reconcile such
patterns with the assumption of constant marginal cost of production, standard international trade models introduce stochastic cost shocks specific
to each firm-destination pair. These cost shocks have to be dramatically large. For the second firm they need to offset the entire profit the first
firm makes from its exports to Portugal. In the absence of any real-world phenomenon that could lead to cost shocks of this kind, this represents
a puzzle.\footnote{We discuss alternative mechanisms in Section \ref{SectionWorldTrade}.}

Our model explains this puzzle in a straightforward way. With increasing marginal costs of production and economies of scale in shipping, firms need
to solve a combinatorially difficult problem of choosing export destinations.\footnote{In economics
there are many combinatorially difficult problems, and we expect our methods to be useful there.} Different
approximate solutions of this choice problem can lead to different sets of export destinations, even if the maximized profits are almost the same.
One approximately optimal set of export destinations may include Portugal, while another one may include Greece. 

We solve the model using an iterative method involving an outer loop and an inner loop. The outer loop requires an evaluation of firms{'} profit
functions for many discrete choices of export destinations. Our functional forms bring a tractability advantage that makes these evaluations fast.
In the inner loop, we solve for a general equilibrium of the world economy keeping the discrete choices fixed. There using our functional forms is
helpful because it allows for an analytic calculation of derivatives that are needed for accelerated gradient descent algorithms. 

Separately, we develop many other applications of the proposed functional forms. For the model of outsourcing decisions in a sequential supply chain
constructed by  \citet{antras}, we reformulate the theory
to simplify the analysis and use this new formulation to apply our functional forms. This allows us to show that for more realistic demand functions,
outsourcing occurs at both the early (viz.$\, $raw materials) and late (viz.$\, $final commercial sales) stages of production, while intermediate
stages are performed in-house, corresponding to common observation of outsourcing patterns. For a model of labor bargaining by 
\citet{stole,stole2}, we tractably generalize the closed-form solutions that have been found
for linear or constant-elasticity demand and show that for realistic demand patterns the employment effects of bargaining have interesting and intuitive
cyclical patterns. We also discuss applications to imperfectly competitive supply chains, two-sided platforms, selection markets, auctions, and,
extensively, monopolistic competition.

Finally, we show that our method may be thought of as a discrete approximation to a logarithmically modified Laplace transform. It may also be thought
of as a sieve method of non-parametric econometrics. In addition, the transformed variables reveal economic properties of demand functions that would
appear accidental otherwise.

The next section provides a quick illustration of our functional forms with a focus on modeling demand under income inequality. Section 
\ref{sec:FunctionalFormsForAveragesAndMarginals} presents our main theoretical results. Section
 \ref{SectionWorldTrade} focuses on modeling world trade.
Section  \ref{SectionBreadthOfApplication} discusses
other applications. Section  \ref{SectionArbitraryDemandAndCostFunctions}
develops the theory connecting our tractable functions to a logarithmically modified Laplace transform. The paper also includes an appendix and supplementary
material.

 \section{Example: Replacing Constant-Elasticity Demand}\label{SectionReplacingConstantElasticityDemand} 

 \subsection{Constant-elasticity demand and its flexible replacement}\label{ConstantElasticityDemandAndItsFlexibleReplacement} 

The most canonical and widely-used demand form in economic analysis is the constant-elasticity specification, corresponding to inverse demand \(P(q)=a
q^{-b}\). It is frequently used because of its analytic tractability. Historically, it appeared in the economic literature because in discrete-choice
{ }settings it is plausible that product{'}s valuations follow the income distribution and the income distribution was believed to be approximately
Pareto, i.e., power-law.\footnote{In cases where each individual can consume at most one unit
of an indivisible product, the inverse demand function equals the reversed quantile function of the distribution of valuations (willingness to pay),
up to constant rescaling. The reversed income quantile function here refers to the function that maps a given quantile \(q\) measured starting at
the top of the income distribution to the corresponding valuation level. Note also that, of course, we do not wish to say that the most important
property of constant-elasticity demand lies in the context in which it first appeared. We are merely using this example as an illustration of our
approach to demand functions.} Modern data of income, however, led to different conclusions on the shape
of the income distribution.\footnote{The origin of constant-elasticity demand historically appears
to be the argument by \citet{saybook} that willingness to pay for a typical discrete-choice product is likely to be proportional
to income, and thus that the distribution of the willingness to pay should have the same shape as the income distribution. { }(Say{'}s assumption
is likely to be approximately correct for example for products that save a fixed amount of time to the owner, independently of their wealth.) Since
{ }early probate measurements of top incomes exhibited power laws (i.e., Pareto distributions) \citep{garnier,say}, by extrapolation
\citet{dupuit} and \citet{Mill} suggested that demand would have a constant elasticity. { }This observation appears
to be the origin of the modern focus on constant elasticity demand form \citep{dupuithistory,cobborigins}. However evidence on
broader income distributions that became available in the 20th century as the tax base expanded \citep{capital} shows that, beyond
the top incomes that were visible in 19th century data, the income distribution is roughly lognormal through the mid-range and thus has a probability
density function that is bell-shaped, rather than power-law. { }Distributions that accurately match income distributions throughout their full range
\citep{reedjorgensen,doublepareto,doubleparetolognormal} have a similar bell shape but incorporate the Pareto tail measured in
the 19th century data.} 

 In this section we discuss another demand form that is also highly analytically tractable but has more flexibility. This flexibility allows us to
get a much better match to the income distribution.\footnote{Similarly, this flexibility could
allow us to get a better match to a distribution of valuations in cases where it differs from the exact income distribution.}
As an illustration, we show that our proposed demand form leads to substantially different policy implications than the constant-elasticity form
in the socially important case of bias of innovative technical progress.

 \begin{figure}[t]
\includegraphics[width=3.4in]{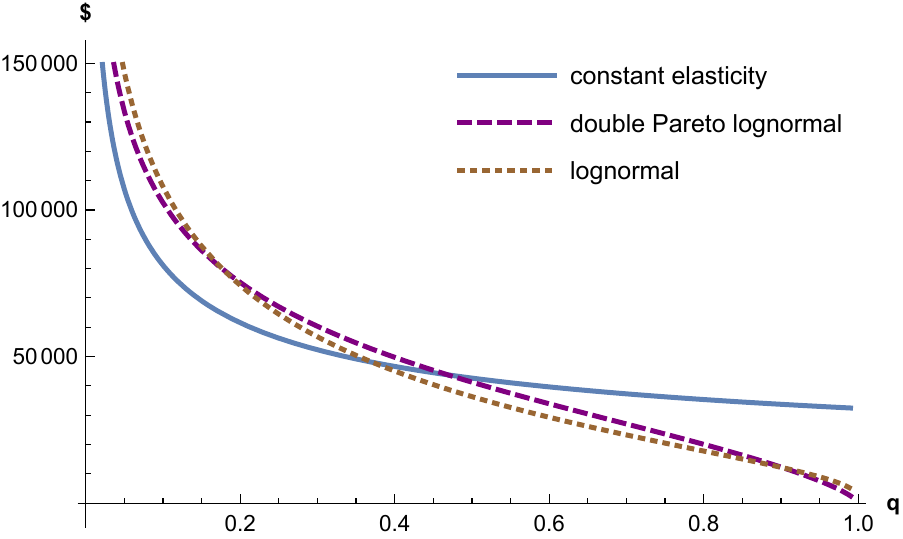}
 \includegraphics[width=3.4in]{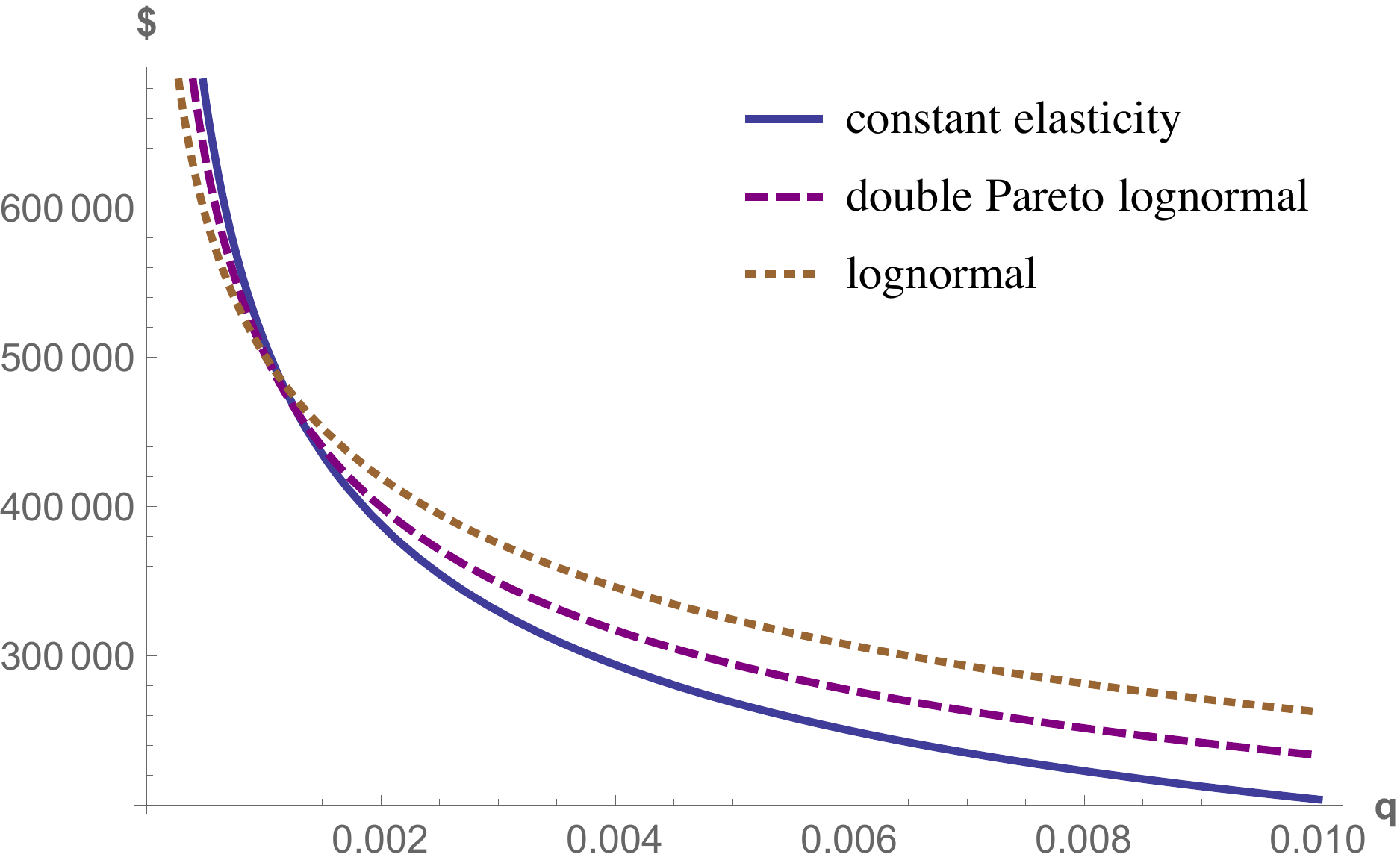}
 \caption{Comparing the fit of the best-fit lognormal and the best-fit constant-elasticity form
to a double-Pareto lognormal estimation of the 2012 US income distribution, represented as a demand function (reversed quantile function). { }Dollars
at any reversed quantile represent the income of the corresponding individual. { }On the left is the fit for the full income distribution, while
the right shows the upper tail. { }We used a standard calibration of a double Pareto lognormal proposed by \citet{reed} and used
the generalized method of moments to find the constant-elasticity demand function that best fits this throughout the full range of the income distribution.
In the upper tail the constant elasticity approximation is a bit better of a fit than the lognormal. However, in the rest of the income distribution
its fit is terrible, while the lognormal fits quite well (although in economic models it is harder to work with).}\label{CEfigure}
 \end{figure} 

 \begin{figure}[t]
\includegraphics[width=3.4in]{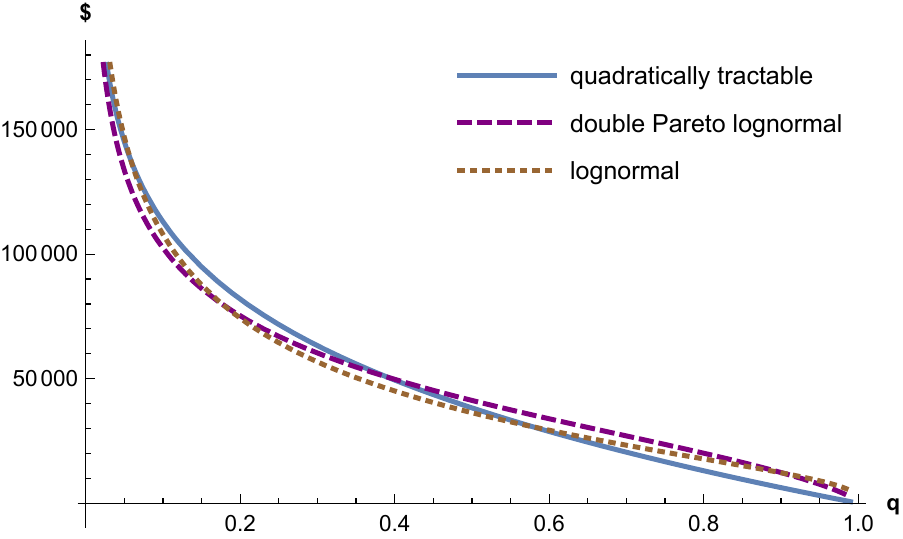}
 \includegraphics[width=3.4in]{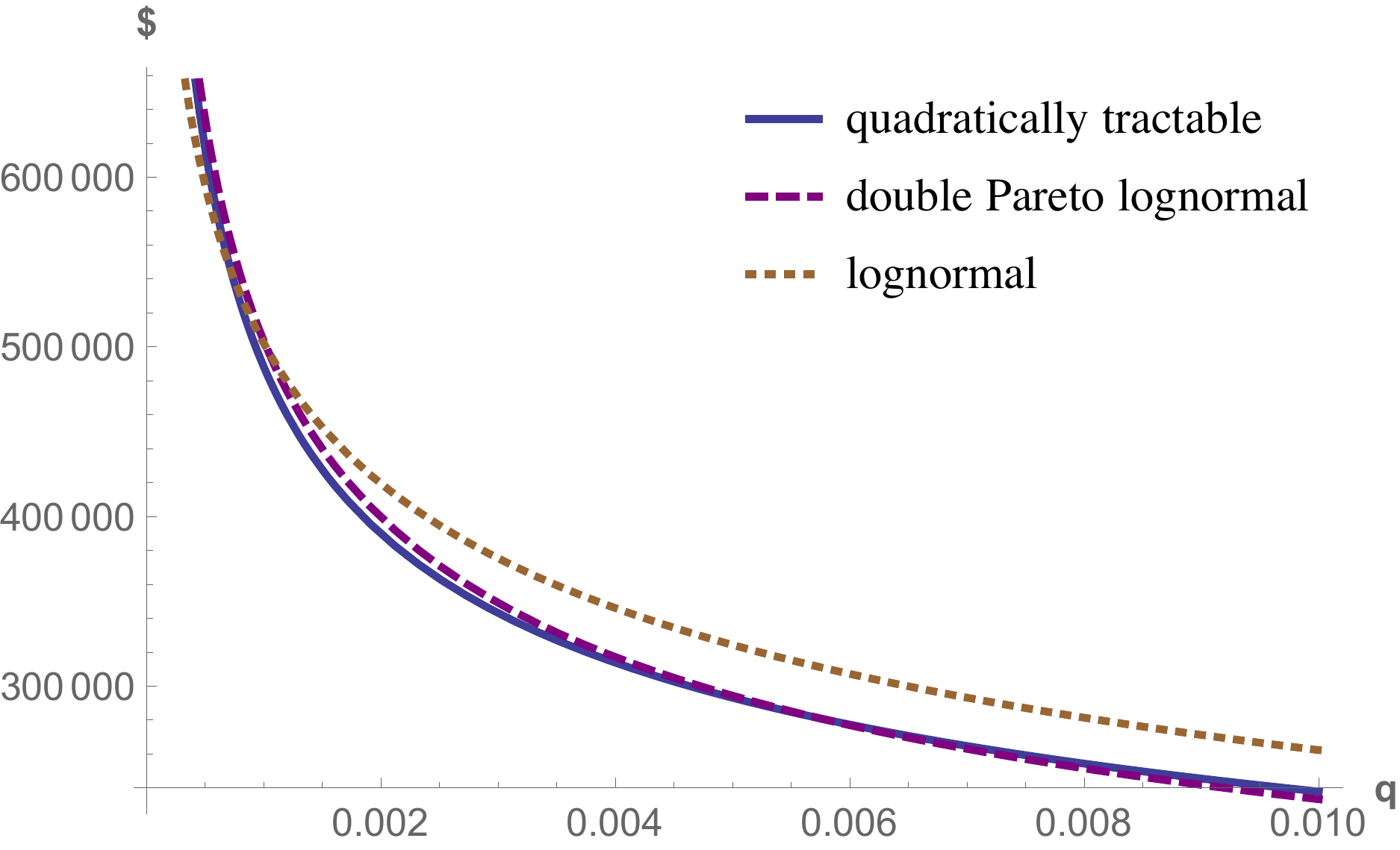}
 \caption{Comparing the fit of the best-fit lognormal and the best-fit quadratically solvable
form to a double-Pareto lognormal estimation of the US Income distribution, represented as a demand function (reversed quantile function). { }Dollars
at any reversed quantile represent the income of the corresponding individual. { }On the left is the fit for the full income distribution, while
the right shows the upper tail.}\label{ourfigure}
 \end{figure}

Consider the task of representing the empirical income distribution using a corresponding demand function. Constant-elasticity demand fails this
purpose, as illustrated in Figure  \ref{CEfigure}. This
is because the income distribution is not Pareto but approximately double Pareto lognormal  \citep{reedjorgensen,doublepareto,doubleparetolognormal}. { }Working with the double Pareto lognormal distribution (or with the more loosely fitting lognormal distribution) in economic
models would be quite difficult.\footnote{See Footnote \ref{FootnoteLognormalDistributionAnalyticDifficulty}.} { }To overcome this difficulty, we propose a functional form that allows for the same basic shape, but leads to calculations
almost as easy as for the constant-elasticity form:

 \begin{equation}
 P(q)=m-\\
 ma\\_{-}\left(\frac{q}{q\\_0}\right)^{-b}-\\
 ma\\_{+}\left(\frac{q}{q\\_0}\right)^b,\\
 \, \, \, \, \, \, \,\\
\hspace*{0.5ex}  \, \, \, \, \, \,\\
 a\\_{-}\equiv 1-a\\_{+}.\\
 \label{EquationQuadraticallyTractableFormForIncomeDistribution}
 \end{equation}
A set of parameter values that matches the income distribution (for the US in 2012) very well is \(a_-=-1/2, a_+=5/2,
b=2/5\). We obtained these values by a generalized-method-of-moments fit and rounding the results. The match is illustrated in Figure 
\ref{ourfigure}.

 \subsection{Bias in technological progress}\label{BiasInTechnologicalProgress} 

As a simple, illustrative application, we discuss the case of bias in technological progress described in 
\citet{kremer, worstcasebounds}. First, we do that for the case of constant-elasticity demand
and explain why it is highly tractable. Then we turn to our proposed demand form in Equation  \ref{EquationQuadraticallyTractableFormForIncomeDistribution} and show that it preserves a high degree of tractability that constant-elasticity demand has. 

When the private sector decides which products to develop, it chooses profit-making products, not necessarily those products that create the greatest
social value. This bias in technological progress depends on the discrepancy between private and social gains. 
\citet{kremer, worstcasebounds} consider the fraction of the social gains from creating a
new product that may be appropriated by a monopolist\footnote{\citet*{budishroinwilliams}
studied this problem recently in a different context. }, referred to as the appropriability ratio, and show
that the maximal fraction of potential surplus that may be lost due to imperfect appropriability is equal to one minus this appropriability ratio.
They compare different demand functions since they lead to different bias in research and development, but always assume no costs. Here we assume
a fixed demand function and consider biases at different levels of marginal production cost. We walk quite didactically through the process of solving
the model in order to illustrate the source of the tractability of the constant-elasticity form and why it carries over to our proposed generalized
form but not to the lognormal distribution form. { }We then follow  \citet{worstcasebounds}{'}s argument that a sensible demand function is one matching the world income distribution and use this as motivation for
using our form to study the impact of cost on the appropriability ratio, which is very different under our form than under constant elasticity.

Consider a monopolist with a constant marginal cost \(c\) and constant-elasticity (inverse) demand \(P(q)=a q^{-b}\). Her marginal revenue is \(P(q)+P'(q)q\).
Under the constant elasticity form, \(P'(q)q=-a b q^{-b}\), which has the same form as \(P(q)\), just a different multiplicative constant out front.
For this reason, the marginal revenue has the same form as well: \(\text{MR}(q)=a (1-b) q^{-b}\) . The monopolist optimally equates it to the marginal
cost, so the optimal quantity may be determined by solving the linear equation \(a(1-b)x=c\) with \(x\equiv q^{-b}\), yielding \(q=(a(1-b)/c)^{1/b}\).
From this it follows by substitution that the firm{'}s absolute markup is \(\overline{\text{PS}}\equiv \text{PS}/q=c b/(1-b)\), where \(\text{PS}\)
stands for the producer surplus. Furthermore, the average consumer surplus also has the same form as \(P(q)\), differing only by a multiplicative
constant: \[\overline{\text{CS}}\equiv \frac{\text{CS}}{q}=\frac{\int _0^qP\left(\tilde{q}\right)d\tilde{q}-P(q)q}{q}=\frac{\frac{a}{1-b}q^{1-b}-a q^{1-b}}{q}=\frac{a
b}{1-b}q^{-b}.\]
 Evaluated at the optimal quantity, the average consumer surplus is \(\overline{\text{CS}}= c b(1-b)^{-2}\). The appropriability
ratio, i.e., the ratio of producer surplus and the total surplus, may be evaluated as \(\overline{\text{PS}}/\left(\overline{\text{PS}}+\overline{\text{CS}}\right)=(1-b)/(2-b)\),
which is a constant independent of cost. Thus all products have precisely the same appropriability ratio, and cost is irrelevant to the bias of investments
in research and development.

If we tried to investigate this problem in a tractable way for more general demand functions that have been used in the economic literature, we could
use the Bulow-Pfleiderer demand introduced in the next section, which includes both constant-elasticity and linear demand as special cases. However,
we would again find that the cost \(c\) has no impact on the bias of technical progress.

If instead, we tried to investigate the implications of demand curves corresponding to other distributions of product valuations, such as the lognormal
distribution or the double-Pareto lognormal distribution (which fits the income distribution), we would quickly find that such investigation cannot
be carried out analytically.\footnote{\label{FootnoteLognormalDistributionAnalyticDifficulty}For
a lognormal distribution with mean \(\mu _{\ell }\) and standard deviation \(\sigma _{\ell }\) of the exponent, the inverse demand is \(P(q)=\exp
\left(\sigma _{\ell } \Phi ^{-1}(1-q)+\mu _{\ell }\right)\), where \(\Phi\) is the standard normal cumulative distribution function and where we
normalized maximum demand to 1. There is no analytic solution to the monopolist{'}s optimization condition \(\text{MR}=c\) because the following
expression is too complicated: \(P'(q)q=-\sqrt{2\pi }\sigma _{\ell } q \exp \left(\sigma _{\ell } \Phi ^{-1}(1-q)+\mu _{\ell }+\frac{1}{2}\left[
\Phi ^{-1}(1-q)\right]^2\right)=-P(q)\sqrt{2\pi }\sigma _{\ell } q \exp \left(\frac{1}{2}\left[\Phi ^{-1}(1-q)\right]^2\right)\). The more realistic
double-Pareto lognormal distribution leads to even more complicated expressions. { }Clearly, if demand functions of this kind were used inside larger
models, the absence of analytic tractability could quickly become a significant obstacle.}

Here we show that working with the demand form in Equation  \ref{EquationQuadraticallyTractableFormForIncomeDistribution} is much easier and elegant and leads to substantive economic results. Its marginal form \(P'(q)q\) has the same functional
form as \(P(q)\) itself:\[P'(q)q= m b a_-\left(\frac{q}{q_0}\right)^{-b}-m b a_+\left(\frac{q}{q_0}\right)^b.\]
If we introduce the notation\[a_{n,-}\equiv (1-b)^na_-,\text{$\, \, \, \, \, \, $ }a_{n,+}\equiv (1+b)^na_+,\text{$\, \, \, \, \, $ }x \equiv  \left(\frac{q}{q_0}\right)^b,\]
the monopolist{'}s first-order condition is just the quadratic equation\[-\text{  }a_{1,-}+\left(1-\frac{c}{m}\right) x-\text{  }a_{1,+}x^2=0.\]
This leads to the closed-form solution\[q=q_0x^{1/b},\, \, \, \, \, \, x=\frac{1}{2 a_{1,+}}\left(1-\frac{c}{m}+\sqrt{\left(1-\frac{c}{m}\right)^2-4 a_{1,-} a_{1,+}}\right).\]
The per-unit consumer and producer surplus again take the same functional form:\[\overline{\text{CS}}=- m b\text{  }\tilde{a}_-\left(\frac{q}{q_0}\right)^{-b}+m b\text{  }\tilde{a}_+ \left(\frac{q}{q_0}\right)^b\text{  
},\, \, \, \, \, \, \overline{\text{PS}}=m-c-m a_-\left(\frac{q}{q_0}\right)^{-b} -m a_+ \left(\frac{q}{q_0}\right)^b .\]
The appropriability ratio is then\[\frac{\overline{\text{PS}}}{\overline{\text{PS}}+\overline{\text{CS}}}= \frac{m-c-m\text{  }a_-\left(q\left/q_0\right.\right)^{-b} -m\text{ 
}a_+\left(q\left/q_0\right.\right)^b }{m-c+m a_{-1,-}\left(q\left/q_0\right.\right)^{-b} +m a_{-1,+}\left(q\left/q_0\right.\right)^b }=\frac{\left(1-b^2\right)
\left(-a_-+a_+ \left(q\left/q_0\right.\right)^{2 b}\right)}{-\left(2+b-b^2\right) a_-+\left(2-b-b^2\right) a_+ \left(q\left/q_0\right.\right)^{2
b}},\]
where the last equality was obtained by substituting for the marginal cost from the first-order condition. { }Substituting
the parameter values we specified right after Equation  \ref{EquationQuadraticallyTractableFormForIncomeDistribution} gives\[\frac{\overline{\text{PS}}}{\overline{\text{PS}}+\overline{\text{CS}}}=\frac{ 21+105\left(q\left/q_0\right.\right)^{4/5}}{56+180 \left(q\left/q_0\right.\right)^{4/5}}.\]
This equals \(21/56\approx 37.5\%\) for \(q=0\) (when the product serves a tiny fraction of the population) and monotonically
increases in q to { }\(53/118\approx 53.4\%\) for \(q=q_0\) (when most of the population is served). { }This suggests a bias towards cheap, mass-market
products and away from expensive products that mostly cater to the rich; of course, all this analysis is based, like 
\citeauthor{kremer}{'}s, on aggregate surplus and might well reverse if distributional concerns
were incorporated.$\quad $

While we focused here on biases from the appropriability ratio, it can be shown (in closed-form) that many other aspects of standard intellectual
property policy differ substantially under our form from the results under the constant pass-through class. For example, under our form the ratio
of consumer surplus to monopoly deadweight loss is much greater (usually by several times) than under the Bulow-Pfleiderer class so that patents
are more desirable and optimal patent protection is greater than under the standard forms. Similarly allowing pharmaceutical producers to price discriminate
often increases deadweight loss under the standard forms  \citep{acv},
while it is always beneficial under our form. Thus the standard forms are substantively misleading on a number of issues and the added complexity
of using our form is minimal.

\section{Central Results\label{sec:FunctionalFormsForAveragesAndMarginals}}

In the previous section we focused on a particular functional form
derived from our theory, a particular calibration target (the US income
distribution) and a particular application. However, our approach
applies much more broadly. We characterize all functional
forms that have the useful property of the form above: namely that, 
in the language of demand curves, linear combinations of marginal
revenue and inverse demand take the same form as inverse demand itself. Within these we then identify
functional forms that lead to closed-form solutions utilizing power functions
and the method of radicals.

\subsection{Form preservation under the average-marginal transformation}

Let us denote by $F\left(q\right)$ the average of an economic variable
that depends on $q$, where a baseline interpretation of $q$ is a quantity
of a good. The marginal variable is then $\left(qF\left(q\right)\right)'=F\left(q\right)+qF'\left(q\right)$.
We now formally define what it means for these two variables to have
the same functional form, as we alluded to in the previous section.

\newtheorem{definition}{Definition}

\begin{definition} \textbf{\emph{(Form Preservation)}} We say that
a functional form class $\mathcal{C}$ is {\em form-preserving} under average-marginal
transformations if for any function $F(q)\mathcal{\in C}$, the class
also contains any linear combination of $F(q)$ and $qF'\left(q\right)$.
In other words, $F\mathcal{\in C}\Rightarrow\mathcal{\forall}\left(a,b\right)\in\mathbb{R}^{2}:\ aF+bqF'\in\mathcal{C}$.
In economic terms, we interpret $F\left(q\right)$ as the average
of the variable $qF\left(q\right)$, such as revenue or cost, and
$F\left(q\right)+qF'\left(q\right)$ as its marginal counterpart.
This definition thus states that any linear combination of the average
and marginal variables belong to the defined class of functions.\footnote{Note
that any form-preserving class is also form-preserving under multiple
applications of operators of this type.}

\end{definition}

Obviously, if $\mathcal{\mathcal{C}}$ is taken to be a sufficiently
large (e.g. infinite-dimensional) class of functions it may be form-preserving in a fairly mechanical
way. For example, if it is the set of all analytic functions with the domain
$\left(0,\overline{q}\right)$ for some $\overline{q}$ then we know
that $a F\left(q\right)+b q F'\left(q\right)$ is also analytic and has
at least as large a domain. This observation is not very useful for
the purposes of tractability because the set of all analytic functions
with this domain contains many that, as we discussed in the previous
section, are not tractable using standard analytic and computational
methods. 

Thus we will naturally wish to consider smaller classes. It is, therefore,
useful to identify the most general set of finite-dimensional functional
form classes that are form-preserving under the average-marginal transformations
$F\rightarrow aF+bqF'$. Before stating the characterization theorem,
let us briefly clarify what we mean by the dimensionality of a functional
form class. For example, a functional form class $a_{1}e^{-a_{2}q}$,
where $a_{1}$ and $a_{2}$ are continuously varying real numbers
is two-dimensional, while $a_{1}e^{-a_{2}q}q^{-a_{3}}$ with continuously
varying real $a_{1}$, $a_{2}$, and $a_{3}$ is three-dimensional.\footnote{While this intuitive description is sufficient for practical purposes,
more formally we say that an \emph{$m$-dimensional functional form
class} is a subset of a space of functions (of a scalar, continuous
variable) that is homeomorphic to an $m$-dimensional manifold, possibly
with a boundary. Such manifold, with or without a boundary, is often
referred to as the \emph{moduli space}.}

\newtheorem{theorem}{Theorem}

\begin{theorem}\label{formpreserve}

\textbf{\emph{(Characterization of Form-Preserving Functions)}} Any
real finite-dimensional functional form class with domain $(0,\infty)$ 
(or an open subinterval of it)
that is form-preserving under average-marginal transformations must be a set of linear combinations of 
\[
\left(\log q\right)^{a_{jk}}q^{-t_{j}},\quad a_{jk}=0,1,...,n_{j}^{(1)},\quad j=1,2,...,N_{1},
\]
\[
\left(\log q\right)^{b_{jk}}\cos\left(\tilde{t}_{j}\log q\right)q^{-\hat{t}_{j}},\quad b_{jk}=0,1,...,n_{j}^{(2)},\quad j=1,2,...,N_{2},
\]
\[
\left(\log q\right)^{c_{jk}}\sin\left(\tilde{t}_{j}\log q\right)q^{-\hat{t}_{j}},\quad c_{jk}=0,1,...,n_{j}^{(2)},\quad j=1,2,...,N_{2},
\]
where $\left\{ t_{j}\right\} _{j=1}^{N_{1}}$, $\left\{ \tilde{t}_{j}\right\} _{j=1}^{N_{2}}$,
and $\left\{ \hat{t}_{j}\right\} _{j=1}^{N_{2}}$ are fixed sets of
real numbers and $N_1,N_2\in \mathbb N$ . If we exclude functions oscillating as $q\rightarrow0_{+}$,
only the functions in the first row are allowed. In that case the
most general form is the set of linear combinations of
\[
q^{-t_{j}},\quad q^{-t_{j}}\log q,\quad q^{-t_{j}}\left(\log q\right)^{2},\quad...\quad,q^{-t_{j}}\left(\log q\right)^{n_{j}},\quad j=1,2,...,N_{1}.
\]
\end{theorem}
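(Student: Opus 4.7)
My plan is to make the logarithmic change of variables $u = \log q$, under which the operator $T: F\mapsto qF'$ becomes ordinary differentiation, and then appeal to the classical structure theorem for finite-dimensional differentiation-invariant subspaces of $C^\infty$.

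First I set $G(u) := F(e^u)$. A short computation gives $qF'(q)\big|_{q=e^u} = G'(u)$, so $T$ corresponds to $D := d/du$ on functions of $u$. The form-preservation hypothesis therefore transports $\mathcal{C}$ to a finite-dimensional real linear space $\widetilde{\mathcal{C}} := \{F\circ \exp : F\in\mathcal{C}\}$ of smooth functions that is stable under $D$ (and under the identity, trivially), where the defining interval in $u$ is the image under $\log$ of the given subinterval of $(0,\infty)$.

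Next I invoke linear algebra. Setting $m := \dim \widetilde{\mathcal{C}}$, the restriction $D\big|_{\widetilde{\mathcal{C}}}$ is an endomorphism of an $m$-dimensional real vector space and so, by Cayley-Hamilton, is annihilated by its characteristic polynomial $p$ of degree $m$. Hence every $G\in\widetilde{\mathcal{C}}$ solves the constant-coefficient linear ODE $p(D)G=0$, i.e. $\widetilde{\mathcal{C}}\subseteq \ker p(D)$. Factoring over $\mathbb{R}$ as $p(\lambda) = \prod_{j=1}^{N_1}(\lambda + t_j)^{n_j^{(1)}+1}\prod_{j=1}^{N_2}\bigl[(\lambda+\hat t_j)^2 + \tilde t_j^2\bigr]^{n_j^{(2)}+1}$, the classical theory says $\ker p(D)$ is spanned by the real quasi-polynomials $u^{k}e^{-t_j u}$ and by $u^{k}e^{-\hat t_j u}\cos(\tilde t_j u)$, $u^{k}e^{-\hat t_j u}\sin(\tilde t_j u)$ for the indicated ranges of $k$. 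Undoing the substitution via $e^{-tu} = q^{-t}$ and $u^{k} = (\log q)^{k}$ recovers verbatim the three displayed families in the theorem.

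For the second statement, any function involving $\cos(\tilde t_j\log q)$ or $\sin(\tilde t_j\log q)$ oscillates infinitely often as $q\to 0_+$ (equivalently as $u\to -\infty$), so excluding oscillation forces $N_2 = 0$ and leaves only the real-root families $q^{-t_j}(\log q)^{k}$. I expect the main obstacle is not the ODE step, which is standard, but the careful passage from the stated closure condition (for each fixed $F\in\mathcal{C}$ and each $(a,b)\in\mathbb{R}^2$, $aF + bqF'\in\mathcal{C}$) to the genuinely linear-algebraic hypothesis that Cayley-Hamilton requires, namely that the linear span of $\mathcal{C}$ is itself a finite-dimensional $T$-invariant vector space on which $D$ acts as an endomorphism; this is where the finite-dimensionality of the moduli space (as clarified in the footnote to the definition) must be used. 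A minor secondary check is the linear independence of the listed quasi-polynomials, which ensures that the integers $N_1, N_2, n_j^{(1)}, n_j^{(2)}$ are uniquely determined by $\mathcal{C}$.
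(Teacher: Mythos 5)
Your overall route --- logarithmic change of variables so that $qF'\mapsto G'$, then reduce to a constant-coefficient linear ODE whose real solution space is spanned by quasi-polynomials --- is the same route the paper takes. The technical machinery differs: the paper passes through Fourier transforms of truncated functions, Schwartz distribution theory, and a residue computation, whereas you invoke Cayley--Hamilton plus the classical factorization of constant-coefficient ODEs. For the ODE-to-solution step your version is the more elementary and arguably cleaner; that part is fine.

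The difficulty you flag at the end is the genuine one, and as written your proof does not resolve it. The paper's Definition~1 makes $\mathcal{C}$ a finite-dimensional \emph{manifold}, not a vector space, so there is no reason the linear span of $\widetilde{\mathcal{C}}$ should be finite-dimensional, and Cayley--Hamilton cannot be applied to $D$ acting on $\operatorname{span}(\widetilde{\mathcal{C}})$ as you propose. The paper's resolution is to argue one function at a time: for a fixed $G\in\widetilde{\mathcal{C}}$, iterating the closure hypothesis (as the footnote to the definition observes is legitimate) puts \emph{all} of $G, DG, D^2G,\dots$ and every linear combination thereof inside $\widetilde{\mathcal{C}}$. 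The cyclic subspace $V_G:=\operatorname{span}\{G, DG, D^2G,\dots\}$ is therefore a bona fide vector space sitting inside the $m$-dimensional manifold; an $(m{+}1)$-dimensional linear subspace cannot topologically embed in an $m$-manifold, so $\dim V_G\le m$. Since $V_G$ is $D$-invariant by construction, you may now apply Cayley--Hamilton to $D|_{V_G}$ to obtain a degree-$\le m$ constant-coefficient ODE satisfied by $G$; the remainder of your argument (real factorization, quasi-polynomials, discarding the oscillating families as $q\to0_+$) then proceeds exactly as you wrote it. Note, finally, that both the paper's constructive proof and this repaired version deliver the quasi-polynomial form of each individual $G$; the assertion that the exponent set $\{t_j\},\{\tilde t_j\},\{\hat t_j\}$ is \emph{fixed} across the whole class is stronger and neither argument spells out the extra rigidity step needed, so the ``linear independence'' check you list as a secondary issue is actually part of the same gap and deserves the same care.
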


The proof is provided in Appendix \ref{AppendixProofsOfTheorems}.

\subsection{Tractability}

We now provide a specific formal definition of ``tractability'' 
that allows
us to characterize the class of form-preserving functional forms that
have various levels of such tractability. While the term tractability is
constantly invoked in economics papers to justify various ``simplifying'' assumptions,
it is almost never defined formally.\footnote{Of course,
in other contexts the word ``tractability" may have other meanings that are
also useful. We specify below what we mean by ``tractability" in this paper.} 

A potential reason for this is that there is no standard, clear definition within applied mathematics 
of the notion of tractability
of the solution of mathematical equations. The theory of polynomial equations
establishes that generic
polynomial equations of degree at most four have solutions in terms
of ``the method of radicals'' (roots of different orders) and that
generic polynomial equations of higher degree have no such solutions, 
according to the Abel--Ruffini theorem.
But this theory does not imply that one could not extend the list of 
``closed-form'' functions by adding some other functions (other than roots)
to provide solutions to higher order polynomials.
In practice,
polynomial equations of any reasonably low order (say less than a hundred)
can be solved extremely rapidly by standard mathematical software
\citep{allpolynomial}.\footnote{Of course, the notion of 
``tractability" and ``closed-form solutions" is subjective to some extent. Equations whose
solutions may be expressed in terms of functions that are familiar enough
are often said to have closed-form solutions. That does not imply, however, 
that such notion is meaningless. Familiar functions are easier to work with 
for researchers thanks to existing intuition, as well as thanks to their implementation 
in symbolic or numerical software. In this paper we made definite choices to
resolve the terminological ambiguity.}

For this reason, we use a specific definition of tractability, which we call \emph{algebraic
tractability}, that is very simplistic: an equation is algebraically
tractable at some level $k$ if it can be solved using power functions and 
a solution
to a polynomial equation of degree no greater than $k$. While this
definition eliminates many other functions with known solutions, it
does a good job capturing existing forms that are widely considered
tractable while allowing an extension to richer forms in a pragmatic
manner given the ease with which polynomial equations can be solved
both analytically and computationally \citep{algebraic}.

An important feature of the (non-oscillating) class of functional
forms in Theorem \ref{formpreserve} is that if we include terms with
powers of logarithms we must also include all terms with powers of
logarithms below this. That is, if the class includes linear combinations
of $q\left(\log q\right)^{2}$ and $q^{-1/2}\left(\log q\right)^{2}$
it must also include linear combinations $q\log q$, $q^{-1/2}\log q$,
$q$ and $q^{-1/2}$. With a small number of (explicitly enumerable) exceptions, classes of
functional forms like this can rarely be solved in closed-form because
of the combination of power and logarithmic terms.\footnote{The most notable exception is the case when only a single power of
$q$ is used which can be divided out of the equation to yield a polynomial
in $\log q$. While this class is of some interest, we do not focus
on it here because it has the unappealing property that if one wishes
to include a constant term (which is often desirable as we discuss
below) one is limited to a small number of powers of logarithms and
all other parameters are set. There are other specific exceptions
and exploring the use of these is an interesting direction for future
research, but none offers the flexibility afforded by power functions
that we focus on below. This is likely why they have formed the basis
of so much prior work. We thus see the logarithm-based forms instead
as limits of the power forms that are worth including but not focusing
on.}

On the other hand, the even-simpler class of sums of power functions
nests all frequently-used tractable forms in the economic literature,
namely constant-elasticity demand combined with constant marginal
cost, linear demand combined with linear marginal cost as in \citet{horizontal},
and the  ``constant pass-through'' demand of  \citet{bulow} (henceforth BP)
with constant marginal cost.\footnote{In this section, for expositional purposes,
we discuss tractability from the point of view of monopoly problems. But it is worth noting
that tractability considerations would be exactly the same for Cournot oligopoly
and very similar in the many applications we discuss in this paper.}$^{,}$\footnote{The BP demand, defined below, 
gives constant pass-through rate of specific taxes to monopolist's prices
only in the case of a constant marginal cost. For this reason, we prefer to use the term Bulow-Pfleiderer (BP) demand,
instead of the frequently used term ``constant pass-through demand''.}As a result, we focus on functional form classes composed of linear
combinations of power functions $q^{-t_{j}}$.\footnote{There are a few cases not nested
in the forms of Theorem \ref{formpreserve} for which the firm's first-order condition may be solved. Hyperbolic demand curves
used by \citet{simonovska2015income} are one of them. Cases where the solution is in terms of the  Lambert W function
are the exponential utility function of \citet{newapproach,behrens} and single-product versions of 
the Almost Ideal Demand System and of translog demand.}

\begin{figure}
\begin{center}
\includegraphics[width=4in]{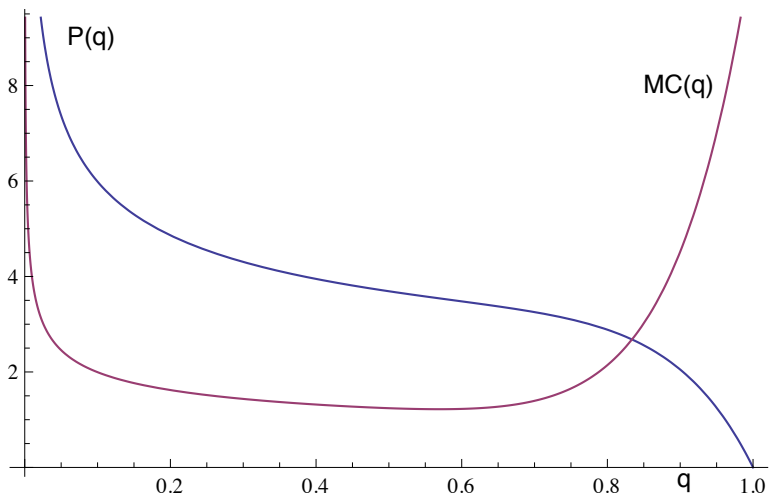}
\end{center}
\caption{Example of a bell-shaped-distribution-generated demand and U-shaped cost curve contributing to equilibrium
conditions that can be solved linearly: $P(q)=3\left(q^{-0.3}-q^{10}\right)$ and $MC(q)=q^{-0.3}+10q^{10}$.}\label{linearsolve}
\end{figure}

The BP demand corresponds to $P\left(q\right)=p_{0}+p_{t}q^{-t}$
for some real constants $t$, $p_{0}$ and $p_{t}$, not necessarily
all positive. In a monopolist's first-order condition, constant marginal cost
enters in the same way as $-p_{0}$. In this sense,
constant marginal cost is compatible with this demand side 
specification. (Similarly, linear marginal cost would be fully 
compatible with the demand side in the special case of
$t=-1$, i.e., linear demand.)

Using the BP demand form with constant marginal cost leads both to
tractability and to an important substantive implication:
the constancy of the pass-through rate of the constant marginal cost
to price. However, it is clearly possible to preserve the former property
without the latter. For example, consider inverse demand and average
cost of the form $P(q)=p_{s}q^{-s}+p_{t}q^{-t}$ 
and $AC(q)=ac_{s}q^{-s}+ac_{t}q^{-t}$.\footnote{\citet{demanding} 
studied the properties such bi-power form applied to inverse demand functions in combination
with constant marginal cost. Their goal was not to obtain closed-form solutions.}
Then the monopolist's first-order condition gives
\[
\left(p_{s}-ac_{s}\right)\left(1-s\right)q^{-s}+\left(p_{t}-ac_{t}\right)(1-t)q^{-t}=0\implies q=\left(-\frac{\left(p_{s}-ac_{s}\right)\left(1-s\right)}{\left(p_{t}-ac_{t}\right)(1-t)}\right)^{\frac{1}{s-t}}.
\]
This more general form thus still leads to a closed-form solution but offers
substantially more flexibility. For example, it can accommodate simultaneously
U-shaped cost curves and demand curves generated by a bell-shaped 
valuation distribution (in the sense of discrete choice).
 Figure \ref{linearsolve} provides an
example. A disadvantage of this form, however, is that it does not
include a constant term. A constant term would have been useful
for studying the pass-through rate and similar comparative statics.
Another
disadvantage is the absence of an explicit expression for the 
direct demand $Q(p)=P^{-1}(p).$

It is thus useful to look beyond systems that lead to a linear equation
(after a substitution using a power function).
Quadratic, cubic and quartic equations also yield closed-form solutions
by the method of radicals. Furthermore, polynomials of higher, but
still small, order can be solved extremely quickly by most mathematical
software without resorting to numerical search. For this reason, we define tractability
in terms of the degree of polynomial solution a form admits.

\begin{definition}\label{DefinitionAlgebraicTractability}

\textbf{\emph{(Tractability)}} We say that an economic problem involving
a scalar $q$ is \emph{algebraically tractable} \emph{at level $k$}
if a definite power of $q$ is the solution of a polynomial equation
of order $k$. For short we often refer to this simply as ``tractability''
and use adverbial forms for low $k$ (e.g. linearly or quadratically
tractable). By classical results of the theory of polynomial equations,
 only for $k\leq4$ can
such an equation be explicitly solved by the method of radicals and
thus we refer to economic problems that are algebraically tractable
at level $k\leq4$ as \emph{analytically tractable}. 

\end{definition}

We now characterize the set of functional forms from the power class
that are tractable at level $k$ for any positive integer $k$. A 
very naive conjecture
 based on the above discussion is that
this is simply the set of forms that can be written as the sum of
$k+1$ powers. To see why this is wrong, consider the equation 
\[
q+1+q^{-\nicefrac{1}{2}}=0.
\]
This does not admit a quadratic solution, but can be solved cubically
by defining $x\equiv q^{-\nicefrac{1}{2}}$, transforming the equation
into 
\[
x^{-2}+1+x=0\iff x^{3}+x^{2}+1=0.
\]
While the quadratic solution fails here, the cubic succeeds, because
the gap between the power of the first and second term ($1-0=1$) is
not equal to that between the second and third term ($0-\left(-\nicefrac{1}{2}\right)=\nicefrac{1}{2}$);
instead it is twice the second gap, implying that there is a ``missing''
term $q^{\nicefrac{1}{2}}$ in the equation. On the other hand, the
equation 
\[
q^{\nicefrac{1}{2}}+1+q^{-\nicefrac{1}{2}}=0
\]
 \emph{is} quadratically tractable because the gap between the first
and second powers equals that between the second and third. More broadly
the number of such \emph{evenly-spaced} powers sufficient to represent
the class determines its level of tractability.

\begin{theorem}\label{TheoremClosedFormSolutions}

\textbf{\emph{(Closed-Form Solutions)}} A functional form class $\mathcal{\mathcal{C}}$
composed of all linear combinations of a finite set of powers of $q$
is algebraically tractable at level $k$ for generic linear coefficients
if and only if the powers included are $\left\{ a+bi\right\} _{i\in J}$
for some fixed real numbers $a$ and $b$ and some fixed set of integers
$J\subseteq\left\{ 0,\ldots,j\right\} $ for a fixed integer $j\leq k$.
More informally, a class of sum of power laws is tractable at level
$k$ if it consists of at most $k+1$ evenly-spaced powers of $q$.

\end{theorem}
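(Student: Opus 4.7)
The plan is to prove the two implications of the characterization separately: sufficiency is essentially a direct substitution, while necessity requires the ``generic linear coefficients'' hypothesis to rule out accidental cancellations in the exponent structure.

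For sufficiency, suppose the powers included are $\{a + bi\}_{i \in J}$ for some $J \subseteq \{0, 1, \ldots, j\}$ with $j \leq k$. The equation $\sum_{i \in J} c_i q^{a+bi} = 0$ can be rewritten as $q^a \sum_{i \in J} c_i (q^b)^i = 0$. Since $q^a$ is nonzero on the domain $(0, \infty)$ on which non-integer powers are well-defined, the equation reduces to the polynomial equation $\sum_{i \in J} c_i x^i = 0$ of degree at most $j \leq k$ in the definite power $x = q^b$. By Definition \ref{DefinitionAlgebraicTractability} this establishes algebraic tractability at level $k$, and a solution $q = x^{1/b}$ is recovered from the polynomial root.

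For necessity, I would assume the class of all linear combinations of $q^{p_1}, \ldots, q^{p_m}$ with $p_1 < \cdots < p_m$ is algebraically tractable at level $k$ for generic coefficients, and deduce that the $p_i$ form an arithmetic progression indexed by a subset of $\{0, 1, \ldots, k\}$. By the definition, there exists a real $\rho$ and a polynomial $P(x) = \sum_\ell d_\ell x^\ell$ of degree at most $k$ such that $x = q^\rho$ solves $P(x) = 0$ precisely when the original equation holds. Writing $P(q^\rho) = \sum_\ell d_\ell q^{\ell \rho}$ and matching it with a scalar multiple of $\sum_i c_i q^{p_i}$, after stripping the common lowest power of $q$ on each side, the two multisets of $q$-exponents must coincide as sets. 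This forces $p_i - p_1 = n_i \rho$ for nonnegative integers $n_i$ with $\max_i n_i \leq k$, delivering the claimed structure with $a = p_1$, $b = \rho$, and $J = \{n_i\} \subseteq \{0, 1, \ldots, k\}$.

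The main obstacle is justifying the exponent-matching step. A priori, one might imagine that tractability could be realized by some more elaborate manipulation rather than by direct monomial substitution --- for instance, by multiplying through by a non-monomial factor, or by a relation that holds only on the solution locus for special coefficients. The cleanest resolution is to work in the real vector space spanned by $\{q^t : t \in \mathbb{R}\}$ and invoke the linear independence of distinct real powers $q^{p_i}$ over the constants. Genericity of the $c_i$ then prevents any term-by-term cancellation, so the equivalence of $P(q^\rho) = 0$ and $\sum_i c_i q^{p_i} = 0$ as formal expressions forces $\{q^{\ell \rho}\} = \{q^{p_i - p_1}\}$ as sets. If the differences $p_i - p_1$ failed to be common integer multiples of a single $\rho$, no monomial substitution could render every $q^{p_i}$ polynomial in $x$, and the argument goes through. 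This linear-independence step is the heart of the necessity direction; the rest of the proof is bookkeeping.
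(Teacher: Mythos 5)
Your proposal is correct and follows essentially the same route as the paper's own brief proof: sufficiency by the substitution $x = q^b$ after factoring out $q^a$, and necessity by observing that the equation can only reduce to a polynomial in a single definite power when the exponents lie on an arithmetic progression. You make explicit, via linear independence of distinct real powers and the genericity of the coefficients, the factorization step that the paper's proof (``all elements of the functional form class must factorize as $q^a P_k(q^b)$'') simply asserts.
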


\begin{table}
{\tiny
\begin{tabular}{|c|c|c|c|c|}
\hline 
Form & Tractability properties & Flexibility & Special cases &  Historical notes \\
\hline
$F(q)=f_0+f_{-1}q$ & \begin{tabular}{c} Linearly tractable \\ Linearly invertible \end{tabular}  &  Linear $MC$  & Constant $MC$  & \citet{horizontal} \\
\hline
$F(q)=f_0 + f_{t} q^{-t}$ & \begin{tabular}{c} Linearly tractable \\ Linearly invertible \end{tabular} & \begin{tabular}{c} Any constant \\ pass-through \end{tabular} & \begin{tabular}{c} Linear \\ Constant elasticity \\ Exponential\end{tabular}  & \begin{tabular}{c} BP \\ constant pass-through demand \end{tabular} \\
\hline
$F(q)=f_tq^{-t}+f_sq^{-s}$ & Linearly tractable & \begin{tabular}{c} Demand generated by \\ bell-shaped distribution\\ U-shaped cost \end{tabular}  & BP & \begin{tabular}{c} \citet{demanding} \\ bi-power demand \end{tabular} \\
\hline
$F(q)=f_tq^{-t}+f_0+f_{-t}q^{t}$ &  \begin{tabular}{c} Quadratically tractable \\ Quadratically invertible \end{tabular} &  \begin{tabular}{c} Income distribution \\ U-shaped cost \end{tabular}  & BP &  This paper  \\ 
\hline
$F(q)= f_0+f_{t}q^{-t}+f_{2t}q^{-2t}$ & \begin{tabular}{c} Quadratically tractable \\ Quadratically invertible \end{tabular}  & \begin{tabular}{c} Demand generated by \\ bell-shaped distribution \\ U-shaped cost \end{tabular} & BP & \begin{tabular}{c} \citet{demandforms} \\ APT demand \end{tabular}  \\
\hline

\end{tabular}}

\caption{Various classes of linearly or  quadratically tractable, form-preserving equilibrium systems discussed in this or previous papers. }
\label{tractableforms}
\end{table}

One example of applying this theorem was given in the previous section:
our tractable form involves 3 evenly spaced power laws and thus is
quadratically tractable. Table \ref{tractableforms} summarizes
a rich set of other possibilities covered by this theorem.  The demand side of some of these has appeared in previous literature as we cite in the paper, though only in the case of \citet{horizontal} are we aware of authors harnessing the accompanying cost-side flexibility.

\subsection{Aggregation over heterogeneous firms}\label{AggregationOverHeterogeneousFirms}

Models of international trade involving firm heterogeneity frequently use the 
framework of \citet{melitz} or \citet{melitzottaviano}, which assume respectively constant elasticity and linear demand.  While these forms clearly play a role in the tractability of those models, the models are not always explicitly solvable even under these forms.  Instead, the key properties these allow is that  the firms' optimization problems 
may be solved explicitly and that aggregation integrals over heterogeneous firms 
may be expressed in closed form, assuming Pareto-distributed firm productivity.

We present a theorem that shows that substantial generalizations of these models can still lead to
closed-form aggregation.
We defer a full model set-up to Supplementary Material \ref{AppendixFlexibleMelitzModel}, but it may be thought of simply
as the \citet{melitz} model with relaxed functional form assumptions on the shape of demand, supply, and firm productivity 
distributions.

\begin{theorem}\label{TheoremAggregation}

\textbf{\emph{(Aggregation)}} Suppose that the utility structure implies an inverse demand curve \(P(q)\) and that firms have 
 marginal cost functions \(\text{MC}(q)=a \text{MC}_1(q)+\text{MC}_0(q)\), where \(a\) is an idiosyncratic parameter influencing the firm{'}s
productivity, distributed according to a cumulative distribution function \(G(a)\). Assume that \(P\), \(\text{MC}_0\), \(\text{MC}_1\), and \(G\)
are linear combinations of powers of their arguments, with the second order condition for the firm{'}s profit maximization satisfied. Furthermore, suppose that the powers are
such that \(\text{MC}_1\) and the difference between marginal revenue and \(\text{MC}_0\) are both of the form \(q^{\beta } N\left(q^{\alpha }\right)\)
with common $\alpha $, but possibly differing $\beta $ and polynomials \(N\). Then the aggregation integrals for the firms{'} revenue, cost, and profit may
be performed explicitly. The resulting expressions may contain special functions, namely the standard hypergeometric function, the standard Appell
function, or more generally Lauricella functions, and in the case of high-order polynomials (higher-order tractable specifications), increasingly high-degree polynomial root functions.  

\end{theorem}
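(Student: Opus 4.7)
The plan is to use each firm's first-order condition as a change of variables from the productivity parameter $a$ to the equilibrium quantity $q$, and then to recognize the resulting aggregation integrals as classical Euler-type representations of hypergeometric, Appell, and Lauricella functions. Monotonicity of this change of variables on the set of active firms is guaranteed by the assumed second-order condition, so the map $a \mapsto q(a)$ is a diffeomorphism between an interval of productivities and an interval of quantities.

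First, I would rewrite the first-order condition $\text{MR}(q) = \text{MC}(q)$ as $\text{MR}(q) - \text{MC}_0(q) = a\, \text{MC}_1(q)$, which under the hypothesis takes the form $q^{\beta_1} N_1(q^\alpha) = a\, q^{\beta_2} N_2(q^\alpha)$. Setting $x \equiv q^\alpha$ and solving for $a$ gives
\[
a = q^{\beta_1 - \beta_2}\, \frac{N_1(x)}{N_2(x)}.
\]
Differentiating shows that $da/dq$ is again a power of $q$ times a rational function of $x$. Since the CDF $G$ is a linear combination of powers of $a$, its density $g$ is as well, so pulling back $g(a)\, da$ to the $q$-axis yields a finite sum of terms of the form $q^{\gamma}\, R(x)\, dx$ with $R$ rational.

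Next, the integrands for aggregate revenue $P(q)\,q$, total cost, and profit are, by hypothesis, themselves linear combinations of powers of $q$. Multiplying by the pulled-back measure and pulling powers of $q$ out front, each aggregation integral reduces to a finite sum of integrals of the form
\[
\int x^{\gamma - 1}\, \frac{Q(x)}{\prod_i (1 - x/r_i)^{m_i}}\, dx,
\]
where the $r_i$ are the roots of $N_1$ and $N_2$. A partial-fraction decomposition of $Q(x)/\prod_i(1 - x/r_i)^{m_i}$ combined with the Euler representation
\[
\int_0^1 t^{b-1}(1-t)^{c-b-1}(1 - z t)^{-\alpha_0}\, dt = \frac{\Gamma(b)\,\Gamma(c-b)}{\Gamma(c)}\, {}_2F_1(\alpha_0, b; c; z)
\]
then identifies the case of a single nontrivial factor as a Gauss hypergeometric; two distinct factors give the Appell function $F_1$; and more factors yield the Lauricella function $F_D$ in the corresponding number of variables, exactly the special functions the theorem lists.

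The main obstacle is algebraic rather than analytic. To reach the Euler form above one must factor $N_1$ and $N_2$ explicitly, and the degree of these polynomials grows with the tractability level of the underlying functional form. For tractability levels beyond four the roots $r_i$ are no longer expressible in radicals by the Abel--Ruffini theorem, which is exactly why the statement allows ``increasingly high-degree polynomial root functions.'' A secondary issue is checking that the interval of $x$-values swept out by active firms lies inside a domain on which the relevant special function is defined (or can be analytically continued to), but this follows from the second-order condition and the support of $g$ and so should not cause serious difficulty.
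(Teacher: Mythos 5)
Your overall skeleton matches the paper's: change variables from $a$ to $q$ via the first-order condition (with monotonicity from the second-order condition), then from $q$ to $x=q^\alpha$, then recognize the resulting integrals as Euler-type representations of $_2F_1$, $F_1$, and $F_D^{(n)}$. However, there is a genuine gap in the middle step. You claim that pulling back $g(a)\,da$ yields terms of the form $q^\gamma R(x)\,dx$ with $R$ \emph{rational}, and you then invoke a \emph{partial-fraction decomposition}. This is wrong in general, and it is wrong in a way that undermines your own conclusion. The CDF $G(a)=\sum_j c_j a^{\gamma_j}$ has non-integer exponents $\gamma_j$, so $g(a(q))$ contributes factors $a(q)^{\gamma_j-1} = q^{(\beta_1-\beta_2)(\gamma_j-1)}\bigl(N_1(x)/N_2(x)\bigr)^{\gamma_j-1}$, i.e.\ non-integer powers of $N_1$ and $N_2$, not a rational function. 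Partial fractions requires integer multiplicities, so it does not apply; and notice that if your exponents \emph{were} integers, partial fractions would reduce everything to a finite sum of single-factor terms, each giving a lone $_2F_1$, and Appell and Lauricella functions would never appear. The Appell/Lauricella functions arise precisely \emph{because} the exponents on the factored linear pieces are non-integer. So your argument internally contradicts the conclusion it aims for.

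The paper's proof sidesteps this by never claiming rationality: after both changes of variable it arrives at $\int x^{\gamma_{11}}\,N_1(x)^{\gamma_{12}}\,N_2(x)^{\gamma_{13}}\,dx$ with $\gamma_{12},\gamma_{13}$ arbitrary reals, factors $N_1$, $N_2$ into linear pieces $\prod_i(1-u_i x)$, keeps the resulting product $x^{b-1}\prod_i(1-u_ix)^{-b_i}$ intact with fractional $b_i$, and applies the Euler representation
\[
\int_0^x x_0^{b-1}\prod_{i=1}^n(1-u_i x_0)^{-b_i}\,dx_0=\frac{x^b}{b}\,F_D^{(n)}\!\bigl(b;b_1,\ldots,b_n;b+1;u_1x,\ldots,u_nx\right)
\]
directly. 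Your fix is to drop the partial-fraction step entirely, factor $N_1$ and $N_2$ over $\mathbb{C}$, and feed the product form with its fractional exponents straight into this Lauricella representation (your own Euler formula for $_2F_1$ is the $n=1$ special case). Your closing remarks about Abel--Ruffini for degree $\ge 5$ and about the domain of the special functions are both apt and consistent with what the paper does.
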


While this result is closely related to our general theory and our other applications (in particular, because this aggregation is possible when the relevant variables have our proposed forms), there are also a few differences worth noting. First, aggregation is still possible when the heterogeneous component $MC_1(q)$ of marginal cost is ``shifted'' (in the exponent space) by a uniform multiplicative power factor relative to $MR(q)-MC_0(q)$. This corresponds to the ``possibly differing $\beta $'' in the statement of the theorem. Second, our results here are about aggregation, not solution, and the resulting functions are not, therefore, solutions to polynomial equations 
but rather various functions that may be exotic to some economists, but are widely used in mathematics and related applied fields. Finally, as the complexity of the forms rises, it is the complexity of these functions that rises.   

Closed-form aggregation is useful for at least three reasons. First of all, in the simplest cases the resulting aggregation integrals are just polynomial functions, which means that at the aggregate level the economic equations are relatively simple. Second, when the aggregation integrals lead to commonly used special functions, these are likely to be implemented in numerical software of the researcher's choice. The researcher gets a fast and numerically reliable implementation of these functions and their derivatives without spending time on approximation methods. Third, it is possible to take advantage of the properties of these functions that have been studied in the mathematical literature.

 \subsection{Interpolation between solutions}\label{SubsectionInterpolationBetweenSolutions}

We have discussed how to obtain closed-form solutions in economic modeling. We used linear combinations of power function and imposed conditions
on their exponents. It is natural to ask what happens if these conditions are not satisfied. Suppose we have a computationally intensive model whose
numerical solutions rely on closed-form solutions to its sub-problems. If we relax our assumptions on the exponents we just mentioned, the sub-problems
will not be solvable in closed form and obtaining numerical solutions to the full problem may require an excessive amount of time. Here we would
like to point out that in this case we have another way to proceed: we can solve the full problem at special loci where the conditions on exponents
are satisfied, and then interpolate between the resulting solutions.

Let us illustrate this approach with a toy example that is not computationally intensive. Consider a monopolistic firm with marginal revenue \(\text{MR}(q)=\text{MR}_0q^{-b_R}\),
\(b_R>0\) and marginal cost \(\text{MC}(q)=\text{MC}_0+\text{MC}_1q^{-b_C}\), { }\(\text{MC}_1>0\). After the substitution \(x=q^{-b_R}\), the firm{'}s
first-order condition becomes \(\text{MC}_0+\text{MC}_1x^b=\text{MR}_0x\), with \(b\equiv b_C/b_R\). This equation admits closed-form solutions by
the method of radicals for \(b\in\)$\{$\(-3\), \(-2,-1,-\frac{1}{2},-\frac{1}{3},0,\)\(\frac{1}{4}\),\(\frac{1}{3}\),\(\frac{1}{2}\),1,2,3,4$\}$.
The second-order condition is satisfied only for \(b<1\), so we restrict our attention to the first 9 values. For illustration, consider the simple
goal of finding numerical values of \(q\) for \(b\) between these points. Instead of solving the first-order conditions by usual numerical methods,
we may interpolate between the closed-form solutions, say, using cubic splines. Figure  \ref{FigureInterpolationBetweenAnalyticSolutions} shows the result of such interpolation, as well as the true solutions to the first-order condition, for \(\text{MC}_0=\text{MC}_1=\text{MR}_0=1\).
The agreement is extremely good, with average absolute value of relative deviations equal to 0.00013 and maximum absolute value of relative deviations
equal to 0.00056.\footnote{The corresponding values for Mathematica{'}s Hermite polynomial interpolation
were 0.00069 and 0.0021.}

If variables of interest in large scale, computationally intensive problems are similarly well behaved, then clearly the interpolation method could
save remarkable amounts of computation time and research budgets. There are other ways to extend the usefulness of the closed-form solutions to other
parameter values. For example, it may be possible to perform a Taylor expansion around a given closed-form solution. Such approach may also be combined
with the interpolation method. 

More broadly, one can view our approach to economic modeling as resembling pragmatic approaches in Bayesian Statistics. In that literature, it is
usually impossible to compute the posterior probability distribution associated with most prior distributions given the observed, often large, data
set. { }It is therefore common to approximate the prior by one selected from a class of prior distributions which are known to update to another
prior within that class in closed form as this minimizes the computational requirements of updating. { }In a similar manner, our tractable equilibrium
forms may approximate arbitrary cost and demand curves, while allowing solutions in closed-form which allow nesting inside of computationally intensive
models.

 \begin{figure}[t]\begin{center}\includegraphics[width=4in]{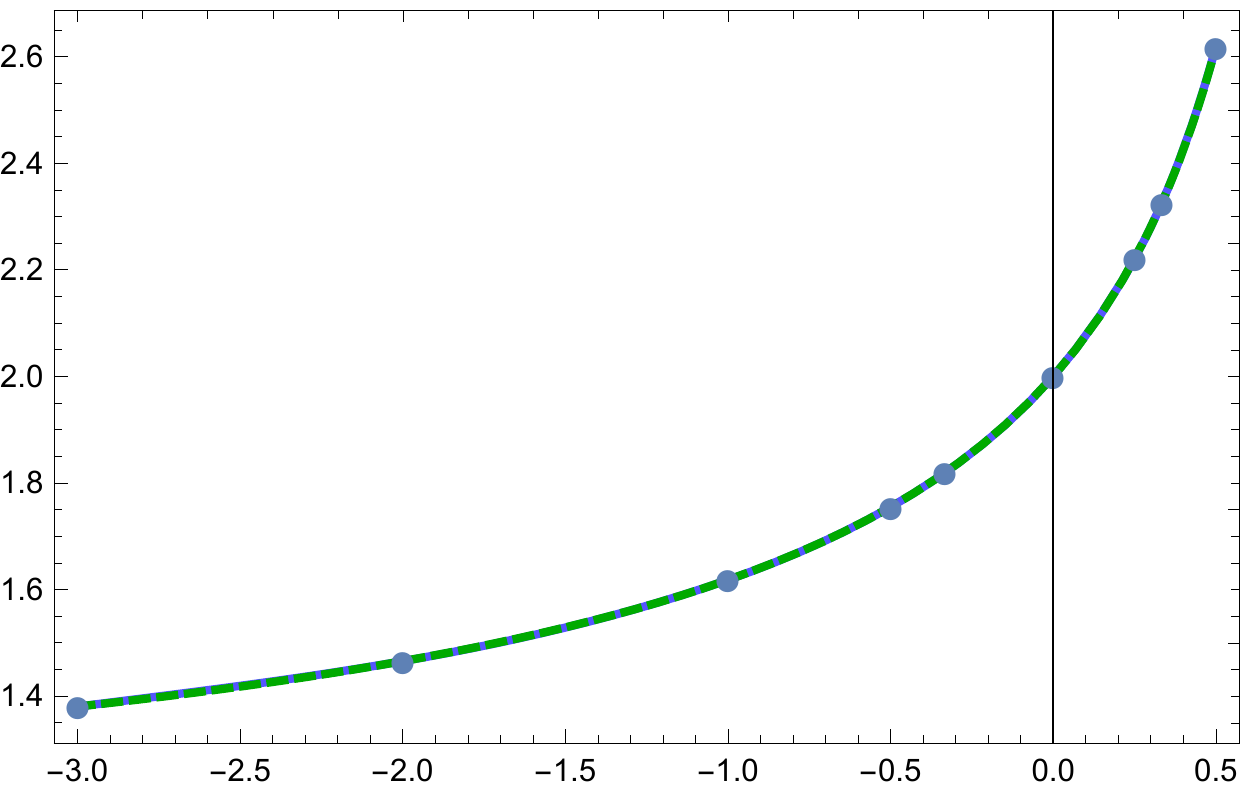}
 \caption{Comparison of an interpolation between analytic solutions and the correct values. The
blue dots represent analytic solutions. The blue solid line corresponds to an interpolation using cubic splines. The green dashed line represents
correct values.\\
}\label{FigureInterpolationBetweenAnalyticSolutions}
 \end{center}\end{figure}

 \vspace{2mm}In the next two sections we explore concrete
applications of our approach to closed-form solutions in economics. We will return to more theoretical matters in Section 
\ref{SectionArbitraryDemandAndCostFunctions}.

 \section{World Trade}\label{SectionWorldTrade}

 \subsection{Overview}\label{SubsectionOverview}

In this section we present a large-scale empirical application of our analytic approach to flexible functional forms in economics: a model of world
trade with a realistic cost structure for heterogeneous firms. 

International trade researchers almost always postulated constant marginal costs. Firms were assumed to have a constant marginal cost of production.
They were also assumed to face constant marginal costs of trade, either in the {``}iceberg{''} form (i.e., damage of goods as they are transported)
or in a per-unit form. Both of these assumptions are unrealistic. When we depart from them, we find an interpretation of world trade flows that is
dramatically different from the conventional view. The model{'}s parameters and predictions take realistic values, which resolves empirical puzzles
in the international trade literature.

Our model describes a world with multiple countries, with a general setup analogous to  \citet{melitz}. Our two important modifications are as follows. First, in addition to the usual iceberg cost, we allow for a specific cost
of trade that varies non-linearly with the traded quantity. Second, production is subject to increasing marginal cost, designed to capture the difficulty
of scaling the firm, e.g. due to internal agency problems. After discussing computational considerations and, separately, the two economic cost effects,
we return to the setup of the main model in Subsection  \ref{SubsectionModelSetup}.

 \subsection{Computational considerations}\label{SubsectionComputationalConsiderations} 

In applied fields of economics, such as the study of international trade, researchers can quickly reach the limits of what is computationally feasible
because of the number of economic agents and the high dimensionality of their choice sets (or state spaces). In our case, we study trade flows between
many countries involving heterogeneous firms, each of which is facing a combinatorially difficult decision problem. With powerful hardware, software,
and efficient algorithms, we were able to get a model fit for given parameter values in about a month and at a non-trivial cost. We were utilizing
our analytic solutions to sub-problems, without which the computation would be substantially longer and more costly.

Our functional forms help us in two ways: First, to evaluate firms{'} sales decisions conditional on the level of their marginal cost of production
and export entry decisions, we just need to evaluate closed-form solutions. This is crucial { }for being able to quickly evaluate a large number
of alternative sales patterns a firm may consider, and to find some of the best ones. Second, conditional of all firms{'} export entry decisions,
we can solve for the resulting general equilibrium of world trade by accelerated gradient descent algorithms, the Adam algorithm in our case. For
this algorithm to be useful, we need to be able to calculate gradients of candidate solutions{'} loss functions (error functions) analytically. Because
of the scale of the problem, we do not perform the gradient calculation by hand. { }Instead, we rely on automatic analytic differentiation software,
namely the neural-network optimization package of PyTorch, which allows us to run all computations in a highly parallel fashion on graphics processing
units (GPUs).\footnote{PyTorch is an open source software framework developed by Facebook primarily
for deep learning in artificial neural networks. Its first version was released in 2017.}

 \subsection{Firm-level economies of scale in shipping: a generalized Economic Order Quantity
model}\label{SubsectionFirmLevelEconomiesOfScaleInShipping}

Most models of international trade assume that the costs of trade are of the {``}iceberg{''} type: a fraction of all goods transported is assumed
to be destroyed in transit. It seems implausible that most of true trade costs would scale with trade volume and value in this manner.\footnote{Tariffs would depend on value in the same way as iceberg trade costs, although the details of their impact would be different,
as goods are not destroyed and governments collect tariff revenue. That said, most trade costs modeled as iceberg trade costs in the literature are
not supposed to represent tariffs and we will not focus on tariffs in this paper, although, of course, they may be incorporated in our model.} A certain fraction of international trade papers, e.g.  \citet{melitzottaviano}, allow for constant marginal per unit costs of trade.\footnote{Per-unit
costs of trade seem more realistic than costs of trade proportional to the goods{'} value, as documented by, e.g., \citet{hummelsskiba}.
Note that this reference did not allow for non-linearity of trade costs.} However, the adoption of standardized
shipping containers has made such constant marginal per-unit costs of transportation extremely low relative to the trade costs necessary to explain
the rates of global trade flows.

We work with the assumption that most important trade costs come from coordination (shipment preparation) costs and inventory costs, which is why
the logistics literature focuses on them. These costs depend on the frequency of shipping. If a firm ships too infrequently, it will face large costs
associated with idle inventory. If it ships too frequently, shipment preparation costs will add up to a large number. Knowing this trade-off, the
firm will choose an optimal frequency of shipping that balances these two effects. The resulting effective cost of trade then exhibits economies
of scale: a firm wishing to ship only a small quantity on average per unit of time will find shipping to be costly per unit of quantity.

To gain empirical insight into the scale economies of international trade, we estimate a model of optimal shipping frequency using monthly international
shipment data. Our approach generalizes the classic Economic Order Quantity model of Ford W. Harris, which is widely taught in operations management
courses in business schools and applied by logistics planning managers in corporations.\footnote{Despite
not appearing in the international trade literature, the Economic Order Quantity (EOQ) model \citep{harris} is perhaps the most
classical model of trade costs in the operations research literature and is regularly taught to business students as a method of optimizing their
inventory decisions; see e.g. \citet{cardenas2014celebrating} for highlights of its importance. Judging from the absence of citations,
the academic international trade community is largely unaware of Harris{'} publication. When fixed costs per shipment are included in international
trade models, they are incorporated in theoretical models with different structures. Those models are similar in spirit, but do not strictly speaking
contain the EOQ model or its generalizations \citep{kropf2014fixed, hornok2015administrative}. These papers provide very useful
insights into shipment frequency issues, and so does the purely empirical paper \citet{hornok2015per}. Note also that economies
of scale in shipping were studied by \citet{anderson2014gravity} and \citet{forslid2016big}, but those approaches
were not based on shipping frequency and in the former case involved external (i.e., not within-firm) economies of scale.}
Consider a firm that produces a single good in one country and wishes to ship to a different country quantity \(q\) per year, on average. The firm
faces a tradeoff between inventory costs and coordination costs associated with frequent shipping. The average annual inventory cost \(C_i\) is linearly
proportional to \(q\) and to the time \(T\) a typical unit of the good needs to remain in storage. If the size of each shipment is \(q_s\), then
\(T\), in turn, is linearly proportional to \(\left.q_s\right/q\), implying \(C_i=\kappa _iq_s\), for some constant \(\kappa _i\). The coordination
cost \(C_s\) of each shipment is proportional to its size: \(C_s=\kappa _tq_s^{\gamma }\), $\gamma \in $[0,1). (In addition, we could assume an
additional term proportional to \(q_s\), but this would not affect the optimal choice of \(q_s\) for given \(q\).) The resulting average annual coordination
cost is \(C_t=C_sq\left/q_s\right.=\kappa _tq q_s^{\gamma -1}\). { }Minimizing the sum of the inventory cost and the coordination cost leads to
the optimal choice \(q_s=\left(q (1-\alpha ) \kappa _t \kappa _i^{-1}\right)^{\frac{1}{2-\gamma }}\)\(\text{}\), the minimized value \((2-\gamma
) (1-\gamma )^{-\frac{1-\gamma }{2-\gamma }} \kappa _i^{\frac{1-\gamma }{2-\gamma }} \kappa _t^{\frac{1}{2-\gamma }} q^{\frac{1}{2-\gamma }}\), 
 and the optimal frequency of shipping \(f_s=\)\(q\left/q_s\right.\) equal to\[f_s=(1-\alpha )^{-\frac{1}{2-\gamma }} \kappa _i^{\frac{1}{2-\gamma }} \kappa _t^{-\frac{1}{2-\gamma }} q^{\frac{1-\gamma }{2-\gamma }}.\]
 This result implies that we can infer the coordination cost exponent $\gamma $ by examining the relationship
between the average annual quantity shipped and the frequency of shipping. If we regress the logarithm of shipping frequency \(f_s\) on the logarithm
of average annual quantity \(q\), the resulting slope coefficient should equal \(\beta  \equiv  (1-\gamma )/(2 - \gamma )\). The model predicts that
this coefficient always lies between 0 and 1/2, since $\gamma \in $[0,1).

Our simple model of shipping frequency choice nests two important extreme cases. The original Economic Order Quantity model, in which the cost per
shipment is fixed, { }corresponds to \(\gamma =0\) and \(\beta =1/2\), implying effective cost of trade (here inventory and coordination) proportional
to \(\sqrt{q}\). The other extreme case has \(\gamma \to  1\) and \(\beta \to  0\) and corresponds to effective cost of trade linearly proportional
to \(q\), i.e., constant marginal cost of trade, as assumed in almost all of the international trade literature.

To estimate $\beta $ and to test the prediction that $\beta \in $(0,1/2], we used a dataset on monthly shipments from China to Japan during years
2000-2006. { }We focus on firms in one narrowly-defined product category.\footnote{We selected
firms by requiring that they specialize in one product category (one 8-digit HS code). The exporting firm had to be active for more than two years
to be included in our estimation sample. We selected industries that included at least 10 firms meeting these criteria, in order to work with industries
that allow for a precise estimate of \(\beta\). We were also careful to take into account potential effects of seasonality, which could affect our
estimates. We constructed a measure of seasonal variations of exports for individual industries. Our estimates of \(\beta\) did not differ almost
at all between industries with larger and smaller seasonality. We discuss more details in Appendix \ref{AppendixDetailsOfTheGeneralizedEOQModelEstimation}.} { }Our point estimate of $\beta $ (averaged across industries) is 0.39 with a 95$\%$ confidence interval of [0.36,0.42].\footnote{The confidence interval corresponds to a simple statistical model in which \(\beta\) for different industries is drawn from
a normal distribution.} { }We can thus clearly reject the null hypothesis that \(\gamma = 1\) and \(\beta
= 0\), which would correspond to trade costs being linearly proportional to quantity shipped, as assumed in the vast majority of the international
trade literature. We can also reject the original EOQ model, which would correspond to \(\gamma =0\) and \(\beta =1/2\). We see, however, that the
original EOQ model is closer to reality than the linear proportionality assumption. { }

In our main trade model, we round the resulting value for $\beta $ from 0.39 to 0.4. This estimate implies that increasing quantity by 10$\%$ reduces
(the variable component of) the marginal cost of trade by 4$\%$. { }We refer to the effective cost of trade as {``}cost of shipping{''}, remembering
that it arises from per-shipment coordination costs and from inventory costs with optimally chosen shipping frequency. In the rest of this section,
we use the notation \(\nu _{\text{LT}}\) for what was \(1-\beta =1/(2-\gamma )\) here, and set \(\nu _{\text{LT}}=0.6\). 

 \subsection{Export quantity determination}\label{SubsectionExportQuantityDetermination} 

For clarity of exposition, let us now consider the problem of export quantity determination for a firm that faces trade costs found in the previous
subsection. Similar ingredients will appear also in our main model described in Subsection  \ref{SubsectionModelSetup}. 

A firm considers exporting to one foreign country. If it delivers quantity \(q_f\) there, it will receive revenue \(R\left(q_f\right)\), for which
we choose the form { } \(R\left(q_f\right)=\nu _R^{-1}\kappa _R q_f^{\nu _R}\), where \(\nu _R=1-1/\sigma\) and \(\sigma =5\). The elasticity
of demand { }\(\sigma =5\) is consistent with the typical range in the trade literature. The firm faces an iceberg trade cost factor $\tau $, meaning
that it needs to send \(\tau  q_f\) in order for \(q_f\) to arrive. The shipping requires \(L_T\left(q_f\right)\) units of labor, which translates
into a cost \(w L_T\left(q_f\right)\). We choose \(L_T(q)=\nu _{\text{LT}}^{-1} \kappa _{\text{LT}} q^{\nu _{\text{LT}}}\), with \(\nu _{\text{LT}}=3/5\),
in agreement with the previous subsection. In this illustrative example, we assume constant marginal cost \(\text{MC}\). 

The second derivative of the profit function \(R\left(q_f\right)-\tau  \text{MC} q_f+w L_T\left(q_f\right)\) is \(\frac{2 }{5 }w \kappa _{\text{LT}}q_f^{-7/5}-\frac{1}{5
}\kappa _Rq_f^{-6/5}\), so the profit function is convex for \(q_f\in \left(0,32 w^5 \kappa _{\text{LT}}^5\kappa _R^{-5}\right)\) and concave for
\(q_f\in \left(32 w^5 \kappa _{\text{LT}}^5\kappa _R^{-5},\infty \right)\). To identify the maximum, we just need to find potential local maxima
in the second region and to check whether they are larger than zero. This is because at \(q_f=0\) the profit is zero, and as \(q_f\to  \infty\) it
goes to \(-\infty\).

The firm{'}s first-order condition is\[R'\left(q_f\right)- \tau  \text{MC}-w L_T'\left(q_f\right)=0\, \, \, \, \, \Longrightarrow \, \, \, \, \, -\frac{w \kappa _{\text{LT}}}{\tau }
q_f^{-\frac{2}{5}}+\frac{\kappa _R}{\tau } q_f^{-\frac{1}{5}}-\text{MC}=0.\]
We recognize that the function of \(q_f\) on the left-hand side is one of our proposed tractable functional forms. We
can, therefore, solve the first-order condition in closed-form, in this case using the quadratic formula. { }If \(\text{MC}> \kappa _R^2/\left(4
w \tau  \kappa _{\text{LT}}\right)\) { }there is no real solution and the firm will choose not to export. If \(\text{MC}\leq \text{  }\kappa _R^2/\left(4
w \tau  \kappa _{\text{LT}}\right)\), the solution that lies in the \(\left.\left[32 w^5 \kappa _{\text{LT}}^5\kappa _R^{-5},\infty \right.\right)\)
region equals\[q_f=\left(\frac{\kappa _R+\sqrt{\kappa _R^2-4 \tau  \text{MC} \kappa _{\text{LT}} w}}{2 \tau  \text{MC}}\right)^5.\]
Plugging this position of the local maximum into the profit function gives\[\frac{1}{192 \text{MC}^4 \tau ^4} \left(\kappa _R+\sqrt{\kappa _R^2-4 \text{MC} w \tau  \kappa _{\text{LT}}}\right)^3 \left(-16 \text{MC} w \tau
 \kappa _{\text{LT}}+3 \kappa _R \left(\kappa _R+\sqrt{\kappa _R^2-4 \text{MC} w \tau  \kappa _{\text{LT}}}\right)\right)\]
The first two factors are positive, and the last one is positive if and only if \(\text{MC} <15\kappa _R^2/\left(64 w
\tau  \kappa _{\text{LT}}\right)\approx 0.234 \kappa _R^2/\left(w \tau  \kappa _{\text{LT}}\right)\). If this condition is satisfied, the firm will
export the quantity satisfying the first-order condition, otherwise, it will export zero.\footnote{If
exporting leads to zero profit just like not exporting, we specify that the firm chooses not to export.}
Thus for any level of marginal cost, the quantity chosen by the firm may be written compactly as\[q_f=\left(\frac{\kappa _R+\sqrt{\kappa _R^2-4 \tau  \text{MC} \kappa _{\text{LT}} w}}{2 \tau  \text{MC}}\right)^51_{\text{MC}<\frac{15 \kappa
_R^2}{64 w \tau  \kappa _{\text{LT}}}},\]
where the second factor represents an indicator function. We see that the functional form allowed for a very simple and
straightforward analysis.
We also see that exporting may not be profitable even if there is no fixed cost of exporting. This implies that such model
with constant elasticity of demand can generate an export cutoff without fixed costs of exporting.

 In our main model described in Subsection  \ref{SubsectionModelSetup},
which no longer assumes that the marginal cost of production is constant, we still benefit from the closed-form characterization of the solution
to the first-order condition in terms of the level of marginal production cost. This is both for the evaluation of the solution and for taking derivatives
of the solution, as needed by gradient descent algorithms. Of course, the degree of the benefit grows in proportion to the number of potential export
destinations.

 \subsection{Increasing marginal cost of production}\label{SubsectionIncreasingMarginalCostOfProduction} 

Economies of scale, modeled using fixed costs of production, are present in most models of firms in the international trade literature. By contrast,
diseconomies of scale almost never appear in that literature. Yet there are many reasons to believe that increasing marginal costs of production
are similarly important in shaping firms{'} behavior. This is presumably why introductory economics classes frequently illustrate increasing marginal
cost schedules. Beyond short-to-medium term capacity constraints and adjustment costs usually discussed in such courses, even in the longer term
if a company decides to scale up its production an order of magnitude, it needs to introduce an additional layer of management hierarchy, which brings
with it non-trivial agency problems. In a large organization incentives are diluted, and maintaining motivation, discipline, and output quality becomes
harder.\footnote{The restaurant industry is an obvious example: few people would associate chain
restaurants with outstanding culinary experience. Another fairly obvious example is the automobile industry: there are many automakers in the world,
each having a relatively small market share, very stable over time, even though cars produced by different automakers are highly substitutable from
costumers{'} perspective. With constant marginal costs of production this would require a remarkably small dispersion of marginal costs across firms,
which is especially hard to rationalize given the large observed fluctuations of currency exchange rates. Also, the increasing nature of marginal
costs of production was one of the reasons why socialist economies were unsuccessful: state-controlled monopolies avoid duplication of effort in
product design and other fixed costs of production, yet they suffer from severe agency problems that private sector competition can mitigate. Although
here we emphasize increasing marginal costs of production in the long term, they are also interesting at short time scales; see \citet*{almunia2018venting}
and references therein.} Of course, managers of firms are intuitively aware of these problems, at least
to some extent, and take them to account when shaping the structure of the firm. 

Issues of this kind are the subject of interest to vast literature within organizational economics, which includes
\citet{williamson1967hierarchical},  \citet{calvo1978supervision}, and  \citet{tirole1986hierarchies}.\footnote{Oliver E. Williamson's Nobel lecture (\citet{williamsonnobellecture}) provides an excellent, compact discussion
of the many things that may go wrong in a large organization. For a related discussion, see \citet{tirole1988theory}.} 

Estimating how much marginal costs increase with production volume is non-trivial since both economies and diseconomies of scale play a role in firm
behavior. Our model provides a unique opportunity to obtain such estimates by matching firm-level multi-destination export data with world trade
model solutions.

 \subsection{Model setup}\label{SubsectionModelSetup}

Apart from the cost structure of the firms, our model is closely analogous to  \citet{melitz}, which many readers are familiar with. For this reason, we keep the description of the modeling setup succinct.

The world consists of { }\(N_c\) countries, indexed by \(k\). In each country, different varieties \(\omega\) of a differentiated good are produced
by monopolistically competitive heterogeneous single-product firms using a single factor of production, for simplicity referred to as labor.

 Consider a firm located in country \(k\) and identified by an index \(i\). In order to produce a quantity \(q_i\), the firm needs to pay a variable
cost of { }\(\frac{1}{1+\alpha } \kappa _{C,i} w_k q_i^{1+\alpha }\), where \(w_k\) is the competitive wage the firm{'}s country \(k\) and { }\(\kappa
_{C,i}\) is a positive constant that depends on the firm. Importantly, the constant $\alpha $ { }determines how { }quickly marginal costs increase
when any firm decides to scale up production; it is the elasticity of the marginal cost of production with respect to quantity. { }In addition to
the variable cost, there is a fixed cost \(f_o\) of operation and a fixed cost { }\(f_x\) of exporting to a destination country \(k_d\), expressed
in units of domestic labor.\footnote{In general, we can allow for country-dependence of these
costs: \(f_{o,k}\) and \(f_{x,k,k_d}\). We chose to make them country-independent for simplicity, not for tractability or computational feasibility
reasons.} { }

Entry into the industry is unrestricted, but involves a sunk cost of entry \(f_e\), again in units of domestic labor. Only after the entry cost has
been paid does the firm learn its variable cost parameter \(\kappa _{C,i}\), drawn from a distribution with cumulative distribution function \(\tilde{G}\left(\kappa
_C\right)\). When the value of \(\kappa _{C,i}\) is revealed, the firm decides whether or not to exit the industry, and if it does not exit, whether
to export to any of the other countries. In addition to endogenous exit, with a probability of \(\delta _e\) per period the firm is exogenously forced
to exit (starting from the end of the first period).

Trade costs have two components. The first corresponds to standard iceberg trade costs: in order of one unit of the good to arrive in the destination
country \(k_d\), \(\tau _{k,k_d}\) units need to be shipped.\footnote{Including also per-unit
trade costs would not affect the computational feasibility of the model.} The second component requires
using an amount of labor given by \(L_{T,k,k_d}(q)=\nu _{\text{LT}}^{-1} \kappa _{\text{LT},k,k_d} q^{\nu _{\text{LT}}}\), where we set \(\nu _{\text{LT}}=\frac{3}{5}\)
to be consistent with the empirical value, as in Subsection  \ref{SubsectionExportQuantityDetermination}.\footnote{The cost \(L_{T,k,k_d}(q)\) is associated with coordination/shipment
preparation tasks and with inventory costs. Its form is motivated by the empirical results of Subsection \ref{SubsectionFirmLevelEconomiesOfScaleInShipping}.}

Consumers in each country have a CES utility function \(U=(\int q_{\omega }^{1-\frac{1}{\sigma }}d\omega )^{\frac{\sigma }{\sigma -1}}\) that depends
on the quantity \(q_{\omega }\) of each variety $\omega $ consumed. As in Subsection  \ref{SubsectionExportQuantityDetermination}, we set the elasticity of substitution $\sigma $ equal to 5, which is consistent with the typical range in the existing
empirical literature of about 4 to 8. This exact choice is motivated by analytic tractability. Each country \(k\) has an endowment of labor \(L_{E,k}\),
which is supplied at a country-specific competitive wage rate \(w_k\) mentioned above.

The revenue a firm can earn by selling a quantity \(q\) in a given market is \(R_{k_d}(q)=\frac{\kappa _{R,k_d}}{\nu _R} q^{\nu _R}\), where \(\nu
_R=1-\frac{1}{\sigma }\). The factor \(\kappa _{R,k_d}\) is endogenously determined and depends on the price index and the consumption expenditures
in the destination country. 

The firm may choose to exit the industry (to save on the fixed cost \(f_o\)) or to operate and sell its product in a number of countries, earning
a non-negative profit $\pi $ per period of operation. In expectation, an entrant needs to break even: \(\delta _e f_e=\text{E$\pi $}\), which determines
the equilibrium measure of firms in each country.\footnote{The model has no explicit discounting
of future utility, but \(\delta _e\) plays a role similar to a discount rate.} Similarly, labor markets
in each country \(k\) need to clear, which means that the total labor demanded by firms at wage \(w_k\) needs to equal the labor endowment \(L_{E,k}\).
If we impose balanced budget conditions, consumers{'} expenditures equal their wage earnings, as firms earn zero ex-ante profits.\footnote{In our empirical setting we allow for budget imbalances that reflects similar imbalances in the data.}

 \subsection{The exporting firm{'}s problem}\label{SubsectionTheExportingFirmsProblem} 

Let us discuss the nature of the exporting firm{'}s problem. Increasing marginal costs will limit the scale of the firm{'}s production. Since trade
is subject to decreasing marginal costs, the firm will concentrate its exports into a limited number of countries. The overall production level \(q_i\)
of firm \(i\) as well as export market entry decisions are endogenous. For now let us consider the relation between of export quantities and \(q_i\),
conditional on having paid fixed costs of exporting to a number of countries.

The first-order condition for choosing the quantity \(q_{f,i,k_d}\) that should reach a foreign market \(k_d\) equates the marginal revenue and the
comprehensive marginal cost that depends on the overall production level \(q_i\):\[R_{k_d}'\left(q_{f,i,k_d}\right)=\tau _{k,k_d} \text{MC}_i\left(q_i\right)+w_k L_{T,k,k_d}'\left(q_{f,i,k_d}\right)\Rightarrow \frac{\kappa _{R,k_d}}{\tau
_{k,k_d}} q_{f,i,k_d}^{-\frac{1}{5}}=\text{MC}_i\left(q_i\right)+\frac{w_k \kappa _{\text{LT},k,k_d}}{\tau _{k,k_d}} q_{f,i,k_d}^{-\frac{2}{5}},\]
 in analogy with Subsection  \ref{SubsectionExportQuantityDetermination}. The solution for \(q_{f,i,k_d}\) given \(q_i\) is:\[q_{f,i,k_d}=\frac{1}{\left(2 \tau _{k,k_d} \text{MC}_i\left(q_i\right)\right)^5} \left(\kappa _{R,k_d}+\sqrt{\kappa _{R,k_d}^2-4 w_k \kappa _{\text{LT},k,k_d}
\tau _{k,k_d} \text{MC}_i\left(q_i\right)}\right)^5\]
 If the marginal cost of production \(\text{MC}_i\left(q_i\right)\) exceeds \(\kappa _{R,k_d}^2\)/(\(4 w_k \kappa _{\text{LT},k,k_d}
\tau _{k,k_d}\)), the first-order condition cannot be satisfied. For domestic sales we assume \(\kappa _{\text{LT},k,k}=0\) and \(\tau _{k,k}=1\),
so the optimal quantity sold domestically is simply \(q_{i,k}=\left(\kappa _{R,k}/\text{MC}_i\left(q_i\right)\right)^5\).

The total quantity \(q_i\) produced should equal the total of quantity sold domestically and sent abroad: \(q_i=q_{i,k}+\sum _{k_d\neq  k}\tau _{k,k_d}
q_{i,k,k_d}\), with \(q_{f,i,k_d}\) { }given by the formula above. This represents one equation for one unknown: \(q_i\). Each root of this equation
represents a candidate optimum for the firm.\footnote{Mathematically, the firm{'}s choice of
destinations in order to maximize profit is a submodular function maximization. This is because serving an additional set of markets \(A\) is less
attractive if the initial set of markets \(S_l\) is larger: \\
\(\pi \left(S_2\cup A\right)-\pi \left(S_2\right)\leq \pi \left(S_1\cup A\right)-\pi \left(S_1\right)\) for { }\(S_1\subseteq  S_2\) and \(A\cap
S_2=\emptyset\). Here \(\pi (S)\) denotes the optimal profit a firm can earn if it serves a set of markets \(S\). If instead our problem was supermodular
function maximization, it would be algorithmically easy. International trade papers such as \citet{antras2017margins} take advantage
of supermodular function maximization being straightforward.} The profit-maximizing choice(s) of destinations
may then be found by evaluating total profits at these candidate optima. For a small number of countries this is simple, but for large \(N_c\) the
problem becomes combinatorially difficult.\footnote{For an in-depth discussion of combinatorial
discrete choice problems in economics, see \citet{eckert2017combinatorial}. The method that \citeauthor{eckert2017combinatorial}
propose would be useful for us if the number of countries we consider were substantially smaller. This is because the method reduces the exponent
of an exponentially difficult problem, but does not change its exponential nature; submodular function maximization is NP-hard in general.} For this reason, when we solve the model for a large number of countries, we use approximate algorithms instead of an exhaustive
search.\footnote{We should clarify that even conditional on having made export fixed-cost payments,
the number of candidate optima is still combinatorially large. This is because for some destinations it may be impossible to satisfy the FOC and
in those cases we allow the firm to export zero amount there. To avoid this difficulty, { }when we consider candidate optima, we restrict attention
to those that satisfy a particular ordering condition, without loss of generality. We rank export destinations by the level of (constant) marginal
cost that would make them a profitable destination, in descending order. Then we require that if a firm exports a positive amount to a given destination,
it also exports to all preceding destinations. Imposing this condition is without loss of generality because if a firm decides to export zero amount
to a destination, it should not have paid the associated fixed cost of exporting in the first place.} 

 \subsection{Solution strategy}\label{SolutionStrategy}

We solve the model using an iterative algorithm that has an outer loop and an inner loop.\footnote{Due
to its combinatorial nature, the exact version of our model is computationally extremely difficult. It may seem natural to try to obtain approximate
solutions by first fixing aggregate variables in the model, solving for firm decisions given these aggregates, and then updating the aggregate variables
based on the firms{'} behavior. We attempted to do that, but could not get results within a reasonable amount of time and budget. This is because
for any values of aggregates, we needed to solve separate discrete choice problems by many firms, which requires a lot of time. For this reason,
we used a different nesting of loops: we moved all discrete choice decisions into an outer loop of an iterative algorithm, and given these discrete
choices, we solved for all continuous variables in an inner loop.} In the outer loop firms decide whether
or not they pay fixed costs of operation and fixed costs of exporting and commit to their decision. In the inner loop, we then solve for the general
equilibrium of the world economy given these fixed-cost decisions.

Finding this general equilibrium without the tractable functional forms is computationally difficult since a multi-level nested iteration is very
time-consuming. However, thanks to the analytic nature of our model, we were able to obtain the general equilibrium much faster using accelerated
gradient descent in a space parametrized by quantities \(q\), wages \(w\), measures of firms \(M\), price-index related variables \(\kappa _R\),
and country-level expenditures \(E\). We used the Adam optimizer of  \citet{kingma2014adam}, as implemented in PyTorch, a neural network optimization software for GPU computing.\footnote{We tried several accelerated and non-accelerated gradient descent algorithms. Adam performed the best.}
The gradients are computed analytically by automatic differentiation (autograd, in this case) and backpropagation.\footnote{Our model, as detailed in the next subsection, had more than 20,000 variables and described 2 million potential trade flows.
Newton{'}s method would not be feasible here, given that the Hessian of the loss function would have 400 million entries, although light-weight second
order methods, such as L-BFGS, could potentially be useful. They would again benefit from the analytic nature of our model. For an overview of optimization
algorithms, see the excellent book by \citet*{goodfellow2016deep}.}\(\, ^,\,\)\footnote{Given firms{'} sunk cost decisions, we need to solve for the general equilibrium of the world economy, i.e., we need to solve
for wages, price indexes, and the measure of firms of each type in each country, as well as for production levels of each firm. What makes our calculation
fast is the fact that we have explicit formulas for quantities sent to individual destinations conditional on the firm{'}s marginal cost, and that
these formulas and their derivatives may be evaluated extremely fast.}

Given a solution to the general equilibrium problem, we then let firms reconsider their fixed cost payments. For numerical stability, we do not update
at once the fixed cost payment decisions of all firms. Instead, for each productivity level in a country we introduce \(N_v=10\) versions (copies)
of firms, which can differ by their fixed cost commitments. Updating fixed-cost commitments then proceeds in cohorts. In one iteration of the outer
loop, version 1 firms will be able to reconsider the fixed cost payment. In the second iteration, version 2 firms will do so, etc. Keeping different
versions of firms comes at a computational cost, of course, but we found this necessary.

Finding the best fixed cost decision is a combinatorially difficult problem. Given that there are \(N_c-1\) potential export destinations, this leads
to \(2^{N_c-1}\) possibilities for the exports. With \(N_c=100\), this is more than \(10^{29}\). To obtain an approximate optimum, we use Algorithm
{ }2 of  \citet{buchbinder2015tight}, which is stochastic
in nature. { }We consider 9 (random) runs of that algorithm, and if the best of them is better than the firm{'}s previous fixed cost decision, we
update it. After the update, we again solve for a new general equilibrium involving continuous variables.

 \subsection{Fitting the model}\label{FittingTheModel}

We work with { }\(N_c=100\) countries. { }This choice is motivated by data availability and parameter fit considerations: For a substantially larger
number of countries, the trade data would be too noisy and unreliable. For a substantially smaller number, it would be impossible to read off the
elasticity of the marginal production cost from the firms{'} export pattern using our method.

The labor endowment in the model is interpreted { }as an efficiency-adjusted number of units of a single production factor, which in practice would
include labor, capital, and the related productivity. This effective labor endowment and the trade cost prefactors { }\(\kappa _{\text{LT},k,k_d}\)
are recovered by fitting the model to data on country GDP and world trade flows for the year 2006, as described below.

 To match the typical empirical firm size distribution, which we take as Pareto distribution with Pareto index \(\mu _R=1.05\), we choose { }the
firm size distribution to be another Pareto distribution with Pareto index \(\left.\mu _R(\sigma -1)\right/(1+\sigma  \alpha )\).\footnote{The value of 1.05 for the Pareto index of the firm size distribution has empirical support in \citet{aoyama2010econophysics},
at least for the advanced economies studied there. In our open-economy model, there is no simple closed-form expression for the firm size (revenue)
as a function of firm productivity. For this reason we use a formula that would hold exactly for closed economies, as well as in the absence of trade
costs for the world. Simple algebra shows that the required value of the { }productivity Pareto index is { }\(\left.\mu _R(\sigma -1)\right/(1+\sigma
 \alpha )\) and we use this value. The calibration results in a good match for the firm size distribution of Chinese exporters (both for all firms
and for single-HSID firms, as these have the same empirical shape). Given this encouraging result, we have not explored other specifications for
the productivity distribution.} The productivity distribution is the same for every country in the model;
any real-world overall firm productivity differences across countries are represented by adjustments to the countries{'} effective labor endowments.

 For computational purposes we discretize the productivity distribution to \(N_p=20\) discrete values, each representing the same probability mass.\footnote{Initially, we tried \(N_p=10\), but such crude discretization led to numerical errors that were too large. Also note that
even though for simplicity we sometimes refer to the probability masses as {``}firms{''}, they really represent collections of firms in monopolistic
competition, not a few discrete firms in an oligopoly model.} 

In addition, we need to specify (the flow value of) the costs of entry, fixed costs of production, export market entry costs, as well as iceberg
trade costs. We make these choices as simple as possible, independent of the country or country pair. Their values are given in Table 
\ref{tractableforms2}. The flow value of the cost of entry is set to one half of the fixed
cost of operation. The fixed cost of exporting is set to be negligible. The iceberg trade cost \(\tau -1\) is non-zero but small enough to be consistent
with prices firms in practice pay for insurance or as tariffs. In general, the parameters are chosen to reflect a long-term interpretation of the
model, with timescales of many years.\footnote{We do not attempt to model high-frequency phenomena
in international trade (except that shipping frequency considerations provide a micro-foundation for our trade costs). For studying month-to-month
or year-to-years changes, it would not be appropriate to assume that the sunk fixed cost of exporting is fairly negligible.}

 \begin{table}\begin{center}
{\begin{tabular}{|c|c|c|c|c|c|c|c|c|c|}
\hline \(N_c\) & 100 & \(N_p\) & 20 & \(N_v\) & 10 { }& \(\alpha\) & multiple values\\
\hline \(\sigma\) & 5 { }& \(\nu _R\) & 0.8 &\(\nu _{\text{LT}}\) & 0.6 & \(\mu _R\) & 1.05 \\
\hline  \(\delta _e\)\(f_e\)   & 0.05 & \(f_o\)
& 0.1 { }& \(f_x\) & \(10^{-5}\) & \(\tau\) & 1.05 \\
\hline \end{tabular}}
\caption{Calibration parameter values }
\label{tractableforms2}
\end{center}\end{table}

We use importer-reported data on international trade flows for the year 2006 from the UN Comtrade database. We select 100 countries/economies with
the largest GDP, as reported by the IMF in its World Economic Outlook database, subject to trade and GDP data availability. We adjust the countries{'}
GDP for tradability using the United Nations{'} gross value added database; see Appendix  \ref{AppendixWorldTradeFlows}.

 \subsection{Elasticity of the marginal cost of production}

We solve for the model fit for different values of the parameter $\alpha $, the quantity-elasticity of the marginal cost of production. Then we compare
the resulting pattern of firm trade with that of Chinese firm-level export data for 2006 in order to find what value of $\alpha $ leads to a good
agreement. 

We obtained fits to the data on world trade flows and adjusted GDP levels for values of $\alpha $ ranging from 0.15 to 0.3; see Figure 
\ref{FigureMedianSizeOfChineseFirmByDestinationCountry}. In each case, we computed power-law
best-fit curves that describe the dependence of the median size of Chinese firms that export to a destination as a function of the popularity rank
of that export destination.\footnote{More precisely, we use the generalized method of moments
to fit functions of the form \(c_0\, (\text{rank})^{-c_1}+c_2\).} The popularity is computed as the fraction
of Chinese firms in the data that choose to export to the given destination. We also computed such best-fit curve for the data. The results are intuitive:
For smaller \(\alpha\), the difference between the median (log) size of firms exporting to unpopular destinations and to popular destinations is
larger because in this case the most productive firms will dominate world trade, and if a less productive firm decides to export at all, it will
choose a few of the popular destinations.

The data corresponds roughly to \(\alpha \approx 0.225\) if we consider all 99 export destinations when computing the best-fit curves, or to \(\alpha
\approx 0.25\), if we consider the first third of them by popularity rank. { }The first estimate has the advantage of taking into account a large
range of export destinations. We include the second estimate because the top third of the destinations account for a vast majority of Chinese export
and because the model{'}s precision is lower for very small countries. The values \(\alpha \approx 0.225\) or { }\(\alpha \approx 0.25\) would imply
that if a firm decides to scale up production by an order of magnitude, its marginal cost increases by about 68$\%$ or 78$\%$, respectively. { }These
values seem very realistic, given that such a dramatic expansion of the firm would require an additional layer of management hierarchy with related
principal-agent problems. Note that these inefficiencies would be partially offset by savings on the fixed cost of production.

 \begin{figure}[t]\begin{center}\includegraphics[width=12cm]{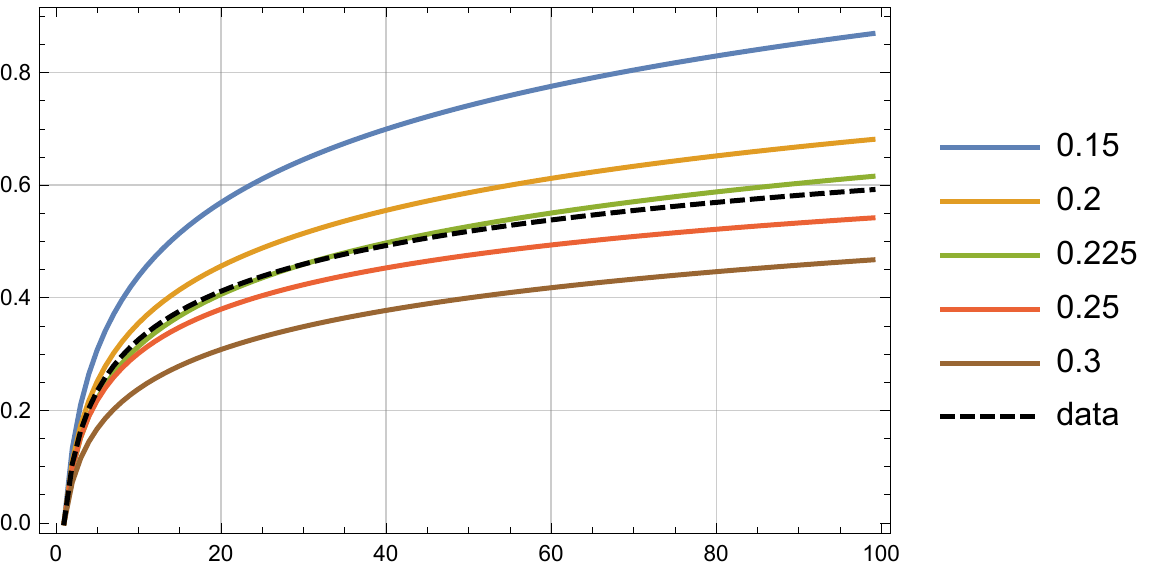}
\includegraphics[width=12cm]{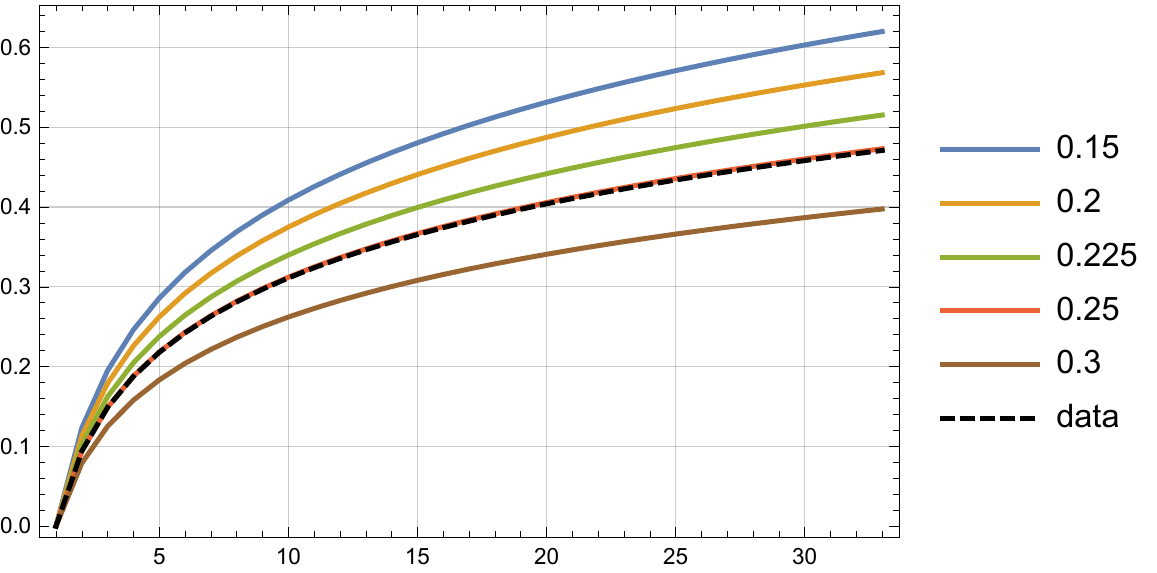}\end{center}
 \caption{Median revenue of Chinese exporting firm (base-10 log scale) by destination country
for different values of \(\alpha\) and for the observed values. For the first export destination (United States),
the median log revenue is normalized to 0 to make visual comparisons easier. The top figure corresponds to all 99 export destinations in the model,
while the bottom figure corresponds to the top third by export popularity.}\label{FigureMedianSizeOfChineseFirmByDestinationCountry}
 \end{figure}

 \subsection{The gravity equation of trade and the dependence of trade costs on distance}  \label{TheGravityEquationOfTradeAndTheDependenceOfTradeCostsOnDistance} 

The model fit results have important implications for the gravity equation of trade and for the trade cost puzzle (discussed in detail by 
\citet{disdier2008puzzling} and  \citet{head2013separates}).\footnote{See \citet{head2014gravity} for a recent
overview of the literature on the gravity equation of trade. Our purpose here is to highlight the consequences of our model{'}s mechanisms, so we
focus on the baseline gravity equation that describes the dependence of trade flows on distance and effective GDPs of countries. A comprehensive,
in-depth investigation of our model that parallels detailed studies in the gravity-equation literature will be reported separately. It is, of course,
worth investigating gravity equations with added controls, such as common language. Similarly, it is good to account for the {``}multilateral resistance{''}
phenomenon (i.e.more isolated countries being more eager to trade with a given partner) already when designing the regression/estimation equations
to study. Our model provides very different structural equations than other models, so the matter of multilateral resistance is quite involved. In
addition, it is good to explicitly consider trade flow zeros in constructing the regression/estimation equations, although that makes little difference
here as almost all trade flows are non-zero in our sample of 100 economies.} { }The gravity equation of
trade implied by the data\footnote{This is for 30 largest economies. For 100 economies the results
would be more noisy.} \[\log  x_{i j}\approx -0.77 \log  d_{i j}+1.12 \log  y_i+1.10 \log  y_j+\text{const.}\]
 matches well the gravity equation implied by the fitted model\footnote{Here
we used \(\alpha =0.225\).} \[\log  x_{i j}\approx -0.71 \log  d_{i j}+1.08 \log  y_i+1.02 \log  y_j+\text{const.}\]
 Of course, this is not surprising given that the world trade flows were a target of our model fit; if the fit was perfect,
the two equations would coincide. What is interesting, though, is that the trade cost prefactors (i.e. the factors \(\kappa _{\text{LT},k,k_d}\)
in \(L_{T,k,k_d}(q)=\nu _{\text{LT}}^{-1} \kappa _{\text{LT},k,k_d} q^{\nu _{\text{LT}}}\)) depend on distance only very weakly:\[\log  \kappa _{\text{LT}}\approx 0.048 \log  d_{i j}+0.032 \log  y_i+0.048 \log  y_j+\text{const.}\]
 We see that trade flows decrease rapidly with distance despite only a very mild increase of trade cost prefactors \(\kappa
_{\text{LT}}\) with distance. Although this may look surprising at first sight, there is clear intuition for this phenomenon: Due to increasing marginal
costs of production, firms effectively have only a limited output to sell and due to economies of scale in shipping, they need to concentrate their
exports to only a few countries. They choose close countries because the trade costs are slightly lower, which leads to strong effects for the decrease
of trade with distance.\footnote{We briefly discuss related literature and mechanisms in Appendix
\ref{AppendixWorldTradeFlowsRelatedLiterature}.\label{FootnotePointingToAppendixWorldTradeFlowsRelatedLiterature}} 

From the histogram in Figure  \ref{HistogramOfLogTradeCostPrefactors}
we see that the dispersion of \(\kappa _{\text{LT}}\) is very small, which is only possible with a very small dependence on distance. This small
dispersion is very much consistent with sea shipping over large distances being only mildly more expensive than over short distances. This provides
a very natural resolution to the trade cost puzzle in the international trade literature.

 \begin{figure}[t] \begin{center} \includegraphics[width=4.0in]{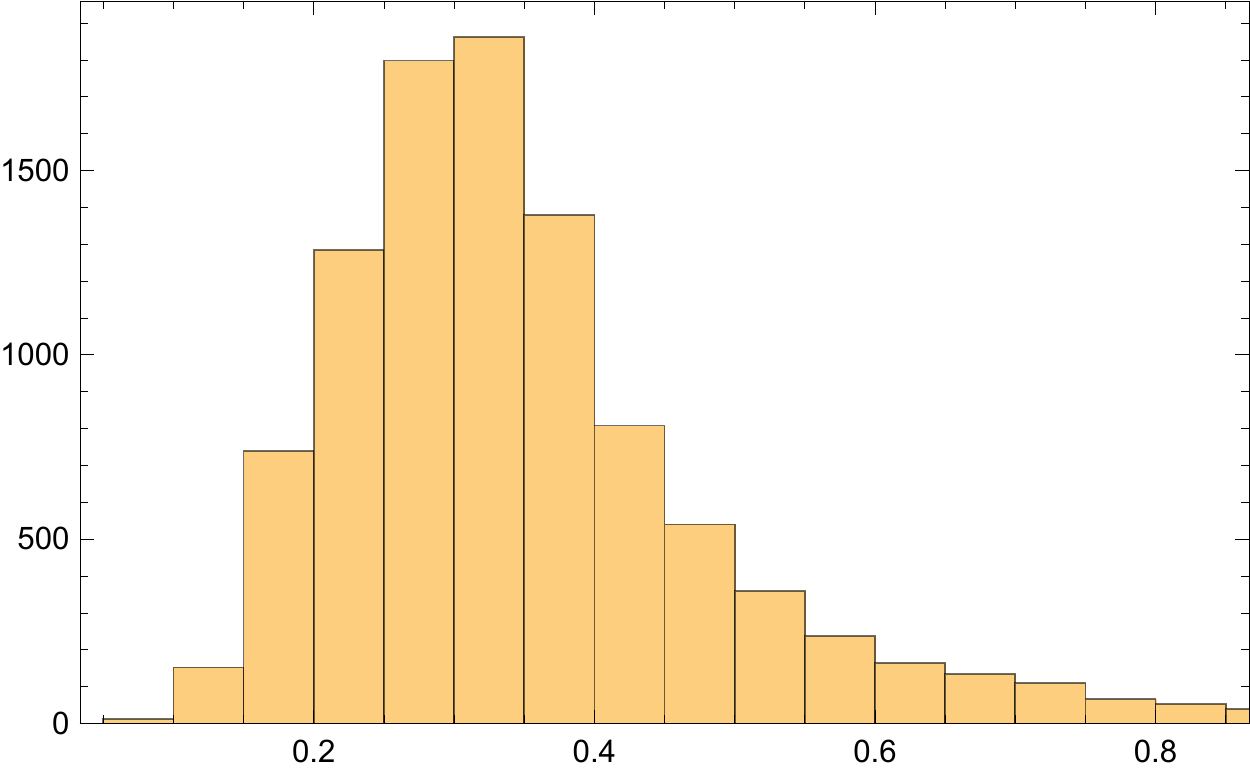}
\end{center}
 \caption{Histogram of the (symmetrized) trade cost prefactors.}\label{HistogramOfLogTradeCostPrefactors}
 \end{figure}

 \subsection{Choice of export destinations}\label{ChoiceOfExportDestinations} 

In the Introduction, we briefly discussed an empirical pattern of firm entry into export markets that would seem puzzling in standard models of international
trade. Our model naturally implies such pattern. Figure  \ref{FigureExportDestinationsByTwoIdenticalFirms} illustrates export market entry choices in the fitted model for pairs of identical firms, i.e. firms from the same country
and with the same productivity.\footnote{The choice of countries for the figure is not completely
arbitrary. China was chosen for the figure because it is a large country and we see patterns of this kind in its firm-level data. We chose the Czech
Republic since it is a small country with many neighbors and we know of patterns of this kind based on a series of interviews with Czech exporters
featured in \textit{ Hospodarske noviny}, a newspaper.} These would be impossible in a corresponding model
with constant marginal cost unless we introduced unrealistically large firm-destination specific cost shocks (or other similar shocks).\footnote{Although for identical firms this would be impossible if we introduce differences between the firms, there are other phenomena
that may play a role. We briefly discuss them in Appendix \ref{AppendixWorldTradeFlowsFirmExportPatterns}.\label{FootnotePointingToAppendixWorldTradeFlowsFirmExportPatterns}} It is straightforward to see why this is the case. With constant marginal costs, the decision of whether or not to enter
a particular export destination is independent of such decisions for other destinations, as long as the firm does not shut down. If there were no
firm-destination specific shocks, then two identical firms with the same constant marginal costs would reach the same conclusions about the profitability
of each export destination. In order to make one of the firms give up on a particular destination, we would have to introduce a firm-destination
specific shock that would offset all the profit the firm was about to make from selling at that destination.

Our model naturally delivers the export destination choice pattern that would seem puzzling otherwise. With increasing marginal costs of production,
a destination that is profitable for one firm may not be profitable for another identical firm, if that firm already serves other locations. Of course,
if there were no economies of scale in shipping (and no significant export entry fixed costs), firms would dilute their exports over more destinations
and would not face a combinatorial discrete choice problem. In that case, two identical firms would serve the same destinations, unless, again, there
were firm-destination specific shocks. For this reason, both increasing marginal costs of production and economies of scale in shipping are crucial
for our model{'}s ability to resolve the export destination choice puzzle.

A mechanism of this kind also has the potential to explain why personal relationships can play a large role in international trade. Just like shorter
distance, knowing someone trustworthy to cooperate with at a potential export destination can provide a mild profit advantage for exporting there
instead of other destinations. This modest advantage may then have a large effect on trade flows, given increasing marginal costs of production and
economies of scale in shipping.\footnote{Similarly, the mechanism may help explain why in the
long run trade liberalization can have dramatic effects on trade flows, as for example in the case of the 2001 US-Vietnam trade liberalization; see
\citet{mccaig2018export}.}

The subject of the export destination choice pattern is, of course, very rich and calls for an in-depth empirical and theoretical investigation,
which will be provided in a separate, monothematic paper.

 \begin{figure}[t]\begin{center}\includegraphics[width=12cm]{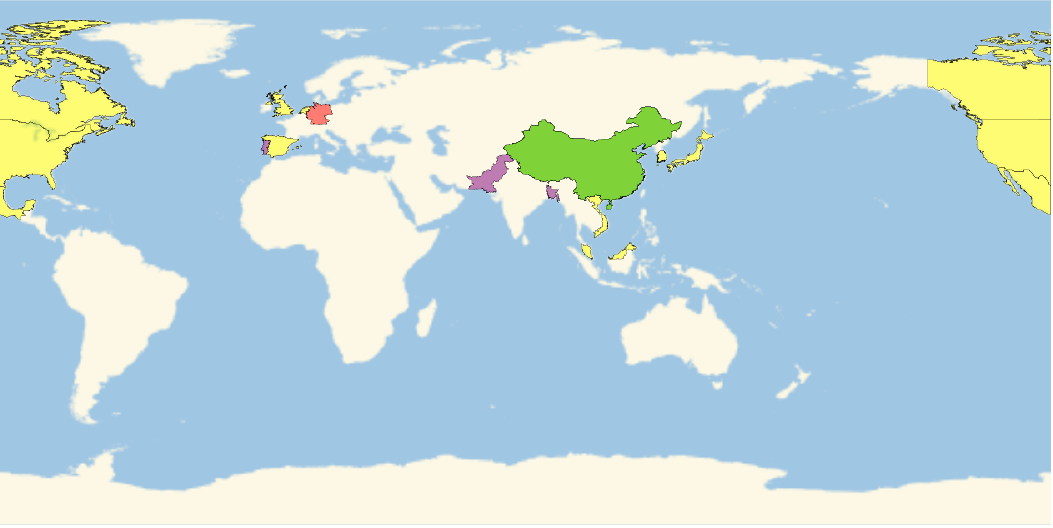}
 \includegraphics[width=12cm]{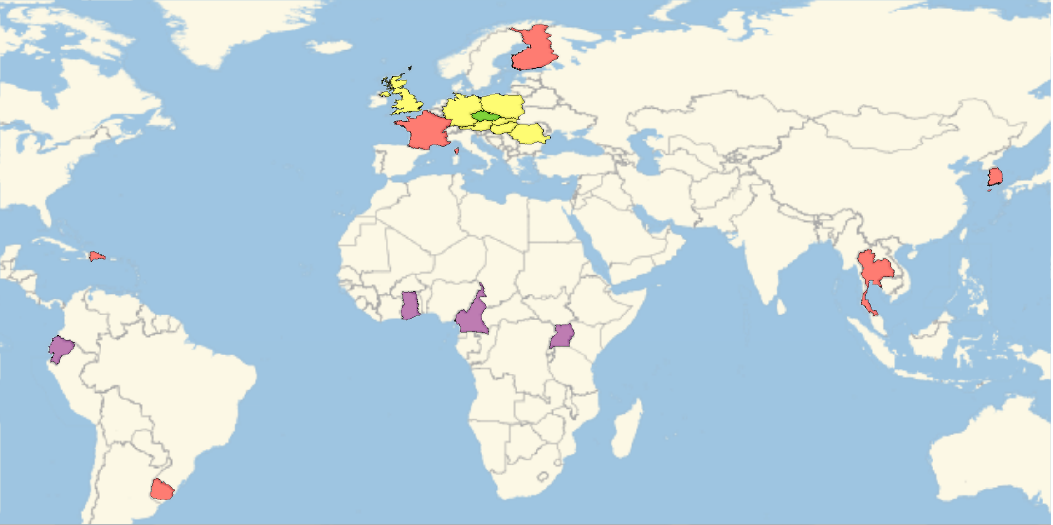}
 \end{center}
 \caption{The top map highlights export destinations of two Chinese firms in the fitted model.
Specifically, for two identical firms the map shows destinations to which both firms export (yellow), destinations to which only firm 1 exports (red),
destinations to which only firm 2 exports (purple), and the country of origin (green). The bottom map shows similar information for two firms from
the Czech Republic.}
 \label{FigureExportDestinationsByTwoIdenticalFirms}
 \end{figure}

 \subsection{Implications for modeling international trade}\label{SubsectionImplicationsForModelingInternationalTrade} 

A vast majority of models of international trade (and spatial economics) assume constant marginal cost of production within firm, even though empirical
evidence for such constancy is lacking and even though organizational economics is telling us that scaling up a firm is highly nontrivial, given
all the internal agency problems. { }An obvious reason for making the assumption of constant marginal cost is that it decouples firms{'} behavior
in different export destinations and makes the models easy to solve. Without such decoupling we need to solve combinatorial discrete choice problems,
which are hard in the case of submodular function maximization (corresponding to increasing marginal cost).

We have seen that working directly with increasing marginal costs leads to a dramatically different perspective on quantitative and qualitative aspects
of international trade. It is computationally challenging, but the results are worth it. Some of the puzzles are no longer puzzling, as trade costs
behave the way we would expect. { }

In the future, working with trade models that impose constant marginal production costs will not be as appealing to us as before. It suddenly has
a flavor of the proverbial searching for keys under a streetlight. Once we accept the idea of working with models with increasing marginal costs,
there are many questions to address. It would be good to re-think many topics in international trade, such as the impact of various policies, interventions
or technological changes on the equilibria and on the welfare of different agents in the economy.\footnote{For example, for welfare consequences we can no longer use simple, elegant formulas such as those derived by \citet*{arkolakis2012new}.} The model we worked with is very parsimonious, but including multiple locations per country, multiple sectors, supply chains,
and/or foreign direct investment would be desirable. Some of these ingredients would bring their own combinatorial discrete choice problems. The
models will certainly be even more computationally intensive than the one we worked with. Improvements in algorithms and hardware, hopefully, will
make solving the models feasible.

 \section{Breadth of Application}\label{SectionBreadthOfApplication} 

 \subsection{Overview of applications} 

In this section we provide a brief overview of numerous other applications.

 \subsubsection{Supply chains with hold-up \citep{antras}\label{SubsectionSupplyChainsWIthHoldup}}

 \citet{antras} develop a model of continuum sequential
supply chains where a main firm organizing its production needs to decide whether to outsource or insource (i.e. perform in-house) each stage of
the production process. Production requires relationship-specific investment, which leads to a hold-up problem in the spirit of { }
\citet{grossmanhart}. Outsourcing a production stage has the advantage of giving high-powered
incentives to the producers, while insourcing has the advantage of mitigating the hold-up problem.

The paper works with constant-elasticity demand and concludes that there can be only one production stage at which the main firm switches production
mode; depending on the parameter values, either all of the upstream or all of the downstream (but not both) is outsourced and the rest is insourced.
This, of course, clashes with the fact that for many manufacturing supply chains both the upstream (say, elementary components) and the downstream
(say, retail) are outsourced, while the core of the production process is insourced.

In Appendix  \ref{AppendixSupplyChainsWithHoldup} and
in Supplementary Material  \ref{AppendixSupplyChainsWithHoldupDetails},
we introduce a transformation of economic variables that makes the mathematics of the model dramatically simpler, in particular connecting the analysis
to the classical monopsony problem, whose cost-side aspects are analogous to the demand-side aspects of a monopoly problem.\footnote{Using our transformed variables would have saved at least 10 pages of the original paper \citet{antras}.
But of course, relative to these authors we have the benefit of hindsight.} This allows us to observe by
insights analogous to ours above that constant-elasticity demand may be replaced by our tractable functional forms without almost any loss of analytic
power. We find that for a realistic functional form of this kind (where the product has a {``}saturation point{''} in terms of quality), the model
implies that both upstream and downstream parts of the supply chain are optimally outsourced, while the middle (core) of the supply chain is optimally
insourced, as our intuition suggests in many real-world cases.

 \subsubsection{Labor bargaining without commitment (\citeauthor{stole}) \label{SubsectionLaborBargainingWithoutCommitment}}
Stole-Zwiebel bargaining, as introduced in  \citet{stole,stole2},
has become one of the standard ways of modeling labor bargaining in relation to unemployment. In their model, if an employee leaves the firm after
an unsuccessful wage bargaining, the remaining employees may renegotiate their wage, and they will choose to do so since the firm{'}s bargaining
position is weakened. For this reason, the firm will choose to employ more workers than it would if labor markets are competitive; employing an additional
worker lowers negotiated wages for the others. 

While this model appears to differ from previous examples we considered, as it is not a straightforward monopsony model, we show that behavior under
the Stole-Zwiebel model corresponds to a {``}partial{''} application of the marginal-average transformation ({``}partial monopolization{''}) and
thus remains tractable under our forms. Thus it is common to use standard, form-preserving tractable forms to analyze this model, especially constant-elasticity.
The downside of the assumption is that interesting effects are suppressed: the overemployment ratio (ratio of actual employment and employment under
competitive labor markets) is a constant independent of economic conditions. 

We introduce richer functional forms that preserve the tractability of the model. We find that for a plausible parameterization, changes in the overemployment
ratio can account for a non-trivial fraction of employment changes over the business cycle. These results are discussed in Supplementary Material
 \ref{AppendixLaborBargainingWithoutCommitment}.

 \subsubsection{Imperfectly competitive supply chains
\label{SubsectionImperfectlyCompetitiveSupplyChains}}

Imperfectly competitive supply chains, as described in  \citet{salinger},
are a very natural and popular way of modeling multi-stage production. We find that models of this kind may be solved in closed form not only for
linear or constant-elasticity demand but also for our proposed, much more flexible functional forms. Intuitively, behavior at each level of the supply
chain is derived by applying the marginal-average transformation to behavior at the preceding level, as each step of the supply chain forms the demand
for the level above it. We discuss this application in Appendix { } \ref{sub:Imperfectly-competitive-supply} and provide the details in Supplementary Material  \ref{sequential}.

 \subsubsection{Two-sided platforms {\` a} la \citet{rt2003}\label{SubsectionTwoSidedPlatforms}}

 \citet{rt2003} developed a model of two-sided platforms
that allows for understanding pricing decisions for the two sides of the market and their surplus consequences. The model used linear demand. We
find that our more flexible functional forms preserve the tractability of the model. This can lead to very different conclusions, as discussed in
Supplementary Material  \ref{AppendixTwoSidedPlatforms}.

 \subsubsection{Auction Theory\label{SubsectionAuctionTheory}}

 \paragraph{Symmetric independent private values first-price auctions.}\label{SymmetricIndependentPrivateValuesFirstPriceAuctions}

First price auctions with symmetric independent private values may be solved explicitly for uniform or Pareto value distributions. We find that the
tractable functional forms we propose still lead to closed-form solutions, and at the same time they allow for more realistic (i.e. bell-shaped)
value distributions. We discuss these results in Supplementary Material  \ref{AppendixSymmetricIndependentPrivateValuesFirstPriceAuctions}.

 \paragraph{Auctions v. posted prices \citep*{onlineauctions}.} \label{AuctionsVPostedPrices}

 \citet{onlineauctions} develop a model in which online
sellers choose either auctions or posted prices. They use a uniform distribution in their model. We find that our proposed functional forms also
lead to tractable models but allow a richer set of possibilities for the sellers{'} optimal behavior that better match the data. We explain the details
in Supplementary Material  \ref{AppendixAuctionsVPostedPrices}.

 \subsubsection{Selection markets}
\label{SelectionMarkets}

In selection markets (markets with adverse or advantageous selection) as in  \citet{imperfectcomp}{'}s generalization of  \citet{einav2010estimating}
and  \citet{einav2011selection}, the equilibrium conditions
are such that again our proposed tractable functional forms lead to closed-form solutions. This allows for modeling possibilities that provide a
better match to the empirical evidence, as explained in Supplementary Material  \ref{AppendixSelectionMarkets}.

 \subsubsection{Monopolistic competition}
\label{SelectionMarkets}

Tractable functional forms are very useful in the case of monopolistic competition beyond what we discussed in the previous section. Supplementary
Material  \ref{AppendixMonopolisticCompetition} contains
an extensive discussion of other possible modeling choices that generalize, say, the Melitz model or the Krugman model. These calculations may be
used as a basis for new research projects on international trade.

\section{General Approximation and the Laplace-Log Transform\label{SectionArbitraryDemandAndCostFunctions}}

In most of the examples in the previous sections,
we have focused on average-marginal form-preserving classes of relatively low dimensions that are tractable at low orders. While these
are useful in many applications and reasonably flexible, they
have limits in their ability to fit arbitrary equilibrium systems.
In this section we show that this limitation arises from the desired tractability
of these forms, rather than any underlying rigidity of our average-marginal form-preserving classes. 
Under weak conditions we formulate here, arbitrary (univariate) equilibrium
forms can be approximated arbitrarily well by members of form-preserving classes.
The
limit of this approximation is the inverse Laplace-log transform of the equilibrium condition. Highly tractable
forms may thus be seen as ones with ``simple'' inverse Laplace-log transforms. We
show how the special, policy-relevant features of many common demand
forms can be characterized in terms of their transforms. Proofs
of the theorems in this, more abstract, section appear in Appendix \ref{AppendixProofsOfTheorems}. 
A number of these proofs are straightforward adaptations of theorems in the existing
literature. We include those theorems here for completeness and for the reader's convenience.

In the next subsection we provide definitions of the Laplace-log transform, utilizing existing mathematical
literature. Identifying the most important connections between what is useful in economics and
the mathematical literature is non-trivial. While a reasonable number of economists are familiar with Laplace transform
based on the Riemann-Stieljes integral, a theory based on that integral would exclude, say, the exponential
demand function, which is a popular modeling choice in the economics literature. For a more complete theory
we need to utilize the distribution theory by Laurent Schwartz, which has not been used in economics.

\subsection{The Laplace-log transform and arbitrary approximation}

Under quite general conditions, univariate equilibrium conditions may
be expressed as linear combinations of average-marginal form-preserving
functions. To make this statement precise, we focus
on the demand side here and write an inverse demand curve of interest
as $P\left(q\right)=U'\left(q\right)$, where $U(q)$ is a function primitive
to $P\left(q\right)$. We assume that $P\left(q\right)$ is non-increasing,
which implies that such primitive function exists. Depending on the
model of choice, $U(q)$ may or may not be proportional to the utility
of an agent, but to keep the terminology simple, here we refer to
$U\left(q\right)$ as the utility.\footnote{$U\left(q\right)$ would literally be a term in the utility function
$U\left(q\right)+\tilde{q}\tilde{P}$ in a model with two goods $q$
and $\tilde{q}$, where $\tilde{q}$ is treated as a num\' eraire good
with price $\tilde{P}$ normalized to 1. In this case the marginal
utility of $q$ equals its price $P\left(q\right)$.} Even though we explicitly discuss the demand side here, the mathematical
theorems below apply to the cost side as well, with a straightforward
reinterpretation. 

\label{completeness}We observe that virtually all shapes of demand
functions that are useful in economics may be associated with a utility
function of the form\footnote{The Laplace-log representation (\ref{eq:UtilityAsASymbolicIntegral})
of a given utility function $U\left(q\right)$ exists under various
conditions. Theorem 18b in Section VII.18 of \citet{widder2010laplace}
states general necessary and sufficient conditions on $U\left(q\right)$
for the existence of $u_{I}\left(t\right)$ such that (\ref{eq:UtilityAsRiemannStieljesIntegral})
is satisfied; \emph{almost all }utility functions we may encounter
in economic applications \emph{do satisfy these conditions}. Sections
VII.12-17 of \citet{widder2010laplace} provide conditions that guarantee
that $u_{I}\left(t\right)$ exists and has certain properties, such
as being of bounded variation, nondecreasing, or belonging to the
functional space $L^{p}$. Additional conditions may be found in Chapter
2 of the book by \citet{arendt2011vector}, which contains recent
developments in the theory. In situations when utility unbounded below
is desired, e.g. for constant demand elasticity smaller than 1, we
can depart from (\ref{eq:UtilityAsASymbolicIntegral}) and instead
use the bilateral specification $U\left(q\right)=\intop_{-\infty}^{\infty}u\left(t\right)q^{-t}dt$.
However this generalization requires the use of more technically involved
bilateral Laplace transforms and thus we do not discuss it in greater
detail here, though analogous results are available on request.}

\begin{equation}
U\left(q\right)=\intop_{-\infty}^{0}u\left(t\right)q^{-t}dt,\label{eq:UtilityAsASymbolicIntegral}
\end{equation}
for an appropriate $u\left(t\right)$, where we work on some arbitrarily
chosen finite interval $\left[0,\bar{q}\right]$. This integral may
be interpreted as a Laplace transform in terms of the variable $s\equiv\log q$, and for this reason, we refer to $u(t)$ as the inverse Laplace-log transform of $U(q)$.\footnote{Our use of $t$ for exponents throughout the text and our use of $s\equiv\log(q)$
here match the standard notation in the literature on Laplace transforms.}$^,$\footnote{
After an extensive literature search of hundreds of articles and talking
to numerous economists, including highly accomplished econometricians,
we concluded that this is almost certainly the first time  (inverse) Laplace
transform in log quantity is used in the economic literature. Note, however, that a
different transform, namely (inverse) Laplace transform in quantity, as opposed to log quantity, has been used in economics. These transforms have different properties and should not be confused. Note also that the way we use Laplace transform is different from, say, engineering fields in the sense that, because of the additional logarithm, functions of main interest for us in economics typically would not be of interest in engineering, and vice versa. For this reason, books containing detailed tables of Laplace transform were not of help to us. Except for trivial cases, we needed to derive the transforms listed in Supplementary Material \ref{sub:LaplaceInverseDemandFunctions} by ourselves.
} At the same time, the integral may be thought of as expressing $U\left(q\right)$
as a linear combination of form-preserving functions of Theorem \ref{formpreserve}.\smallskip{}

\noindent \textbf{Technical Clarification (Integral Definition).}\footnote{Note that in certain parts of the paper we need a more general definition
of the integral (\ref{eq:UtilityAsASymbolicIntegral}) than the definition
(\ref{eq:UtilityAsRiemannStieljesIntegral}). In those cases, e.g.
in the proof of Theorem \ref{formpreserve}, we use the Schwartz distribution
theory instead of the Riemann-Stieltjes integral theory.} Here we define the integral (\ref{eq:UtilityAsASymbolicIntegral})
to be the Riemann-Stieltjes integral 
\begin{equation}
U\left(q\right)=\intop_{-\infty}^{0}q^{-t}du_{I}\left(t\right)\label{eq:UtilityAsRiemannStieljesIntegral}
\end{equation}
for some function $u_{I}\left(t\right)$, not necessarily nonnegative,
such that the integral converges. If this function is differentiable,
its derivative $u'_{I}\left(t\right)$ is the $u\left(t\right)$ that
appears on the right-hand side of (\ref{eq:UtilityAsASymbolicIntegral}).
If $u_{I}\left(t\right)$ is only piecewise differentiable, then $u\left(t\right)$
is not an ordinary function but involves Dirac delta functions (i.e.
point masses) at the points of discontinuity of $u_{I}\left(t\right)$.

The corresponding inverse demand curve is $P\left(q\right)=U'\left(q\right)=-\intop_{-\infty}^{0}t\ u\left(t\right)q^{-t-1}dt$,
or
\begin{equation}
P\left(q\right)=\intop_{-\infty}^{1}p\left(t\right)q^{-t}dt,\label{eq:InverseDemandAsASymbolicIntegral}
\end{equation}
where we defined $p\left(t\right)\equiv\left(1-t\right)u\left(t-1\right)$.
We see that $P\left(q\right)$ is a linear combination of form-preserving
functions of Theorem \ref{formpreserve}. The following theorem summarizes
convenient properties of this approach to demand curves: uniqueness,
inclusion of linear combinations of power functions, approximations to arbitrary
functions, and analyticity.

\begin{theorem}\label{TheoremLaplaceTransformWithRiemannStieltjesIntegrals}

\textbf{\emph{(Laplace-log
Transform with Riemann-Stieltjes Integrals)}}\\
\textbf{\emph{(A)}} For each function $U\left(q\right)$ that may
be represented in the form (\ref{eq:UtilityAsASymbolicIntegral})
in the sense of (\ref{eq:UtilityAsRiemannStieljesIntegral}), there
exists just one normalized\footnote{Normalization here means that $u_{I}\left(0+\right)=0$ and $u_{I}\left(t\right)=(u_{I}\left(t+\right)+u_{I}\left(t-\right))/2.$
See Section I.6 of \citet{widder2010laplace}. } function $u_{I}\left(t\right)$ such that (\ref{eq:UtilityAsRiemannStieljesIntegral})
holds.\textbf{\emph{ (B)}} Any polynomial utility function may be
written in the form (\ref{eq:UtilityAsASymbolicIntegral}). \textbf{\emph{(C)}}
All functions of the form (\ref{eq:UtilityAsASymbolicIntegral}) are
analytic. In particular, their derivatives of any order exist. \textbf{\emph{(D)}}
An arbitrary utility function $\tilde{U}\left(q\right)$ continuous
on an interval $\left[0,\bar{q}\right]$ may be approximated
with an arbitrary precision by utility functions of the form (\ref{eq:UtilityAsASymbolicIntegral}),
in the sense of uniform convergence\footnote{By \emph{uniform convergence} we mean that for any continuous $\tilde{U}\left(q\right)$
there exists a sequence $\left\{ U_{j}\left(q\right),j\in\mathbb{N}\right\} $
of functions of the form (\ref{eq:UtilityAsASymbolicIntegral}) such
that for any $\epsilon>0$, all elements of the sequence after some
position $n_{\epsilon}$ satisfy $\sup_{q\in\left[0,\bar{q}\right]}|\tilde{U}\left(q\right)-U_{j}\left(q\right)|<\epsilon$.} on $\left[0,\bar{q}\right]$. 

\end{theorem}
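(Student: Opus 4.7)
My strategy is to reduce all four claims to classical facts about one-sided Laplace-Stieltjes transforms via the change of variables $s \equiv \log q$. Under this substitution, $q \in (0,\bar q\,]$ maps bijectively to $s \in (-\infty,\log\bar q\,]$, and the Stieltjes integral (\ref{eq:UtilityAsRiemannStieljesIntegral}) becomes
\[
U(e^s) \;=\; \intop_{-\infty}^{0} e^{-ts}\,du_I(t),
\]
which (after the further substitution $\tau = -t$) is a one-sided Laplace-Stieltjes transform in the variable $-s$. With this reduction in hand, each of (A)--(D) follows either from a classical theorem or from a short construction, with the cited volume of \citet{widder2010laplace} providing most of the machinery.

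For (A), I would quote Widder's uniqueness theorem for Laplace-Stieltjes transforms and apply it to $s \mapsto U(e^s)$; the paper's normalization convention $u_I(0+)=0$ and $u_I(t)=\tfrac12(u_I(t+)+u_I(t-))$ is exactly the one under which the classical uniqueness statement holds, so nothing beyond matching conventions is required. For (B), I would construct $u_I$ explicitly: given $U(q)=\sum_{k=0}^n c_k q^k$, take $u_I(t)$ to be the step function with jumps of size $c_k$ at the locations $t=-k$, $k=0,1,\ldots,n$, normalized at each jump by the midpoint rule. Since all these locations lie in $(-\infty,0]$, the Stieltjes integral evaluates immediately to $\sum_{k=0}^n c_k q^{-(-k)}=U(q)$. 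This simultaneously illustrates the need to allow $u_I$ to be signed and, in the differentiated form $u=u_I'$, to involve Dirac deltas at the jumps.

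For (C), I would invoke the classical fact that every Laplace-Stieltjes transform is holomorphic in the interior of its half-plane of convergence, then compose with the complex-analytic map $q \mapsto \log q$ on a slit neighborhood of $(0,\infty)$; the interchange of differentiation and integral is justified by the dominated-convergence argument that standardly underpins analyticity of the Laplace transform. Part (D) reduces to the Weierstrass approximation theorem: any continuous $\tilde U$ on $[0,\bar q\,]$ admits uniform polynomial approximants $P_j(q)$, and by (B) each $P_j$ already lies in the class (\ref{eq:UtilityAsASymbolicIntegral}), so $\{P_j\}$ itself is the desired approximating sequence.

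The hard part will be the technical bookkeeping in (C): the integral in (\ref{eq:UtilityAsRiemannStieljesIntegral}) runs over $(-\infty,0]$ rather than the standard nonnegative half-line, so one must track sign conventions carefully after the $\tau = -t$ flip and verify that the abscissa of convergence is compatible with the relevant values of $q$. One must also ensure that the normalization of $u_I$ interacts correctly with its jumps when simultaneously invoking the uniqueness and analyticity theorems. Parts (A), (B), and (D) are, respectively, a citation, a one-step computation, and a one-line reduction to Weierstrass, so the bulk of the real work lies in stating (C) precisely enough that the off-the-shelf Laplace-transform results apply.
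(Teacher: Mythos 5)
Your proposal takes essentially the same route as the paper: (A) cites Widder's uniqueness theorem for Laplace--Stieltjes transforms, (B) constructs the piecewise-constant $u_I$ with jumps of size $c_k$ at $t=-k$, (C) invokes the classical holomorphy of the Laplace--Stieltjes transform in its half-plane of convergence (the paper simply cites Widder's Theorem I.5a rather than reproducing the dominated-convergence argument), and (D) combines (B) with Weierstrass. The only discrepancy is one of emphasis: you flag (C) as ``the hard part,'' but the paper treats it the same way it treats (A), as a one-line citation, and in fact the bookkeeping you worry about (sign flip $\tau=-t$, abscissa of convergence) is already absorbed into the classical statement, so no genuinely new work is required there.
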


Note that part D of this theorem is a simple consequence of the Weierstrass approximation theorem.\footnote{There
is also a related, more powerful theorem, the M{\"u}ntz-Sz\'asz theorem. \citet{barnett1983muntz} use it to
propose to write direct demand as M{\"u}ntz-Sz\'asz polynomials of prices. Here we write inverse demand as polynomials
of powers of quantities (times possibly another power of quantity), but the same logic would apply here: we could use 
M{\"u}ntz-Sz\'asz polynomials.} The reader may ask
why we do not instead work simply with polynomials in $q$ and use them as approximations. Even though this 
would be possible in principle, it would not be practical. This is because in economics we often need
flexibility in the $q\rightarrow 0_+$ limit behavior of the inverse demand function. With any (finite-order) polynomial, we would always get
finite $\lim_{q\rightarrow 0_+} P(q)$, i.e., a choke price; to allow for $\lim_{q\rightarrow 0_+} P(q) = \infty$, we could not stay within a finite-order approximation.

Theorem \ref{formpreserve} allowed for functions other than
linear combinations of power functions, such as $q^{-\alpha}(\log q)^n$ or $\log q$, that are also useful
in economics.\footnote{For example, $P(q) = a - b \log q$ corresponds to exponential demand, studied by many authors, including \citet{acv}. 
Similarly, inverse demand functions $P(q) = a - b (\log q)^n$ have interesting implications for market failure in sequential 
supply chains such as Cournot's multiple-marginalization problem.} Although according to part D of the last
theorem, such functions may be approximated by functions of the Riemann-Stieltjes
interpretation (\ref{eq:UtilityAsRiemannStieljesIntegral}) of (\ref{eq:UtilityAsASymbolicIntegral}),
it is convenient to be able to write them \emph{exactly} in the form
(\ref{eq:UtilityAsASymbolicIntegral}) by using a more powerful definition
of the integral. This is achieved by the following counterpart of
Theorem \ref{TheoremLaplaceTransformWithRiemannStieltjesIntegrals},
which goes beyond the theory of the Riemann-Stieltjes integral and
instead discusses Laplace transform of generalized functions based
on the distribution theory by Laurent Schwartz. In the following,
$\bar{s}$ is a real number smaller than $\log\bar{q}$.

\begin{theorem}\label{TheoremLaplaceTransformWithSchwartzIntegrals}

\textbf{\emph{(Laplace-log Transform with Schwartz Integrals)}}
A function $U\left(q\right)$ such that the related function $U_{_{\left[s\right]}}\left(s\right)\equiv U\left(e^{s}\right)$
considered in the half-complex-plane domain $\mathbb{C}_{\bar{s}}^{-}\equiv\left\{ s|\mbox{Re }s<\bar{s}\right\} $
is analytic (i.e. holomorphic) and bounded by a polynomial function
may be expressed in the form (\ref{eq:UtilityAsASymbolicIntegral})
with $u$ representing a distribution, i.e. a generalized function,
or more precisely an element of $\mathcal{D}'$ as defined by \citet{zemanian1965distribution}.\footnote{Here ``bounded by a polynomial'' refers to the absolute value of
$U_{_{\left[s\right]}}\left(s\right)$ being no greater than the absolute
value of some polynomial of \emph{$s$ }in the domain $\mathbb{C}_{\bar{s}}^{-}$.} This distribution is unique. Conversely, for any Laplace-transformable
distribution $u$, the integral (\ref{eq:UtilityAsASymbolicIntegral})
viewed as a function of $s\equiv\log q$ in the domain $\mathbb{C}_{\bar{s}}^{-}$
is analytic and bounded by a polynomial of $s$.

\end{theorem}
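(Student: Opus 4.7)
The plan is to reduce the statement to the classical representation theorem for Laplace transforms of Schwartz distributions by the change of variables $s = \log q$. Define $U_{[s]}(s) \equiv U(e^s)$ as in the statement. Then the symbolic integral (\ref{eq:UtilityAsASymbolicIntegral}) becomes
\begin{equation*}
U_{[s]}(s) = \intop_{-\infty}^{0} u(t)\, e^{-ts}\, dt,
\end{equation*}
and after the further substitution $\tau = -t$, setting $v(\tau) \equiv u(-\tau)$ for $\tau \geq 0$ and $\tilde{s} \equiv -s$, we arrive at
\begin{equation*}
U_{[s]}(-\tilde{s}) = \intop_{0}^{\infty} v(\tau)\, e^{-\tilde{s}\tau}\, d\tau,
\end{equation*}
which is the standard (one-sided) Laplace transform of the distribution $v$ supported in $[0,\infty)$. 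The analyticity and polynomial boundedness hypothesis on $U_{[s]}$ in $\mathbb{C}_{\bar{s}}^{-} = \{s : \mathrm{Re}\, s < \bar{s}\}$ translates precisely into analyticity and polynomial boundedness of a function on the right half-plane $\{\tilde{s} : \mathrm{Re}\,\tilde{s} > -\bar{s}\}$.

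At this point the forward direction follows from the classical Laplace representation theorem for distributions, as developed in \citet{zemanian1965distribution} (see also \citet{schwartz1966theorie}): a function $F(\tilde{s})$ holomorphic in a right half-plane and polynomially bounded there is the Laplace transform of a unique element of $\mathcal{D}'$ supported in $[0,\infty)$. Translating back via $u(t) = v(-t)$ yields the desired unique distribution $u$ with the required support in $(-\infty, 0]$ and integral representation. Uniqueness of $u$ follows from uniqueness of the distributional Laplace transform.

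For the converse, I would invoke the standard fact that for any Laplace-transformable distribution $v \in \mathcal{D}'$ supported in $[0,\infty)$, the Laplace transform $\intop_{0}^{\infty} v(\tau) e^{-\tilde{s}\tau} d\tau$ (understood in the distributional sense) is holomorphic in its half-plane of convergence and polynomially bounded there. Reversing the change of variables, the integral (\ref{eq:UtilityAsASymbolicIntegral}) viewed as a function of $s = \log q$ is analytic and polynomially bounded on the corresponding left half-plane $\mathbb{C}_{\bar{s}}^{-}$. Both directions together yield the theorem.

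The main obstacle is not conceptual but bookkeeping: carefully matching the support and half-plane conventions (our distributions live on $(-\infty,0]$ rather than $[0,\infty)$, the relevant half-plane is $\mathbb{C}_{\bar{s}}^{-}$ rather than a right half-plane, and $q$ is restricted to a finite interval $[0,\bar{q}]$ so $s \leq \log \bar{q}$), and verifying that Zemanian's notion of ``Laplace-transformable distribution'' is exactly what is needed so that the correspondence is bijective. Once these identifications are made, the theorem is essentially a translation of a known result, which is why I would cite Zemanian for the substantive content of the argument rather than reprove it.
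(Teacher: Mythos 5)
Your proof takes essentially the same approach as the paper: a change of variables $s=\log q$ (with a reversal to map $(-\infty,0]$-supported distributions to $[0,\infty)$-supported ones) reduces the statement to the distributional Laplace-transform representation theory in \citet{zemanian1965distribution}, which the paper likewise simply cites (pointing to Theorem 8.4-1 and Corollary 8.4-1a for existence, Theorem 8.3-1a for uniqueness, and Theorem 8.3-2 for the converse). Your write-up is if anything a bit more explicit about the substitutions than the paper's one-line citation, but there is no substantive difference in the argument.
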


\begin{definition}\textbf{\emph{(Laplace Versions of Economic Variables)\label{DefinitionLaplaceAsAnAdjective}}}
For a variable $V\left(q\right)$ that may be expressed as an integral
of the form $V\left(q\right)=\intop_{a}^{b}v\left(t\right)q^{-t}dt$,
we use the adjective \emph{Laplace} to refer to $v\left(t\right)$.
For example, $u\left(t\right)$ of (\ref{eq:UtilityAsASymbolicIntegral})
would be referred to as \emph{Laplace utility}, and $p\left(t\right)$
of (\ref{eq:InverseDemandAsASymbolicIntegral}) as \emph{Laplace inverse
demand} or \emph{Laplace price}. 

\end{definition}

Here we present a theorem describing the relationship of the integral and its discrete approximation. Its proof is constructed using the Euler-Maclaurin
formula related to the trapezoidal rule for numerical integration. Following the same logic, it is possible to derive and prove other approximation
theorems by adapting numerous theorems on numerical integration that exist in the applied mathematics literature. 

 \begin{theorem}\label{TheoremDiscreteApproximation}
\textbf{\emph{(Discrete Approximation)}}
The Laplace-log transform of a function \(f(t)\) may be expressed as 

\[\int_{-\infty }^{t_{\max }} q^{-t} f(t) \, dt=\text{$\Delta $t} \sum _{t\in T} q^{-t} f(t)-\frac{1}{2} q^{-t_{\max }} \text{$\Delta $t} f\left(t_{\max
}\right)-\frac{1}{2} q^{-t_{\min }} \text{$\Delta $t} f\left(t_{\min }\right)+R,\]
 where \(T\equiv \left\{t_{\min }, t_{\min }+\text{$\Delta $t}, \text{...} , t_{\max }\right\}\) is an evenly spaced grid
with at least two points, \(m\) is an integer such that \(f\) is \((2m+1)\)-times continuously differentiable on \(\left[t_{\min },t_{\max }\right]\)
and where the remainder \(R\) is described below. \end{theorem}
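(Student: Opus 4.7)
The identity is essentially the Euler--Maclaurin formula (equivalently, the trapezoidal rule with its higher-order correction) applied to $g(t) \equiv q^{-t} f(t)$ on $[t_{\min},t_{\max}]$, with the tail piece on $(-\infty,t_{\min}]$ absorbed into $R$. My plan proceeds in three steps.

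First, I would split
$$\int_{-\infty}^{t_{\max}} q^{-t} f(t)\,dt = \int_{-\infty}^{t_{\min}} q^{-t} f(t)\,dt + \int_{t_{\min}}^{t_{\max}} q^{-t} f(t)\,dt,$$
after verifying that the tail integral converges. For $q\in(0,1)$---the natural range for Laplace-log representations of utility over a finite quantity interval---the factor $q^{-t}=q^{|t|}$ decays geometrically as $t\to-\infty$, so convergence holds under mild growth conditions on $f$ that I would spell out. Next, note that $g(t)=q^{-t}f(t)$ inherits $(2m+1)$-fold continuous differentiability on $[t_{\min},t_{\max}]$ from $f$, because $q^{-t}=e^{-t\log q}$ is $C^{\infty}$ in $t$.

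Second, I would apply the classical Euler--Maclaurin formula to $\int_{t_{\min}}^{t_{\max}} g(t)\,dt$ on the uniform grid $T$ with step $\Delta t$:
$$\int_{t_{\min}}^{t_{\max}} g(t)\,dt = \Delta t \sum_{t\in T} g(t) - \tfrac{\Delta t}{2}\bigl(g(t_{\min}) + g(t_{\max})\bigr) - \sum_{k=1}^{m} \frac{B_{2k}(\Delta t)^{2k}}{(2k)!}\bigl(g^{(2k-1)}(t_{\max}) - g^{(2k-1)}(t_{\min})\bigr) - R_{\mathrm{EM}},$$
where $B_{2k}$ are the Bernoulli numbers and $R_{\mathrm{EM}}$ is the standard remainder expressible via the periodic Bernoulli function $\widetilde{B}_{2m+1}$ paired against $g^{(2m+1)}$. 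Substituting $g(t)=q^{-t}f(t)$ converts the two trapezoidal endpoint terms into exactly the boundary terms $-\tfrac{1}{2}q^{-t_{\max}}\Delta t\,f(t_{\max})$ and $-\tfrac{1}{2}q^{-t_{\min}}\Delta t\,f(t_{\min})$ displayed in the statement.

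Third, I would define
$$R \equiv \int_{-\infty}^{t_{\min}} q^{-t} f(t)\,dt - \sum_{k=1}^{m} \frac{B_{2k}(\Delta t)^{2k}}{(2k)!}\bigl(g^{(2k-1)}(t_{\max}) - g^{(2k-1)}(t_{\min})\bigr) - R_{\mathrm{EM}},$$
which packages the tail integral together with the Bernoulli corrections and the Euler--Maclaurin residue; rearranging yields the identity as stated. The only substantive issue---and the hardest part if one wants more than the bare identity---is giving a clean quantitative control of $R$. The Euler--Maclaurin piece is of the classical order $(\Delta t)^{2m+2}$ times a sup-norm of $g^{(2m+1)}$, which combinatorially expands into terms mixing $f^{(j)}$ with powers of $\log q$; the tail is controlled by decay of $q^{-t}f(t)$ at $-\infty$. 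For the identity itself no obstacle arises beyond the convergence check for the tail, so this is where the proof spends the most care.
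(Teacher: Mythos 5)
Your proposal is correct and takes essentially the same approach as the paper: split off the tail integral at $t_{\min}$, apply the Euler--Maclaurin formula (equivalently, the corrected trapezoidal rule) to $g(t) = q^{-t}f(t)$ on the grid $T$, and collect the tail, the Bernoulli endpoint corrections, and the periodic-Bernoulli remainder into $R$. The paper does this by invoking a particular form of Euler--Maclaurin from Apostol (1999) and matching terms, but the underlying decomposition $R = R_1 + R_2 + R_3$ is the same three-way split you describe.
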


The remainder in the theorem consists of three parts: \(R\equiv R_1+R_2+R_3\). The first part \(R_1\) is simply the difference of \(\int_{-\infty
}^{t_{\max }} q^{-t} f(t) \, dt\) and \(\int_{t_{\min }}^{t_{\max }} q^{-t} f(t) \, dt\), and can be made very small, since \(\int_{-\infty }^{t_{\min
}} q^{-t} f(t) \, dt=q^{-t_{\min }} \int_{-\infty }^0 q^{-t} f\left(t+t_{\min }\right) \, dt\), which is exponentially suppressed for \(t_{\min }\)
chosen sufficiently negative and for a well-behaved \(f(t)\). The second part \(R_2\) may be expressed using derivatives of \(h(t)\equiv f(t) q^{-t}\text{}\)
at \(t_{\min }\) and \(t_{\max }\): 
\[R_2=\sum _{k=1}^m \frac{B_{2 k}}{(2 k)!} \left(\text{$\Delta $t}^{2 k} h^{(2 k-1)}\left(t_{\min }\right)-\text{$\Delta $t}^{2 k} h^{(2 k-1)}\left(t_{\max
}\right)\right),\]
 where \(B_{2 k}\) represent Bernoulli numbers. These terms are suppressed by powers of $\Delta $t as well as by the factorial
in the denominator.\footnote{Moreover, it is possible to rescale \(q\) by a constant factor
to keep log \(q\) small in absolute value for the range of quantities of interest.} The last part \(R_3\)
may be expressed and bounded using integrals of high derivatives of \(h(t)\):
\[\begin{array}{lll}
 R_3=-\frac{\text{$\Delta $t}^{2 m+1}}{(1+2 m)!} \int_{t_{\min }}^{t_{\max }} P_{1+2 m}(t) h^{(1+2 m)}(t) \, dt & \text{,   } & | R_3 |\leq \frac{2
\zeta (2 m+1)\text{$\Delta $t}^{2 m+1}}{(2 \pi )^{2 m+1}} \int_{t_{\min }}^{t_{\max }} | h^{(2 m+1)}(t) | \, dt \\
\end{array}
,\]
 where $\zeta $ is the Riemann zeta function and \(P_{1+2 m}\) are periodic Bernoulli functions. 

Note that this theorem provides a prescription for the weights of the power terms that approximate the integral and gives a bound for the associated
error. Of course, by leaving the weights flexible and fitting them using a generalized method of moments, it is possible to get a better approximation
with a smaller error. It is also possible to use alternative prescribed weights that correspond to other numerical integration methods. The fact
that very different weight choices can all give good approximations is related to the fact that the problem of finding optimal weights is a case
of so-called ill-posed problems, for which regularization is typically used in the applied mathematics and econometrics literature.\footnote{As mentioned above, the validity of such approximations may be proved along the lines of the proof given here.}

\subsection{Complete monotonicity and pass-through behavior\label{sub:CompleteMonotonicity}}

Continuous representations of inverse demand functions introduced in the previous subsection provide more conceptual clarity than discrete approximations, which have their idiosyncrasies depending on precisely how many terms are included. These representations in terms of inverse Laplace-log transform can provide useful intuition. For example, if a researcher wishes to find a good discrete approximation to a particular inverse demand function, the researcher may compute the exact inverse Laplace-log transform (or consult Supplementary Material \ref{sub:LaplaceInverseDemandFunctions}) to see where the Laplace inverse demand function $p(t)$ is positive or negative. Choosing a few evenly spaced mass points with a similar positivity/negativity pattern is then likely to lead to a tractable approximation to the original inverse demand function that has similar qualitative properties.

Inverse Laplace-log transform representations of inverse demand functions are useful also for another reason: 
Many demand curves have economic properties (determining many policy implications) that are easily understood in terms of the inverse Laplace-log transform. To
develop the related theory, we start with a standard definition of
completely monotone functions and then discuss relations between complete
monotonicity, the form of Laplace inverse demand, and economic consequences
for the pass-through rate.\footnote{\citet{brockett1987class} also discuss relations between complete
monotonicity and a type of Laplace transform. The Laplace transform
used there is in terms of quantity $q$, whereas in our discussion,
it is in terms of the logarithm of quantity. These two transforms
are distinct and should not be confused. Similarly, the mathematical notion of complete monotonicity has very different economic manifestations in \citet{brockett1987class} and in our work.}  We classify many commonly used demand functions using this property, given that, as we discussed in the previous section, many policy questions turn on properties of the pass-through rate tied down by complete monotonicity.  

\begin{definition} \textbf{\emph{(Completely Monotone Function)}}
A function $f\left(x\right)$ is completely monotone iff for all $n\in\mathbb{N}$
its $n$th derivative exists and satisfies \emph{$\left(-1\right)^{n}f^{\left(n\right)}\left(x\right)\ge0$.}

\end{definition}

It turns out that many commonly used demand functions are such that
the consumer surplus is completely monotone as a function of negative
log quantity. For this reason, we make the following definition.

\begin{definition} \textbf{\emph{(Complete Monotonicity of the Demand
Specification)}}\footnote{In principle, it is possible to empirically test whether an empirical
demand curve satisfies the complete monotonicity criterion. The relevant
empirical test has been developed by \citet{heckman1990testing}.
It would just have to be translated from the duration analysis context
to our demand theory context.}

We say that a demand function (or a utility function) satisfies the
complete monotonicity criterion iff the associated consumer surplus is a completely
monotone function of $-s$, i.e. for all $n\in\mathbb{N}$,\emph{
$CS_{_{\!\left[s\right]}}^{\left(n\right)}\left(s\right)\ge0$,
}or equivalently\emph{}\footnote{The fact that these definitions are equivalent may be seen as follows:
With the marginal utility of the outside good normalized to one and
$U\left(0\right)$ is set to zero, we have $CS\left(q\right)=-qP\left(q\right)+\int_{0}^{q}P\left(q_{1}\right)dq_{1}=-qU'\left(q\right)+\int_{0}^{q}U'\left(q_{1}\right)dq_{1}=U\left(q\right)-qU'\left(q\right)$.
This translates into $CS_{_{\!\left[s\right]}}\left(s\right)=U_{_{\!\left[s\right]}}\left(s\right)-U_{_{\!\left[s\right]}}'\left(s\right)$, where we use the subscript $[s]$ to emphasize that the variable is to be treated as a function of $s$. 
The equivalence for any $n\in\mathbb{N}$ then follows by differentiation.}\emph{ 
$U_{_{\!\left[s\right]}}^{\left(n\right)}\left(s\right)-U_{_{\!\left[s\right]}}^{\left(n+1\right)}\left(s\right)\ge0$.
}Strict complete monotonicity criterion then refers to these inequalities
being strict.

\end{definition}

\begin{theorem}\label{TheoremNonnegativityOfLaplaceConsumerSurplus}

\textbf{\emph{(Nonnegativity
of Laplace Consumer Surplus) }}A (single-product) utility function
is bounded below and satisfies the complete monotonicity criterion
iff the Laplace consumer surplus $cs\left(t\right)$ is nonnegative
and supported on $\left(-\infty,0\right)$, i.e. $CS\left(q\right)=\int_{-\infty}^{0}$$cs\left(t\right)q^{-t}dt$
for some $cs\left(t\right)\ge0$.\emph{}

\end{theorem}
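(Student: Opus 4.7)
The plan is to prove this as essentially a translation of the Hausdorff–Bernstein–Widder theorem into the Laplace-log setting, exploiting that differentiation with respect to $s = \log q$ turns powers $q^{-t}$ into $(-t) q^{-t}$, so repeated differentiation brings down powers of $(-t)$, which is positive precisely when $t<0$.

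For the easy ($\Leftarrow$) direction, I would start from $CS(q)=\int_{-\infty}^{0}cs(t)q^{-t}dt$ with $cs(t)\ge 0$, rewrite it as $CS_{[s]}(s)=\int_{-\infty}^{0}cs(t)e^{-ts}dt$, and justify differentiation under the integral sign (the Laplace representation converges absolutely on any compact $s$-interval since $cs$ is supported on $(-\infty,0)$ and the exponential is well-behaved there). Then $CS_{[s]}^{(n)}(s)=\int_{-\infty}^{0}cs(t)(-t)^{n}e^{-ts}dt\ge 0$, since each factor in the integrand is nonnegative. Boundedness below of $U$ follows because $U(q)=CS(q)+qP(q)\ge 0$: $CS\ge 0$ by the $n=0$ case, and $qP(q)\ge 0$ because $P$ is nonincreasing and nonnegative (the assumption behind the existence of the utility primitive as described earlier).

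For the hard ($\Rightarrow$) direction, I would set $g(x)\equiv CS_{[s]}(-x)$ on the appropriate right half-line of $x$. Then $g^{(n)}(x)=(-1)^{n}CS_{[s]}^{(n)}(-x)$, so the complete monotonicity criterion on $CS_{[s]}$ translates exactly into $(-1)^{n}g^{(n)}(x)\ge 0$, i.e.\ $g$ is completely monotone in the classical sense. The Hausdorff–Bernstein–Widder theorem then supplies a unique nonnegative Borel measure $\mu$ on $[0,\infty)$ with $g(x)=\int_{[0,\infty)}e^{-\tau x}d\mu(\tau)$. Substituting back, $CS_{[s]}(s)=\int_{[0,\infty)}e^{\tau s}d\mu(\tau)$, and the change of variable $t=-\tau$ yields $CS(q)=\int_{(-\infty,0]}q^{-t}d\nu(t)$ with $\nu$ nonnegative. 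Writing this as the (possibly distributional) density $cs(t)$ gives the required representation; the analyticity and support claims of Theorem \ref{TheoremLaplaceTransformWithSchwartzIntegrals} identify $cs$ with the inverse Laplace-log transform.

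The one delicate step—and what I expect to be the main obstacle—is ruling out an atom at the endpoint $t=0$, so that support lies in the open half-line $(-\infty,0)$ rather than the closed one $(-\infty,0]$. An atom at $0$ of mass $m$ would contribute a constant $m$ to $CS$, which would make $\lim_{q\to 0_+}CS(q)=m>0$. Since $CS(q)=U(q)-qP(q)$ and $U$ is bounded below (and normalized so that $U(0)=0$, with $qP(q)\to 0$ by the usual regularity of the demand primitive), we have $CS(q)\to 0$ as $q\to 0_+$, forcing $m=0$. This is precisely where the "bounded below" hypothesis is used: it prevents $U$ from having a logarithmic or power divergence at zero that would correspond to mass at or beyond $t=0$. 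Combined with the Bernstein argument above, this pins down the support to the open interval $(-\infty,0)$, completing the equivalence.
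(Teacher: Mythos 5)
Your proof is correct and takes essentially the same route as the paper: the published proof is a one-line citation to Bernstein's theorem on completely monotone functions (Widder Theorem IV.12a / Schilling Theorem 1.4), and you are simply unpacking that citation via the substitution $g(x)=CS_{[s]}(-x)$, differentiation under the integral sign for the converse, and the boundary argument ruling out an atom at $t=0$. The endpoint discussion correctly identifies the role of the ``bounded below'' hypothesis, which the cited theorem handles implicitly.
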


\begin{theorem}\label{TheoremMonotonicityOfThePassThroughRate}

\textbf{\emph{(Monotonicity of the Pass-Through Rate)}} The complete
monotonicity criterion for demand functions implies the pass-through
rate decreasing with quantity in the case of constant-marginal-cost
monopoly. The only exception is BP demand, for which
the pass-through rate is constant.

\end{theorem}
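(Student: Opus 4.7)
The plan is to reduce the inverse pass-through $1/\rho$ to the first logarithmic derivative of a moment generating function, so that monotonicity in $q$ follows from the variance being nonnegative, and the constant-pass-through case pins down BP demand by forcing a Dirac mass in the Laplace representation of consumer surplus.

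First, I differentiate the monopolist's first-order condition $P(q)+qP'(q)=c$ implicitly with respect to $c$ at the optimum, obtaining the standard formula
\[
\frac{1}{\rho(q)}=2+\frac{qP''(q)}{P'(q)}.
\]

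Second, I invoke Theorem \ref{TheoremNonnegativityOfLaplaceConsumerSurplus} to write $CS(q)=\int_{-\infty}^{0}cs(t)q^{-t}dt$ with $cs(t)\ge 0$. Using the envelope identity $CS'(q)=-qP'(q)$ and differentiating under the integral, I get $P'(q)=\int t\,cs(t)q^{-t-2}dt$ and $qP''(q)=-\int t(t+2)cs(t)q^{-t-2}dt$. A short algebraic simplification collapses the pass-through formula to
\[
\frac{1}{\rho(q)}=\frac{\int_{-\infty}^{0}t^{2}cs(t)q^{-t}dt}{\int_{-\infty}^{0}(-t)cs(t)q^{-t}dt}.
\]

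Third, I change variables via $\tau=-t>0$ and $s\equiv \log q$, and introduce the nonnegative measure $d\mu(\tau)\equiv\tau\,cs(-\tau)d\tau$ on $(0,\infty)$. The ratio becomes
\[
\frac{1}{\rho(q)}=\frac{\int_{0}^{\infty}\tau e^{s\tau}d\mu(\tau)}{\int_{0}^{\infty}e^{s\tau}d\mu(\tau)}=\frac{d}{ds}\log M(s),\qquad M(s)\equiv \int_{0}^{\infty}e^{s\tau}d\mu(\tau).
\]
The log-MGF is convex, so $\frac{d}{ds}(1/\rho)=\frac{d^{2}}{ds^{2}}\log M(s)=\mathrm{Var}_{\pi_{s}}(\tau)\ge 0$, where $\pi_{s}\propto e^{s\tau}d\mu$ is the exponentially tilted probability measure. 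Because $s=\log q$ is increasing in $q$, $1/\rho$ is non-decreasing in $q$ and hence $\rho$ is non-increasing in $q$.

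Fourth, strict constancy of the pass-through rate requires $\mathrm{Var}_{\pi_{s}}(\tau)\equiv 0$, which forces $\mu$ to be a point mass at some $\tau_{0}>0$ and therefore $cs$ to be a Dirac mass at $t_{0}=-\tau_{0}\in(-\infty,0)$. Solving the linear ODE $U(q)-qU'(q)=Cq^{-t_{0}}$ yields $U(q)=\tfrac{C}{t_{0}+1}q^{-t_{0}}+C_{0}q$ for $t_{0}\neq -1$, giving $P(q)=U'(q)=p_{0}+p_{T}q^{-T}$ with $T=t_{0}+1$ --- precisely BP demand; the borderline $t_{0}=-1$ recovers exponential demand $P(q)=a+b\log q$, the degenerate $T\to 0$ limit of the BP class.

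The principal difficulty is technical: the integral identities above have to be justified when $cs$ is only a Schwartz distribution (as permitted by Theorem \ref{TheoremLaplaceTransformWithSchwartzIntegrals}) rather than a locally integrable function, and in particular when it concentrates to a Dirac mass in the extremal case. Restricting attention to $cs$ representable as a locally finite nonnegative Radon measure --- which covers every specification of economic interest --- makes every step above classical and the variance identity fully rigorous.
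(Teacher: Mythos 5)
Your argument is correct and follows essentially the same route as the paper: both reduce to the identity $\frac{1}{\rho}=-\frac{\int t^{2}cs(t)q^{-t}dt}{\int t\,cs(t)q^{-t}dt}$ (you arrive at it by differentiating $1/\rho=2+qP''/P'$ under the integral, the paper by starting from $\rho=CS'_{[s]}/CS''_{[s]}$, which are the same thing) and then argue monotonicity from the exponential-tilting structure of the weights. Your log-MGF/variance step is just the rigorous form of the paper's first-order-stochastic-dominance remark, and your ODE derivation of the BP form from a Dirac $cs(t)$ is the explicit version of the paper's closing sentence.
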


\begin{theorem}\label{TheoremCompleteMonotonicityOfDemandSpecification}

\textbf{\emph{(Complete Monotonicity of Demand Specification)}} The
following demand functions satisfy the complete monotonicity criterion:\footnote{The parameter names are chosen as in Mathematica.}\\
 Pareto/constant elasticity ($\epsilon>1$), BP ($\epsilon>1$),
logistic distribution, log-logistic distribution ($\gamma>1$), Gumbel
distribution ($\alpha>1$), Weibull distribution ($\alpha>1$), Fr\' echet
distribution ($\alpha>1$), gamma distribution ($\alpha>1$), Laplace
distribution\footnote{Each half of the distribution separately, or the full distribution
smoothed by arccosh to ensure the existence of the derivatives.}, Singh-Maddala distribution ($a>1$), Tukey lambda distribution ($\lambda<1$),
Wakeby distribution ($\beta>1$), generalized Pareto distribution
($\gamma<1$), Cauchy distribution. 

\end{theorem}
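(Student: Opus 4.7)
The plan is to invoke Theorem \ref{TheoremNonnegativityOfLaplaceConsumerSurplus}, which reduces the complete monotonicity criterion to showing that the Laplace consumer surplus $cs(t)$ associated with each demand function is nonnegative and supported on $(-\infty,0)$. Equivalently, by the classical Bernstein theorem for completely monotone functions, it suffices to verify that $CS_{_{\![s]}}(-x)$ is a Laplace transform of a nonnegative measure on $[0,\infty)$; that is, $CS(q)=\int_{0}^{\infty}q^{\tau}\,d\mu(\tau)$ for some positive $\mu$. The theorem then reduces to a case-by-case verification of this representation for each distribution in the list.

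The easy cases come first. For Pareto/constant-elasticity demand with $\epsilon>1$, a direct calculation gives $CS(q)=(\epsilon-1)^{-1}q^{1-1/\epsilon}$, so $d\mu$ is a positive point mass at $\tau=1-1/\epsilon>0$. The BP form $P(q)=p_{0}+p_{t}q^{-t}$ with $\epsilon>1$ (equivalently $t<1$) gives $CS(q)$ proportional to $q^{1-t}$ with a positive coefficient, again a point mass. The logistic and (smoothed) Laplace distributions have quantile functions built from $\log\bigl(q/(1-q)\bigr)$-type pieces; substituting $q=e^{-x}$ and expanding via the geometric series $1/(1+e^{-x})=\sum_{k\ge0}(-1)^{k}e^{-kx}$ leads (with some care about signs and termwise integration) to a representation of $CS$ as a positive combination of exponentials in $x=-\log q$, which is manifestly Bernstein.

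For the remaining distributions---log-logistic, Gumbel, Weibull, Fr\'echet, gamma, Singh--Maddala, Tukey lambda, Wakeby, generalized Pareto, and Cauchy---the parameter restrictions in the statement ($\alpha>1$, $\gamma<1$, $a>1$, $\beta>1$, etc.) are precisely what ensures (i) convergence of $\int_{0}^{q}P(q')\,dq'$ at $q'\to 0$, so that $CS(q)$ is finite with $CS(0)=0$, and (ii) nonnegativity of the Bernstein mixing measure. For each such distribution I would (a) write down the quantile function $P(q)=F^{-1}(1-q)$, (b) change variables to $x=-\log q$ to put $CS_{_{\![s]}}(-x)$ in a form recognisable as $\int_{0}^{\infty}e^{-x\tau}\rho(\tau)\,d\tau$, and (c) verify $\rho(\tau)\ge 0$. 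For several cases one can avoid direct computation by using the closure properties of completely monotone functions---positive linear combinations, products, and composition with Bernstein functions---together with the classical mixed-exponential representation of the Gamma tail function, which handles Weibull, Fr\'echet, and gamma essentially in one stroke.

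The main obstacle is not any single case but the cumulative bookkeeping: each distribution requires its own explicit Laplace-log computation, and for Gumbel, Fr\'echet, gamma, Wakeby, and Singh--Maddala these transforms do not appear in standard tables and must be derived from scratch (as the authors note they did in the Supplementary Material). The parameter restrictions are tight, so the verification must check not merely convergence of $cs(t)$ but the sign condition $cs(t)\ge 0$ on the whole half-line $(-\infty,0)$; this is where the classical complete-monotonicity characterisations of the survival functions of these distributions (and the equivalence to mixtures of exponentials in $-\log q$, mirroring the duration-analysis analogy flagged in the Introduction) do the real work. My plan would therefore be to present one representative nontrivial case in full---Weibull with $\alpha>1$, where $P(q)=(-\log q)^{1/\alpha}$ and $CS_{_{\![s]}}(-x)$ reduces to an incomplete-gamma expression whose Bernstein representation is classical---and to collect the remaining distributions into a summary table giving $P(q)$, $cs(t)$, and the relevant parameter range, with the detailed derivations deferred to the Supplementary Material.
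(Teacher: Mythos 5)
Your proposal is correct and takes essentially the same route as the paper: both reduce the complete monotonicity criterion to a sign condition on a Laplace-log transform and then defer to the case-by-case transforms tabulated in the Supplementary Material. The only superficial difference is that you phrase the sign condition via $cs(t)\ge 0$ on $(-\infty,0)$ (invoking Theorem~\ref{TheoremNonnegativityOfLaplaceConsumerSurplus}), whereas the paper works directly with $t\,p(t)\ge 0$ on $(-\infty,1)$; since $cs(\tau)=-(\tau+1)\,p(\tau+1)/\tau$, these are the same condition, and your remark about classical complete-monotonicity characterizations of survival functions parallels the paper's pointer to Theorems 1--6 of Miller and Samko.
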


\noindent{\bf Corollary. (Monotonicity of the Pass-Through Rate)} \emph{ The last two theorems imply that the demand functions listed in
Theorem \ref{TheoremCompleteMonotonicityOfDemandSpecification} lead
to constant-marginal-cost pass-through rate decreasing in quantity, with
the exception of Pareto/constant elasticity as well as the more general
BP demand, which are known to lead to constant pass-through.}

\begin{theorem}\label{TheoremAbsenceOfCompleteMonotonicityOfDemandSpecification}

\textbf{\emph{(Absence of Complete Monotonicity of Demand Specification)}}
The following demand functions do \textbf{not} satisfy the complete
monotonicity criterion: normal distribution, lognormal distribution,
constant superelasticity (Klenow and Willis), Almost Ideal Demand
System (either with finite or infinite surplus), log-logistic distribution
($\gamma<1$), Fr\' echet distribution ($\alpha<1$), Weibull distribution
($\alpha<1$), Gumbel distribution ($\alpha<1$), Pareto/constant
elasticity ($\varepsilon>1$), gamma distribution ($\alpha<1$), Singh-Maddala
distribution ($a<1$), Tukey lambda distribution ($\lambda>1$), Wakeby
distribution ($\beta<1$), generalized Pareto distribution ($\gamma>1$).\end{theorem}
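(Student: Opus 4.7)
The plan is to invoke Theorem \ref{TheoremNonnegativityOfLaplaceConsumerSurplus} in its contrapositive form: for each demand function on the list, I will exhibit a failure of the criterion either by computing the inverse Laplace-log transform $cs(t)$ of the consumer surplus and locating a sign violation, or by producing an explicit $n$ and $s$ with $CS_{_{[s]}}^{(n)}(s)<0$. Either route proves the function does not satisfy the complete monotonicity criterion, since Theorem \ref{TheoremNonnegativityOfLaplaceConsumerSurplus} makes nonnegativity of $cs$ on $(-\infty,0)$ equivalent to complete monotonicity of $CS_{_{[s]}}$ in $-s$.

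I would first dispatch the distributions that already appear in Theorem \ref{TheoremCompleteMonotonicityOfDemandSpecification} with complementary parameter ranges (log-logistic $\gamma<1$, Weibull $\alpha<1$, Fr\'echet $\alpha<1$, Gumbel $\alpha<1$, gamma $\alpha<1$, Singh-Maddala $a<1$, Tukey lambda $\lambda>1$, Wakeby $\beta<1$, generalized Pareto $\gamma>1$). For these, the explicit Laplace inversions already performed in proving Theorem \ref{TheoremCompleteMonotonicityOfDemandSpecification} carry through verbatim; the transition at the parameter boundary is precisely where $cs(t)$ either changes sign on $(-\infty,0)$ or acquires mass on $[0,\infty)$. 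In most cases this manifests as a gamma- or beta-type prefactor in $cs(t)$ whose argument crosses zero at the critical parameter, flipping its sign. The Pareto/constant-elasticity entry in the statement should be read with its parameter convention aligned so that the condition $\varepsilon>1$ here refers to the regime complementary to that in Theorem \ref{TheoremCompleteMonotonicityOfDemandSpecification}; the BP case is dispatched similarly.

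For the genuinely new cases (normal, lognormal, Klenow-Willis constant superelasticity, and the Almost Ideal Demand System in both bounded and unbounded surplus variants), a direct derivative computation is cleanest. The footnote following the citation to \citet{acv} in Section \ref{ConstantElasticityDemandAndItsFlexibleReplacement} already records the key fact for the lognormal: the inverse demand carries a factor $\exp\!\bigl(\tfrac{1}{2}[\Phi^{-1}(1-q)]^2\bigr)$, and differentiating $U_{_{[s]}}(s)$ repeatedly yields terms in which the high powers of $\Phi^{-1}(1-q)$ dominate and produce a sign change of $CS_{_{[s]}}^{(n)}$ for some finite $n$. An analogous argument handles the normal distribution. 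For Klenow-Willis and AIDS, the defining equations contain a $\log q$ or an exponential-in-log-quantity factor that, under the Schwartz-integral interpretation of Theorem \ref{TheoremLaplaceTransformWithSchwartzIntegrals}, localizes $cs$ to derivatives of Dirac deltas at specific exponents; at least one such exponent lies outside $(-\infty,0)$ or carries a negative coefficient.

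The main obstacle is bookkeeping rather than conceptual difficulty: each distribution has its own functional idiosyncrasies, and the quantile-based distributions (Tukey lambda, Wakeby) in particular do not admit clean closed-form Laplace-log transforms. For those, the contingency plan is to skip the transform entirely and simply exhibit a point where $CS_{_{[s]}}''(s)<0$ or $CS_{_{[s]}}'''(s)<0$, which is sufficient to defeat complete monotonicity. The natural place for the detailed per-distribution calculations is an appendix or the Supplementary Material, with one short paragraph per distribution recording either the explicit $cs(t)$ or the failing derivative.
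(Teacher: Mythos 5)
Your approach is essentially the paper's: the proof of Theorem \ref{TheoremAbsenceOfCompleteMonotonicityOfDemandSpecification} defers to the same two-step argument used for Theorem \ref{TheoremCompleteMonotonicityOfDemandSpecification} — read off the Laplace inverse demand $p(t)$ from the table in Supplementary Material \ref{sub:LaplaceInverseDemandFunctions} and observe that $t\,p(t)$ fails to be nonnegative and supported on $(-\infty,1)$, which via the consumer-surplus transform and Theorem \ref{TheoremNonnegativityOfLaplaceConsumerSurplus} (Bernstein's theorem) defeats complete monotonicity. Your reading of the Pareto/constant-elasticity entry as a typo for the complementary range is correct. Two caveats worth flagging: (i) the paper \emph{does} supply closed-form Laplace-log transforms for both the Tukey lambda and Wakeby distributions (they appear as quantile-function rows in the table of \ref{sub:LaplaceInverseDemandFunctions}), so your contingency plan of bypassing the transform and exhibiting a failing derivative of $CS_{[s]}$ is unnecessary for those two cases; (ii) Theorem \ref{TheoremNonnegativityOfLaplaceConsumerSurplus} is an equivalence only under the proviso that $U$ is bounded below, so for the entries with infinite consumer surplus (constant elasticity with small elasticity, AIDS with infinite surplus) the contrapositive of that theorem yields only the disjunction ``$U$ unbounded below or CM fails'' — there you should either verify boundedness separately or fall back on the direct-derivative route, which avoids the proviso entirely and is the cleaner disposal of those cases.
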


In our 
Supplementary Material \ref{forms} we provide a more complete taxonomy of pass-through properties of some of the demand forms mentioned here.  
Interestingly, the normal distribution has economic properties close to those of forms that satisfy the complete monotonicity criterion, since the non-complete monotonicity manifests itself only for very high-order derivatives.\footnote{In particular we found that  the normal distribution of consumer values has properties
very close to those satisfying the complete monotonicity criterion:
constant-marginal-cost pass-through is increasing in price (as we
show below), and low-order derivatives of $CS\left(s\right)$ with
respect to $-s$ are positive. We concluded that the complete monotonicity
criterion is not satisfied based on examining the sign on the tenth
derivative of $CS\left(s\right)$. The absence of complete monotonicity
is consistent with our expression to the corresponding Laplace inverse
demand, which does not seem to satisfy $t\,cs\left(t\right)\ge0$.
In most economic applications, the difference from completely monotone
problems is inconsequential because it manifests itself only in very
high derivatives of $CS\left(s\right)$.}  The lognormal distribution is not quite 
so well-behaved, but the more realistic income model (the double Pareto lognormal) behaves similarly for calibrated parameter values.

\section{Conclusion}

\label{conclusion}

We have shown that the set of analytic solutions to many common economic problems is substantially richer than typically assumed. 
They let economists work with flexible, realistic models, instead of imposing restrictive, unrealistic assumptions
in order to get analytic solutions of traditional kinds. Our approach to getting analytic solutions is also useful when
 applied to sub-problems of larger economic models. In those cases
it leads to the ability to solve those models numerically in a much more efficient way, as in our international trade
application.

The international trade model provides a perspective on the gravity equation of trade that is completely different
from the rest of the literature. The model resolves economic puzzles related to the cost of trade since its parameters take 
realistic values and at the same time the model matches well firm-level and country-level trade patterns.

Of course, there are many other applications of our method, some of which we briefly discussed here, some of which 
we will report in separate papers, and some of which, hopefully, the reader will develop on his/her own.

\singlespacing
\bibliographystyle{aer}
\bibliography{FunctionalFormsForTractableEconomicModels.bib}

\appendix

\begin{center}
{\LARGE{}{{Appendix}}} \singlespacing
\par\end{center}

\section{Proofs of Theorems}\label{AppendixProofsOfTheorems}

\textbf{Proof of Theorem \ref{formpreserve} (Characterization of
Form-Preserving Functions).} Here we present a constructive proof
of the theorem that exactly traces the steps we first used to derive 
the theorem's statement. It is instructive for readers familiar 
with Fourier transform or Laplace transform because it 
highlights the properties of functions we emphasize in this paper
and shows how using the transforms, calculations may be conveniently 
performed just in a couple of lines. Other readers may prefer
reading Supplementary Material 
\ref{AppendixDiscussionOfTheCharacterizationTheorem}, where 
we discuss how the theorem may be proven without 
functional transforms.

Here we derive the structure of $m$-dimensional 
functional form classes $\mathcal{C}$ that are invariant under
average-marginal transformations. We take as the domain 
of the functions an open interval $I$ of positive real numbers, which may
include all positive real numbers.\footnote{If functions of negative numbers
were of interest, we could simply switch to working in terms of $(-q)$ instead
of $q$ and derive analogous results.}
For convenience we express the (infinitely
differentiable) functions $F\left(q\right)$ on $I$
in terms of functions $G\left(s\right)$ defined in the corresponding logarithmically 
transformed domain,
with the identification $s\equiv\log q$, $F\left(q\right)\equiv G\left(\log q\right)$.
Consider a function $F\left(q\right)\mathcal{\in C}$ and its counterpart
$G\left(s\right)$. In terms of $G$, the average-marginal form-preservation
requires that the counterpart of $aG+bG'$ belong to the class $\mathcal{C}$,
if the counterpart of $G$ does so. For technical reasons, we will
work with $G\left(s\right)$ multiplied by the characteristic function
$1_{S}\left(s\right)$ of an arbitrarily chosen finite non-empty interval
$S\equiv\left(s_{1},s_{2}\right)\in I$, i.e. with $G_{S}\left(s\right)\equiv G\left(s\right)1_{S}\left(s\right)$.\footnote{If we worked with infinite intervals, 
the convergence of the integrals below would not be always guaranteed.}
We denote by $\hat{G}_{S}\left(\omega\right)$ the Fourier transform
of $G_{S}\left(s\right)$, which in turn may be expressed as the inverse
Fourier transform $G_{S}\left(s\right)=(2\pi)^{-1/2}\int_{-\infty}^{\infty}\hat{G}_{S}\left(\omega\right)e^{-i\omega s}d\omega$.\footnote{The Fourier transform used in the proof is equivalent to the Laplace
transform with imaginary $s$. Both transforms may be thought of as
parts of the holomorphic Fourier-Laplace transform.}

By iterating the defining property of average-marginal form-preservation,
we know that the class $\mathcal{C}$ contains also counterparts of
the derivatives $G^{\left(n\right)}\left(s\right)$. We will consider
the first $m$ of them, in addition to $G\left(s\right)$. For $n=1,2,...,m$,
we denote by $G_{S}^{\left(n\right)}\left(s\right)$ the truncation
of $G^{\left(n\right)}\left(s\right)$ to the interval $S$, i.e.
$G_{S}^{\left(n\right)}\left(s\right)\equiv G^{\left(n\right)}\left(s\right)1_{s\in S}$.
Inside the interval $S$,
\begin{equation}
G_{S}^{\left(n\right)}\left(s\right)=(2\pi)^{-1/2}\int_{-\infty}^{\infty}\left(-i\omega\right)^{n}\hat{G}_{S}\left(\omega\right)e^{-i\omega s}d\omega,\quad\mbox{for}\;s\in S,\;n\in\{0,1,2,...,m\}.\label{eq:NThDerivative}
\end{equation}
The $m+1$ functions $G_{S}\left(s\right),G_{S}^{\left(1\right)}\left(s\right),G_{S}^{\left(2\right)}\left(s\right),...,G_{S}^{\left(m\right)}\left(s\right)$
span a vector space with dimensionality $m+1$ or less. Dimensionality
equal to $m+1$ would contradict the assumption of having an $m$-dimensional
functional form class, which implies that the set of functions $G_{S}\left(s\right),G_{S}^{\left(1\right)}\left(s\right),G_{S}^{\left(2\right)}\left(s\right),$
$...,G_{S}^{\left(m\right)}\left(s\right)$ must be linearly dependent
on the interval $S$. As a result, there must exist a polynomial $T_{0}\left(.\right)$
(with real coefficients), such that 
\begin{equation}
\int_{-\infty}^{\infty}T_{0}\left(-i\omega\right)\hat{G}_{S}\left(\omega\right)e^{-i\omega s}d\omega\label{eq:T0GS}
\end{equation}
is zero for any $s\in S$. This expression vanishes not only for $s\in S\equiv\left(s_{1},s_{2}\right)$,
but also for $s\in\left(-\infty,s_{1}\right)$ and $s\in\left(s_{2},\infty\right)$.
This is because the right-hand-side of (\ref{eq:NThDerivative}) when
extended to arbitrary $s\in\mathbb{R}$ represents the $n$th derivative
of $G_{S}\left(s\right)$ in the sense of the Schwartz distribution
theory, and given that $G_{S}\left(s\right)$ vanishes for $s\in\left(-\infty,s_{1}\right)$
and $s\in\left(s_{2},\infty\right)$, so must its $n$th derivative.
Given that the expression (\ref{eq:T0GS}) is a generalized function\footnote{By a generalized function we mean an element of the space $\mathcal{S}'\left(\mathbb{R}\right)$
of distributions.} of $s$ that gives zero when integrated against any test function\footnote{A test function here refers to an element of the space $\mathcal{S}\left(\mathbb{R}\right)$
of space of rapidly decreasing functions.} supported on $(-\infty,s_{1}-\epsilon]\cup\left[s_{1}+\epsilon,s_{2}-\epsilon\right]\cup[s_{2}+\epsilon,\infty)$
for any $\epsilon>0$, we may write it as a linear combination of
Dirac delta functions and a finite number of their derivatives located
at $s_{1}$ and $s_{2}$. By computing its Fourier transform we find
that $T_{0}\left(-i\omega\right)\hat{G}_{S}\left(\omega\right)$ must
be of the form 
$
T_{1}\left(\omega\right)e^{is_{1}\omega}+T_{2}\left(\omega\right)e^{is_{2}\omega}
$
with some polynomials $T_{1}\left(\omega\right)$ and $T_{2}\left(\omega\right)$, with
complex coefficients in general.
Consequently, $\hat{G}_{S}\left(\omega\right)$ may be written as
\[
\hat{G}_{S}\left(\omega\right)=\frac{T_{1}\left(\omega\right)}{T_{0}\left(-i\omega\right)}e^{is_{1}\omega}+\frac{T_{2}\left(\omega\right)}{T_{0}\left(-i\omega\right)}e^{is_{2}\omega}.
\]
The polynomial $T_{0}\left(-i\omega\right)$ may have a common factor
with $T_{1}\left(\omega\right)$ or $T_{2}\left(\omega\right)$ or
both. If we cancel these common factors, we may rewrite the expression
as

\begin{equation}
\hat{G}_{S}\left(\omega\right)=\frac{T_{3}\left(\omega\right)}{T_{5}\left(\omega\right)}e^{is_{1}\omega}+\frac{T_{4}\left(\omega\right)}{T_{6}\left(\omega\right)}e^{is_{2}\omega}\label{eq:GHatS}
\end{equation}
for some polynomials $T_{3}$, $T_{4}$, $T_{5}$, and $T_{6}$, such
that $T_{3}$ has no common divisors with $T_{5}$ and similarly for
$T_{4}$ with $T_{6}$. Let us compute the inverse Fourier transform
of the last expression for $\hat{G}_{S}\left(\omega\right)$ using
the residue theorem. To perform the integration, we consider each
of the two terms in (\ref{eq:GHatS}) separately and specialize to
$s\in S$. We close the integration contour by semicircles at infinity
of the complex plane, correctly chosen so that their contribution
to the integral vanishes. The integral value is then equal to the
sum of the pole (residue) contributions, which give exponentials of
$s$ multiplied by polynomials of $s$. We see that for $s\in S$,
$
G_{S}\left(s\right)=\sum_{j=1}^{N}D_{j}\left(s\right)e^{-ist_{j}},
$
for some integer $N$, complex numbers $t_{j}$ and polynomials $D_{j}\left(s\right)$.
Since the interval $S$ was chosen arbitrarily, not just $G_{S}\left(s\right)$,
but also $G\left(s\right)$ itself must take this form. In the last
expression the constants may be complex. Without loss of generality,
we can assume that the first $N_{1}$ numbers $t_{j}$ are real and
the remaining ones have an imaginary part. By combining individual
terms into real contributions so that $G\left(s\right)$ is real,
we get 
\[
G\left(s\right)=\sum_{j=1}^{N_{1}}A_{j}\left(s\right)e^{-st_{j}}+\sum_{j=1}^{N_{2}}\left(B_{j}\left(s\right)\cos\tilde{t}_{j}s+C_{j}\left(s\right)\sin\tilde{t}_{j}s\right)e^{-\hat{t}_{j}s},
\]
where $A_{j}\left(s\right)$, $B_{j}\left(s\right)$, and $C_{j}\left(s\right)$
are polynomials, and $N_{1}+2N_{2}=N$. This form of $G\left(s\right)$
translates into the following form of $F\left(q\right)$: 
\begin{equation}
F\left(q\right)=\sum_{j=1}^{N_{1}}A_{j}\left(\log q\right)q^{-t_{j}}+\sum_{j=1}^{N_{2}}\left(B_{j}\left(\log q\right)\cos\left(\tilde{t}_{j}\log q\right)+C_{j}\left(\log q\right)\sin\left(\tilde{t}_{j}\log q\right)\right)q^{-\hat{t}_{j}}.\label{eq:AverageMarginalInvariantFunction}
\end{equation}
If we wish to exclude the possibility of oscillations, e.g. in economic
applications where we allow the functional form to be valid arbitrarily
close to $q=0$, we can set the polynomials $B_{j}$ and $C_{j}$
to zero and consider only functions of the form 
$
F\left(q\right)=\sum_{k=1}^{N_{1}}A_{j}\left(\log q\right)q^{-t_{j}}.
$
An example of functional forms of this kind is $aq^{-t}+bq^{-u}+cq^{-u}\log q+dq^{-u}(\log q)^{2}$.
The reader can easily verify that this is a four-dimensional functional
form class invariant under average-marginal transformations. In general,
it is now straightforward to check that the result (\ref{eq:AverageMarginalInvariantFunction})
implies the statement of the theorem.  \hfill{}\Square{}

\noindent \textbf{Proof of Theorem \ref{TheoremClosedFormSolutions} (Closed-Form Solutions).}  The proof is straightforward. By assumption, there exists some definite power $b$ such that $x \equiv q^b$ satisfies an algebraic equation of order $k$: $P_k(x)=0$, where $P_k(x)$ is a polynomial of order at most $k$. For this to be true, all elements of the functional form class must factorize as $q^a P_k(q^b)$ for some definite $a$. When expanded, the powers of $q$ in individual terms lie on the grid ${a, a+b, ... , a + b k }.$ \hfill{}\Square{}
\noindent \paragraph{Proof of Theorem \ref{TheoremAggregation}}
\hspace*{0.5ex} \textbf{(Aggregation)}.
 The firm{'}s revenue \(q P(q)\), cost \(\int \text{MC}(q) \, dq\), and profit are all linear combinations of powers of
\(q\). For this reason, it suffices to show that it is possible to perform explicitly aggregation integrals $\mathcal{I}$ for powers of \(q\) (the
quantity optimally chosen by a firm with productivity parameter \(a\)):
$\mathcal{I}\equiv \int q(a)^{\gamma _1} \, dG(a)$.
 Changing the integration variable to \(q\) gives:
$\mathcal{I}=\int q^{\gamma _1} G'(a(q)) a'(q) \, dq$. 
 The firm{'}s first-order condition equates the marginal revenue \(R'(q)=P(q)+q P'(q)\) to the marginal cost
\(\text{MC}_0(q)+a \text{MC}_1(q)\) and implies 
\[a=\frac{R'(q)-\text{MC}_0(q)}{\text{MC}_1(q)}\Rightarrow a'(q)=\frac{R''(q)-\text{MC}_0'(q)}{\text{MC}_1(q)}-\frac{R'(q)-\text{MC}_0(q)}{\text{MC}_1(q){}^2}
\text{MC}_1'(q).\]
 Substituting these expressions into the integral gives
\[\mathcal{I}=\int q^{\gamma _1} \left(\frac{R''(q)-\text{MC}_0'(q)}{\text{MC}_1(q)}-\frac{R'(q)-\text{MC}_0(q)}{\text{MC}_1(q){}^2} \text{MC}_1'(q)\right)
G'\left(\frac{R'(q)-\text{MC}_0(q)}{\text{MC}_1(q)}\right) \, dq.\]
 Since \(G'(a)\) is a mixture of powers of \(a\), and \(\left(R'(q)-\text{MC}_0(q)\right) \text{MC}_1'(q)\)
and { }\(R''(q)-\text{MC}_0'(q)\) are mixtures of powers of \(q\), the integral on the right-hand side may be written as a linear combination of
integrals of the type
\[\int q^{\gamma _5} \text{MC}_1(q){}^{\gamma _7} \left(-\text{MC}_0(q)+R'(q)\right){}^{\gamma _6} \, dq,\]
 where \(\gamma _7\) equals \(-\gamma _6-1\) or \(-\gamma _6-2\). Given our assumptions, up to a known multiplicative
constant this integral equals 
$\int q^{\gamma _8} N_1\left(q^{\alpha }\right){}^{\gamma _9} N_2\left(q^{\alpha }\right){}^{\gamma _{10}} \, dq$.
 If we change the integration variable to \(x\equiv q^{\alpha }\), the problem reduces to computing the integral
$\int x^{\gamma _{11}} N_1(x){}^{\gamma _{12}} N_2(x){}^{\gamma _{13}} \, dx$.
To complete the proof, it suffices to examine the structure of this intergral for different structures of the polynomials. 

Depending on the structure of the polynomials, the following six non-exclusive
cases may arise:

(1) If the polynomials \(N_1\) and \(N_2\) are trivial, the integral reduces to a power function of \(q\), without any special functions.

(2) If either \(N_1\) or \(N_2\) is trivial and the other polynomial is linear, the integral leads to the standard hypergeometric function, denoted
\(\text{}_2F_1\), since up to an additive constant
\[\int x^{\gamma _{11}} \left(1+\gamma _{14} x\right){}^{\gamma _{13}} \, dx=\frac{x^{1+\gamma _{11}}}{1+\gamma _{11}} \, _2F_1\left(1+\gamma _{11},-\gamma
_{13};2+\gamma _{11};-x \gamma _{14}\right)\]

(3) If both \(N_1\) and \(N_2\) are linear, the integral leads to the standard Appell function, denoted \(F_1\), since up to an additive constant
\[\int x^{\gamma _{11}} \left(1+\gamma _{18} x\right){}^{\gamma _{12}} \left(1+\gamma _{19} x\right){}^{\gamma _{13}} \, dx=\frac{x^{1+\gamma _{11}}}{1+\gamma
_{11}} F_1\left(1+\gamma _{11};-\gamma _{12},-\gamma _{13};2+\gamma _{11};-x \gamma _{18},-x \gamma _{19}\right)\]

(4) If either \(N_1\) and \(N_2\) is trivial and the other polynomial is quadratic, the integral again leads to the standard Appell function, denoted
\(F_1\):
\[\int x^{\gamma _{11}} \left(1+\gamma _{14} x+\gamma _{15} x^2\right){}^{\gamma _{13}} \, dx=\text{}\]

\[\frac{\gamma _{15}^{\gamma _{13}} x^{1+\gamma _{11}}}{1+\gamma _{11}} \left(\frac{1+x \gamma _{14}+x^2 \gamma _{15}}{\gamma _{15}+x \gamma _{14}
\gamma _{15}+x^2 \gamma _{15}^2}\right){}^{\gamma _{13}} F_1\left(1+\gamma _{11};-\gamma _{13},-\gamma _{13};2+\gamma _{11};\gamma _{16} x,\gamma
_{17} x\right)\]
 where \(\gamma _{16}=-2 \gamma _{15} \left(\gamma _{14}+\sqrt{\gamma _{14}^2-4 \gamma _{15}}\right){}^{-1}\),
and \(\gamma _{17}=2 \gamma _{15} \left(-\gamma _{14}+\sqrt{\gamma _{14}^2-4 \gamma _{15}}\right){}^{-1}\).

(5) If \(N_1\) and \(N_2\) are both of order less than five, we can factorize them into products of linear polynomials with the factorization performed
in closed form by the method of radicals. The resulting integral may be expressed using Lauricella functions. In particular, by the fundamental theorem
of algebra, \(N_1\) and \(N_2\) may be written as products of linear functions. This means that up to a multiplicative constant, \(x^{\gamma _{11}}
\left(1+\gamma _{18}x\right){}^{\gamma _{12}} \left(1+\gamma _{19}x\right){}^{\gamma _{13}}\) equals \(x^{b-1} \left(1-u_1x\right){}^{-b_1} \ldots
 \left(1-u_nx\right){}^{-b_n}\), where \(u_i\) represent the reciprocals of the roots of the polynomials. These roots, as well the constants \(b\),
\(b_1\), ..., \(b_n\) may be found explicitly using the standard formulas for solutions to quadratic, cubic, or quartic equations. Up to an additive
constant, the corresponding integral equals
\[\int x^{b-1} \left(1-u_1 x\right){}^{-b_1} \ldots  \left(1-u_n x\right){}^{-b_n} \, dx=\frac{x^b}{b} F_D{}^{(n)}\left(b,b_1,\ldots ,b_n,b+\text{1;}
u_1 x,\ldots ,u_n x\right)\]
 This is because in general the Lauricella function \(F_D{}^{(n)}\) is defined as
\[F_D{}^{(n)}\left(b,b_1,\ldots ,b_n,c ; x_1,\ldots ,x_n\right)=\frac{\Gamma (c)}{\Gamma (b) \Gamma (c-b)} \int_0^1 y^{b-1} (1-y)^{c-b-1} \left(1-x_1
y\right){}^{-b_1} \ldots  \left(1-x_n y\right){}^{-b_n} \, dy\]
 with \(\Gamma\) denoting the standard gamma function, and in the special case of \(c=b+1\) this definition
becomes
\[F_D{}^{(n)}\left(b,b_1,\ldots ,b_n,b+1 ; x_1,\ldots ,x_n\right)=b \int_0^1 y^{b-1} \left(1-x_1 y\right){}^{-b_1} \ldots  \left(1-x_n y\right){}^{-b_n}
\, dy\]
 Substituting \(y\to  \left.x_0\right/x, x_1\to  u_1x\) and \(x_n\to  u_nx\) then leads to the desired result
for the integral:
\[\int_0^x x_0^{b-1} \left(1-u_1 x_0\right){}^{-b_1} \ldots  \left(1-u_n x_0\right){}^{-b_n} \, dx_0=\frac{x^b}{b} F_D{}^{(n)}\left(b,b_1,\ldots
,b_n,b+1 ; u_1 x,\ldots ,u_n x\right)\]

(6) Finally, if either \(N_1\) or \(N_2\) is of order five or higher, the factorization involves root functions, since the method of radicals can
no longer be used. However, the resulting integral may still be expressed using Lauricella functions as described above.

 We conclude that the structure of the resulting expressions for the integral agrees with the statement of Theorem 3.
 \hfill{}\Square{}

\noindent \textbf{Proof of Theorem \ref{TheoremLaplaceTransformWithRiemannStieltjesIntegrals} (Laplace-log Transform with Riemann-Stieltjes Integrals).} 
\textbf{(A)} This follows from Theorem
I.6.3 of of \citet{widder2010laplace}.  \textbf{(B)} If we choose $u_{I}\left(t\right)$
appearing in Equation \ref{eq:UtilityAsRiemannStieljesIntegral} from the paper to be piecewise constant with
a finite number $N$ of points of discontinuity $\left\{ t_{j},j=1,2,...,N\right\} $,
the integral becomes 
$
U\left(q\right)=\sum_{j=1}^{N}a_{j}q^{-t_{j}},
$
where $a_{j}$ is the (signed) magnitude of the discontinuity at point
$t_{j}$, i.e. the magnitude of the mass that $u\left(t\right)$ has
at point $t_{j}$. If we choose $t_{j}$ to be nonpositive integers,
$U\left(q\right)$ will be a polynomial of $q$. By appropriate choices
of $N$ and $a_{j}$, any polynomial of $q$ may be expressed in this
way. \textbf{(C)} Given that polynomials
are included in Equation \ref{eq:UtilityAsASymbolicIntegral} from the paper, the theorem follows from
the Weierstrass approximation theorem, which states that polynomials
are dense in the space of continuous functions on a compact interval.
For a constructive proof of the theorem due to Bernstein, see e.g.
Section VII.2 of \citet{feller2008introduction}.  \textbf{(D)} This follows from Theorem
I.5a of \citet{widder2010laplace}. \hfill{}\Square{}

\noindent \textbf{Proof of Theorem \ref{TheoremLaplaceTransformWithSchwartzIntegrals} (Laplace-log Transform with Schwartz Integrals).} The three sentences of
the theorem are implied by the following statements in \citet{zemanian1965distribution}:
(1)\emph{ }Theorem 8.4-1 and Corollary 8.4-1a, (2) Theorem 8.3-1a,
(3) Theorem 8.3-2 and the text following Corollary 8.4-1a.\hfill{}\Square{}

\noindent \textbf{Proof of Theorem  \ref{TheoremDiscreteApproximation}
(Discrete Approximation).} This theorem follows straightforwardly from Theorem 4 of \citet{apostol1999elementary}. 
That theorem provides in its Equation 25 a convenient form of the Euler-Maclaurin formula, which may be written, after a small change
of notation, as: 
\begin{center}
\begin{tabular}{c}
$\sum _{k=1}^{n_T} F(k)=\int_1^{n_T} F(x) \, dx+\mathcal{C}(F)+E_F\left(n_T\right),$  \\
$\mathcal{C}(F)=\frac{1}{2} F(1)-\sum _{r=1}^m \frac{B_{2 r}}{(2 r)!} F^{(2 r-1)}(1)+\frac{1}{(2 m+1)!} \int_1^{\infty } P_{2 m+1}(x) F^{(2 m-1)}(x)
\, dx,$ \\
$E_F\left(n_T\right)=\frac{1}{2} F\left(n_T\right)-\sum _{r=1}^m \frac{B_{2 r}}{(2 r)!} F^{(2 r-1)}\left(n_T\right)+\frac{1}{(2 m+1)!} \int_{n_T}^{\infty
} P_{2 m+1}(x) F^{(2 m-1)}(x) \, dx.$
\end{tabular}
\end{center}
We can use this form of the Euler-Maclaurin formula to prove the discrete approximation theorem. The relationship we
want to prove is
\begin{center}
$\sum _{t\in T} q^{-t} f(t)=\frac{1}{\text{$\Delta $t}} \int q^{-t} f(t) \, dt+\frac{1}{2} q^{-t_{\min }} f\left(t_{\min }\right)+\frac{1}{2} q^{-t_{\max
}} f\left(t_{\max }\right)-\frac{R_1+R_2+R_3}{\text{$\Delta $t}},$
\end{center}
 where \(T\equiv \left\{t_{\min }, t_{\min }+\text{$\Delta $t}, \text{...} , t_{\max }\right\}\) and \(n_T\) is the number
of points in the grid \(T\). Equivalently,
\begin{center}
$\sum _{t\in T} q^{-t} f(t)=\frac{1}{\text{$\Delta $t}} \int_{t_{\min }}^{t_{\max }} q^{-t} f(t) \, dt+\frac{1}{2} q^{-t_{\min }} f\left(t_{\min
}\right)+\frac{1}{2} q^{-t_{\max }} f\left(t_{\max }\right)-\frac{R_2+R_3}{\text{$\Delta $t}}.$
\end{center}
If we use the notation
\[F(k)\equiv q^{-t_{\min }-k \text{$\Delta $t}} f\left(t_{\min }+(k-1) \text{$\Delta $t}\right)\]
 we can rewrite the individual terms in the desired formula as
 \begin{center}
\begin{tabular}{c}
$\sum _{t\in T} q^{-t} f(t)=\sum _{k=1}^{n_T} F(k),$  \\
$\frac{1}{2} q^{-t_{\min }} f\left(t_{\min }\right)+\frac{1}{2} q^{-t_{\max }} f\left(t_{\max }\right)=\frac{F(1)}{2}+\frac{F\left(n_T\right)}{2},$ \\
$\frac{R_2}{\text{$\Delta $t}}=\sum _{r=1}^m \frac{B_{2 r}}{(2 r)!} \left(F^{(-1+2 r)}(1)-F^{(-1+2 r)}\left(n_T\right)\right),$ \\
$\frac{R_3}{\text{$\Delta $t}}=-\frac{\text{$\Delta $t}^{2 m}}{(1+2 m)!} \int_{t_{\min }}^{t_{\max }} P_{1+2 m}(t) h^{(1+2 m)}(t) \, dt=-\frac{1}{(2
m+1)!} \int_1^n P_{1+2 m}(x) F^{(1+2 m)}(x) \, dx.$
\end{tabular}
\end{center}
 By comparing these expressions with those of Theorem 4 of  \citet{apostol1999elementary}, we see that the main statement of Theorem  \ref{TheoremDiscreteApproximation} is valid. The bound on \(R_3\) then simply follows from the formula \(| P_{2 m+1}(x) |\leq 2 (2 m+1)! (2 \pi )^{-2 m-1}\)
; see p. 538 of  \citet{lehmer1940maxima}.
\hfill{}\Square{} \\

\noindent \textbf{Proof of Theorem \ref{TheoremNonnegativityOfLaplaceConsumerSurplus} (Nonnegativity of Laplace Consumer Surplus).} This theorem follows from Bernstein's theorem on completely monotone
functions, formulated e.g. as Theorem IV.12a of \citet{widder2010laplace}
or Theorem 1.4 of \citet{schilling2012bernstein}. \hfill{}\Square{}

\noindent \textbf{Proof of Theorem \ref{TheoremMonotonicityOfThePassThroughRate} (Monotonicity of the Pass-Through Rate).}  Constant marginal cost monopoly
pass-through rate may be expressed as 
$
\rho={CS_{_{\!\left[s\right]}}'\left(s\right)}/{CS_{_{\!\left[s\right]}}''\left(s\right)},
$
which is straightforward to verify from the basic definitions. For a completely monotone problem, Laplace consumer surplus $cs\left(t\right)$
is nonnegative. For this reason, the inverse of $\rho$ may be expressed
as a weighted average of $t$ with nonnegative weight
$
w\left(t,s\right)\equiv{t\:cs\left(t\right)e^{-st}}/{\int_{-\infty}^{0}t\,cs\left(t\right)e^{-st}dt}
$
as follows 
\[
\frac{1}{\rho}=\frac{CS_{_{\!\left[s\right]}}''\left(s\right)}{CS_{_{\!\left[s\right]}}'\left(s\right)}=-\frac{\int_{-\infty}^{0}t^{2}cs\left(t\right)e^{-st}dt}{\int_{-\infty}^{0}t\ cs\left(t\right)e^{-st}dt}=-\int_{-\infty}^{0}t\,w\left(t,s\right)dt.
\]
In response to an increase in $s$, the weight gets shifted towards
more negative $t$,\footnote{In the same mathematical sense as in the definition of first order
stochastic dominance.} and $1/\rho$ decreases. We conclude that $\rho$ is decreasing in
$q$. Only if $t\,cs\left(t\right)$ is supported at one point will
there be no shift in weight and $\rho$ remains constant. That case
corresponds to BP demand.\hfill{}\Square{}

 \begin{figure}
\begin{centering}
\includegraphics[scale=0.95]{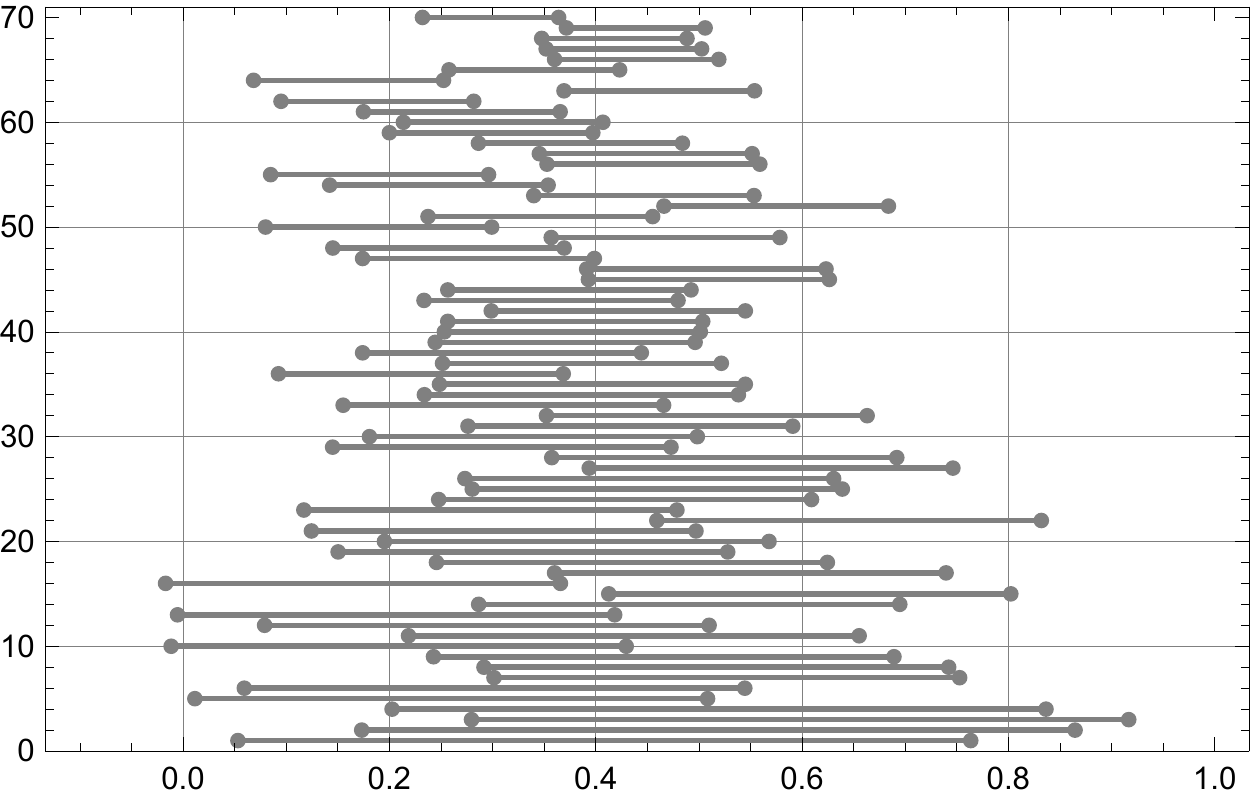}
\caption{Confidence intervals for the cost exponent \(\gamma\) for individual industries at the 95\% level. For visualization purposes,
the industries are ordered by the standard deviation of \(\gamma\) and stacked vertically.} \label{FigureConfidenceIntervalsForBeta}
\end{centering}
\end{figure}

\noindent \textbf{Proof of Theorem \ref{TheoremCompleteMonotonicityOfDemandSpecification} (Complete Monotonicity of Demand Specification).} The complete monotonicity properties follow by straightforwardly recognizing
that in these cases $t\,p\left(t\right)$ is nonnegative and supported
on $\left(-\infty,1\right)$, with the corresponding Laplace inverse
demand functions $p\left(t\right)$ listed in our 
Supplementary Material \ref{sub:LaplaceInverseDemandFunctions}, which also contains additional discussion. Note that for most of the inverse demand functions listed in the
theorem, it is also possible to prove complete monotonicity using
Theorems 1--6 of \citet{miller2001completely}.\hfill{}\Square{}

\noindent \textbf{Proof of Theorem \ref{TheoremAbsenceOfCompleteMonotonicityOfDemandSpecification} (Absence of Complete Monotonicity of Demand Specification) .} The statement of the theorem follows by inspection of the Laplace inverse
demand functions, as in the previous proof. Additional discussion may be found in 
Supplementary Material \ref{sub:LaplaceInverseDemandFunctions}.\hfill{}\Square{}

\section{Details of the Generalized EOQ Model Estimation}
\label{AppendixDetailsOfTheGeneralizedEOQModelEstimation}

Here we provide additional details of the estimation of the cost parameter \(\beta =(1-\gamma )/(2-\gamma )\). As mentioned in the main text,
we selected industries that included at least 10 firms satisfying our criteria. The corresponding confidence intervals corresponding to individual industries 
are plotted in Figure \ref{FigureConfidenceIntervalsForBeta}.

In principle, the value of average estimated $\beta $ could be 
sensitive to the cutoff on the number of firms per industry. Table 
\ref{TableSensitivityToCutoff} summarizes the dependence of the resulting average $\beta $
on the choice of the cutoff. It turns out that the average $\beta $ remains roughly the same even for large changes of the cutoff on the number of firms.

 \begin{table}
\begin{centering}
\begin{tabular}{|c|c|c|c|}
\hline \(N_{f,\min }\) & \(N_I\) & \(\beta\) & \(\sigma _{\beta }\)\tabularnewline
\hline \hline \
5 & 192 & 0.39 & 0.20 \
\tabularnewline\hline\
10 & 70 & 0.39 & 0.12 \
\tabularnewline \hline\
15 & 45 & 0.39 & 0.10 \
\tabularnewline \hline\
20 & 23 & 0.41 & 0.10 \
\tabularnewline \hline\
25 & 14 & 0.39 & 0.10 \
\tabularnewline \hline\
30 & 11 & 0.42 & 0.07 \
\tabularnewline \hline\
35 & 9 & 0.42 & 0.08 \\ \hline
\end{tabular}\caption{Sensitivity to the cutoff \(N_{f,\min }\) of the number of firms per industry. The cutoff influences
the number of industries \(N_I\) that satisfy the sample selection criteria and the resulting mean \(\beta\) and the corresponding standard deviation
\(\sigma _{\beta }\).}\label{TableSensitivityToCutoff}\end{centering}
\hspace*{0.5ex} \end{table}

The estimated value of \(\gamma\) could be, in principle, also influenced by seasonality patterns. To investigate this issue, we construct a measure of 
seasonality of individual industries. In particular, we calculate a Herfindahl-like seasonality index based on the shares of trade in individual months of the year, defined as \(H_s=\sum
_{i=1}^{12} v_i^2\), where \(v_i\) is the average share of month \(i\) in the average annual trade value. A high value of the index means that trade
flows are very unevenly distributed across months. Then we regress \(\gamma\) on this measure. We find that the 95\% confidence interval of the
slope coefficient is [-0.69,1.21] and the corresponding p-value is 0.58. For robustness, we change the cutoff to 5 firms, getting { }the confidence
interval [-0.91,0.30] and the p-value of 0.32. In both cases we do not reject the hypothesis that the slope coefficient is zero. The data is plotted
in Figure \ref{FigureInfluenceOfSeasonality}.

 \begin{figure}
\begin{subfigure}{.5\textwidth}
\begin{centering}
\includegraphics[scale=0.69]{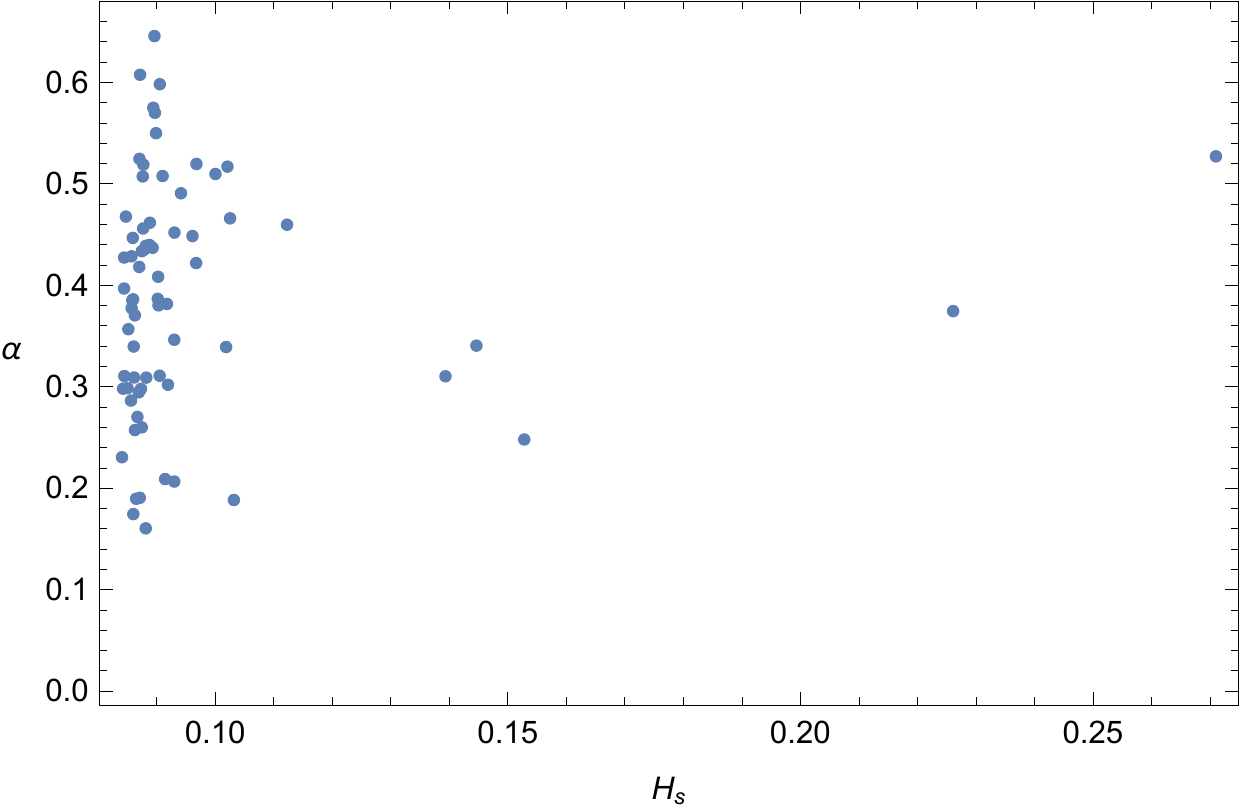}
\caption{}
\end{centering}
\end{subfigure}
\begin{subfigure}{.5\textwidth}
\begin{centering}
\includegraphics[scale=0.69]{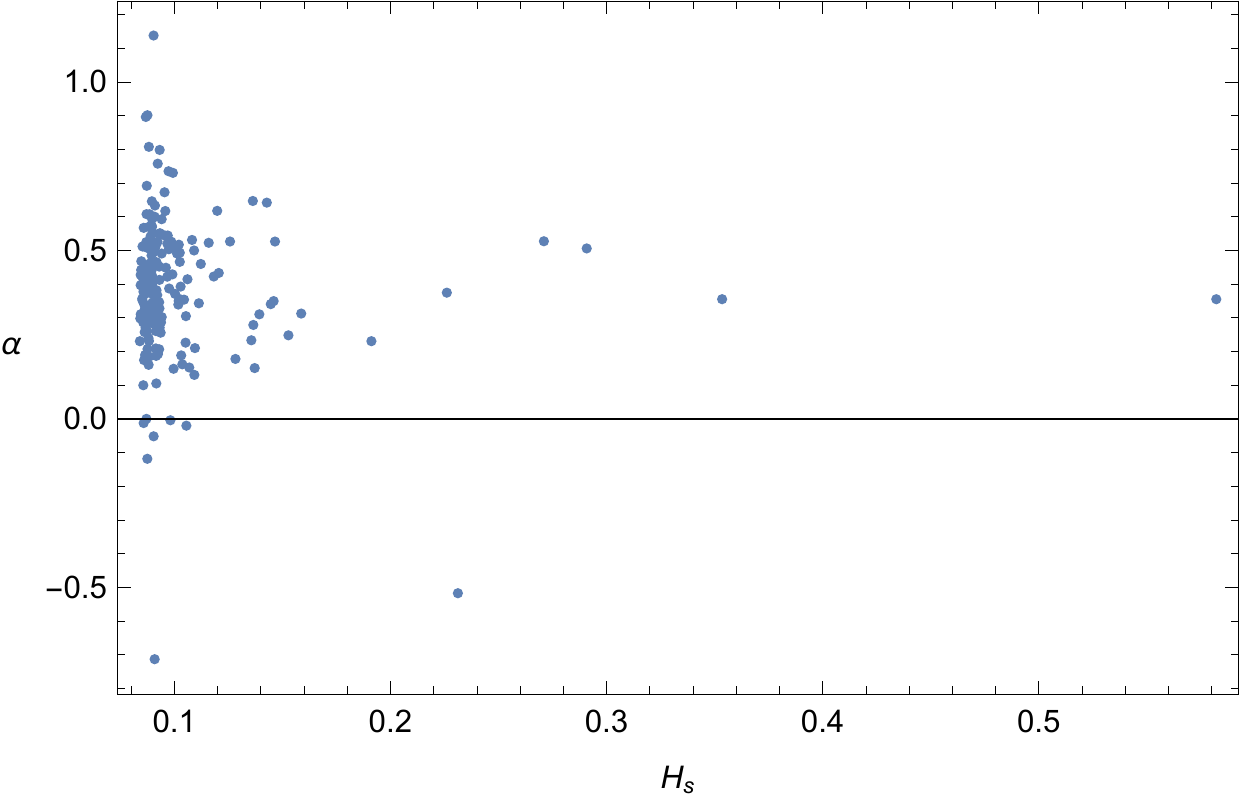}
\caption{}
\end{centering}
\end{subfigure}
\caption{The relationship of the cost exponent \(\alpha\) for specific industries and the industry seasonality index \(H_s\). Figure
(a) corresponds to the sample used for the main estimation, which is based on industries with at least 10 firms satisfying the sample selection criteria.
We do not observe any systematic pattern relating \(\alpha\) and \(H_s\). Figure (b) corresponds to a cutoff set to 5 firms as a robustness check.
Also, in this case the values of \(\alpha\) do not seem to be influenced by \(H_s\).}\label{FigureInfluenceOfSeasonality}
\end{figure}

 \section{World Trade}\label{AppendixWorldTradeFlows}

 \subsection{Details of data construction}\label{AppendixWorldTradeFlowsDetailsOfDataConstruction} 

Here we provide details of the data construction for Section  \ref{SectionWorldTrade}. The economies used to fit our model are, in descending order by 2006 GDP, United States, Japan, Germany, China, United
Kingdom, France, Italy, Canada, Spain, Brazil, Russia, South Korea, Mexico, India, Australia, Netherlands, Turkey, Switzerland, Sweden, Belgium,
Saudi Arabia, Norway, Poland, Austria, Denmark, Greece, South Africa, Iran, Argentina, Ireland, Nigeria, United Arab Emirates, Thailand, Finland,
Portugal, Hong Kong, Venezuela, Malaysia, Colombia, Czech Republic, Chile, Israel, Singapore, Pakistan, Romania, Algeria, Hungary, New Zealand, Kuwait,
Peru, Kazakhstan, Bangladesh, Morocco, Vietnam, Qatar, Slovakia, Croatia, Ecuador, Luxembourg, Slovenia, Dominican Republic, Oman, Belarus, Tunisia,
Bulgaria, Syria, Sri Lanka, Serbia/Serbia and Montenegro, Lithuania, Guatemala, Kenya, Costa Rica, Lebanon, Latvia, Azerbaijan, Cyprus, Ghana, Uruguay,
Yemen, Tanzania, El Salvador, Bahrain, Trinidad and Tobago, Panama, Cameroon, Ivory Coast, Iceland, Estonia, Ethiopia, Jordan, Macau, Zambia, Bosnia
and Herzegovina, Bolivia, Jamaica, Uganda, Honduras, Paraguay, Gabon, and Senegal. These countries were selected based on data availability. We computed
the tradable share (percentage) of GDP by selecting tradable sectors from United Nations gross value added database. We fit the GDP in the model
to the tradable portion of GDP, computed as GDP reported by IMF World Economic Outlook database multiplied by the tradable share of GDP.\footnote{In the model, all imports are consumed domestically, which implies that exports cannot be larger than tradable GDP. However,
such situation might arise for small, highly open economies. To avoid this discrepancy, when the calculated portion of tradable GDP that goes to
domestic consumption is smaller than five percent of the tradable GDP in the data, we increment it so that it reaches that level. This is done by
correspondingly increasing both the (adjusted) tradable GDP and the (adjusted) consumption in the economy. This criterion was satisfied for just
one economy, Hong Kong. Of course, a more realistic way of modeling this situation is to include multi-stage production and/or multi-stage transportation
in the model. This will require some additional research work, but it is a clear direction to pursue.} {
}This means that, for example, education revenue and education expenditures are not counted towards the model{'}s GDP and expenditures, which is
appropriate for a model designed to capture manufacturing and similar industries. Multi-sector extensions including services are, of course, possible.
Also, note that we exclude re-imports and re-exports from the trade flows data.

 \subsection{Related literature}\label{AppendixWorldTradeFlowsRelatedLiterature} 

Here we briefly discuss connections of the results of Subsection  \ref{TheGravityEquationOfTradeAndTheDependenceOfTradeCostsOnDistance} to related issues in the literature, as mentioned in Footnote  \ref{FootnotePointingToAppendixWorldTradeFlowsRelatedLiterature}.  \citet*{helpman2008estimating}
{ }studied the role of the extensive margin of trade for the estimation of the distance-dependence of trade costs based on world trade flows. The
authors found the distance effect to be 27 to 30 percent smaller than in benchmark estimates based on the gravity equation of trade without extensive
margin effects. Although this is an important correction, it is not enough to resolve the trade cost puzzle. We get much stronger effects because
of the increasing marginal cost of production. Moreover, unlike that paper we do not need unrealistically high export market entry costs that would
be inconsistent with the everyday experience that even sole entrepreneurs with very limited capital (for example, 25,000 USD) are able to start an
import/export business, a fact that is explained in many resources, such as  \citet{importexportbusiness} . 

Separately,  \citet{arkolakis2010market} { }builds an
elegant model of international trade where fixed costs of exporting are indeed negligible (and marginal costs of production are constant). Even though
the demand is CES, some firms choose not to export to a particular destination because before serving a customer, they need to pay a sizeable per-customer
advertising cost, which can make serving that customer unprofitable. An argument against this mechanism is that it would not work if targeted advertising
was possible. Empirical evidence in the industrial organization literature shows that the main portion of observed aggregate demand elasticity comes
from heterogeneity in the consumers{'} valuation of products, not from elasticity of demand by a given individual; an individual{'}s demand is quite
inelastic in the data. If firms could reach high-valuation customers and advertize directly to them, they would export to that destination. Especially
in recent years targeted advertising via the Internet is quite easy and widespread, so it is hard to justify the modeling assumption that it is impossible.
For this reason, it is better to think of the insightful paper  \citet{arkolakis2010market} { }in a more abstract way: as an investigation of situations where effective demand departs from CES. In principle, we could
remove economies of scale in shipping from our model and instead modify the demand. In this case, again, we could combine this with our assumption
of increasing marginal costs of production, and using our proposed tractable functional forms for demand we could proceed with computations in the
same way. But of course, we already have empirical evidence on the economies of scale in shipping, and we know that logistics costs as a proportion
of world GDP are very large. Note that the influential study of export decisions  \citet*{eaton2011anatomy} { }also uses the  \citet{arkolakis2010market}
{ }mechanism in theoretical modeling.

 \subsection{Firm export patterns}\label{AppendixWorldTradeFlowsFirmExportPatterns} 

Here we mention other possible mechanisms potentially leading to patterns similar to those in Figure \ref{FigureExportDestinationsByTwoIdenticalFirms} of Subsection 
\ref{ChoiceOfExportDestinations}, as referenced in Footnote 
\ref{FootnotePointingToAppendixWorldTradeFlowsFirmExportPatterns}. If the countries significantly
differ and we break the symmetry between the firms (in terms of how their products enter utility functions), it is possible to explain patterns resembling
those in Figure  \ref{FigureExportDestinationsByTwoIdenticalFirms}.
For example, windows imported by Finland are likely to be very different from windows imported by Portugal. If a firm specializes in only one kind
of windows, it is natural for them to export to only one of these destinations. Another possible phenomenon that could lead to similar patterns in
the data would be distribution centers in export destinations. For example, a firm may serve both Spain and Portugal from one distribution center
based in Spain. In that case international trade flow data would not record such sales in Portugal as exports to Portugal, but instead as exports
to Spain and then exports from Spain to Portugal. Yet another possibility is the case of very large firms. If these firms were so large that monopolistic
competition description of the market was inappropriate and we needed to model it as an oligopoly, there could be an alternative explanation for
choosing different export destinations. In this case strategic effects of market entry could potentially play a role. A firm may not choose to serve
Greece because Greece is already served by its rival and the market the is not profitable enough for two firms to enter. The puzzle would still remain
for smaller firms that cannot influence the entire industry. More generally, these three explanations may be valid in some cases but are not powerful
enough to explain the majority of the empirical regularity in the data, especially in the case of smaller firms that directly export goods that are
not geographically specialized. Case studies of individual exporters also make it clear that the export pattern is typically not explained by those
three explanations. A detailed investigation of these issues will be reported separately.

\section{Applications}\label{AppendixBreadthOfApplication}

\subsection{Supply chains with hold-up \citep{antras}\label{AppendixSupplyChainsWithHoldup}}

We consider a generalization of the supply chain model of 
\citet[henceforth AC]{antras}. Instead of the variables introduced
in the original paper, we use a different set of variables that makes
 the mathematics and intuition substantially simpler.\footnote{The relationship
 between our variables introduced in the next paragraph 
 (in a notation compatible with the rest of this paper)
 and the variables in \citet{antras} is as follows.
Let us use the symbol $\tilde{q}$ to refer to a quantity measure denoted
$q$ in AC, which is \emph{distinct} from what we call 
effective quality-adjusted quantity $q$.
In order to recover AC's original model as a special case, we identify
their output $\tilde{q}$ with $q^{1/\alpha}$, where $\alpha\in\left(0,1\right)$
is a constant defined there. For the present discussion we do not
need $q$ to be linearly proportional to the number of units produced.
It is just some measure of the output, which may or may not be quite
abstract. A similar statement applies to the customized intermediate
input. Our measure $q_{s}\left(j\right)$ of a particular input is
related to AC's measure $x$$\left(j\right)$ by $q_{s}\left(j\right)=\theta^{\alpha}(x\left(j\right))^{\alpha}$,
where $\theta$ is a positive productivity parameter defined in their
original paper.}

A firm produces a final good by sequentially 
using a continuum of customized inputs each provided by a different
supplier indexed by $j\in[0,1]$, with small $j$ representing initial stages of production (upstream) 
and large $j$
representing final stages (downstream). If production proceeds smoothly, the {\em effective}
quality-adjusted quantity $q$ of the final good is the integral of
the effective quality-adjusted quantity contributed by intermediate input $j$, which we denote
$q_{s}\left(j\right)$:  $q=\int_{0}^{1}q_{s}(j)\, dj$. This effective quantity
represents both the quantity of the good and its quality level. But we will
refer to it simply as ``quality'', since this will make the discussion sound more natural.
If production
is ``disrupted'' by the failure of some supplier $\overline{j}\in[0,1)$
to cooperate, then only the quality accumulated to that point in the
chain is available, with all further quality-enhancement impossible: 
$q=\int_{0}^{\overline{j}}q_{s}(j)\, dj$. The firm faces an inverse demand function $P(q)$, which 
does not necessarily have to be decreasing because, for example,
consumers may have little willingness-to-pay for an improperly
finished product. If there is no disruption in production, $q=q(1)$.

Following the property rights theory of the firm \citep{grossmanhart,hartmoore,antrasalone},
 input production requires relationship-specific
investments. The marginal revenue from additional quality brought by
supplier $j$, $MR\left(q(j)\right)q_{s}(j)$ is therefore split between
the firm and supplier $j$, where $MR=P+P'q$.\footnote{See AC's Subsection 3.1 for a discussion of why only marginal revenue, and not the full-downstream revenue, is the pie that is bargained over and an alternative micro-foundation of this model.} In particular, the
supplier receives a fraction $1-\beta(j)$ (its bargaining power).

The cost of producing quality $q_s(j)$ is homogeneous across suppliers and equal to $C(q_{s}(j))$, which is assumed 
strictly convex.\footnote{The AC model corresponds to $C(q_{s})=(q_{s})^{1/\alpha}c/\theta$,
where $c$ and $\theta$ are positive constants defined in their paper.
In our notation, the suppliers' cost is convex but their contributions
towards the final output are linear. In the original paper the suppliers'
cost is linear, but their contributions towards the final output have
diminishing effects. These are two alternative interpretations of
the same economic situation from the point of view of two different
systems of notation. As mentioned before, in our interpretation, the
product of a supplier is $q_{s}$, whereas in the original paper the
supplier's product is $x$, related to $q_{s}$ by $q_{s}\left(j\right)=\theta^{\alpha}(x\left(j\right))^{\alpha}$.
}
  Thus the first-order condition of supplier $j$ equates the share
of marginal revenue she bargains for with her marginal cost:
\begin{equation}
MC\left(q_{s}(j)\right)\equiv C'\left(q_{s}(j)\right)=\left[1-\beta\left(j\right)\right]MR\left(q\left(j\right)\right).\label{eq:SupplierFOC}
\end{equation}
The cost to the firm of obtaining a contribution $q_{s}(j)$ from
supplier $j$ is, therefore, the surplus it must leave in order to induce
$q_{s}(j)$ to be produced, $q_{s}MC\left(q_{s}(j)\right)$.%

The firm  chooses $\beta\left(j\right)$ through the nature of the contracting relationship optimally for each supplier
to maximize its profits. Following AC and \citet{antrashelpman1,antrashelpman2}, we mostly focus on the \emph{relaxed}
problem where $\beta\left(j\right)$ may be adjusted freely and continuously.
This provides most of the intuition for what happens when the firm
is constrained to choose between two discrete levels of $\beta$ corresponding to outsourcing (low $\beta$) and insourcing (high $\beta$) and may be more realistic given the complexity of real-world contracting \citep{boundaries}.
   Note that by convexity, $MC'>0$, while each $q_{s}$ makes a linearly
separable contribution to $q$. Thus for any fixed $q$ the firm wants
to achieve, it does so most cheaply by setting all $q_{s}=q$ by Jensen's
Inequality. Thus Equation \ref{eq:SupplierFOC}
becomes, at any optimum $q^{\star}$,
\begin{equation}
\beta^{*}\left(j\right)=1-\frac{MC\left(q^{\star}\right)}{MR\left(jq^{\star}\right)}.\label{betasolution}
\end{equation}
From this we immediately see that $\beta^{\star}$ is co-monotone
with $MR$: in regions where marginal revenue is increasing, $\beta^{\star}$
will be rising and conversely when marginal revenue is decreasing.
The marginal revenue associated with constant elasticity demand is
in a constant ratio to inverse demand.
This implies AC's principal result that when revenue elasticity is
less than unity the firm will tend to outsource upstream and when revenue
elasticity is less than unity the firm will tend to outsource downstream.  However, it seems natural
to think that $P(q)$ would initially rise, as consumers are willing
to pay very little for a product that is nowhere near completion,
and would eventually fall as the product is completed according to
the standard logic of downward-sloping demand. We now solve in an equally-simple form a model allowing this richer logic.

Equation \ref{betasolution} implies that the surplus left to each
supplier is $q_{s}MC(q)$ and thus total cost is $qMC(q)$. The
problem reduces to choosing $q$ to maximize revenue $qP(q)$ less
cost $qMC(q)$, giving first-order condition
\begin{equation}
MR(q)=MC\left(q\right)+q\, MC'\left(q\right).\label{firmfoc}
\end{equation}
This differs from the familiar neoclassical first-order condition
$MR\left(q\right)=MC\left(q\right)$ only by the presence of the (positive)
term $q\, MC'\left(q\right)$. Note that $MC+qMC'$ bears the same
relationship to $MC$ that $MC$ bears to $AC$; this equation therefore
similarly inherits the tractability properties of the standard monopoly
problem.  The reason is that the hold-up makes multi-part tariff pricing impossible, creating a linear-price monopsony structure by forcing the firm to pay suppliers the marginal cost of the last unit of quality for all units produced.

Let us now consider  $P(q)=p_{-t}q^{t}+p_{-u}q^{u}$ and $MC(q)=mc_{-t}q^{t}+mc_{-u}q^{u}$.
This includes AC's specification as the special case when $p_{-t}=0$
and $mc_{-u}=0$ so that each has constant elasticity.\footnote{In particular, in their notation, AC have $t=\frac{1}{\alpha}$, $u=1+\frac{\rho}{\alpha}$,
$mc_{-t}=\nicefrac{c}{\alpha\theta}$ and $p_{-u}=A^{1-\rho}$, where
$\theta$ and $\rho$ is are positive constants defined in AC, not
to be confused with the pass-through rate denoted by $\rho$ or the
conduct parameter denoted by $\theta$ in other parts of this paper.%
} However,  let us focus instead on the case when $t,u,mc_{-u},p_{-t}>0=mc_{-t}>p_{-u}$
and $u>t$ so that the first term of the inverse demand dominates
at small quantities while the second dominates at large quantities. The expression resulting for $\beta^{*}\left(j\right)$ is:

\begin{equation}
\beta^{*}\left(j\right)=1-\frac{1}{(1+u)\left[\left(1-\frac{p_{-u}}{mc_{-u}}\right)j^{t}+\frac{p_{-u}}{mc_{-u}}j^{u}\right]}.\label{eq:ResultingEffectiveBargainingPower}
\end{equation}
Note that because $mc_{-u}>0>p_{-u}$, the first denominator
term is positive and the second denominator term is
negative. This implies that at small $j$ (where $j^{t}$ dominates),
$\beta^{\star}$ increases in $j$, while at large $j$, it decreases
in $j$. In the AC complements case when $p_{-u}=0$, or even if $p_{-u}$
is sufficiently small, this large $j$ behavior is never manifested
and all outsourcing (low $\beta^{\star}$) occurs at early stages.
Also note that only the ratio of coefficients $\frac{p_{-u}}{mc_{-u}}$
matters for the sourcing pattern; $p_{-t}$ is irrelevant, as
the joint level of $p_{-u}$ and $mc_{-u}$. 

\begin{figure}
\begin{centering}
\includegraphics[width=3.7in]{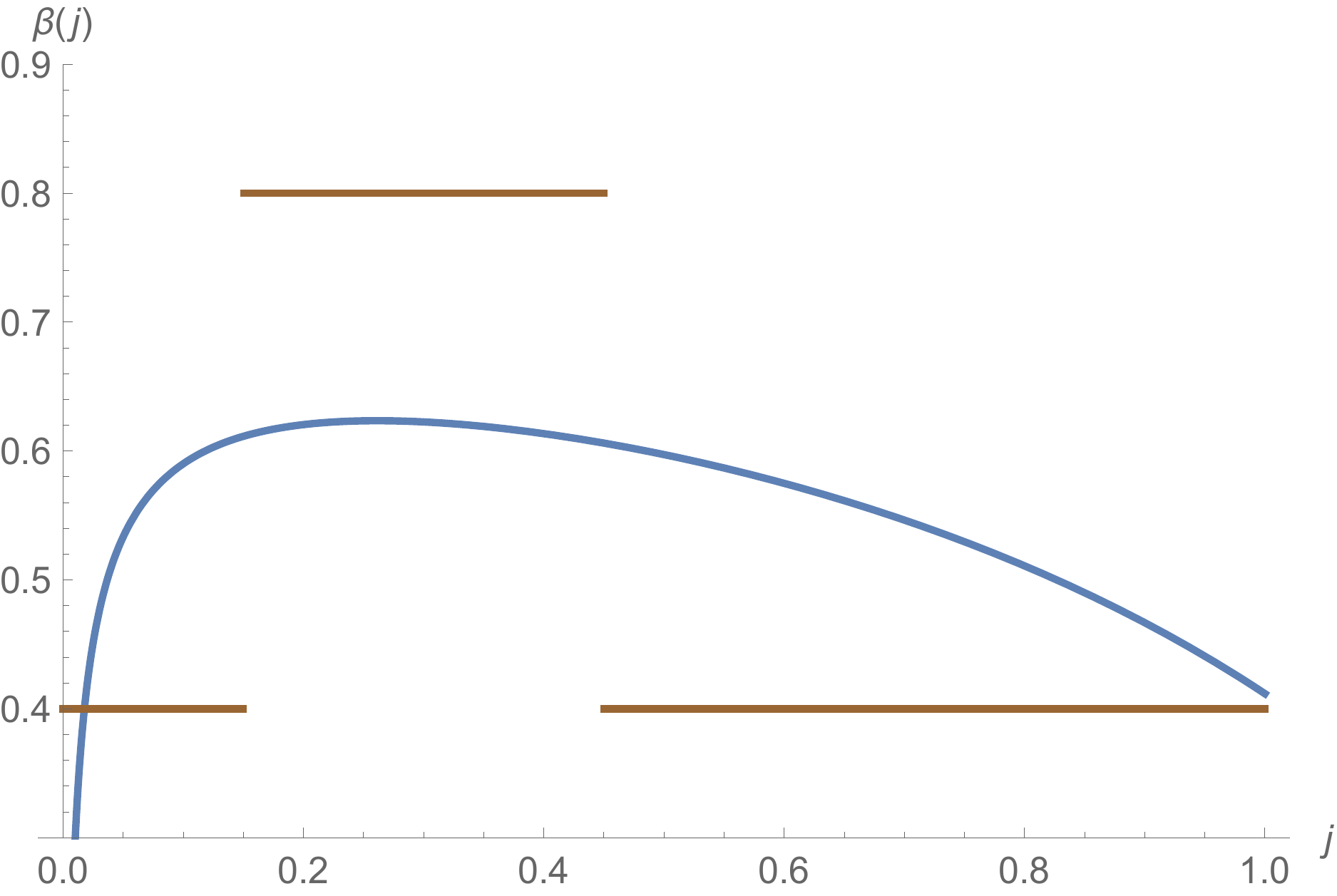}
\par\end{centering}

\caption{Optimal relaxed and restricted $\beta^{\star}$ in the AC model when
$t=0.35,u=0.7,\frac{p_{-u}}{mc_{-u}}=-4$.}

\label{antrasfigure}
\end{figure}

However, for many parameters an inverted U-shape emerges. For example,
Figure \ref{antrasfigure} shows the case when $t=0.35,u=0.7,p_{-t}=1.8,\frac{p_{-u}}{mc_{-u}}=-4$.
The curve corresponds to the shape of the relaxed solution. Depending
on precisely which values of $\beta$ we take insourcing and outsourcing
to correspond to, this can lead to insourcing in the middle of the
production and outsourcing at either end.
In Supplementary Material \ref{AppendixSupplyChainsWithHoldupDetails} we 
study in detail the constrained problem using largely closed-form methods for the 
case when outsourcing gives $\beta_{O}=0.8$ and
insourcing gives $\beta_{I}=0.4$. This is illustrated by the lines
in Figure \ref{antrasfigure}, which show the constrained optimum.
This gives the same qualitative answer as the relaxed problem, as expected.  

\subsection{Imperfectly competitive supply chains\label{sub:Imperfectly-competitive-supply}}

The models that founded the field of industrial organization were
\citet{cournot}'s of symmetric oligopoly and complementary monopoly.
Equilibrium in these models is characterized by 
\[
P+\theta P'q=MC.
\]
Under Cournot competition, $\theta=\nicefrac{1}{n}$, where $n$ is
the number of competing firms and $MC$ is interpreted as the common
marginal cost of all producers. Under Cournot complements (which does
not require symmetry) $\theta=m$, where $m$ is the number of complementary
producers and $MC$ is interpreted as the aggregated marginal cost
of all producers.\footnote{With constant marginal cost, and in some other special cases, the asymmetric Cournot competition model may also be solved if both demand  is specified in an appropriate form.  To maintain the generality of our analysis we do not discuss this solvable, asymmetric special case.} Note that $P+\theta P'q$ is just a linear combination
of $P$ and $P'q$ and thus has the same form as either of these components
in a form-preserving class of functional forms. Thus either problem
yields exactly the same characterization of tractability as the monopoly
problem. 

In the last half-century a variant on \citet{cournot}'s complementary
monopoly problem proposed by \citet{spengler} has been more commonly
used. In this model one firm sells an input to another who in turn
sells to a consumer. The difference from \citeauthor{cournot}'s model
is principally in the timing; namely the ``upstream'' firm is assumed
to set her price prior to the downstream firm. In this case the upstream
firm effectively sets part of the downstream firm's marginal cost.
Her first-order condition is 
\[
P+P'q=MC+\hat P,
\]
where $\hat P$ is the sales price set by the upstream firm. Thus the effective
inverse demand faced by the upstream firm is $\hat{P}(q)\equiv P(q)+P'(q)q-MC(q)$.
The upstream firm then solves a monopoly problem with this inverse
demand. This yields an upstream marginal revenue curve bearing the
same relationship to $\hat{P}$ that $MR$ bears to $P$. Because
the form-preserving feature may be applied an arbitrary number of
times, however, this transformation does not change our characterization
of tractability. Thus a form-preserving class has the same tractability
characterization in \citeauthor{spengler}'s model as in the standard
\citeauthor{cournot} model. 

We can go further and allow for many layers of production and arbitrary
imperfect competition (or complements) at each later as in \citet{salinger}.
The same characterization of tractability continues to apply. In Supplementary Material
\ref{sequential} we provide an explicit expression for the coefficients in
the polynomial equation for any tractable form. \citet{adachi2014cost, adachi2014double}
argue that flexible functional forms are particularly important in
such models because many important and policy-relevant properties
are imposed by standard tractable forms. For example, the markup
of the upstream firm in \citeauthor{spengler}'s model is identical
to that of the two firms if they merged under the BP demand class,
but the upstream firm will typically charge a lower markup than an
integrated firm under reasonable conditions 
(bell-shaped-distribution-generated demand and
U-shaped cost curves).

\newpage

$\ $

\bigskip

\begin{center}

 \LARGE \bf Supplementary Material
 
 \end{center}

\bigskip
\bigskip

\section{Laplace Inverse Demand Functions}

\label{sub:LaplaceInverseDemandFunctions}The following table contains
Laplace inverse demand functions corresponding to inverse demand functions
used in the literature. Although for most Laplace inverse demand functions
we include only a few terms, closed-form expressions for all terms
exist. Here $p_{a}$ refers to a mass-point of magnitude $p_{a}$ at location
$a$. In the alternative notation on the lower lines, $\delta(x-a)$
refers to a mass-point of magnitude 1 at location $a$, i.e. to a
Dirac delta function centered at $a$. We use standard notation for special functions: $\Gamma$ stands for the gamma function and $W$ for the Lambert W function.

\[
\begin{array}{l}
\begin{array}{ccc}
\text{Constant elasticity / Pareto:} & q(P)=\left(\frac{P}{\beta}\right)^{-\epsilon} & P(q)=\beta q^{-1/\epsilon}\end{array}\\
\begin{array}{ccc}
\text{} & p(t): & \begin{array}{c}
p_{\frac{1}{\epsilon}}=\beta\end{array}\end{array}\\
\begin{array}{ccc}
\text{} & p(t): & \beta\delta\left(t-\frac{1}{\epsilon}\right)\end{array}\\
\begin{array}{ccc}
\text{Constant pass-through / BP:} & q(P)=\left(\frac{P-\mu}{\beta}\right)^{-\epsilon} & P(q)=\mu+\beta q^{-1/\epsilon}\end{array}\\
\begin{array}{ccc}
\text{} & p(t): & \begin{array}{ccc}
p_{0}=\mu, & p_{\frac{1}{\epsilon}}=\beta\end{array}\end{array}\\
\begin{array}{ccc}
\text{} & p(t): & \beta\delta\left(t-\frac{1}{\epsilon}\right)+\mu\delta(t)\end{array}\\
\begin{array}{ccc}
\text{Gumbel distribution:} & q(P)=\exp\left(-\exp\left(\frac{P-\alpha}{\beta}\right)\right) & P(q)=\alpha+\beta\log(-\log(q))\end{array}\\
\begin{array}{ccc}
\text{} & p(t): & \begin{array}{ccccc}
p_{0}=\mu, & p(t)=-\frac{\beta}{t} & \text{for} & t<0\end{array}\end{array}\\
\begin{array}{ccc}
\text{} & p(t): & \alpha\delta(t)-\frac{\beta1_{t<0}}{t}\end{array}\\
\begin{array}{ccc}
\text{Weibull distribution:} & q(P)=e^{-\left(\frac{P}{\beta}\right)^{\alpha}} & P(q)=\beta(-\log(q))^{\frac{1}{\alpha}}\end{array}\\
\begin{array}{ccc}
\text{} & p(t): & \begin{array}{ccc}
\frac{(-1)^{\frac{1}{\alpha}}\beta t^{-\frac{1}{\alpha}-1}}{\Gamma\left(-\frac{1}{\alpha}\right)} & \text{for} & t<0\end{array}\end{array}\\
\begin{array}{ccc}
\text{} & p(t): & \frac{(-1)^{\frac{1}{\alpha}}\beta1_{t<0}t^{-\frac{1}{\alpha}-1}}{\Gamma\left(-\frac{1}{\alpha}\right)}\end{array}\\
\begin{array}{ccc}
\text{Fr{\' e}chet distribution:} & q(P)=1-e^{-\left(\frac{P-\mu}{\beta}\right)^{-\alpha}} & P(q)=\mu+\beta(-\log(1-q))^{-1/\alpha}\end{array}\\
\begin{array}{ccc}
\text{} & p(t): & \begin{array}{ccccccccc}
p_{0}=\mu, & p_{\frac{1}{\alpha}}=\beta, & p_{\frac{1}{\alpha}-1}=-\frac{\beta}{2\alpha}, & p_{\frac{1}{\alpha}-2}=\frac{\beta}{8\alpha^{2}}-\frac{5\beta}{24\alpha}, & \text{...}\end{array}\end{array}\\
\begin{array}{ccc}
\text{} & p(t): & \left(\frac{\beta}{8\alpha^{2}}-\frac{5\beta}{24\alpha}\right)\delta\left(t-\frac{1}{\alpha}+2\right)+\beta\delta\left(t-\frac{1}{\alpha}\right)-\frac{\beta\delta\left(t-\frac{1}{\alpha}+1\right)}{2\alpha}+\mu\delta(t)+\dots\end{array}\\
\begin{array}{ccc}
\text{Logistic distribution:} & q(P)=\left(\exp\left(\frac{P-\mu}{\beta}\right)+1\right)^{-1} & P(q)=\mu-\beta\log\left(\frac{1}{1-q}-1\right)\end{array}\\
\begin{array}{ccc}
\text{} & p(t): & \begin{array}{ccccccccccccc}
p_{0}=\mu, & p_{0}^{\text{(1)}}=-\beta, & p_{-1}=-\beta, & p_{-2}=-\frac{\beta}{2}, & p_{-3}=-\frac{\beta}{3}, & p_{-4}=-\frac{\beta}{4}, & \text{...}\end{array}\end{array}\\
\begin{array}{ccc}
\text{} & p(t): & -\beta\sum_{j=1}^{\infty}\frac{\delta(j+t)}{j}+\mu\delta(t)-\beta\delta'(t)\end{array}\\
\begin{array}{ccc}
\text{Log-logistic distribution:} & q(P)=\left(\left(\frac{P}{\sigma}\right)^{\gamma}+1\right)^{-1} & P(q)=\sigma\left(\frac{q}{1-q}\right)^{-1/\gamma}\end{array}\\
\begin{array}{ccc}
\text{} & p(t): & \begin{array}{ccccccccc}
p_{\frac{1}{\gamma}}=\sigma, & p_{\frac{1}{\gamma}-1}=-\frac{\sigma}{\gamma}, & p_{\frac{1}{\gamma}-2}=\frac{\sigma}{2\gamma^{2}}-\frac{\sigma}{2\gamma}, & p_{\frac{1}{\gamma}-3}=-\frac{\sigma}{6\gamma^{3}}+\frac{\sigma}{2\gamma^{2}}-\frac{\sigma}{3\gamma}, & \text{...}\end{array}\end{array}\\
\begin{array}{ccc}
\text{} & p(t): & \left(\frac{\sigma}{2\gamma^{2}}-\frac{\sigma}{2\gamma}\right)\delta\left(t-\frac{1}{\gamma}+2\right)+\sigma\delta\left(t-\frac{1}{\gamma}\right)-\frac{\sigma\delta\left(t-\frac{1}{\gamma}+1\right)}{\gamma}+\dots\end{array}\\
\begin{array}{ccc}
\text{Laplace distribution (}q<\frac{1}{2}\text{):} & q(P)=\frac{1}{2}\exp\left(\frac{\mu-P}{\beta}\right) & P(q)=\mu-\beta\log(2q)\end{array}\\
\begin{array}{ccc}
\text{} & p(t): & \begin{array}{ccc}
p_{0}=\mu-\beta\log(2), & p_{0}^{\text{(1)}}=-\beta\end{array}\end{array}\\
\begin{array}{ccc}
\text{} & p(t): & \delta(t)(\mu-\beta\log(2))-\beta\delta'(t)\end{array}\\
\begin{array}{ccc}
\text{Laplace distribution (}q>\frac{1}{2}\text{):} & q(P)=1-\frac{1}{2}\exp\left(\frac{P-\mu}{\beta}\right) & P(q)=\mu+\beta\log(2(1-q))\end{array}\\
\begin{array}{ccc}
\text{} & p(t): & \begin{array}{ccccccccccc}
p_{0}=\beta\log(2)+\mu, & p_{-1}=-\beta, & p_{-2}=-\frac{\beta}{2}, & p_{-3}=-\frac{\beta}{3}, & p_{-4}=-\frac{\beta}{4}, & \text{...}\end{array}\end{array}\\
\begin{array}{ccc}
\text{} & p(t): & \delta(t)(\beta\log(2)+\mu)-\beta\sum_{j=1}^{\infty}\frac{\delta(j+t)}{j}\end{array}\\
\begin{array}{ccc}
\text{Normal distribution:} & q(P)=\text{erfc}\left(\frac{P-\mu}{\sqrt{2}\sigma}\right) & P(q)=\mu-\sqrt{2}\sigma\text{erfc}^{-1}(2-q)\end{array}\\
\begin{array}{ccc}
\text{} & p(t): & \begin{array}{ccccccc}
p_{0}^{\text{(1)}}=-\sqrt{\frac{\pi}{2}}\sigma, & p_{0}^{\text{(2)}}=-\frac{1}{2}\sqrt{\frac{\pi}{2}}\sigma, & p_{0}^{\text{(3)}}=\frac{1}{24}\left(-\sqrt{2}\pi^{3/2}-2\sqrt{2\pi}\right)\sigma, & \text{...}\end{array}\end{array}\\
\begin{array}{ccc}
\text{} & p(t): & -\sqrt{\frac{\pi}{2}}\sigma\delta'(t)-\frac{1}{2}\sqrt{\frac{\pi}{2}}\sigma\delta''(t)+\frac{1}{24}\left(-\sqrt{2}\pi^{3/2}-2\sqrt{2\pi}\right)\sigma\delta^{(3)}(t)+\dots\end{array}\\
\begin{array}{ccc}
\text{Lognormal distribution:} & q(P)=\text{erfc}\left(\frac{\log(P)-\mu}{\sqrt{2}\sigma}\right) & P(q)=\exp\left(\mu-\sqrt{2}\sigma\text{erfc}^{-1}(2-q)\right)\end{array}\\
\begin{array}{ccc}
\text{} & p(t): & \begin{array}{ccccc}
p_{0}^{\text{(1)}}=\sqrt{\frac{\pi}{2}}\left(-e^{\mu}\right)\sigma\delta'(t), & p_{0}^{\text{(2)}}=\frac{1}{4}\pi e^{\mu}\sigma^{2}-\frac{1}{2}\sqrt{\frac{\pi}{2}}e^{\mu}\sigma, & \text{...}\end{array}\end{array}\\
\begin{array}{ccc}
\text{} & p(t): & \left(\frac{1}{4}\pi e^{\mu}\sigma^{2}-\frac{1}{2}\sqrt{\frac{\pi}{2}}e^{\mu}\sigma\right)\delta''(t)-\sqrt{\frac{\pi}{2}}e^{\mu}\sigma\delta'(t)+\dots\end{array}
\end{array}
\]
\[
\begin{array}{l}
\begin{array}{ccc}
\text{Almost Ideal Demand System:} & q(P)=\frac{\alpha+\beta\log(P)}{P} & P(q)=-\frac{\beta W\left(-\frac{qe^{-\frac{\alpha}{\beta}}}{\beta}\right)}{q}\end{array}\\
\begin{array}{ccc}
\text{} & p(t): & \begin{array}{ccccccccccc}
p_{0}=e^{-\frac{\alpha}{\beta}}, & p_{-1}=\frac{e^{-\frac{2\alpha}{\beta}}}{\beta}, & p_{-2}=\frac{3e^{-\frac{3\alpha}{\beta}}}{2\beta^{2}}, & p_{-3}=\frac{8e^{-\frac{4\alpha}{\beta}}}{3\beta^{3}}, & p_{-4}=\frac{125e^{-\frac{5\alpha}{\beta}}}{24\beta^{4}}, & \text{...}\end{array}\end{array}\\
\begin{array}{ccc}
\text{} & p(t): & \frac{125e^{-\frac{5\alpha}{\beta}}\delta(t+4)}{24\beta^{4}}+\frac{8e^{-\frac{4\alpha}{\beta}}\delta(t+3)}{3\beta^{3}}+\frac{3e^{-\frac{3\alpha}{\beta}}\delta(t+2)}{2\beta^{2}}+e^{-\frac{\alpha}{\beta}}\delta(t)+\frac{e^{-\frac{2\alpha}{\beta}}\delta(t+1)}{\beta}+\dots\end{array}\\
\begin{array}{ccc}
\text{Constant superelasticity:} & q(P)=\left(\epsilon\log\left(\frac{\theta-1}{\theta P}\right)+1\right)^{\frac{\theta}{\epsilon}} & P(q)=\frac{(\theta-1)e^{\frac{1}{\epsilon}-\frac{q^{\epsilon/\theta}}{\epsilon}}}{\theta}\end{array}\\
\begin{array}{ccc}
\text{} & p(t): & \begin{array}{ccccccccc}
p_{0}=e^{\frac{1}{\epsilon}}-\frac{e^{\frac{1}{\epsilon}}}{\theta}, & p_{-\frac{\epsilon}{\theta}}=\frac{e^{\frac{1}{\epsilon}}}{\theta\epsilon}-\frac{e^{\frac{1}{\epsilon}}}{\epsilon}, & p_{-\frac{2\epsilon}{\theta}}=\frac{e^{\frac{1}{\epsilon}}}{2\epsilon^{2}}-\frac{e^{\frac{1}{\epsilon}}}{2\theta\epsilon^{2}}, & p_{-\frac{3\epsilon}{\theta}}=\frac{e^{\frac{1}{\epsilon}}}{6\theta\epsilon^{3}}-\frac{e^{\frac{1}{\epsilon}}}{6\epsilon^{3}}, & \text{...}\end{array}\end{array}\\
\begin{array}{ccc}
\text{} & p(t): & \left(\frac{e^{\frac{1}{\epsilon}}}{2\epsilon^{2}}-\frac{e^{\frac{1}{\epsilon}}}{2\theta\epsilon^{2}}\right)\delta\left(t+\frac{2\epsilon}{\theta}\right)+\delta(t)\left(e^{\frac{1}{\epsilon}}-\frac{e^{\frac{1}{\epsilon}}}{\theta}\right)+\left(\frac{e^{\frac{1}{\epsilon}}}{\theta\epsilon}-\frac{e^{\frac{1}{\epsilon}}}{\epsilon}\right)\delta\left(t+\frac{\epsilon}{\theta}\right)+\dots\end{array}\\
\begin{array}{ccc}
\text{Cauchy distribution:} & q(P)=\frac{\tan^{-1}\left(\frac{a-P}{b}\right)}{\pi}+\frac{1}{2} & P(q)=a+b\tan\left(\pi\left(\frac{1}{2}-q\right)\right)\end{array}\\
\begin{array}{ccc}
\text{} & p(t): & \begin{array}{ccccccccccccc}
p_{1}=\frac{b}{\pi}, & p_{0}=a, & p_{-1}=-\frac{\pi b}{3}, & p_{-3}=-\frac{\pi^{3}b}{45}, & p_{-5}=-\frac{2\pi^{5}b}{945}, & p_{-7}=-\frac{\pi^{7}b}{4725}, & \text{...}\end{array}\end{array}\\
\begin{array}{ccc}
\text{} & p(t): & a\delta(t)+\frac{b\delta(t-1)}{\pi}-\frac{1}{3}\pi b\delta(t+1)-\frac{1}{45}\pi^{3}b\delta(t+3)-\frac{2}{945}\pi^{5}b\delta(t+5)-\frac{\pi^{7}b\delta(t+7)}{4725}+\dots\end{array}\\
\begin{array}{ccc}
\text{Singh Maddala distribution:} & q(P)=\left(\left(\frac{P}{b}\right)^{a}+1\right)^{-\tilde{q}} & P(q)=b\left(q^{-\frac{1}{\tilde{q}}}-1\right)^{\frac{1}{a}}\end{array}\\
\begin{array}{ccc}
\text{} & p(t): & \begin{array}{ccccccccc}
p_{\frac{1}{a\tilde{q}}}=b, & p_{-\frac{a-1}{a\tilde{q}}}=-\frac{b}{a}, & p_{-\frac{2a-1}{a\tilde{q}}}=\frac{b}{2a^{2}}-\frac{b}{2a}, & p_{-\frac{3a-1}{a\tilde{q}}}=-\frac{b}{6a^{3}}+\frac{b}{2a^{2}}-\frac{b}{3a}, & \text{...}\end{array}\end{array}\\
\begin{array}{ccc}
\text{} & p(t): & \left(\frac{b}{2a^{2}}-\frac{b}{2a}\right)\delta\left(\frac{2a-1}{a\tilde{q}}+t\right)+b\delta\left(t-\frac{1}{a\tilde{q}}\right)-\frac{b\delta\left(\frac{a-1}{a\tilde{q}}+t\right)}{a}\end{array}+\dots\\
\begin{array}{ccc}
\text{Tukey lambda distribution:} & q(P)=P^{(-1)}(P) & P(q)=\frac{(1-q)^{\lambda}-q^{\lambda}}{\lambda}\end{array}\\
\begin{array}{ccc}
\text{} & p(t): & \begin{array}{ccccccccccc}
p_{-\lambda}=-\frac{1}{\lambda}, & p_{0}=\frac{1}{\lambda}, & p_{-1}=-1, & p_{-2}=\frac{\lambda}{2}-\frac{1}{2}, & p_{-3}=-\frac{\lambda^{2}}{6}+\frac{\lambda}{2}-\frac{1}{3}, & \text{...}\end{array}\end{array}\\
\begin{array}{ccc}
\text{} & p(t): & \left(-\frac{\lambda^{2}}{6}+\frac{\lambda}{2}-\frac{1}{3}\right)\delta(t+3)+\frac{\delta(t)}{\lambda}+\left(\frac{\lambda}{2}-\frac{1}{2}\right)\delta(t+2)-\frac{\delta(t+\lambda)}{\lambda}-\delta(t+1)+\dots\end{array}\\
\begin{array}{ccc}
\text{Wakeby distribution:} & q(P)=P^{(-1)}(P) & P(q)=\mu-\frac{\gamma\left(1-q^{-\delta}\right)}{\delta}+\frac{\alpha\left(1-q^{\beta}\right)}{\beta}\end{array}\\
\begin{array}{ccc}
\text{} & p(t): & \begin{array}{ccccc}
p_{0}=\frac{\alpha}{\beta}-\frac{\gamma}{\delta}+\mu, & p_{-\beta}=-\frac{\alpha}{\beta}, & p_{\delta}=\frac{\gamma}{\delta}\end{array}\end{array}\\
\begin{array}{ccc}
\text{} & p(t): & \delta(t)\left(\frac{\alpha}{\beta}-\frac{\gamma}{\delta}+\mu\right)-\frac{\alpha\delta(t+\beta)}{\beta}+\frac{\gamma\delta(t-\delta)}{\delta}+\dots\end{array}
\end{array}\label{TableOfInverseLaplaceTransforms}
\]

Here we provide clarification of some of the expressions in the table above. In many cases the terms in the Laplace inverse demand were obtained
utilizing series expansions, as discussed in Subsection  \ref{UsingTaylorSeriesExpansion} of this supplementary material. More explicitly, we utilized the following series representations of the Laplace inverse demand.

\[\begin{array}{ll}
 \text{Fr{\' e}chet distribution:} & P(q)=\mu +\beta  q^{-1/\alpha } \sum _{k=0}^{\infty } \binom{-\frac{1}{\alpha }}{k} \left(\sum _{j=2}^{\infty
} j^{-1} q^{j-1}\right){}^k \\
 \text{Log-logistic distribution:} & P(q)=\sigma  q^{1/\gamma } \sum _{n=0}^{\infty } (-1)^n \binom{\frac{1}{\gamma }}{n} q^n \\
 \text{Almost Ideal Demand System:} & P(q)=\beta  \sum _{n=0}^{\infty } \frac{(-1-n)^n}{(1+n)!} \left(-\frac{e^{-\frac{\alpha }{\beta }}}{\beta }\right)^{1+n}
q^n \\
 \text{Constant superelasticity:} & P(q)=\sum _{n=0}^{\infty } \frac{e^{1/\epsilon } (-\epsilon )^{-n} (\theta -1)}{\theta  n!} q^{\frac{n \epsilon
}{\theta }} \\
 \text{Cauchy distribution:} & P(q)=a+\frac{b}{\pi  q}+b \sum _{k=1}^{\infty } \frac{(-1)^k 2^{2 k} \pi ^{-1+2 k} B_{2 k}}{(2 k)!} q^{-1+2 k} \\
 \text{Singh-Maddala distribution:   } & P(q)=b q^{-\frac{1}{a \tilde{q}}} \sum _{n=0}^{\infty } (-1)^n \binom{\frac{1}{a}}{n} q^{\frac{n}{\tilde{q}}}

\end{array}\]

\[\text{}\]
 We used the standard notation for generalized binomial coefficients and denoted Bernoulli numbers by \(B_{2 k}\). Constant
superelasticity refers to the inverse demand function introduced by  \citet{klenowwillis}. The Gumbel distribution is also known as the type I extreme value distribution, the Fr{\' e}chet distribution as type II
extreme value, and the Weibull distribution as type III extreme value. 

In the case of the Gumbel distribution, the Laplace inverse demand is not an ordinary function, but a distribution (generalized function) in the
sense of the distribution theory by Laurent Schwartz. For this reason, we use regularization to give a precise meaning to integrals involving the
Laplace inverse demand function that was schematically written in the previous table. We provide three regularization prescriptions and illustrate
them for the case of Laplace inverse demand itself. { }The first prescription is

\[P(q)=\lim_{a\to 0^+} \, \left(\int _{-\infty }^{\infty }(\alpha -\beta  (\gamma +\log (a))) \delta (t)dt+\int_{-\infty }^a \frac{\beta  q^{-t}}{t}
\, dt\right).\]
 Here we moved the upper bound in the second integral beyond zero and added the regularization term proportional to \(\log
(a)\). The second integral is to be interpreted in the sense of principal value.\footnote{In
Mathematica, principal value integrals may be computed by choosing the option \textit{ PrincipalValue }\textit{ \(\to\)}\textit{  True} for the \textit{
Integrate} function.} { }It is straightforward to verify that this expression leads to the correct expression
for \(P(q)\). First, we evaluate the integrals to get 

\[P(q)=\lim_{a\to 0^+} \, (\alpha -\gamma  \beta +\beta  \text{Ei}(-a \log (q))-\beta  \log (a)).\]
 Here $\gamma $ is the Euler gamma, \(\gamma \approx 0.577216\), and Ei stands for the special function called exponential
integral. Evaluating the limit then leads to the correct expression

\[P(q)=\alpha +\beta  \log (-\log (q)).\]

The second prescription is analogous and shifts the upper bound of the second integral to negative numbers:

\[P(q)=\lim_{a\to 0^+} \, \left(\int _{-\infty }^{\infty }(\alpha -\beta  (\gamma +\log (a))) \delta (t)dt+\int_{-\infty }^{-a} \frac{\beta  q^{-t}}{t}
\, dt\right).\]
 Evaluating the integral gives

\[P(q)=\lim_{a\to 0^+} \, (\alpha -\gamma  \beta -\beta  \Gamma (0,-a \log (q))-\beta  \log (a)),\]
 and taking the limit leads again to the correct expression. Here $\Gamma $ is the incomplete gamma function. 

The third prescription is computationally most convenient because it does not involve taking a limit. The regularizing term is expressed in the form
of an integral

\[P(q)=\int _{-\infty }^{\infty }(\alpha -\beta  \gamma ) \delta (t)dt+\beta  \int_{-\infty }^0 \frac{q^{-t}-1_{t>-1}}{t} \, dt.\]
 Here \(1_{t>-1}\) is an indicator function. The integral may then be computed directly, again leading to the correct
expression. The same methods may be used when interpreting other integrals involving generalized functions that behave as \(1/t\) close to \(t=0\).

For the normal and lognormal distributions the Laplace inverse demand functions are again not ordinary functions. They were obtained using Taylor
series expansions of the error function. Expressions involving these Laplace inverse demand functions may need to be summed using the Euler summation
method to ensure proper convergence.

The expressions above may be used to straightforwardly derive Theorems  \ref{TheoremCompleteMonotonicityOfDemandSpecification} and  \ref{TheoremAbsenceOfCompleteMonotonicityOfDemandSpecification} of the main paper. An alternative, but sometimes less straightforward way is to utilize Theorems 1--6 of 
\citet{miller2001completely}. If we are interested in the monotonicity properties of the pass-through
rate, we can use the corollary in Section \ref{SectionArbitraryDemandAndCostFunctions}. However, we also identified more direct ways to prove monotonicity properties of the pass-through rate for certain demand
functions. These proofs are included in Section  \ref{forms}
of this supplementary material.

\section{Evaluation of Inverse Laplace Transform}

In the main text of the paper we used the term Laplace-log transform to emphasize that this is Laplace transform in terms of the logarithm of an economic quantity. In this section, which focuses on mathematical issues, we use the term Laplace transform, keeping the economic interpretation implicit.

\subsection{Numerical Evaluation of Inverse Laplace Transform}

There exist many methods for numerical evaluation of inverse Laplace
transform, now usually integrated into mathematical and statistical
software. For a summary and important references, see, e.g., Chapter
6 of \citet{egonmwan2012numerical}. Note that just like other types
of non-parametric methods, numerical inversion of Laplace transform
requires some regularization, such as the \citet{tikhonov1963solution}
regularization. This is because Laplace transform inversion is a so-called
\emph{ill-posed} problem, which means that there exist large changes
in the inverse Laplace transform that lead to only small changes in
the original function in the domain of interest. For a classic discussion,
see \citet{bellman1966numerical}.

\subsection{Analytic Evaluation of Inverse Laplace Transform}

Mathematical software allows for symbolic inversion of Laplace transform.\footnote{ The corresponding functions are \emph{InverseLaplaceTransform} in
Mathematica, \emph{ilaplace} in MATLAB, or \emph{inverse\_laplace\_transform}
in Python (SymPy).} However, it is often more convenient to evaluate the inverse Laplace
transform using more direct methods.

\subsubsection{Using Taylor series expansion}
\label{UsingTaylorSeriesExpansion}

We would like to emphasize that finding analytic expressions for Laplace
inverse demand is often much simpler than it seems since it many
cases it only requires finding a Taylor series expansion of a definite
function. Consider, for example, the case of log-logistic distribution
of valuations included in the Supplementary Material \ref{sub:LaplaceInverseDemandFunctions},
which corresponds to inverse demand $P\left(q\right)=\sigma(\frac{q}{1-q})^{-1/\gamma}$.
This may be written as $P\left(q\right)=\sigma q^{-1/\gamma}\times\left(1-q\right)^{1/\gamma}$,
i.e. a product of a power function and a function that has a well-defined
Taylor expansion at $q=0$: 
\[
\left(1-q\right)^{1/\gamma}=1-\frac{1}{\gamma}\ q+\frac{1-\gamma}{2\gamma^{2}}\ q^{2}+...=\sum_{n=0}^{\infty}(-1)^{n}\binom{\frac{1}{\gamma}}{n}q^{n},
\]
where the $n$th term contains a generalized binomial coefficient.
This immediately translates into
\[
P\left(q\right)=\sigma q^{-\frac{1}{\gamma}}-\frac{\sigma}{\gamma}\ q^{1-\frac{1}{\gamma}}+\frac{\left(1-\gamma\right)\sigma}{2\gamma^{2}}\ q^{2-\frac{1}{\gamma}}+...=\sum_{n=0}^{\infty}(-1)^{n}\binom{\frac{1}{\gamma}}{n}q^{n-\frac{1}{\gamma}},
\]
and from here we can read off the masses at points $t=\frac{1}{\gamma},\frac{1}{\gamma}-1,\frac{1}{\gamma}-2,$...
that together constitute the Laplace inverse demand included in Supplementary Material
\ref{sub:LaplaceInverseDemandFunctions}.

\subsubsection{Using the traditional inverse Laplace transform formula}

The readers may be familiar with the traditional inverse Laplace transform
formula based on the Bromwich integral in the complex plane:\footnote{Here $i$ is the imaginary unit and $\gamma$ is a real number large
enough to ensure that $F\left(s\right)$ is holomorphic in the half-plane
$\mbox{Re\ }s>\gamma$ (or has a holomorphic analytic continuation
to this half-plane). }
\begin{equation}
f\left(t\right)=\frac{1}{2\pi i}\lim_{T\rightarrow\infty}\int_{\gamma-iT}^{\gamma+iT}e^{st}f_{\mathcal{L}}\left(s\right)ds,\mbox{\  where \ensuremath{f_{\mathcal{L}}\left(s\right)\equiv\int_{0}^{\infty}e^{-st}f\left(t\right)dt.}}\label{eq:BromwichIntegral}
\end{equation}
For the purposes of this paper we did not actually need it. We obtained
the Laplace inverse demand function listed in Supplementary Material \ref{sub:LaplaceInverseDemandFunctions}
by simpler methods.

\subsubsection{Piecewise inverse Laplace transform}

Readers familiar with Fourier transform but not
with Laplace transform may potentially be concerned about applicability of our
approach to the case of linear demand. Our prescription is simple:
If the inverse demand takes the form $P\left(q\right)=a-bq$, we restrict
our attention to $q\in\left(0,\frac{a}{b}\right)$, without affecting
the form of Laplace inverse demand $p\left(t\right)$. The reason
why this is possible is that to evaluate $p\left(t\right)$ using,
say, Equation \ref{eq:BromwichIntegral}, we do not need the values
of $P\left(q\right)\equiv P\left(e^{s}\right)$ for $s\in\left(-\infty,\infty\right)$,
as a superficial analogy with Fourier transform might suggest. Instead,
the integral in Equation \ref{eq:BromwichIntegral} is in the imaginary
direction. Writing the inverse demand as $P\left(e^{s}\right)=a-be^{s}$
for $\mbox{Re }s<\log\frac{a}{b}$ and $P\left(e^{s}\right)=0$ for
$\mbox{Re}\ s>\frac{a}{b}$ and working with each piece separately
will not make the Laplace inverse demand complicated. We will just
have two different Laplace inverse demand functions, each valid for
a range of $q$.

\section{Solving Cubic and Quartic Equations Simply}

The readers may have seen general formulas for solutions to cubic
and quartic equations that looked very complicated. It turns out that
the intimidating look is caused just by shifts and rescalings of variables.
Solving these equations is actually very straightforward:

\paragraph{Cubic equations.}

To solve the equation $x^{3}+3ax+2=0$, we substitute $x\equiv y^{1/3}-ay^{-1/3}$,
which leads to the quadratic equation $y^{2}+2y-a^{3}=0$ with solutions
$y=\pm\sqrt{1+a^{3}}-1$. Given this result, the solutions to any
other cubic equation may be obtained by rescaling and shifting of
$x$.\footnote{What we described here is a version of Vieta's substitution that we
customized to avoid cluttering of various rational factors. }

\paragraph{Quartic equations.}

A quartic equation of the form $x^{4}+ax^{2}+bx+1=0$ is equivalent
to $(x^{2}+\sqrt{\alpha}x+\beta)(x^{2}-\sqrt{\alpha}x+\beta^{-1})=0$
with $\alpha\ge0$ and $\beta$ chosen to satisfy $\beta+\beta^{-1}=a+\alpha$
and $\beta^{-1}-\beta=\frac{b}{\sqrt{\alpha}}$ so that the coefficients
of different powers of $x$ match. If we substitute the right-hand-side
expressions into the trivial identity $(\beta+\beta^{-1})^{2}-(\beta-\beta^{-1})^{2}=4$,
we get a cubic equation for $\alpha$, which we know how to solve.
With the help of the quadratic formula, a solution for $\alpha$ then
translates into a solution for $\beta$, and consequently for $x$.
Given these results, the solutions to any other quartic equation may
be obtained simply by rescaling and shifting of $x$.

 \section{Discussion of the Characterization Theorem}\label{AppendixDiscussionOfTheCharacterizationTheorem} 

Here we present a discussion of the logic behind Theorem  \ref{formpreserve} that only requires knowledge of very elementary calculus, i.e. simple differentiation and integration, instead of assuming
knowledge of complex analysis and functional transforms.\footnote{We are grateful to an anonymous
referee for the excellent suggestion that this is possible and worth doing. In the original version of the paper we only included the proof of Theorem
\ref{formpreserve} based on functional transforms. That proof is quick and easy, but less pedagogical because it requires knowledge
of more advanced mathematics.} 

 \subsection{One-dimensional functional form classes}

Consider a real function \(P(q)\) (on an open interval of positive \(q\)) that belongs to a one-dimensional functional form class invariant under
(i.e., preserved by) average-marginal transformations. It must be the case that { }\(a P(q)+b q P'(q)\) also belongs to this class. For this reason,
{ }if \(q P'(q)\) were not a multiple of \(P(q)\), the class would not be one-dimensional, which would be a contradiction. Denoting the coefficient
of proportionality as \(A\), we get the differential equation\[q P'(q)=A P(q),\]
which implies\[\frac{dP}{P}=A \frac{dq}{q}\, \, \, \, \, \Longrightarrow \, \, \, \, \, \log \left| P\right| =A \log  q+\text{const.}\, \, \, \, \, \Longrightarrow
\, \, \, \, \, P(q)=c_1 q^A,\]
with some real constant \(c_1\). We conclude that the one-dimensional functional form classes invariant under the average-marginal
transformations are the classes of power functions with a fixed exponent.

In the next subsection we work in terms of the variable \(s=\log  q\). For clarity, let us repeat the computation above using \(s\), with the identification
\(H(s)\equiv P(q)\) for \(s=\log  q\). The differential equation becomes \(H'(s)=A H(s)\) and leads to the same result as above:\[\frac{dH}{H}=A\, ds\, \, \, \, \, \Longrightarrow \, \, \, \, \, \log \left| H\right| =A s+\text{const.}\, \, \, \, \, \Longrightarrow \, \, \,
\, \, H(s)=c_1 e^{A s}.\]

 \subsection{Two-dimensional functional form classes}

Any member \(H(\log (q))=P(q)\) of a two-dimensional class invariant under average-marginal transformations must satisfy the following differential
equation:\[H''(s)=A H(s)+B H'(s).\]
The logic is analogous to the one-dimensional case in the previous subsection. If \(H'(s)\) and \(H(s)\) are proportional
to each other, the equation is automatically satisfied. Consider the case where \(H'(s)\) and \(H(s)\) are not proportional to each other. By the
invariance property, any linear combination of \(H(s)\) and \(H'(s)\) must belong to the class. Consequently, any linear combination of \(H(q)\),
\(H'(s)\), and \(H''(s)\) must belong to the class. These linear combinations span a three-dimensional space, { }unless \(H''(s)\) is a linear combination
of { }\(H(q)\) and \(H'(s)\), i.e. unless the differential equation is satisfied for some \(A\) and \(B\). This establishes the validity of the differential
equation.

Denote by \(r_1\) and \(r_2\) the two roots of the quadratic equation\[x^2=A +B\text{  }x,\]
namely\[r_1=\frac{1}{2} \left(B+\sqrt{4 A+B^2}\right),\, \, \, \, \, r_2=\frac{1}{2} \left(B-\sqrt{4 A+B^2}\right).\]
It is straightforward to check that the differential equation \(H''(s)=A H(s)+B H'(s)\) may be alternatively written as\[\left(1-r_1\frac{d}{ds}\right)\left(\left(1-r_2\frac{d}{ds}\right)H(s)\right)=0.\]
The function \(\left(1-r_2\frac{d}{ds}\right)H(s)\), which we denote \(f(s)\), therefore satisfies the differential equation\[\left(1-r_1\frac{d}{ds}\right)f(s)=0.\]
This differential equation gives the solution\[f(s)-r_1f'(s)=0\Longrightarrow ds=r_1\frac{df}{f}\Longrightarrow \log  |f|=\frac{s}{r_1}+\text{const}.\Longrightarrow f(s)=\tilde{c}_1 e^{\frac{s}{r_1}},\]
which means\[\left(1-r_2\frac{d}{ds}\right)H(s)= \tilde{c}_1 e^{\frac{s}{r_1}}.\]
To get the final expression for \(H(q)\), we will solve this differential equation in two alternative cases.

 \subsubsection{The case of two distinct roots} 

Let us consider the case where the two roots, \(r_1\) and \(r_2\), are not equal. To solve this last differential equation, let us perform the substitution
\(H(s)=e^{\frac{s}{r_1}}g(s)+\frac{1 }{r_1-r_2}\tilde{c}_1r_1e^{\frac{s}{r_1}}\). The differential equation then becomes\[\left(1-r_2\frac{d}{ds}\right)\left(e^{\frac{s}{r_1}}g(s)+\frac{e^{\frac{s}{r_1}} \tilde{c}_1r_1}{-r_1+r_2}\right)= \tilde{c}_1 e^{\frac{s}{r_1}},\]
or after canceling terms proportional to \(\tilde{c}_1\), \[\left(1-r_2\frac{d}{ds}\right)\left(e^{\frac{s}{r_1}}g(s)\right)=0.\]
This differential equation has the following solution:\[\left(1-\frac{r_2}{r_1}-r_2\frac{d}{ds}\right)g(s)=0\, \, \, \, \, \Longrightarrow \, \, \, \, \, g(s)=c_2 e^{\left(1-\frac{r_2}{r_1}\right)\frac{s}{r_2}}\,
\, \, \, \, \Longrightarrow \, \, \, \, \, g(s)=c_2 e^{s \left(\frac{1}{r_2}-\frac{1}{r_1}\right)}.\]
For the function \(H(s)\) this implies\[H(s)=c_2 e^{ \frac{s}{r_2}}+\frac{ \tilde{c}_1r_1}{r_1-r_2}e^{\frac{s}{r_1}}.\]
After introducting the notation \(c_1=\frac{ 1}{r_1-r_2}\tilde{c}_1r_1\), the expression for \(H(s)\) becomes\[H(s)=c_1e^{\frac{s}{r_1}}+c_2 e^{ \frac{s}{r_2}}.\]
If both roots \(r_1\) and \(r_2\) are real, then this is the desired final form. If they have a non-zero imaginary part,
we would like to manipulate the expression further. In this case \(\frac{1}{r_1}\) and \(\frac{1}{r_2}\) are complex conjugates, which means we may
write them as \[\frac{1}{r_1}=a_R+a_I i,\text{$\, \, \, \, \, $ }\frac{1}{r_2}=a_R-a_I i.\]
For \(H(s)\) to be real, we also need \(c_1\) and \(c_2\) to be complex conjugates:\[c_1=c_R-c_I i,\text{$\, \, \, \, \, $ }c_2=c_R+c_I i.\]
\(H(s)\) then becomes\[H(s)=\left(c_R-c_I i\right)e^{a_Rs}\left(\cos \left(a_Is\right)+i \sin \left(a_Is\right)\right)+\left(c_R+c_I i\right) \left(\cos \left(a_Is\right)-i
\sin \left(a_Is\right)\right),\]
which after canceling terms gives the desired final form\[H(s)=c_Re^{a_Rs}\cos \left(a_Is\right)+c_I e^{a_Rs} \sin \left(a_Is\right).\]

 \subsubsection{The case of a double root} 

Now let us consider the case where the two roots are equal: \(r_1=r_2\). In this case, we need to solve the equation\[\left(1-r_1\frac{d}{ds}\right)H(s)= \tilde{c}_1 e^{\frac{s}{r_1}}\]
We perform the substitution { }\(H(s)=e^{\frac{s}{r_1}}g(s)\), which leads to\[\left(1-r_1\frac{d}{ds}\right)\left(e^{\frac{s}{r_1}}g(s)\right)= \tilde{c}_1 e^{\frac{s}{r_1}}\, \, \, \, \Longrightarrow \, \, \, \, \, -r_1\frac{d}{ds}g(s)=
\tilde{c}_1\text{ $\, \, \, \, $}\Longrightarrow \, \, \, \, \, g(s)=- \frac{\tilde{c}_1}{r_1} s + \text{const}.\]
The result for \(H(s)\) is \(H(s)=e^{\frac{s}{r_1}}(- \frac{\tilde{c}_1}{r_1} s + \text{const}.)\), which, after renaming
the constants, gives the desired final form\[H(s)=e^{\frac{s}{r_1}}\left(c_2+c_1s\right).\]

We see that overall, the resulting characterization of two-dimensional classes of functional forms invariant under average-marginal transformations
is consistent with the statement of Theorem  \ref{formpreserve}.

 \subsection{Higher-dimensional functional form classes}

The same method may be used to derive higher-dimensional form-preserving classes of functional forms. The differential equations one needs to solve
are standard and have known solutions that utilize the properties of the \textit{ characteristic equations} of the differential equations, which
are analogous to the equation \(x^2=A +B\text{  }x\) we used above. The proof of Theorem  \ref{formpreserve} described in Appendix  \ref{AppendixProofsOfTheorems} may be thought of as using such differential equations, just represented in a different, transformed way as the vanishing
of the expression in Equation  \ref{eq:T0GS} inside
the interval \(S\). Solving the differential equations using transforms is much quicker and more convenient. \\

\section{ Applications} \label{+apps}

\subsection{Supply chains with hold-up: the restricted problem}\label{AppendixSupplyChainsWithHoldupDetails}

Extending our analysis of the Antr\`as and Chor model in Appendix \ref{AppendixSupplyChainsWithHoldup}, 
we consider the solution of the restricted AC model in the case where the firm is 
restricted to two discrete levels of bargaining power corresponding to ``out-sourcing'' and ``in-sourcing''.
As in the relaxed solution, consider the optimal choice of a path
for $\beta$ subject to producing a total quality $\hat{q}$. Note
that $q(j;\beta)$ is a strictly increasing function of $j$ for any
path of $\beta$ achieving $\hat{q}$ by definition. Thus it is equivalent,
instead of solving for the optimal restricted $\beta$ for each $j$,
to solve for the optimal $\beta^{\star\star}$ for each $q\left(j;\beta\right)\in\left[0,\hat{q}\right]$
and then invert the resulting $q\left(j;\beta^{\star\star}\right)$
function to recover the value optimal $\beta$ at each $j$. This
method preserves the separability we used in the relaxed problem and
thus greatly simplifies the restricted problem. Wherever it does not
create confusion we suppress as many arguments as possible, especially
the dependence on $\beta$, to preserve notational economy.

By the same arguments as in the restricted case, the cost of production
$\hat{q}$ is $C\left(\hat{q};\beta\right)$ where
\[
C\left(\hat{q};\beta\right)=\int_{0}^{\hat{q}}\left[1-\beta(q)\right]MR(q)dq,
\]
where $\beta(q)$ is a notationally-abusive contraction of $\beta\left(j\left(q;\beta\right)\right)$.
However, to actually produce $\hat{q}$, we need
\[
\int_{0}^{1}S\left(\left[1-\beta\left(q(j)\right)\right]MR\left(q(j)\right)\right)dj=\hat{q},
\]
where $S=MC^{-1}$, the supply curve, exists because of our assumption
that $MC$ is strictly monotone increasing. Changing variables so
that both integrals are taken over $j$:
\[
C\left(\beta\right)=\int_{0}^{1}\left[1-\beta\left(q(j)\right)\right]MR\left(q(j)\right)S\left(\left[1-\beta\left(q(j)\right)\right]MR\left(q(j)\right)\right)dj.
\]

Thus the firm solves a Lagrangian version of this problem that is
separable in each $j$, or equivalently $q$:
\[
\max_{\beta}\int_{0}^{1}\lambda S\left(\left[1-\beta\left(q(j)\right)\right]MR\left(q(j)\right)\right)-\left(\left[1-\beta(q)\right]MR(q)S\left(\left[1-\beta\left(q(j)\right)\right]MR\left(q(j)\right)\right)\right)dj-\lambda\hat{q}.
\]
At each $q$ this is a simple maximization problem. The firm chooses
the value of $\beta$ maximizing
\[
\lambda S\left(\left[1-\beta\left(q\right)\right]MR\left(q\right)\right)-\left[1-\beta(q)\right]MR(q)S\left(\left[1-\beta\left(q\right)\right]MR\left(q\right)\right),
\]
the difference between the total value of the production by that firm
and the total cost of that production. Clearly, both terms are decreasing
in $\beta$ given that $MR>0$ in any range where the firm would consider
producing, so given that the firm chooses between only two values
of $\beta$, $\beta_{I}>\beta_{O}$, the firm will strictly choose
in-sourcing if and only if
\begin{equation}
MR(q)>\frac{\lambda\left[S\left(\left[1-\beta_{O}\right]MR\left(q\right)\right)-S\left(\left[1-\beta_{I}\right]MR\left(q\right)\right)\right]}{\left[1-\beta_{O}\right]S\left(\left[1-\beta_{O}\right]MR\left(q\right)\right)-\left[1-\beta_{I}\right]S\left(\left[1-\beta_{I}\right]MR\left(q\right)\right)}.\label{restrictedconditionbeta}
\end{equation}
If the sign here is equality (which generically occurs on a set of
measure $0$ so long as the functions are nowhere constant relative
to one another) then the firm is indifferent and if the inequality
is reversed the firm strictly chooses in-sourcing. As $\lambda$ rises,
the firm will in-source less and produce more; thus varying $\lambda$
over all positive numbers traces out all potentially optimal solutions.
Note that this could easily be extended to a situation where the firm
has any simple restricted choice of $\beta$, not just two values.

Furthermore, once $\beta(q)$ is set, we can easily recover the optimal
$\beta^{\star\star}$ for each $j$ by noting that the optimal value
of $\beta^{\star\star}$ at $\tilde{j}$ is the optimal value at $\tilde{q}$
satisfying the production equation
\[
\int_{0}^{\tilde{j}}S\left(\left[1-\beta^{\star\star}\left(q(j)\right)\right]MR\left(q^{\star\star}(j)\right)\right)dj=\tilde{q}.
\]
This implies the differential equation $q'(j)=S\left(\left[1-\beta^{\star\star}\left(q(j)\right)\right]MR\left(q^{\star\star}(j)\right)\right)$
and thus the inverse differential equation $j'(q)=\frac{1}{S\left(\left[1-\beta^{\star\star}\left(q\right)\right]MR\left(q^{\star\star}\right)\right)}$
which together with the boundary condition $j(0)=0$ yields $j(q)$
and thus $\beta^{\star\star}$ at each $j$.

It remains only to pin down the optimal value of $\lambda$. To do
this, denote the set of $q$ on which Inequality \ref{restrictedconditionbeta}
is satisfied $B_{I}(\lambda)$ and on which it is reversed $B_{O}(\lambda)$.%
\footnote{We ignore the generically $0$-measure set on which it is an equality.%
} Total production is
\[
q_{\lambda}=\int_{j\in(0,1):q(j)\in B_{I}(\lambda)}S\left(\left(1-\beta_{I}\right)MR\left(q(j)\right)\right)dj+\int_{j\in(0,1):q(j)\in B_{O}(\lambda)}S\left(\left(1-\beta_{O}\right)MR\left(q(j)\right)\right)dj,
\]
while total cost $C_{\lambda}=$
\[
\int_{B_{I}(\lambda)\cap\left(0,q_{\lambda}\right)}\left[1-\beta_{I}\right]MR\left(q\right)dq+\int_{B_{O}(\lambda)\cap\left(0,q_{\lambda}\right)}\left[1-\beta_{O}\right]MR\left(q\right).
\]
Profit is
\[
R\left(q_{\lambda}\right)-C_{\lambda}
\]
and the first-order condition for its maximization is
\[
MR\left(q_{\lambda}\right)\frac{\partial q_{\lambda}}{\partial\lambda}-\frac{\partial C_{\lambda}}{\partial\lambda}=0\implies MR\left(q_{\lambda}\right)=\frac{\frac{\partial C_{\lambda}}{\partial\lambda}}{\frac{\partial q_{\lambda}}{\partial\lambda}}=\lambda,
\]
because $\lambda$ is defined as the shadow cost of relaxing the constraint
on production.

Now we consider obtaining as close as possible to an explicit solution.
Note that, to do so, we must be able to characterize $S,B_{O}$ and
$B_{I}$ explicitly. $S$ is the inverse of $MC$ and thus $MC$ must
admit an explicit inverse. To characterize $B_{O}$ and $B_{I}$ explicitly
requires solving Inequality \ref{restrictedconditionbeta} with equality
to determine the relevant thresholds, which, as we will see, requires
marginal revenue to have an explicit inverse.

One of the simplest forms satisfying these conditions and yet yielding
our desired non-monotonicity is $P(q)=p_{0}+p_{-t}q^{t}+p_{-2t}q^{2t}$
and $MC(q)=mc_{-t}q^{t}$, where $t,p_{0},p_{-t},mc_{-t}>0>p_{-2t}$.
In this case $S(p)=\left(\frac{p}{mc_{-t}}\right)^{\frac{1}{t}}$.
Thus the equality version of Inequality \ref{restrictedconditionbeta}
becomes
\[
MR(q)=\frac{\lambda\left(\left[\frac{\left(1-\beta_{O}\right)MR(q)}{mc_{-t}}\right]^{\frac{1}{t}}-\left[\frac{\left(1-\beta_{I}\right)MR(q)}{mc_{-t}}\right]^{\frac{1}{t}}\right)}{\left(1-\beta_{O}\right)\left[\frac{\left(1-\beta_{O}\right)MR(q)}{mc_{-t}}\right]^{\frac{1}{t}}-\left(1-\beta_{I}\right)\left[\frac{\left(1-\beta_{I}\right)MR(q)}{mc_{-t}}\right]^{\frac{1}{t}}}\implies
\]
\[
\implies MR(q)=\frac{\lambda\left[\left(1-\beta_{O}\right)^{\frac{1}{t}}-\left(1-\beta_{I}\right)^{\frac{1}{t}}\right]}{\left(1-\beta_{O}\right)^{\frac{1+t}{t}}-\left(1-\beta_{I}\right)^{\frac{1+t}{t}}}\equiv\lambda k,
\]
where $k$ is the relevant collection of constants. Note that this
is an extremely simple threshold rule in terms of marginal revenue.
Given that we have chosen a form of marginal revenue that admits an
inverse, it is simple to solve out for the threshold rule in terms
of quantities; this is why we needed marginal revenue to have an inverse
solution.
\[
p_{0}+(1+t)p_{-t}q^{t}+(1+2t)p_{-2t}q^{2t}=\lambda k\implies
\]
\[
q=\left(\frac{-p_{-t}(1+t)\pm\sqrt{p_{-t}(1+t)^{2}+4\left(p_{0}-k\lambda\right)p_{-2t}(1+2t)}}{2p_{-2t}(1+2t)}\right)^{\frac{1}{t}}.
\]
Between these two roots, in-sourcing is optimal; outside them, outsourcing
is optimal.%
\footnote{Actually if $\lambda k<p_{0}$ then the lower root should be interpreted
as $0$. %
}

This provides closed-form solutions as a function of $\lambda$, but
$\lambda$ remains to be determined. This is, unfortunately, where
things start to get a bit messier. The integral determining $q_{\lambda}$
can be explicitly taken, but only in terms of the less-standard Appell
Hypergeometric function. The equation for $MR\left(q_{\lambda}\right)=\lambda$
therefore cannot be solved explicitly for $\lambda$. However, it
is a single explicit equation. Once $\lambda$ has been determined,
optimal sourcing is determined in closed-form as described above.
We plot this and the relaxed optimal $\beta$, in Figure \ref{closedformAC},
in the same format as in the paper for the case when $p_{0}=.2,p_{-t}=2,p_{-2t}=-4,mc_{-t}=.5,t=.5,\beta_{I}=.8,\beta_{O}=.3$. Clearly, we obtain similar, non-monotone results, but now these require
only a single call of Newton's method to solve an otherwise explicit
equation, as opposed to the two-dimension search we required to solve
the case presented in the paper.

\begin{figure}
\begin{centering}
\includegraphics[width=4in]{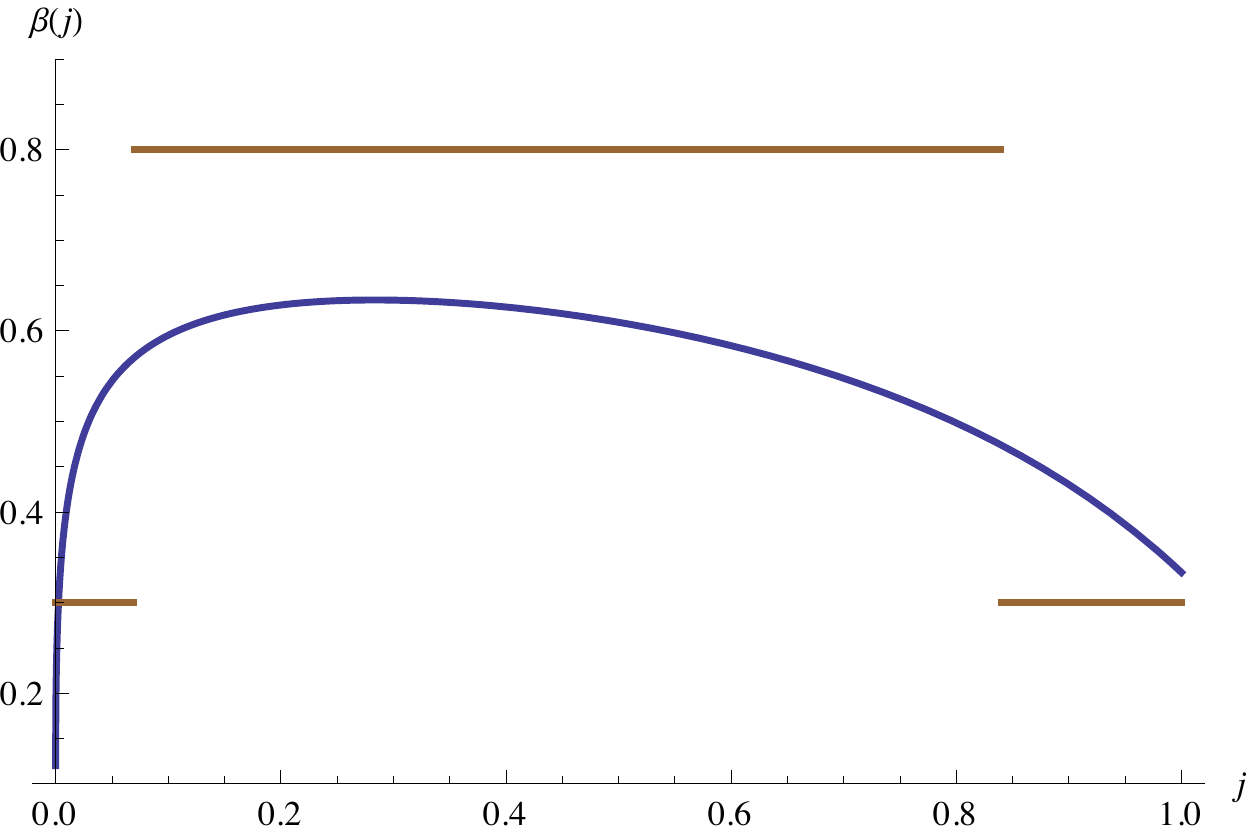}
\par\end{centering}

\caption{Relaxed and restricted solutions to the AC model when $P(q)=0.2+2q^{\frac{1}{2}}-4q$,
$MC(q)=\frac{q^{\frac{1}{2}}}{2}$, $\beta_{O}=0.3$ and $\beta_{I}=0.8$.}

\label{closedformAC}
\end{figure}

We do not discuss second-order conditions here, but they can easily
be derived and checked to hold for this example as well as for the
example in the paper. A grossly sufficient condition is that marginal
revenue is declining over the solution range, as is the case in both
of these examples.

\subsection{Labor bargaining without commitment \citep{stole,stole2}}\label{AppendixLaborBargainingWithoutCommitment}

\citet[henceforth SZ]{stole,stole2} consider a model of labor market
bargaining where contracts cannot commit workers. Each worker
is, therefore, able to extract a share of the surplus the firm gains
from a marginal worker. However, that surplus is determined by the
profits the firm would earn if that worker were to leave, in which
case the firm would bargain with other workers for a share of the
remaining surplus. This causes (a) wages to depend on infra-marginal
profits and (b) firms to over-employ workers relative to a standard
labor market since having reserve workers decreases the marginal value
of any given employee, lowering equilibrium wages and raising profits.

The setup of the SZ model is as follows. At the beginning of a period, a firm hires workers, each of
whom supplies one unit of labor if employed.%
\footnote{The model is formally dynamic but is usually studied in its steady
state as described here.%
} When this process has been completed but before production takes
place, the workers are free to bargain over their wages for this period.
At that time the firm cannot hire any additional workers, so if any
bargaining is not successful and any worker leaves the firm, fewer
workers will be available for production in this period. Moreover,
after the worker's departure, the remaining employees are free to
renegotiate their wages, and in principle the process may continue
until the firm loses all its employees. Assuming its revenues are
concave in labor employed, this gives the firm an incentive to ``over-employ''
or {\em hoard} workers as hiring more workers makes holding a marginal
worker less valuable to the firm and thus reduces workers' bargaining
power.

If the bargaining weight of the worker relative to that of the firm's
owner is $\lambda$, then the relationship surplus splitting condition
is $S_{w}$=$\lambda S_{f}$. The worker's surplus is simply the equilibrium
wage corresponding to the current employment level minus the outside
option: $S_{w}=W\left(l\right)-W_{0}$, where $W$ is the wage as
a function of $l$, the labor supplied. For expositional simplicity,
we assume the firm transforms labor into output one-for-one, though
analytic solutions also exist for any power law production technology
when $\lambda=1$ and in other cases. Thus we assume $q=l$ and henceforth
use $q$ as our primary variable analysis for consistency with previous
sections.

The firm faces inverse demand $P(q)$ and thus its profits are $\Pi(q)=\left[P(q)-W(q)\right]q$.
The firm's surplus from hiring an additional worker is then $\Pi'(q)$.
This gives the differential equation
\begin{equation}
W\left(q\right)-W_{0}=\lambda MR\left(q\right)+\lambda\left(W\left(q\right)q\right)'\Rightarrow\lambda(W\left(q\right)q^{1+\frac{1}{\lambda}})'=q^{\frac{1}{\lambda}}\left(\lambda MR\left(q\right)+W_{0}\right),\label{eq:SZwage}
\end{equation}
where $MR(q)\equiv P(q)+P'(q)q$ and the implication can be verified by simple
algebra and is a standard transformation for an ordinary differential
equation of this class. Integrating both of the sides of the equation,
imposing the boundary condition that the wage bill shrinks to $0$
at $q=0$, and solving out yields wages and profits
\begin{center}
$W\left(q\right)=q^{-\left(1+\frac{1}{\lambda}\right)}\intop_{0}^{q}x^{\frac{1}{\lambda}}MR(x)dx+\frac{W_{0}}{1+\lambda},\quad \Pi\left(q\right)=P(q)q-q^{-\frac{1}{\lambda}}\intop_{0}^{q}x^{\frac{1}{\lambda}}MR(x)dx-\frac{W_{0}}{1+\lambda}$.
\end{center}

While the wage equation is intractable in general, the operation on the
right-hand side does not change the functional form of any element of the
 average-marginal form-preserving class. To gain intuition for this, note
that as $\lambda\rightarrow0$, the model converges to the neoclassical
model because the worker has no bargaining power; thus the equation becomes $MR(q)=W_{0}$.
On the other hand as $\lambda\rightarrow\infty$, the equation converges
to $P(q)=W_{0}$ as workers capture all revenue and divide it equally.
Thus for intermediate $\lambda$ the marginal-average transformation
is effectively applied ``partially'' to $P(q)$. To see this mathematically,
suppose $MR(q)=aq^{-b}$. Then the integral term in Equation \ref{eq:SZwage}
becomes

\[
\frac{(1+\lambda)a\int_{0}^{q^{\star}}x^{\frac{1}{\lambda}}x^{-b}dx}{\lambda (q^{\star})^{1+\frac{1}{\lambda}}}
=\frac{(1+\lambda)a}{(q^{\star})^{1+\frac{1}{\lambda}}}\ \frac{(q^{\star})^{\frac{1+\lambda-b\lambda}{\lambda}}}{1+\lambda-b\lambda}
=\frac{1+\lambda}{1+\lambda-b\lambda}\ a(q^{\star})^{-b}.
\]
More generally, for $MR(q)$ a linear combination of power terms,
each term of becomes multiplied by ${(1+\lambda)}/{(1+\lambda+t\lambda)}$,
where $t$ is the power on the term. This tractability under form-preserving
classes, but general intractability, has led researchers to study
the SZ model almost exclusively under linear and constant elasticity
demand.

While this class can yield important insights, it also has significant
limitations. In particular, in the rest of this subsection we show that under
this class the percentage over-employment relative to the neoclassical
benchmark is constant as a function of the prevailing wage and multiplicative
demand shifters. Thus proportional over-employment does not vary,
for example, over the business cycle as consumers become richer and
employment grows overall. By contrast in a calibrated model with equal
bargaining weights ($\lambda=1$), using demand derived from the US
income distribution as in Section \ref{SectionReplacingConstantElasticityDemand},
Equation \ref{EquationQuadraticallyTractableFormForIncomeDistribution},
we find that over a reasonable business cycle range over-employment
should shift by roughly 0.4\% of total employment. While quite small in absolute terms, this could account for a non-trivial fraction of cyclic variation
in employment and is ruled out by the standard model. Furthermore,
this model is quadratically tractable, nearly as tractable as the standard constant
elasticity or linear specifications that are linearly tractable. It thus seems a natural alternative
to make future analysis of labor bargaining more realistic without
losing significant tractability.

To carry out this calculation, we note that the firm's optimal $q$ solves its first-order condition, $\Pi'\left(q\right)=0$,
which, after some algebraic manipulations, is
\begin{equation}
\frac{(1+\lambda)\intop_{0}^{q}x^{\frac{1}{\lambda}}MR(x)dx}{\lambda q^{1+\frac{1}{\lambda}}}=W_{0}.\label{eq:StoleZwiebelFOC}
\end{equation}

Let us define (relative) labor hoarding as $h\equiv {(q^{\star}-q^{\star\star})}/{q^{\star\star}}$,
where $q^{\star}$ is SZ employment and $q^{\star\star}$ is the employment
level that a neoclassical firm with identical technology would choose:
$MR\left(q^{\star\star}\right)=W_{0}$. Combining these definitions
with Equation \ref{eq:StoleZwiebelFOC} gives a useful condition for $h$
in terms of the equilibrium employment level $q^{\star}$:
\begin{equation}
MR\left(\frac{q^{\star}}{1+h}\right)=\frac{\left(1+\lambda\right)\intop_{0}^{q^{\star}}x^{\frac{1}{\lambda}}MR(x)dx}{\lambda\left(q^{\star}\right)^{1+\frac{1}{\lambda}}}.\label{eq:StoleZwiebelOveremploymentCondition}
\end{equation}
Note that this equation, and Equation \ref{eq:StoleZwiebelFOC}, involves
only (a) marginal revenue and (b) integrals of it multiplied by a power
of $q$ and then divided by a power of $q$ higher by 1. It can easily
be shown that the support of the Laplace marginal revenue is preserved
by this transformation using essentially the same argument we used
in the paper to show this support was shifted by exactly
one unit when consumer surplus is calculated. This implies that
Equations \ref{eq:StoleZwiebelFOC} and \ref{eq:StoleZwiebelOveremploymentCondition}
have precisely the same tractability characterization as does the
basic monopoly model we studied in Section \ref{SectionReplacingConstantElasticityDemand} of the paper.%
\footnote{Note that Equation \ref{eq:StoleZwiebelFOC} also involves a constant
and thus only our tractable forms with a constant term will maintain
their tractability in this model. This is why we focus on this class
below.%
}

 Given the complexity of
Equation \ref{eq:StoleZwiebelOveremploymentCondition} from any perspective
other than our tractable forms, we investigate it using these forms,
following \citet{inequality} who study the model under constant elasticity
demand. First consider the BP class, $P(q)=p_{0}+p_{t}q^{-t}$,
which nests the constant elasticity case when $p_{0}=0$. Solving
Equation \ref{eq:StoleZwiebelOveremploymentCondition} for $h$ yields
\begin{equation}
h=\left(\frac{1+\lambda}{1+\lambda-t\lambda}\right)^{\frac{1}{t}}-1.\label{SZBP}
\end{equation}
Therefore hoarding is constant in $q^{\star}$ and consequently in $W_{0}$.
Thus under the BP
class of demand, including constant elasticity, the economic cycle (the nominal outside option) has
{\em no effect} on relative hoarding. It can easily be shown that
$h$ monotonically increases in $t$, so that the less concave demand
(and thus profits) are, the more hoarding occurs.  We found this  counterintuitive, as we believed, building off
the intuition supplied by SZ about the relationship between the ``front-loading''
that drives hoarding and concavity, that labor hoarding was driven
by concavity in the firm's profit function.\footnote{However, it is worth noting that another source of profit convexity, fixed costs, has an opposite effect and is a natural element to include in the model.  This can be done in a straightforward way using our technology given our previous discussion, but we omit it here for brevity. } Instead it appears that
the reverse is the case. This shows one advantage of considering
explicit functional forms: they help correct false intuitions.
In particular, because $t$ clearly parameterizes concavity the comparative
static has a natural interpretation.

This new intuition suggests that the hoarding may not be constant
over the economic cycle if, during that cycle, the curvature of firm
profits change. For example, if during booms broad parts of the population
are served and during recessions only wealthier individuals are served,
then labor hoarding should be counter-cyclical as the distribution
of income among the wealthy is more convex than among the middle-class
and poor.

To analyze this we used our proposed functional form from Equation \ref{EquationQuadraticallyTractableFormForIncomeDistribution} in Section \ref{SectionReplacingConstantElasticityDemand}, in the version where it actually represents the income distribution as this is appropriately normalized to our assumption of $q=l$ (willingness-to-pay for a unit of labor):
\[
P(q)= 50000\left(\frac{1}{2} q^{-\frac{2}{5}}+2- \frac{5}{2}q^{\frac{2}{5}}\right).
\]
(in US dollars). Plugging this into Equation \ref{eq:StoleZwiebelOveremploymentCondition}
and $MR\left(q^{\star\star}\right)=W_{0}$, assuming to match the convention in the literature that $\lambda=1$ (though this plays no role in the simple form of our solution) and rounding to the second significant digit yields:
\begin{equation}
h=1.6\left(\frac{1+\sqrt{1+\frac{1.2\cdot10^9}{\left(100000-W_0\right)^2}}}{1+\sqrt{1+\frac{1.1\cdot10^8}{\left(100000-W_0\right)^2}}}.\right)^{\nicefrac{5}{2}}-1,\label{hoardingresult}
\end{equation}

\begin{figure}[t]
\begin{centering}
\includegraphics[width=3.2in]{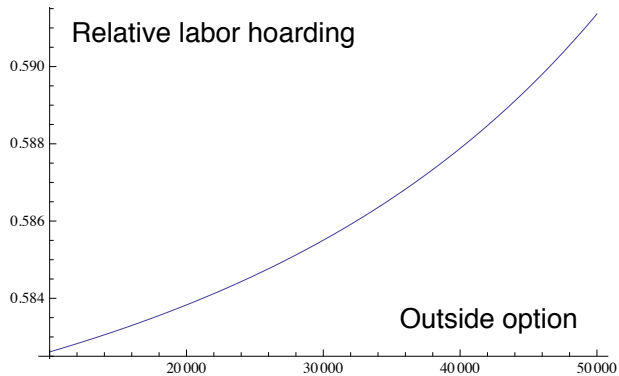}
\par\end{centering}
\caption{Relative labor hoarding in the Stole-Zwiebel model with $\lambda=1$ and
demand given by the approximation for $W_{0}\in$ [10000,50000] (in USD).}

\label{SZcalibration}
\end{figure}

 We interpret a reduction in $W_{0}$, or equivalently a multiplicative
scaling up of $P$, to be a boom (as it leads to higher production)
and a rise in $W_{0}$ to be a recession. The expression on the right-hand side of Equation \ref{hoardingresult} can easily be shown to be increasing in $W_0$ (in fact, this is true quite broadly beyond this particular calibration).  Thus a recession
hoarding rises, contrasting with the standard intuition that unions exacerbate recessions by creating
nominal wage rigidity and suggesting the effects of individual workers'
bargaining may have qualitatively different comparative statics than collective bargaining does.  Figure \ref{SZcalibration} shows the results quantitatively.   Hoarding is large ($\approx$ 59\%), but its comparative statics are less pronounced. It rises by a bit less than one percentage point when the outside option
rises from \$30k to \$50k, a reasonable range of variation
over the economic cycle. Thus, while the BP approximation of constancy appears
not to be very far off these effects may of a similar magnitude to cyclic shifts in employment and are thus worth considering.

\subsection{Imperfectly competitive supply chains}
\label{sequential}

Consider the model of imperfectly competitive supply chains where
each stage of production strategically anticipates the reactions of
the subsequent stage proposed by \citet{salinger}. There are $m$
stages of production interacting via linear pricing. Producers at
each stage act simultaneously and the stages act in sequence. We solve
by backwards induction.

Producers at stage $m$ take an input from producers at stage $m-1$
and sell it to final consumers, facing inverse demand $P_{m}$. The
$n_{m}$ firms at stage $m$ are symmetric Cournot competitors with
average cost $AC_{m}$. The linear price clearing the market between
stage $m-1$ and $m$ is $\hat{P}_{m-1}$. Using the standard first-order
condition for Cournot competition and dropping arguments, the first-order
equilibrium conditions are 
\[
P_{m}+\frac{1}{n_{m}}P_{m}'q=\hat{P}_{m-1}+AC_{m}+\frac{1}{n_{m}}AC_{m}'q\iff
\]
\[
\hat{P}_{m-1}=P_{m}+\frac{1}{n_{m}}P_{m}'q-AC_{m}-\frac{1}{n_{m}}AC_{m}'q.
\]
Thus the effective inverse demand facing the firms at stage $m-1$
is 
\[
P_{m-1}\equiv P_{m}+\frac{1}{n_{m}}P_{m}'q-AC_{m}-\frac{1}{n_{m}}AC_{m}'q,
\]
as all output produced at stage $m-1$ is used as an input at stage
$m$. Effectively the inverse demand at stage $m-1$ is the (competition-adjusted)
marginal profit (competition-adjusted marginal revenue less marginal
cost) at stage $m$.

This analysis may be back-propagated up the supply chain to obtain
a first-order condition at the first stage determining the quantity
in the industry. However, at each stage one higher derivative of $P_{m}$,
at least and also of some of the cost curves, enters the first-order
conditions. Thus the implicit equation for the first-order conditions
characterizing the supply chain is usually quite elaborate and is
both difficult to analyze in general and highly intractable, even
computationally, for many functional forms. For example, \citet{cable}
use this computational tractability concern to justify their focus
on simultaneous decisions upstream and downstream in a related vertical
contracting model.

However, we now derive a simple explicit transformation of the Laplace
inverse demand and average cost characterizing the supply chain and
discuss how this can be used to overcome these difficulties. Note
that 
\[
P_{m}+\frac{1}{n_{m}}P_{m}'q=\left(1-\frac{1}{n_{m}}\right)P_{m}+\frac{1}{n_{m}}MR_{m},
\]
where $MR_{m}=P_{m}+P_{m}'q$. Let $p_{m}$ be the Laplace inverse
demand. From Section \ref{SectionArbitraryDemandAndCostFunctions} we have that the Laplace marginal
revenue is $(1-t)p_{m}$ and thus that the inverse Laplace-log transform of $\left(1-\frac{1}{n_{m}}\right)P_{m}+\frac{1}{n_{m}}MR_{m}$
is just $\left(1-\frac{t}{n_{m}}\right)p_{m}$. By the same logic,
if we denote the Laplace average cost by $ac_{m}$ the inverse Laplace-log transform of $AC_{m}+\frac{1}{n_{m}}AC_{m}'q$
is $\left(1-\frac{t}{n_{m}}\right)ac_{m}$.

Iterating this process, one obtains that the Laplace first-order condition
at the initial stage, which we denote $f_{1}$, is 
\[
p_{m}\prod_{i=1}^{m}\left(1-\frac{t}{n_{i}}\right)-\sum_{i=1}^{m}\left[ac_{i}\prod_{j=1}^{i}\left(1-\frac{t}{n_{j}}\right)\right].
\]
This obviously differs only in its (trivially computed) coefficients
and not in its support from the $ac_{i}$'s and $p_{m}$ that make
it up. Thus if all $ac_{i}$'s and $p_{m}$ are chosen to have the
same tractable support (with the desired number of evenly
spaced mass points
to achieve desired tractability) then the full will be equally tractable.
Beyond this, even if $p_{m}$ and the $ac_{i}$'s are specified in
an arbitrary manner, the resulting Laplace first-order condition can
be trivially computed from the inverse Laplace-log transforms of each of these inputs and then
either solved directly by applying the Laplace transform or approximated
using a small number of evenly spaced mass points for tractability. In either
case, this approach significantly reduces the complexity of computing
and representing the system.

\subsection{Two-sided platforms \`a la \citet{rt2003}\label{AppendixTwoSidedPlatforms}}

\citet{rt2003} propose a model of a two-sided platform motivated
by the credit card industry. Sellers and buyers are randomly matched
and independently decide whether they want to accept credit cards
and whether they want to use them conditional on cards being accepted.
These decisions are driven by the price charged (or subsidy paid)
to each side. In particular, in order for a fraction of sellers $q_{S}$
to wish to accept cards, the price that must be charged to sellers
is $P_{S}\left(q_{S}\right)$, and similarly for buyers.

Let $U_{I}\left(q_{I}\right)\equiv\int_{0}^{q_{I}}P_{I}(x)dx$ be
the gross utility on side $I$. Because $U_{I}'\left(q_I\right)=P_{I}\left(q_I\right)$, the average
gross utility $\overline{U}_{I}\left(q_{I}\right)\equiv\nicefrac{U_{I}\left(q_{I}\right)}{q_{I}}$ has
the average-marginal relationship to inverse demand. Thus average
consumer surplus $\overline{V}_{I}\left(q_{I}\right)=\overline{U}_{I}\left(q_{I}\right)-P_{I}\left(q_{I}\right)$
has the same functional form as $P_I'q_I$ for a form-preserving functional
form class.

\citeauthor{rt2003} show that, when there is a constant and symmetric
marginal cost of clearing transactions $c$, imperfectly competitive
equilibrium between symmetric firms is characterized by 
\[
P_{S}\left(q_{S}\right)+P_{B}\left(q_{B}\right)-c=-\theta P_{S}'\left(q_{S}\right)q_{S}=-\theta P_{B}'\left(q_{B}\right)q_{B}
\]
for some constant $\theta<1$.\footnote{\citet{doublemarg2sms} extends this characterization to the case
of complements when $\theta>1$. For analogous reasons to the previous
applications all results here may be extended to arbitrary imperfectly
competitive supply chains.} On the other hand, they show that Ramsey pricing (which nests the unconstrained social planner's problem as a special case)
is characterized by 
\[
P_{S}\left(q_{S}\right)+P_{B}\left(q_{B}\right)-c=-\theta\overline{V}_{S}\left(q_{S}\right)=-\theta\overline{V}_{B}\left(q_{B}\right)
\]
for some constant $\theta$, equal to unity in the case of the unconstrained
social optimum and approaching $0$ as Ramsey pricing is required
to break even. Thus if inverse demand on both sides of the market
is specified within the same form-preserving class (\citeauthor{rt2003}
assume linear demand in their example) then our characterization of
tractability applies here as well.

Again the added flexibility of our forms is important in this context.
For example, \citet{ramslind} considers how platforms would choose an ``interchange fee'' between two sides of the market, holding fixed the overall level of prices.  He demonstrates that if both sides have BP demand, then users on both sides of the market and profit maximization {\em all} agree on the same optimal interchange fee.  However, this is generally false
and thus assuming BP demand trivializes the wide-ranging regulatory
debate over interchange fees.  In fact under plausible  (bell-shaped) demand forms, perhaps surprisingly, both sides in aggregate prefer to face higher prices (consumers prefer lower interchange, merchants prefer higher) to subsidize use on the other side of the market.  From a social perspective, the more heterogeneous side and/or the side which has more complete adoption should be taxed to subsidize the other side more than will be in the interest of a profit-maximizing platform, even for fixed aggregate prices.

\subsection{Auction theory}\label{AppendixAuctionTheory}

\subsubsection{Symmetric independent private values first-price auctions}\label{AppendixSymmetricIndependentPrivateValuesFirstPriceAuctions}

Consider $N$ symmetric bidders with privately-known values $v_{i}$ for a
single object drawn independently and identically from a distribution with differentiable CDF
$F$. Let $V(q)\equiv F^{-1}(q)$ be the quantile function
of $F$. Let $b_{\star}$ be a symmetric-equilibrium bid function
mapping values to bids in a first-price auction in which the highest
bidder wins and pays her bid value; any such equilibrium bid function can be shown
to be strictly monotone increasing under weak conditions. The probability
that the bid of any individual bidder is below $x$ is then $G_{\star}(x)\equiv F\left(b_{\star}^{-1}(x)\right)$.
Thus the probability that bidder $i$ wins if she submits a bid of
$x$ is, by symmetry, $\left[G_{\star}(x)\right]^{N-1}$.

The expected utility a bidder with value $v$ thus earns from a bid
of $x$ is 
\[
\left(v-x\right)\left[G_{\star}(x)\right]^{N-1}=\left(v-B_{\star}(q)\right)q^{N-1},
\]
where $q$ is the fraction of other bidders with (weakly) lower bids and $B_{\star}(q)\equiv G_{\star}^{-1}(q)$ is the 
quantile function of the equilibrium bid distribution. A necessary
condition for her optimization is therefore
\[
\left(v-B_{\star}(q)\right)(N-1)q^{N-2}+B_{\star}'(q)q^{N-1}=0\iff v=B_{\star}(q)+\frac{1}{N-1}qB_{\star}'(q).
\]
For this to be a symmetric, monotone equilibrium for the posited bid
distribution, it must be that a bidder with value at reversed quantile
$q$ of the value distribution chooses to bid (weakly) higher than precisely
a fraction $q$ of her rivals. Thus a necessary condition for a symmetric
equilibrium is 
\begin{center}
$V(q)=B_{\star}(q)+\frac{1}{N-1}qB_{\star}'(q)$.
\end{center}
Sufficient conditions, which we omit here, are well-known in the literature.
Note that the right-hand side of this expression involves the marginal
and average forms of $B_{\star}$. Thus, by simple coefficient matching,
if $V$ is chosen to be from a form-preserving class then there is
always an equilibrium $B_{\star}$ from the same class. This
may be used directly to analytically relate the values and bids at
various quantiles, which is all that is necessary for many analytic
problems. 

However if one wishes to obtain a closed form for $b_{\star}$ itself,
then one must choose the class to be tractable at the level of complexity
of the desired closed form and include a constant (a power of $0$)
in the class. By definition, $G_{\star}=F\circ b_{\star}^{-1}$, so $b_{\star}^{-1}=V\circ G_{\star}$
and consequently $b_{\star}=B_{\star}\circ F$. Thus if $F$  and $V$ have 
forms tractable at level $k$, then so does $b_{\star}$.  Evidently
uniform and exponential distributions, which have linear and logarithmic
$V$ respectively, are linearly tractable, explaining why they are
ubiquitously used for examples in symmetric first-price auction models. 

However these forms are quite restrictive in that they cannot, for
example, have the bell shape usually found in empirical studies of
valuation distributions in auctions \citep{hailetamer,hortacsu}.
Our forms can easily generate such shapes and thus allow tractable
examples with realistic value distributions.

\subsubsection{Auctions v. posted prices \citep*{onlineauctions}}\label{AppendixAuctionsVPostedPrices}

\citet{onlineauctions} consider the trade-off a seller faces between
using an auction and setting a posted price in an online retail market.  They assume sellers of goods know the common (positive) hassle cost $\lambda$ for buyers to participate in an auction, but may still use an auction because they do not know their common value $v$ of the good. The seller has an opportunity cost
of selling $c$, and $v$ is drawn from a distribution
$F$ that the seller knows. Assuming, as the authors do, that at least two bidders participate,
the auction guarantees that the seller gets value $v-\lambda$ as
long as $v-\lambda\geq r$, where $r$ is the reserve price the seller
sets. Alternatively, the seller may set a posted price $p$, in which
case she will sell the good if $v\geq p$.

Let $P(q)\equiv F^{-1}(1-q).$ If a seller sells the good with probability
$q$, then in an auction with the reserve price set to $P(q)-\lambda$
she will receive 
an average price $\overline{U}(q)-\lambda$, where $\overline{U}(q)\equiv \nicefrac{\int_{0}^{q}P(x)dx}{q}$
by the same logic as in Supplementary Material \ref{AppendixTwoSidedPlatforms}. If the seller sells the good with probability $q$ with a posted
price by setting price $P(q)$, she will receive price $P(q)$ with
certainty. Thus the region in which she wishes to use an auction rather
than a posted price is when she wishes to sell with probability $q$
such that $\overline{U}(q)>P(q)+\lambda$. As noted in Supplementary Material
\ref{AppendixTwoSidedPlatforms}, $\overline{U}$ has the average-marginal
relationship to $P$. For this reason, if $P$ is specified according to an average-marginal
form-preserving class including a constant term (power $0$ term), then
the resulting optimal cut-off rules for using a definite mechanism are
tractable at the level of tractability
of the class (in terms of both the cost and the desired probability of sale, which
is more directly observed in the authors' data). 

\citeauthor{onlineauctions} present such an example,
by assuming a uniform distribution and thus a linear form for $P$. 
In this case $\overline{U}-P$ uniformly grows in $q$. This implies
that sellers with a low cost (low opportunity cost of sale),
 such as impatient private individuals clearing old property out of the house, who wish to achieve sale with high probability
(quickly) will use auctions.  On the other hand, those who have a high cost, such as professional vendors,  who 
want to achieve a sale with low probability (slowly) to wait will
set a high posted price. However this is not generally true. If $P$
takes a constant elasticity form, for example, the reverse pattern
holds: low-cost sellers set a low posted price and sell quickly while
high cost (patient) sellers run an auction. 

For the bell-shaped demands that appear to fit \citeauthor{onlineauctions}'s
data best, the gap between $\overline{U}$ and $P$ is actually non-monotone,
first declining an then rising. This suggests auctions should be polarized
into goods that sell with very low and very high probability; that is among those clearing out their houses and among the most professional sellers. This
is in fact what the authors find; they cannot even measure the posted-price
demand curve at very low sale probabilities as they do not observe
sufficiently many items selling that infrequently with posted prices, while
the same is true at very high probabilities. This suggests richer
classes of tractable, form-preserving demand may be more useful in
modeling this trade-off than is the uniform distribution.

\subsection{Selection markets}\label{AppendixSelectionMarkets}

\citet{akerlof1970market} analyzed markets where the cost of providing
a service differs by the identity of the consumer to whom it is provided.
He studies a case that he labels ``adverse selection'' in which
consumers differ in only a single characteristic and in which raising this
one dimension increases both consumers' willingness-to-pay for the
product and the cost of serving them. \citet{einav2010estimating}
and \citet{einav2011selection} maintain \citeauthor{akerlof1970market}'s
assumption of a single product but allow consumers to differ along
multiple dimensions that may impact their willingness to pay and cost
in potentially rich ways.

\citet{einav2010estimating} define an inverse demand curve $P(q)$
for $q\in(0,1)$ as the willingness to pay of the individual in the
$(1-q)$th  quantile of the willingness-to-pay distribution. They define
average cost $AC(q)$ as the average cost of individuals who are in
the quantiles above $1-q$ of the willingness-to-pay distribution. They
argue that perfectly competitive equilibrium requires $AC(q)=P(q)$
while social optimization requires $MC(q)=P(q)$, where $MC$ has
the average-marginal relationship to $AC$. \citet{imperfectcomp}
extend this framework to nest a variety of models of imperfect competition
using a conduct parameter $\theta$ as in Subsection \ref{sub:Imperfectly-competitive-supply}
above and show that equilibrium is characterized by $\theta MC(q)+(1-\theta)AC(q)=(1-\theta)P(q)+\theta MR(q)$.

As is clear by now, both sides of this equation are tractable for
any value of $\theta$ at whatever level the cost and demand side
are specified if these are chosen to be part of a form-preserving
class. Many analyses have assumed linear forms on both the cost and
demand side \citep{cutler,einav2010estimating,einav2011selection},
partly for tractability. As \citet{scheuersmetters} highlight, this
assumption rules out many interesting phenomena, such as selection
that is ``advantageous'' (higher willingness to pay correlating
with lower cost) over some range but adverse over other ranges or
multiple local competitive equilibria that \citeauthor{scheuersmetters}
argue may have challenged the introduction of the Affordable Care
and Patient Protection Act in the United States. Broader tractable
form-preserving classes, especially those with bell-shaped demand
and cost curves, allow these possibilities and appear to fit existing
empirical evidence more closely.

\subsection{Monopolistic competition}\label{AppendixMonopolisticCompetition}

\subsubsection{Tractable generalizations of the Dixit-Stiglitz framework with separable utility}

In the simplest monopolistic competition model, consumers derive their
utility from a continuum of varieties $\omega\in\Omega$ of a single
heterogeneous good in a separable way:
$%
U_{\Omega}=\int_{\Omega}u_{\omega}\left(q_{\omega}\right)\, d\omega.\ \label{eq:SeparableUtility}
$%
 In the original Dixit-Stiglitz model specialized to the case of constant elasticity of substitution
$\sigma$, $u_{\omega}(q_{\omega})$ is a power of the consumed quantities
$q_{\omega}$: $u_{\omega}(q_{\omega})\propto q_{\omega}^{1-{1}/{\sigma}}$.
In our generalization we wish to be able to apply Theorem \ref{TheoremClosedFormSolutions}, 
so we let  $u\left(q_{\omega}\right)$ be
a linear  combination of different powers of $q_{\omega}$. More
explicitly, consumer optimization requires that marginal utility of
extra spending is equalized across varieties: $u_{\omega}'\left(q_{\omega}\right)=\lambda P_{\omega}$,
where $P_{\omega}$ is the price of variety $\omega$ and $\lambda$
is a Lagrange multiplier related to consumers' wealth. To ensure tractability,
we let the residual inverse demand $P_{\omega}\left(q_{\omega}\right)={u_{\omega}'\left(q_{\omega}\right)}/{\lambda}$
and the corresponding revenue $R_{\omega}\left(q_{\omega}\right)$
be linear combinations of equally-spaced powers of $q_{\omega}$:
$%
P_{\omega}\left(q_{\omega}\right)=\sum_{t\in T}p_{\omega,t}q_{\omega}^{-t},\quad R_{\omega}\left(q_{\omega}\right)=\sum_{t\in T}p_{\omega,t}q_{\omega}^{1-t}\
$%
for some finite and evenly-spaced set $T$, with the number of elements
of $T$ determining the precise degree of tractability. For convenience
of notation, we choose a num\' eraire in a way that keeps $P_{\omega}\left(q_{\omega}\right)$
for a given $q_{\omega}$ independent of macroeconomic circumstances.

Each variety of the differentiated good is produced by a single firm.
We assume that the marginal cost and average cost of production can
be written as
$%
MC_{\omega}(q)=\sum_{t\in T}mc_{\omega,t}q_{\omega}^{-t},\quad AC_{\omega}(q)=\sum_{t\in T\cup\left\{ 1\right\} }ac_{\omega,t}q_{\omega}^{-t},\
$%
where $mc_{\omega,t}=\left(1-t\right)ac_{\omega,t}$. A constant component
of average cost (and marginal cost) would correspond to $ac_{\omega,0}$
and a fixed cost would correspond to $ac_{\omega,1}$. However, given
the generality possible here we do not necessarily have to assume
that these components are present in all models under consideration.

With this specification, Theorem \ref{TheoremClosedFormSolutions} applies and the firm's problem has closed-form solutions unless $T$ has six elements or more. Moreover, if firms are heterogeneous in their productivity, then
\ref{TheoremAggregation} leads to closed-form aggregation integrals for suitable choices of the productivity distribution, as in the case of a generalized Melitz model discussed below.

\subsubsection{Tractable generalizations of the D-S framework with non-separable utility}

Here we briefly discuss tractable monopolistic competition in the case of non-separable utility.\footnote{
In the case of heterogeneous firms, this generalization contains as special cases both the \citeauthor{melitz} model
and the \citeauthor{melitzottaviano} model. 
To be more precise, let us note that in addition to the heterogeneous-good varieties explicitly considered
here, the \citeauthor{melitzottaviano} model includes a homogeneous
good. In our discussion, the homogeneous good is absent, but adding
it to the model is straightforward.%
} The utility has the very general form
\[
U_{\Omega}\equiv F\left(U_{\Omega}^{\left(1\right)},U_{\Omega}^{\left(2\right)},...,U_{\Omega}^{\left(m\right)}\right),\quad U_{\Omega}^{\left(i\right)}\equiv\int_{\Omega}U^{\left(i,\omega\right)}\left(q_{\omega}\right)\, d\omega.
\]
In order to preserve tractability, we assume that $U^{\left(i,\omega\right)}\left(q_{\omega}\right)$
are linear combinations\footnote{Of course, without loss of generality we could assume that $U^{\left(i,\omega\right)}\left(q_{\omega}\right)$
are power functions and let the function $F$ combine them into any
desired linear combinations. However, for clarity of notation it is
preferable to keep the number $m$ of different expressions $U_{\Omega}^{\left(i\right)}$
small.%
} of equally-spaced powers of $q_{\omega}$ and that the set of exponents
does not depend on $i$ or $\omega$. For example, we could specify
$U_{\Omega}\equiv U_{\Omega}^{\left(1\right)}+\kappa_{1}(U_{\Omega}^{\left(1\right)})^{\xi_{1}}+\kappa_{2}(U_{\Omega}^{\left(2\right)})^{\xi_{2}}$,
$U_{\Omega}^{\left(1\right)}\equiv\int_{\Omega}q_{\omega}^{\gamma_{1}}\, d\omega$,
and $U_{\Omega}^{\left(2\right)}\equiv\int_{\Omega}q_{\omega}^{\gamma_{2}}\, d\omega$,
with $(\gamma_{1}+1)/(\gamma_{2}+1)$ equal to the ratio of two small
integers. In the language of heterogeneous-firm models, the choice $\kappa_{1}=\kappa_{2}=0$ corresponds to the
\citeauthor{melitz} model, while the choice $\xi_{1}=2$, $\xi_{2}=1$,
$\gamma_{1}=1$, and $\gamma_{2}=2$ gives the \citeauthor{melitzottaviano}
model, which is based on a non-homothetic quadratic utility. Our general
specification allows also for homothetic non-separable utility functions
that feature market toughness effects analogous to those in the \citeauthor{melitzottaviano}
model.

It is straightforward to verify that as in the separable-utility case, Theorems \ref{TheoremClosedFormSolutions} and \ref{TheoremAggregation} still apply and 
lead to closed-form solutions to the firm's problem and closed-form aggregation. This is because the structure of the firm's problem is unchanged. Non-separability only makes the resulting system of equations for macroeconomic aggregates more complex.  The system itself may still be written in closed form due to Theorem \ref{TheoremAggregation}, under appropriate assumptions on the productivity distribution.

\subsubsection{Tractable generalizations of the Dixit-Stiglitz framework}

In the baseline monopolistic competition model consumers derive their
utility from a continuum of varieties $\omega\in\Omega$ of a single
heterogeneous good:
\begin{equation}
U_{\Omega}=\int_{\Omega}u_{\omega}\left(q_{\omega}\right)\, d\omega.\label{eq:SeparableUtility}
\end{equation}
In the original Dixit-Stiglitz model with constant elasticity of substitution
$\sigma$, $u_{\omega}(q_{\omega})$ is a power of the consumed quantities
$q_{\omega}$: $u_{\omega}(q_{\omega})\propto q_{\omega}^{1-\nicefrac{1}{\sigma}}$.
In our generalization $u\left(q_{\omega}\right)$ is assumed to be
a function of a combination of different powers of $q_{\omega}$ . More
explicitly, consumer optimization requires that marginal utility of
extra spending is equalized across varieties: $u_{\omega}'\left(q_{\omega}\right)=\lambda P_{\omega}$,
where $P_{\omega}$ is the price of variety $\omega$ and $\lambda$
is a Lagrange multiplier related to consumers' wealth. To ensure tractability,
we let the residual inverse demand $P_{\omega}\left(q_{\omega}\right)=\nicefrac{u_{\omega}'\left(q_{\omega}\right)}{\lambda}$
and the corresponding revenue $R_{\omega}\left(q_{\omega}\right)$
be linear combinations of equally-spaced powers of $q_{\omega}$:
\[
P_{\omega}\left(q_{\omega}\right)=\sum_{t\in T}p_{\omega,t}q_{\omega}^{-t},\quad R_{\omega}\left(q_{\omega}\right)=\sum_{t\in T}p_{\omega,t}q_{\omega}^{1-t}
\]
for some finite and evenly-spaced set $T$, with the number of elements
of $T$ determining the precise degree of tractability. For the convenience
of notation, we choose a num\'eraire in a way that keeps $P_{\omega}\left(q_{\omega}\right)$
for a given $q_{\omega}$ independent of macroeconomic circumstances.

Each variety of the differentiated good is produced by a single firm.
We assume that the marginal cost and average cost of production can
be written as
\[
MC_{\omega}(q)=\sum_{t\in T}mc_{\omega,t}q_{\omega}^{-t},\quad AC_{\omega}(q)=\sum_{t\in T\cup\left\{ 1\right\} }ac_{\omega,t}q_{\omega}^{-t},
\]
where $mc_{\omega,t}=\left(1-t\right)ac_{\omega,t}$. A constant component
of average cost (and marginal cost) would correspond to $ac_{\omega,0}$
and a fixed cost would correspond to $ac_{\omega,1}$. However given
the generality possible here we do not necessarily have to assume
that these components are present in all models under consideration.

\subsubsection{Flexible \citeauthor{krugman} model}

The \citet{krugman} model of trade, featuring monopolistic competition
and free entry of identical single-product firms, may be solved explicitly
for the tractable demand and cost functions mentioned above, not just
constant-elasticity demand and constant marginal cost specified in
the original paper. Here we consider these solutions in the case of
two symmetric countries, which leads to a symmetric equilibrium.

There is a continuum of identical consumers with preferences as in
Equation \ref{eq:SeparableUtility} who earn labor income. The amount
of labor a firm needs to hire in order to produce quantity $q$ may
be split into a fixed part $f$ and a variable part $L\left(q\right)$
that vanishes at zero quantity. Both $L\left(q\right)$ and the revenue
function $R\left(q\right)$ are assumed to allow for a linear term.
The firm only uses labor for production, so its total cost is $w\left(L\left(q\right)+f\right)$,
where $w$ is the competitive wage rate. Having produced quantity
$q$, the firm splits it into $q_{d}$ to be sold domestically, and
$\tau q_{x}$ to be shipped abroad. Due to\emph{ }iceberg-type\emph{
}trade costs ($\tau\ge1$), a fixed fraction of the shipped good is
lost during transport, and only quantity $q_{x}$ is received in the
other country. (Non-iceberg trade costs are considered in the appendix.)
Let us denote the equilibrium level of marginal cost, measure of firms,
international trade flows, and welfare by $MC$$^{\star}$, $N^{\star}$,
$X^{\star}$, and $W^{\star}$, respectively, and similarly for other
variables. The total labor endowment of one of the two symmetric economies
is $L_{E}$. \medskip{}

\begin{observation} There exists an explicit map $MC^{\star}\rightarrow\left(f,q_{d}^{\star},q_{x}^{\star},w^{\star}\right)$
and an explicit map $\left(MC^{\star},L_{E}\right)\rightarrow\left(N^{\star},X^{\star},W^{\star}\right)$.
These relationships represent a closed-form solution to the model
in terms of $MC^{\star}$ and exogenous parameters. \end{observation}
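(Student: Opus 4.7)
The plan is to parametrize the symmetric equilibrium by $MC^\star$ and show that every other equilibrium object admits a closed-form expression in it. The representative firm's profit is $\pi=R(q_d)+R(q_x)-w\bigl(L(q_d+\tau q_x)+f\bigr)$, where by symmetry and the choice of num\'eraire the revenue function is the same on the domestic and foreign markets. Differentiating with respect to $q_d$ and $q_x$ gives
\begin{equation*}
R'(q_d^\star)=MC^\star,\qquad R'(q_x^\star)=\tau\,MC^\star,\qquad MC^\star\equiv w^\star L'(q_d^\star+\tau q_x^\star).
\end{equation*}

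Because $R$ is a linear combination of evenly spaced powers, so is $R'$, so each of these two FOCs is a polynomial equation of exactly the degree characterized by Theorem~\ref{TheoremClosedFormSolutions}. Solving them by the method of radicals yields explicit $q_d^\star(MC^\star)$ and $q_x^\star(MC^\star)$, from which the total production $q^\star\equiv q_d^\star+\tau q_x^\star$ is immediate. The identity $w^\star=MC^\star/L'(q^\star)$ then gives $w^\star$ in closed form (since $L'$ is a combination of known powers of $q^\star$), and the free-entry zero-profit condition $R(q_d^\star)+R(q_x^\star)=w^\star\bigl(L(q^\star)+f\bigr)$ can be rearranged algebraically for $f$. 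This completes the first map $MC^\star\mapsto(f,q_d^\star,q_x^\star,w^\star)$.

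For the second map, labor-market clearing $N^\star\bigl(L(q^\star)+f\bigr)=L_E$ gives $N^\star=L_E/\bigl(L(q^\star)+f\bigr)$ as a simple quotient, the aggregate bilateral trade flow is $X^\star=N^\star R(q_x^\star)$, and welfare follows from the separable per-variety utility evaluated at the equilibrium bundle of $N^\star$ domestic and $N^\star$ imported varieties, giving $W^\star=N^\star\bigl(u(q_d^\star)+u(q_x^\star)\bigr)$ up to a per-capita rescaling. Each is manifestly a closed-form function of $(MC^\star,L_E)$.

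The main obstacle is essentially bookkeeping: one must verify that the powers appearing in $R'$ are aligned so that Theorem~\ref{TheoremClosedFormSolutions} can be invoked at the advertised level of tractability, and check that the expression $w^\star=MC^\star/L'(q^\star)$ is genuinely explicit, which it is because $L'(q^\star)$ is a polynomial in the already-closed-form quantity $q^\star$. The reason no fixed-point iteration is required is that parametrizing by $MC^\star$ converts the simultaneous determination of prices and quantities into a one-parameter sweep in which the two firm-side FOCs, the wage equation, the zero-profit condition, and the labor-market clearing decouple cleanly in sequence.
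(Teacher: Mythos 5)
Your proof is correct and follows essentially the same route as the paper: parametrize by $MC^{\star}$, solve the two sales FOCs by radicals (invoking Theorem \ref{TheoremClosedFormSolutions}), recover $w^{\star}$ from $MC^{\star}/L'$, back out $f$ from the zero-profit condition, and then obtain $N^{\star}$, $X^{\star}$, $W^{\star}$ as explicit functions of the quantities already computed. The only difference is that you spell out candidate closed forms for $X^{\star}$ and $W^{\star}$ where the paper simply notes they follow; that is a harmless elaboration.
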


To see briefly why this is the case, it is convenient to express the model's equations in terms of
the equilibrium level of marginal cost $MC^{\star}$.%
\footnote{The case of a single country corresponds to the Dixit-Stiglitz model.
It may be obtained from our two-country discussion by setting $\tau\rightarrow\infty$
and $q_{x}=0$. In this case one does not have to express the model's
equations in terms of the equilibrium level of marginal cost $MC^{\star}$
as we do below. Instead, for tractable functions $R\left(q\right)$
and $L\left(q\right)$ one can solve for equilibrium quantity $q^{\star}$
in closed form (in terms of the fixed cost of production $f$) from
an equation that combines profit maximization and free entry: $(L(q)+f)R'(q)=R(q)L'(q)$.%
} Output optimally designated for the domestic market and the export
market will satisfy $R'(q_{d})=MC^{\star}$ and $R'(q_{x})=\tau\, MC^{\star}$,
respectively, and therefore may be solved for in closed form in terms
of $MC^{\star}$ for tractable specifications of the revenue function
(or consumer preferences).%
\footnote{As mentioned in the paper, a convenient choice of num\'eraire allows
us to keep the revenue function $R\left(.\right)$ independent of
economic circumstances.%
} The same is true for wages since $w=MC^{\star}/L'(q_{d}+\tau q_{x})$.

For a chosen $MC^{\star}$ we may compute the level of fixed cost
$f$ consistent with it using the free-entry condition: $R(q_{d})+R(q_{x})=wL(q_{d}+\tau q_{x})+wf$.
The equilibrium number (measure) $N^{\star}$ of firms in each economy
then satisfies $N^{\star}=L_{E}/(L\left(q_{d}+q_{x}\right)+f)$, where
$L_{E}$ is the labor labor endowment one of the two economies.%
\footnote{$L_{E}$ may be exogenous, as in the original Krugman model, but even
for endogenous labor supply, it is possible to obtain fully explicit
solutions to the model in terms of the parameter $MC^{\star}$.%
} Other variables of interest, e.g. trade flows or welfare, are then
simply functions of the ones discussed above.

\paragraph*{Krugman model with non-iceberg and iceberg international trade costs.}

Although the Krugman model with non-iceberg trade costs is not our
main focus here, we mention it for completeness. Let us assume the
presence of non-iceberg international trade costs that require hiring
labor $L_{T}\left(q_{x}\right)$ in order for $q_{x}$ to reach its
destination in the other country.%
\footnote{In a symmetric equilibrium it does not matter how this labor is split
between the countries, as long as the symmetry of the model is maintained.
For asymmetric countries, we could assume that the transport requires
labor from both countries. The model may be solved in terms of marginal
costs of serving each market.%
} The export FOC is now $R'(q_{x})-wL'_{T}(q_{x})=\tau MC^{\star}$,
while the free entry condition becomes $R(q_{d})+R(q_{x})=wL(q_{d}+\tau q_{x})+wL_{T}(q_{x})+wf$.
The resulting number (measure) of firms is $N^{\star}=L_{E}/((L\left(q_{d}+q_{x}\right)+f)+L_{T}(q_{x}))$.
The model may be solved explicitly along the same lines in terms of
chosen $MC^{\star}$ and $w$, with $f$ and $\tau$ treated as derived
quantities.

\subsubsection{Flexible \citeauthor{melitz} model }\label{AppendixFlexibleMelitzModel}

The \citet{melitz} model is again based on monopolistic competition
and assumes a constant elasticity of substitution between heterogeneous-good
varieties. Relative to the \citet{krugman} model, it introduced a
novel channel for welfare gains from trade, namely increased average
firm productivity resulting from trade liberalization or analogous
decreases in trade costs. Here we generalize the model to allow for
more flexible demand functions, non-constant marginal costs of production,
and trade costs that may have components that are neither iceberg-type
nor constant per unit.

\paragraph*{Single country. }

For clarity of exposition, we first describe the flexible and tractable
version of the \citeauthor{melitz} model in the case of a single
country and later discuss its generalization. Just like the \citeauthor{krugman}
model, it involves two types of agents: monopolistic single-product
firms and identical consumers, who supply their labor in a competitive
labor market and consume the firms' products.%
\footnote{For simplicity, consumers do not discount the future, although it would
be easy to incorporate an explicit discount factor. Formally, the
model includes an infinite number of periods, but it may be thought
of as a static model because the equilibrium is independent of time.%
}

Labor is the only factor of production: all costs have the interpretation
of labor costs and are proportional to a competitive wage rate $w$.
Each heterogeneous-good variety is produced by a unique single-product
firm, which uses its monopolistic market power to set marginal revenue
equal to marginal cost. Demand and costs are specified tractably as
discussed above; this time we do not need to assume that variable
cost and revenue functions allow for a linear term.

If the firm is not able to make positive profits, it is free to exit
the industry. In situations of main interest, this endogenous channel
of exit is active: there exist firms that are indifferent between
production and exit. There is also an exogenous channel of exit: in
every period with probability $\delta_{e}$ the firm is forced to
permanently shut down.

Entry into the industry is unrestricted but comes at a fixed one-time
cost $wf_{e}$. Only after paying this fixed cost, the entering firm
observes a characteristic $a$, drawn from a distribution with cumulative
distribution function $G\left(a\right),$ that influences the firm's
cost function. In the original \citeauthor{melitz} model the constant
marginal cost of production is equal to $wa$. Here we leave the specification
more general while maintaining the convention that increasing $a$
increases the firm's cost at any positive quantity $q$. In expectation,
the stream of the firm's profits must exactly compensate the (risk-neutral)
owner for the entry cost, which implies the \emph{unrestricted entry
condition} $wf_{e}=\mathbb{E}\Pi\left(q;a\right)/\delta_{e}$, with
the profit $\Pi\left(q;a\right)$ evaluated at the optimal quantity.%
\footnote{In the case of a single country, the profit is simply $\Pi\left(q;a\right)=q\left[P\left(q;a\right)-AC\left(q;a\right)\right]$.
Also, note that the unrestricted entry condition is often referred
to as the \emph{free entry condition}, but here we avoid this term
since there is a positive entry cost.%
}

The amount of labor needed to produce quantity $q$ is $L\left(q;a\right)+f$,
where $L\left(q;a\right)$ corresponds to variable cost ($L\left(0;a\right)=0$)
and $f$ to a fixed cost. $L\left(q;a\right)$ is assumed to be tractable
with respect to $q$, but also with respect to $a$.%
\footnote{For example, the function $L\left(q;a\right)$ could be linear in
$a$, as would be the case in the original \citeauthor{melitz} model.
A simple example of a tractable choice of functional forms is $L\left(q\right)=\tilde{L}\left(q\right)+a\hat{L}\left(q\right)$,
$\hat{L}\left(q\right)\equiv q^{t}$, $\tilde{L}\left(q\right)\equiv\tilde{\ell}_{t}q^{-t}+\tilde{\ell}_{u}q^{-u}$,
and $R\left(q\right)=r_{t}q^{-t}+r_{u}q^{-u}$.%
} In terms of the labor requirement function $L\left(q;a\right)$,
the firm profit maximization condition and the z\emph{ero cutoff profit
condition} are $R'\left(q\right)=wL'\left(q;a\right)$ and $R(q_{c})=wL(q_{c};a_{c})+wf$,
where $q_{c}$ and $a_{c}$ correspond to a \emph{cutoff firm}, i.e.
a firm that is in equilibrium indifferent between exiting and staying
in the industry. We denote by $L_{E}$, $M^{\star}$,$M_{e}^{\star}$,
$W^{\star}$ the labor endowment, and the equilibrium measure of firms,
measure of entering firms, and level of welfare, respectively.\medskip{}

The firm profit maximization condition and the free entry condition
are\vspace{-0.7cm}

\begin{gather}
R'\left(q\right)=wL'\left(q;a\right),\label{eq:MelitzOneCountryA}\\
R(q_{c})=wL(q_{c};a_{c})+wf.\label{eq:MelitzOneCountryB}
\end{gather}
A convenient solution strategy is to choose $q_{c}$ and then calculate
$f_{e}$ as a derived quantity. For a chosen $q_{c}$ we can find
$a_{c}$ explicitly by combining (\ref{eq:MelitzOneCountryA}) and
(\ref{eq:MelitzOneCountryB}) into $R'\left(q_{c}\right)\left(L(q_{c};a_{c})+f\right)$=
$R(q_{c})L'\left(q_{c};a_{c}\right)$, since $L\left(q;a\right)$
is assumed to be tractable also with respect to $a$. Wages are then
given recovered from (\ref{eq:MelitzOneCountryB}): $w=R(q_{c})/(L(q_{c};a_{c})+f)$.

Now we need to show how to calculate the fixed cost of entry $f_{e}$
and the measure of firms. The fixed cost of entry consistent with
the chosen cutoff quantity is given simply by the unrestricted entry
condition:
\[
w\delta_{e}f_{e}=\bar{\Pi}=\intop_{q\geq q_{c}}\left(R\left(q\right)-wL\left(q;a\right)-wf\right)dG\left(a\left(q\right)\right).
\]
Here $a\left(q\right)$ is the firm's productivity parameter as an
explicit function of the optimally chosen quantity $q$ that results
from using (\ref{eq:MelitzOneCountryA}). For Pareto $G$, and $L$
and $R$ tractable from the point of view of $q$ (but not necessarily
having a linear term) and $L\left(q;a\right)$ linear in $a$, there
exist closed-form expressions for this integral in terms of special
functions, which are straightforward to derive, especially if one
uses symbolic manipulation software such as Mathematica. If the shape
parameter of the Pareto distribution is a negative integer, the integrals
actually reduce to simple power functions.

If $M_{e}$ denotes the measure of firms that enter each period (in
one country), then the measure of operating firms is $M=G\left(a_{c}\right)M_{e}/\delta_{e}$.
The total labor used in the economy is given by $L_{E}=M_{e}f_{e}+Mf+M\bar{L}$,
where $\bar{L}=G\left(a_{c}\right)^{-1}\intop_{q\geq q_{c}}L\left(q;a\right)dG\left(a\left(q\right)\right)$
is the labor on average hired for the variable cost of production.
Under the same assumptions, the integral again has an explicit form
in terms of special functions. We see that in these cases we can get
fully explicit expressions for $f_{e}$ and $M$ in terms of chosen
$q_{c}$ and $L_{E}$.

Other quantities of interest, such as trade flows or welfare, may
be found in an analogous, straightforward fashion.

\paragraph{Two countries with non-iceberg and iceberg international trade costs.}

Just like in the case of the flexible Krugman model, it is convenient
to write the model in terms of equilibrium marginal cost, which this
time is firm-specific and also depends on the firm's chosen export
status. For tractability we will need the revenue function $R\left(q\right)$
and the production labor requirement function $L\left(q;a\right)$
to allow for a linear term. The same is true for labor corresponding
to the non-iceberg trade costs, here denoted by $L_{T}\left(q_{x}\right)$.
As in the original \citet{melitz} paper, we consider equilibria characterized
by two cutoffs, here denoted $a_{1}$ and $a_{2}$, such that least
productive firms with $a>a_{1}$ exit, more productive firms with
$a\in(a_{2},a_{1}]$ serve only their domestic market, and most productive
firms with $a\leq a_{2}$ serve both countries. In general, we denote
the equilibrium marginal cost of a non-exporting firm as $MC_{n}^{\star}$
and that of an exporting firm as $MC_{x}^{\star}$. Variables corresponding
to the two cutoffs are distinguished by subscripts $1$ and $2$,
so for example $MC_{1n}^{\star}$ is the optimal marginal cost of
a firm with $a=a_{1}$, and $MC_{2x}^{\star}$ and $MC_{2n}^{\star}$
are optimal marginal costs of a firm with $a=a_{2}$ that decides
to export or not to export, respectively. We denote by $M_{x}^{\star}$
and $X^{\star}$ the equilibrium measure of exporting firms and international
trade flows.

Our solution strategy is to treat $MC_{1n}$ and $MC_{2x}$ as given
and to express other variables of the model in terms to these two
chosen parameters. In particular, we will show how to derive explicit
expressions for the fixed cost of exporting $f_{x}$ and cost of entry
$f$$_{e}$. The (variable-cost) labor requirement $L\left(q;a\right)$
is assumed to be a tractable combination of equidistant powers of
$a$, with coefficients that in general depend on $q$. Firms' profit
maximization leads to the set of equations: \begin{subequations}
\begin{gather}
MC_{n}=R'(q_{n})\label{eq:MelitzTwoCountriesA}\\
MC_{n}=wL'\left(q_{n};a\right)\label{eq:MelitzTwoCountriesB}\\
MC_{x}=R'(q_{d})\label{eq:MelitzTwoCountriesC}\\
MC_{x}=\begin{array}{c}
\frac{1}{\tau}\end{array}\!\! R'(q_{f})-\begin{array}{c}
\frac{1}{\tau}\end{array}\!\! wL_{T}'(q_{f})\label{eq:MelitzTwoCountriesD}\\
MC_{x}=wL'(q_{d}\!+\!\tau q_{f};a)\label{eq:MelitzTwoCountriesE}\\
R(q_{1n})-wL(q_{1n};a_{1})=f\label{eq:MelitzTwoCountriesF}\\
R(q_{2d})+R(q_{2f})-wL(q_{2d}+\tau q_{2f};a_{2})-wL_{T}(q_{2f})=f+f_{x}\label{eq:MelitzTwoCountriesG}
\end{gather}
\end{subequations}Here $q_{n}$ is the quantity sold by a non-exporting
firm, while $q_{d}$ and $q_{f}$ represent quantities that reach
domestic and foreign customers of an exporting firm, respectively.
In addition to exporting cost $wL_{T}(q_{f})$, we allow for an iceberg
trade cost factor $\tau\ge1$.

For a chosen $MC_{1n}$, we can calculate $q_{1n}$ from (\ref{eq:MelitzTwoCountriesA}).
The corresponding $a_{1}$ may be found by solving a linear equation
that results from combining (\ref{eq:MelitzTwoCountriesB}) and (\ref{eq:MelitzTwoCountriesF})
in a way that eliminates wages. Wages then may be recovered by substituting
back to (\ref{eq:MelitzTwoCountriesB}).

For a chosen $MC_{2x}$, we can derive $q_{2d}$ from (\ref{eq:MelitzTwoCountriesC})
and $q_{2f}$ from (\ref{eq:MelitzTwoCountriesD}). The value of $a_{2}$
is then determined by (\ref{eq:MelitzTwoCountriesE}). We find $q_{2n}$
by solving (\ref{eq:MelitzTwoCountriesA}) and (\ref{eq:MelitzTwoCountriesB})
with $MC_{2n}$ eliminated, and then in turn use one of these to find
$MC_{2n}$. This means that we know the marginal cost at the cutoffs.
The fixed cost of exporting $f_{e}$ is then identified from (\ref{eq:MelitzTwoCountriesG}).

For a given marginal cost, we can find the corresponding quantities
and productivity parameters $a$ by a similar method from (\ref{eq:MelitzTwoCountriesA}-\ref{eq:MelitzTwoCountriesE}),
this time treating $w$ as known. We denote the resulting functions
$q_{n}\left(MC_{n}\right)$, $q_{d}\left(MC_{x}\right)$, $q_{x}\left(MC_{x}\right)$,
$a_{n}\left(MC_{n}\right)$, and $a_{x}\left(MC_{x}\right)$. Using
these functions we can now determine the entry labor requirement $f_{e}$
from the unrestricted entry condition:
\[
w\delta_{e}f_{e}=\bar{\Pi}=\intop_{y\in S_{n}}\Pi\left(q_{n}\left(y\right);a_{n}\left(y\right)\right)dG\left(a_{n}\left(y\right)\right)+\intop_{y\in S_{x}}\Pi\left(q_{x}\left(y\right);a_{x}\left(y\right)\right)dG\left(a_{x}\left(y\right)\right),
\]
where $\Pi$ is the profit function (revenue minus cost), $G\left(a\right)$
is the cumulative distribution function of $a$, and the integration
ranges are $S_{n}\equiv\left(MC_{2n},MC_{1n}\right)$ and $S_{x}\equiv\left(0,MC_{1n}\right)$.
Under various assumptions these integrals may be evaluated in closed
form, often involving special functions. If a measure $M_{e}$ of
firms enters each period (in one of the countries), then the equilibrium
measure of operating firms is $M=M_{e}G\left(a_{1}\right)/\delta_{e}$
and that of exporting firms is $M_{x}=M_{e}G\left(a_{2}\right)/\delta_{e}$.
These measures may be calculated from the labor market clearing condition
$M_{e}f_{e}+Mf+M_{x}f_{x}+(M-M_{x})\bar{L}_{n}+M_{x}\bar{L}_{x}=L_{E},$
where

\[
\bar{L}_{n}\equiv\!\!\begin{array}{c}
\frac{1}{G\left(a_{1}\right)-G\left(a_{2}\right)}\end{array}\!\!\!\intop_{y\in S_{n}}\!\! L\left(q_{n}\left(y\right);a_{n}\left(y\right)\right)dG\left(a_{n}\left(y\right)\right)\!,\,\bar{L}_{x}\equiv\!\!\begin{array}{c}
\frac{1}{G\left(a_{2}\right)}\end{array}\!\!\!\intop_{y\in S_{x}}\!\! L\left(q_{x}\left(y\right);a_{x}\left(y\right)\right)dG\left(a_{x}\left(y\right)\right)\!.
\]
Under the same assumptions as before, these integrals may be evaluated
in closed form. Again, other variables of interest, such as trade
flows or welfare, may be obtained in a similar way.

\subsubsection{Flexible \citeauthor{melitz}/Melitz-Ottaviano model with non-separable
utility}

While a significant part of the international trade literature relies
on separable utility functions, there exist realistic economic phenomena
what are more easily modeled with non-separable utility. An instantly
classic alternative to the \citeauthor{melitz} model that uses non-separable
utility is the model of \citeauthor{melitzottaviano}, which assumes
that with a greater selection of heterogeneous-good varieties available
to consumers, the marginal gain from an additional variety decreases
relative to the gains from increased quantity. Trade liberalization
leads to tougher competition, which results not only in higher productivity
but also in the decrease of markups charged by a given firm.

Here we briefly discuss a generalization of the flexible Melitz model
where the utility function is allowed to be non-separable. This generalized
model contains as special cases both the \citeauthor{melitz} model
and the \citeauthor{melitzottaviano} model.%
\footnote{In addition to the heterogeneous-good varieties explicitly considered
here, the \citeauthor{melitzottaviano} model includes a homogeneous
good. In our discussion, the homogeneous good is absent but adding
it to the model is straightforward.%
} The utility is of the form
\[
U_{\Omega}\equiv F\left(U_{\Omega}^{\left(1\right)},U_{\Omega}^{\left(2\right)},...,U_{\Omega}^{\left(m\right)}\right),\quad U_{\Omega}^{\left(i\right)}\equiv\int_{\Omega}U^{\left(i,\omega\right)}\left(q_{\omega}\right)\, d\omega.
\]
In order to preserve tractability, we assume that $U^{\left(i,\omega\right)}\left(q_{\omega}\right)$
are linear combinations%
\footnote{Of course, without loss of generality we could assume that $U^{\left(i,\omega\right)}\left(q_{\omega}\right)$
are power functions and let the function $F$ combine them into any
desired linear combinations. However, for clarity of notation it is
preferable to keep the number $m$ of different expressions $U_{\Omega}^{\left(i\right)}$
small.%
} of equally-spaced powers of $q_{\omega}$ and that the set of exponents
does not depend on $i$ or $\omega$. For example, we could specify
$U_{\Omega}\equiv U_{\Omega}^{\left(1\right)}+\kappa_{1}(U_{\Omega}^{\left(1\right)})^{\xi_{1}}+\kappa_{2}(U_{\Omega}^{\left(2\right)})^{\xi_{2}}$,
$U_{\Omega}^{\left(1\right)}\equiv\int_{\Omega}q_{\omega}^{\gamma_{1}}\, d\omega$,
and $U_{\Omega}^{\left(2\right)}\equiv\int_{\Omega}q_{\omega}^{\gamma_{2}}\, d\omega$,
with $(\gamma_{1}+1)/(\gamma_{2}+1)$ equal to the ratio of two small
integers. The choice $\kappa_{1}=\kappa_{2}=0$ corresponds to the
\citeauthor{melitz} model, while the choice $\xi_{1}=2$, $\xi_{2}=1$,
$\gamma_{1}=1$, and $\gamma_{2}=2$ gives the \citeauthor{melitzottaviano}
model, which is based on a non-homothetic quadratic utility. Our general
specification allows also for homothetic non-separable utility functions
that feature market toughness effects analogous to those in the \citeauthor{melitzottaviano}
model.

It is straightforward to verify that just like the flexible \citeauthor{melitz}
model with separable utility, this more general version leads to tractable
optimization by individual firms, as well as for tractable aggregation
under the same conditions. The reason for the tractability of the
firm's problem is simple: the firm's first-order condition will have
the same structure as previously, a linear combination of equidistant
powers (with an additional dependence of the coefficients of the linear
combination on \emph{aggregate} variables of the type $\int_{\Omega}q_{\omega}^{\gamma}\, d\omega$
for some constants $\gamma$). Given that the nature of the firm's
problem is unchanged, it follows that being able to explicitly aggregate
over heterogeneous firms does not require any additional functional
form assumptions relative to the separable utility case.

\section{Demand Forms}

\label{forms}

\subsection{Curvature properties}

\label{curvature}

Table \ref{taxonomy} provides a taxonomy of the curvature properties
of demand functions generated by common statistical distributions
and the single-product version of the Almost Ideal Demand System.
Following \citet{socialnalebuff,imperfectnalebuff}, we define the
curvature of demand as 
\[
\kappa(p)\equiv\frac{Q''(p)Q(p)}{\left[Q'(p)\right]^{2}}.
\]
\citet{cournot} showed that the pass-through rate of a constant marginal
cost monopolist is 
\[
\frac{1}{2-\kappa}
\]
and thus that a) that the comparison of $\kappa$ to unity determines
the comparison of pass-through to unity in this case and b) that if
$\kappa'(p)>0$ that pass-through rises with price (falls with quantity),
and conversely if $\kappa$ declines with price (rises with quantity).
The comparison of $\kappa$ to unity also determines whether a demand
is log-convex and its sign whether demand is convex. The comparison
of $\kappa$ to $2$ determines whether demand has declining marginal
revenue, a condition also known as \citet{myerson}'s regularity condition.

For probability distribution $F$, the corresponding demand function
$Q(p)=s\left(1-F\left(\frac{p-\mu}{m}\right)\right)$ where $s$ and
$m$ are stretch parameters \citep{wt} and $\mu$ is a position parameter.
Note that in this case 
\[
\kappa(p)=-\frac{\frac{s^{2}}{m^{2}}F''\left(\frac{p-\mu}{m}\right)\left(1-F\left(\frac{p-\mu}{m}\right)\right)}{\frac{s^{2}}{m^{2}}\left[F'\left(\frac{p-\mu}{m}\right)\right]^{2}}=-\frac{F''\left(\frac{p-\mu}{m}\right)\left(1-F\left(\frac{p-\mu}{m}\right)\right)}{\left[F'\left(\frac{p-\mu}{m}\right)\right]^{2}}.
\]
Note, thus, that neither global level nor slope properties of $\kappa$
are affected by $s,m$ or $\mu$. We can thus analyze the properties
of relevant distributions independently of their values, as represented
in the table and the following proposition.

The most prominent conclusion emerging from this taxonomy is that
the vast majority of forms used in practice in computational, statistical
models such as \citet{blp} have monotonically increasing curvature
and most have curvature below unity. This suggests two conclusions.
The first, highlighted in the paper, is that, to the extent we believe
these forms are more realistic than tractable forms, they have properties
systematically differing from the BP class and thus it is important
to derive tractable forms capable of matching their central property
of monotonically increasing in price/decreasing in quantity curvature.

A second possible conclusion is that, to the extent that in some cases
these properties are {\em not} empirically relevant, such as in
the data of \citet{levinisgod}, standard forms rule out observed
behavior and thus analysts may wish to consider more flexible forms
along these dimensions, such as those we derive in the paper. To the
extent there are not strong theoretical reasons to believe in the
restrictions imposed by standard statistically based forms (which,
in many cases, there are) allowing such relaxation is important because
in many contexts the properties of firm demand and equilibrium are
inherited directly from the demand function, at least with constant
marginal cost \citep{weylfabinger,gabaixetal,quint}. Which conclusion
is most appropriate will obviously depend on the empirical context
and the views of the analyst.

\begin{table}[t]
{\scriptsize{}{}}%
\begin{tabular}{|c|c|c|c|c|}
\hline 
 & {\scriptsize{}{$\kappa<1$ }}  & {\scriptsize{}{$\kappa>1$ }}  & {\scriptsize{}{$\left.\begin{array}{c}
\text{Price-}\\
\text{dependent}
\end{array}\right.$ }}  & {\scriptsize{}{$\!\!\begin{array}{c}
\text{Parameter-}\\
\text{dependent}
\end{array}\!\!$ }}\tabularnewline
\hline 
{\scriptsize{}{$\begin{array}{c}
\kappa^{\prime}<0\end{array}$ }}  &  &  & {\scriptsize{}{AIDS with $b<0$ }}  & \tabularnewline
\hline 
{\scriptsize{}{$\begin{array}{c}
\kappa^{\prime}>0\end{array}$ }}  & {\scriptsize{}{$\left.\begin{array}{c}
\text{ Normal (Gaussian) }\\
\text{Logistic}\\
\text{Type I Extreme Value}\\
\text{(Gumbel)}\\
\text{Laplace}\\
\text{Type III Extreme Value}\\
\text{ (Reverse Weibull)}\\
\text{Weibull with shape}~\alpha>1\\
\text{Gamma with shape}~\alpha>1
\end{array}\right.$ }}  &  & {\scriptsize{}{$\left.\begin{array}{c}
\text{Type II}\\
\text{ Extreme Value}\\
\text{ (Fr\'echet) with}\\
\text{ shape}~\alpha>1
\end{array}\right.$ }}  & \tabularnewline
\hline 
{\scriptsize{}{$\begin{array}{c}
\text{Price-}\\
\text{dependent}
\end{array}$ }}  &  &  &  & \tabularnewline
\hline 
{\scriptsize{}{$\begin{array}{c}
\text{Parameter-}\\
\text{dependent}
\end{array}$ }}  &  &  &  & \tabularnewline
\hline 
{\scriptsize{}{$\begin{array}{c}
\text{Does not}\\
\text{globally }\\
\text{satisfy}\\
\kappa<2
\end{array}$ }}  &  & {\scriptsize{}{$\left.\begin{array}{c}
\text{Type II Extreme Value}\\
\text{ (Fr\'echet) with shape}~\alpha<1\\
\text{Weibull with shape}~\alpha<1\\
\text{Gamma with shape}~\alpha<1
\end{array}\right.$ }}  &  & \tabularnewline
\hline 
\end{tabular}\caption{A taxonomy of some common demand functions}

\label{taxonomy} 
\end{table}

\newtheorem{proposition}{Proposition}

\begin{proposition}\label{categorization} Table \ref{taxonomy}
summarizes global properties of the listed statistical distributions
generating demand functions. $\alpha$ is the standard shape parameter
in distributions that call for it. \end{proposition}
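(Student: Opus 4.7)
The plan is to reduce all entries of Table \ref{taxonomy} to a single, clean criterion involving the hazard rate of $F$, and then verify that criterion distribution-by-distribution. First I would observe that since $Q(p)=s\bigl(1-F\bigl(\tfrac{p-\mu}{m}\bigr)\bigr)$, the stretch and location parameters $s,m,\mu$ only rescale and translate the argument, so without loss of generality I work with $F$ directly; this was already noted in the excerpt. Writing $f\equiv F'$, the curvature becomes $\kappa = -f'(1-F)/f^{2}$.

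Next I would rewrite this in terms of the hazard rate $h(x)\equiv f(x)/(1-F(x))$. From $f=h(1-F)$ one gets $f'=(1-F)(h'-h^{2})$, and substituting yields the key identity
\begin{equation}
\kappa(x) \;=\; 1 - \frac{h'(x)}{h(x)^{2}}. \label{eq:kappaHazard}
\end{equation}
This identity does the heavy lifting: it shows immediately that the sign of $\kappa-1$ equals the sign of $-h'$, so distributions with strictly increasing hazard rate (IFR) populate the $\kappa<1$ column and distributions with strictly decreasing hazard rate (DFR) populate the $\kappa>1$ column. Differentiating (\ref{eq:kappaHazard}) gives $\kappa'=\bigl(2(h')^{2}-h\,h''\bigr)/h^{3}$, so monotonicity of $\kappa$ reduces to the sign of $2(h')^{2}-h\,h''$, which for most named distributions has a closed-form expression that is easy to sign.

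With the criterion (\ref{eq:kappaHazard}) in hand, the bulk of the work is a case-by-case verification. For the Normal, Logistic, Gumbel (Type I), reverse Weibull (Type III), Laplace, Weibull with $\alpha>1$ and Gamma with $\alpha>1$ distributions, classical results in reliability theory establish strict IFR with $h(x)\to\infty$ and $h'(x)/h(x)^{2}\to 0^{+}$, so $\kappa<1$ globally and $\kappa$ is monotone increasing in $p$; the Normal and Logistic cases can be checked by hand from explicit hazard formulas, while for Weibull/Gamma with shape $>1$ it follows from the standard fact that the hazard rate is increasing and the derivative of $h'/h^{2}$ is negative under $\alpha>1$. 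For Weibull, Gamma with $\alpha<1$ and Fr\'echet with $\alpha<1$, the hazard rate is eventually decreasing to zero, so $h'/h^{2}$ becomes unbounded negative, yielding $\kappa$ unbounded above, which violates $\kappa<2$ globally. The Fr\'echet $\alpha>1$ case is genuinely parameter-dependent: its hazard is increasing but $h'/h^{2}$ need not be monotone, which I would verify by explicit computation from $F(x)=\exp(-x^{-\alpha})$. The AIDS case with $b<0$ is price-dependent rather than distribution-based and requires a direct computation of $\kappa'$ from its inverse-demand formula in Supplementary Material \ref{sub:LaplaceInverseDemandFunctions}.

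The main obstacle, I expect, is not any individual computation but establishing \emph{global} monotonicity of $\kappa$ for distributions whose hazard rate is increasing but whose $h'/h^{2}$ is not obviously monotone — in particular the Normal, where one must rule out any oscillation of $\kappa'$ using Mills-ratio asymptotics, and the Fr\'echet/Weibull/Gamma families where the monotonicity threshold at $\alpha=1$ must be pinned down. I would handle these by substituting the closed forms of $h$ (e.g., $h(x)=\varphi(x)/(1-\Phi(x))$ for the Normal, with Mills' inequalities $h(x)>x$ and $h'(x)=h(x)\bigl(h(x)-x\bigr)$ for $x>0$) into $2(h')^{2}-h\,h''$ and reducing to an elementary inequality. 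Once these borderline cases are resolved, the remaining entries in Table \ref{taxonomy} follow from the same recipe applied to the hazard formulas collected in Supplementary Material \ref{sub:LaplaceInverseDemandFunctions}. \hfill$\square$
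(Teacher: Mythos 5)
Your hazard-rate reformulation $\kappa = 1 - h'/h^{2}$ is correct and is essentially the same lens the paper uses: the equivalence $\kappa<1 \iff h'>0$ is precisely the log-concavity/IFR criterion that the paper dispatches by citing Bagnoli--Bergstrom (\citet{logconcave}) for its level classification. Your derivation $\kappa' = \bigl(2(h')^{2} - h\,h''\bigr)/h^{3}$ is also correct. Where your approaches diverge is in how the remaining (and much larger) part of the proposition is handled: the paper proves all the $\kappa'$-sign entries distribution-by-distribution with explicit CDF/pdf formulas (Mills' inequality for the Normal, an inequality $e^{-X}<1-X+\tfrac12 X^{2}$ for Fr\'echet with $\alpha>1$, a bound on the upper incomplete Gamma from \citet{gamma} for the Gamma case, and a limit-plus-monotonicity argument for the reverse Weibull), whereas you defer all of that to ``classical results in reliability theory'' and case-by-case computation. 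That deferral is where your plan has a genuine gap.

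There is also a concrete factual error in the plan. You write that the Fr\'echet distribution with shape $\alpha>1$ ``has increasing hazard.'' It does not: from $F(x)=e^{-x^{-\alpha}}$ one checks that $h(x)\to 0$ both as $x\to 0^{+}$ (numerator vanishes faster) and as $x\to\infty$ (where $h(x)\sim \alpha/x$), so the Fr\'echet hazard is unimodal, not monotone. This is exactly why the paper places Fr\'echet$_{\alpha>1}$ in the ``price-dependent'' column for the $\kappa \gtrless 1$ comparison: via your identity, $h'$ changes sign from positive to negative, so $\kappa$ crosses $1$ from below. If the hazard were truly increasing, $\kappa<1$ would hold globally — contradicting both the table and the paper's explicit computation $\lim_{x\to\infty}\kappa(x)=(1+\alpha)/\alpha>1$. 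The identity you set up would have caught this immediately had you applied it; the error suggests you imported a reliability-theory classification without checking it against your own criterion. Beyond that, the heavy lifting for the $\kappa'>0$ row (especially the global monotonicity argument for the Gamma, which in the paper invokes a nontrivial lower bound on $\Gamma(\alpha,x)$) is not performed in your proposal and would need to be carried out substantially as in the paper.
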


\begin{proof} Characterization of the curvature level (comparisons
of $\kappa$ to unity) follow from classic classifications of distributions
as log-concave or log-convex as in \citet{logconcave}, except in
the case of AIDS in which the results are novel.\footnote{We do not classify the slope of pass-through for demand functions
violating declining marginal revenue as this is such a common assumption
that we think such forms would be unlikely to be widely used and because
it is hard to classify the slope of pass-through when it is infinite
over some ranges.} Note that our discussion of stretch parameters in the paper implies
we can ignore the scale parameter of distributions, normalizing this
to $1$ for any distributions which has one. A similar argument applies
to position parameter: because this only shifts the values where properties
apply by a constant, it cannot affect global curvature or higher-order
properties. This is useful because many of the probability distributions
we consider below have scale and position parameters that this fact
allows us to neglect. We will denote this normalization by {\em
Up to Scale and Position} (USP).

We begin by considering the first part of the proof, that for any
shape parameter $\alpha<1$ the Fr\'echet, Weibull and Gamma distributions
with shape $\alpha$ violate DMR at some price. We show this for each
distribution in turn: 
\begin{enumerate}
\item Type II Extreme Value (Fr\'echet) distribution: USP, this distribution
is $F(x)=e^{-x^{-\alpha}}$ with domain $x>0$. Simple algebra shows
that 
\[
\kappa(x)=\frac{(e^{x^{-\alpha}}-1)x^{\alpha}(1+\alpha)+\left(1-e^{x^{-\alpha}}\right)\alpha}{\alpha}.
\]
As $x\rightarrow\infty$ and therefore $x^{-\alpha}\rightarrow0$
(as shape is always positive), $e^{x^{-\alpha}}$ is well-approximated
by its first-order approximation about $0$, $1+x^{-\alpha}$. Therefore
the limit of the above expression is the same as that of 
\[
\frac{x^{-\alpha}x^{\alpha}(1+\alpha)-x^{-\alpha}\alpha}{\alpha}=\frac{1+\alpha+x^{-\alpha}\alpha}{\alpha}\rightarrow\frac{1}{\alpha}+1
\]
as $x\rightarrow\infty$. Clearly, this is greater than $2$ for $0<\alpha<1$
so that for sufficiently large $x$, $\kappa>2$. 
\item Weibull distribution: USP, this distribution is $F(x)=1-e^{-x^{\alpha}}$.
Again algebra yields: 
\[
\kappa(x)=\frac{1-\alpha}{\alpha x^{\alpha}}+1.
\]
Clearly, for any $\alpha<1$ as $x\rightarrow0$ this expression goes
to infinity, so that for sufficiently small $x$, $\kappa>2$. 
\item Gamma distribution: USP, this distribution is $F(x)=\frac{\gamma(\alpha,x)}{\Gamma(\alpha)}$
where $\gamma(\cdot,\cdot)$ is the lower incomplete Gamma function,
$\Gamma(\cdot,\cdot)$ is the upper incomplete Gamma function and
$\Gamma(\cdot)$ is the (complete) Gamma function: 
\begin{equation}
\kappa(x)=\frac{e^{x}(1-\alpha+x)\Gamma(\alpha,x)}{x^{\alpha}}.\label{gammak}
\end{equation}
By definition, $\lim_{x\rightarrow0}\Gamma(\alpha,x)=\Gamma(\alpha)>0$
so 
\[
\lim_{x\rightarrow0}\kappa(x)=+\infty
\]
as $1-\alpha>0$ for $\alpha<1$. Thus clearly for small enough $x$,
the Gamma distribution with shape $\alpha<1$ has $\kappa>2$. 
\end{enumerate}
We now turn to the categorization of demand functions as having increasing
or decreasing pass-through. As price always increases in cost, this
can be viewed as either pass-through as a function of price or pass-through
as a function of cost. 
\begin{enumerate}
\item Normal (Gaussian) distribution: USP, this distribution is given by
$F(x)=\Phi(x)$, where $\Phi$ is the cumulative normal distribution
function; we let $\phi$ denote the corresponding density. It is well-known
that $\Phi''(x)=-x\phi(x)$. Thus 
\[
\kappa(x)=\frac{x\left[1-\Phi(x)\right]}{\phi(x)}.
\]
Taking the derivative and simplifying yields 
\[
\kappa'(x)=\frac{\left[1-\Phi(x)\right]\left(1+x^{2}\right)-x\phi(x)}{\phi(x)},
\]
which clearly has the same sign as its numerator, as $\phi$ is a
density and thus everywhere positive. But a classic strict lower bound
for $\Phi(x)$ is $\frac{x}{1+x^{2}}\phi(x)$, implying $\kappa'>0$. 
\item Logistic distribution: USP, this distribution is $F(x)=\frac{e^{x}}{1+e^{x}}$.
Again algebra yields 
\[
\kappa'(x)=e^{-x}>0.
\]
Thus the logistic distribution has $\kappa'>0$. 
\item Type I Extreme Value (Gumbel) distribution: USP, this distribution
has two forms. For the minimum version it is $F(x)=1-e^{-e^{x}}$.
Algebra shows that for this distribution 
\[
\kappa'(x)=e^{-x}.
\]
Note that this is the same as for the logistic distribution; in fact
$\kappa$ for the Gumbel minimum distribution is identical to the
logistic distribution. This is not surprising given the close connection
between these distributions \citep{mcfadden}.

For the maximum version it is $F(x)=e^{-e^{-x}}$. Again algebra yields
\[
\kappa'(x)=e^{-x}\big(e^{2x}[e^{e^{-x}}-1]-e^{e^{-x}}[e^{x}-1]\big).
\]
For $x<0$ this is clearly positive as both terms are strictly positive:
$1>e^{x}$ and because $e^{-x}>0$, $e^{e^{-x}}>1$. For $x>0$ we
can rewrite $\kappa'$ as 
\[
e^{e^{-x}}\left(e^{x}-1\right)+e^{-x}\left(e^{e^{-x}}-1\right),
\]
which again is positive as $e^{x}>1$ for $x>0$ and $e^{e^{-x}}>1$
by our argument above.

\item Laplace distribution: USP, this distribution is 
\[
F(x)=\left\{ \begin{array}{cc}
1-\frac{e^{-x}}{2} & x\geq0,\\
\frac{e^{x}}{2} & x<0.
\end{array}\right.
\]
For $x>0$, $\rho=1$ (so in this range pass-through is not strictly
increasing). For $x<0$ 
\[
\kappa'(x)=2e^{-x}>0.
\]
So the Laplace distribution exhibits globally weakly increasing pass-through,
strictly increasing for prices below the mode. The curvature for this
distribution is $1-2e^{-x}$ as opposed to $1-e^{-x}$ for Gumbel
and Logistic. However, these are very similar, again pointing out the
similarities among curvature properties of common demand forms. 
\item Type II Extreme Value (Fr\'echet) distribution with shape $\alpha>1$:
From the formula above it is easy to show that the derivative of the
pass-through rate is 
\[
\kappa'(x)=x^{-(1+\alpha)}\Big([1+\alpha]\big[x^{2\alpha}(e^{x^{-\alpha}}-1)-e^{x^{-\alpha}}x^{\alpha}\big]+\alpha e^{x^{-\alpha}}\Big)>0,
\]
which can easily be shown to be positive as follows. Let us multiply
the inequality by the positive factor $\frac{e^{-x^{-\alpha}}}{\alpha+1}$.
Denoting $X\equiv x^{-\alpha}$, the inequality becomes 
\[
\left(\frac{\alpha}{\alpha+1}-\frac{1}{2}\right)+\left(\frac{1}{X^{2}}-\frac{e^{-X}}{X^{2}}-\frac{1}{X}+\frac{1}{2}\right)>0.
\]
The first term is positive because $\alpha>1$. The second term is
positive because $e^{-X}<1-X+\frac{1}{2}X^{2}$ for any $X>0$. Thus
this distribution, as well, has $\kappa'>0$. 
\item Type III Extreme Value (Reverse Weibull) distribution: USP, this distribution
is $F(x)=e^{-(-x)^{\alpha}}$ and has support $x<0$. Algebra shows
\[
\kappa'(x)=(-x)^{\alpha-1}\alpha^{2}\left[1-\alpha+e^{(-x)^{\alpha}}\Big([1-\alpha]\big[(-x)^{\alpha}-1\big]+[-x]^{2\alpha}\alpha\Big)\right],
\]
which has the same sign as 
\begin{equation}
1-\alpha+e^{(-x)^{\alpha}}\Big([1-\alpha]\big[(-x)^{\alpha}-1\big]+[-x]^{2\alpha}\alpha\Big).\label{type3firstderiv}
\end{equation}
Note that the limit of this expression as $x\rightarrow0$ is 
\[
1-\alpha-(1-\alpha)=0
\]
and its derivative is 
\[
\frac{e^{(-x)^{\alpha}}(-x)^{2\alpha}\alpha\big(1+\alpha+[-x]^{\alpha}\alpha\big)}{x},
\]
which is clearly strictly negative for $x<0$. Thus Expression \ref{type3firstderiv}
is strictly decreasing and approaches $0$ as $x$ approaches $0$.
It is therefore positive for all negative $x$, showing that again
in this case $\kappa'>0$. 
\item Weibull distribution with shape $\alpha>1$: As with the Fr\'echet distribution
algebra from the earlier formula shows 
\[
\kappa'(x)=x^{\alpha-1}(\alpha-1)\alpha^{2},
\]
which is clearly positive for $\alpha>1$ as the range of this distribution
is positive $x$. Thus the Weibull distribution with $\alpha>1$ has
$\kappa'>0$. 
\item Gamma distribution with shape $\alpha>1$: Taking the derivative of
Expression \ref{gammak} yields: 
\[
\kappa'(x)=\frac{\alpha-1-x+\frac{e^{x}}{x^{\alpha}}\big(x^{2}-2x[\alpha-1]+[\alpha-1]\alpha\big)\Gamma(\alpha,x)}{x},
\]
which has the same sign as 
\begin{equation}
\alpha-1-x+\frac{e^{x}}{x^{\alpha}}\big(x^{2}-2x[\alpha-1]+[\alpha-1]\alpha\big)\Gamma(\alpha,x),\label{gammaexpression}
\end{equation}
given that $x>0$. Note that as long as $\alpha>1$ 
\[
x^{2}+(\alpha-2x)(\alpha-1)=x^{2}-2(\alpha-1)x+\alpha(\alpha-1)>x^{2}-2(\alpha-1)x+(\alpha-1)^{2}=\left(x+1-\alpha\right)^{2}>0.
\]
Therefore so long as $x\leq\alpha-1$ this is clearly positive. On
the other hand when $x>\alpha-1$ the proof depends on the following
result of \citet{gamma}:

\newtheorem*{g}{Theorem \citep{gamma}}

\begin{g} Let a be a positive parameter, and let q(x) be a function,
differentiable on $(0,\infty)$ , such that $lim_{x\rightarrow\infty}x^{\alpha}e^{-x}q(x,\alpha)=0$.
Let

\[
T(x,\alpha)=1+(\alpha-x)q(x,\alpha)+x\frac{\partial q}{\partial x}(x,\alpha).
\]

If $T(x,\alpha)>0$ for all $x>0$ then $\Gamma(\alpha,x)>x^{\alpha}e^{-x}q(x,\alpha)$.
\end{g}

Letting

\[
q(x,\alpha)\equiv\frac{x-(\alpha-1)}{x^{2}+(\alpha-2x)(\alpha-1)},
\]

\[
T(x,\alpha)=\frac{2(\alpha-1)x}{\big(\alpha^{2}+x[2+x]-\alpha[1+2x]\big)^{2}}>0
\]

for $\alpha>1,x>0$. So $\Gamma(\alpha,x)>x^{\alpha}e^{-x}q(x,\alpha)$.
Thus Expression \ref{gammaexpression} is strictly greater than

\[
\alpha-1-x+x-(\alpha-1)=0
\]

as, again, $x^{2}+(\alpha-2x)(\alpha-1)>0$. Thus again $\kappa'>0$.

\end{enumerate}
This establishes the second part of the proposition. Turning to our
final two claims, algebra shows that the curvature for the Fr\'echet
distribution is

\[
\kappa(x)=\frac{\alpha-e^{x^{-\alpha}}\big(\alpha-x^{\alpha}[1+\alpha]\big)-x^{\alpha}(1+\alpha)}{\alpha}=\frac{\left(1-e^{x^{-\alpha}}\right)\left[\alpha-x^{\alpha}\left(1+\alpha\right)\right]}{\alpha}.
\]

Note for any $\alpha>1$ this is clearly continuous in $x>0$. Now
consider the first version of the expression. Clearly, as $x\rightarrow0$,
$x^{\alpha}\rightarrow0$ and $e^{x^{-\alpha}}\rightarrow\infty$
so the expression goes to $-\infty$. So for sufficiently small $x>0$,
$\kappa(x)<1$. On the other hand, consider the second version of the expression.
Its numerator is

\[
\left(1-e^{x^{-\alpha}}\right)\left[\alpha-x^{\alpha}\left(1+\alpha\right)\right].
\]
By the same argument as above with the Fr\'echet distribution the limit
of the above expression as $x\rightarrow\infty$ is the same as that
of 
\[
\left(-x^{-\alpha}\right)\left(-x^{\alpha}\left(1+\alpha\right)\right)
\]
as $x\rightarrow\infty$. Thus 
\[
\lim_{x\rightarrow\infty}\kappa(x)=\frac{1+\alpha}{\alpha}>1
\]
and thus for sufficiently large $x$ and any $\alpha>1$, this distribution
has $\kappa>1$.

Finally, consider our claim about AIDS. First note that for this demand
function

\[
\kappa(p)=2+\frac{b\big(a-2b+b\log p\big)}{\big(a-b+b\log p\big)^{2}}<1
\]

as $b<0$ and $p\leq e^{-\frac{a}{b}}<e^{2-\frac{a}{b}}$. This is
less than $1$ if and only if 
\[
a^{2}+2ab\big(\log p-2\big)+b^{2}\Big(1+\big[\log(p)-2\big]\log p\Big)<b^{2}\big(2-\log p\big)-ab
\]
or 
\[
\big(a+b\log p\big)^{2}-b^{2}\big(\log p+1\big)<0.
\]
Clearly, as $p\rightarrow0$ the second term is positive; therefore
there is always a price at which $\kappa(p)>1$. On the other hand
as $p\rightarrow e^{-\frac{a}{b}}$ this expression goes to 
\[
0-b^{2}\bigg(1-\frac{a}{b}\bigg)=b(a-b)<0.
\]
Thus there is always a price at which $\kappa(p)<1$. 
\[
\kappa'(p)=b^{2}-\big(a-2b+b\log p\big)^{2},
\]
which has the same sign as 
\[
b^{2}-\big(a-2b+b\log p\big)^{2}<b^{2}-(2b)^{2}=-3b^{2}<0.
\]
Thus $\kappa'<0$.

\end{proof}

We now turn to two important distributions, which are typically used
to model the income distribution, whose behavior is more complex and
which, to our knowledge, have not been analyzed for their curvature
properties. We focus only on the two that we believe to be most common
(the first), best theoretically founded (both) and to provide the
most accurate match to the income distribution (the second). Namely,
we analyze the lognormal and double Pareto-lognormal (dPln) distributions,
the latter of which was proposed by \citet{reed} and \citet{reedjorgensen}.
Other common, accurate models of income distributions which we have
analyzed in less detail, appear to behave in a similar fashion.

We begin with the lognormal distribution, which is much more commonly
used, and for which we have detailed, analytic results. However, while
most of the arguments for the below proposition are proven analytically,
some simple points are made by computational inspection.

\begin{proposition} For every value $\sigma$, there exist finite
thresholds $\overline{y}(\sigma)>\underline{y}(\sigma)$ such that 
\begin{enumerate}
\item If $y\geq\overline{y}(\sigma)$ then $\kappa'\leq0$, and similarly
with strict inequalities or if the directions of the inequalities
both reverse. 
\item If $y\geq\underline{y}(\sigma)$ then $\kappa\geq1$, and similarly
with strict inequalities or if the directions of the inequalities
both reverse. 
\end{enumerate}
Both $\overline{y}$ and $\underline{y}$ are strictly decreasing
in $\sigma$. \end{proposition}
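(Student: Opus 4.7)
The plan is to reparametrize via $z = (\log y - \mu)/\sigma$ (WLOG $\mu = 0$ by the position-scale normalization used earlier), turning the problem on $y \in (0,\infty)$ into an equivalent one on $z \in \mathbb{R}$. A direct computation using $Q(y) = 1 - \Phi(z)$, $Q'(y) = -\phi(z)/(\sigma y)$, and $\phi'(z) = -z\phi(z)$ yields the clean expression
\[
\kappa(y) \;=\; (\sigma + z)\,M(z), \qquad M(z) \;:=\; \frac{1 - \Phi(z)}{\phi(z)},
\]
where $M$ is strictly positive and strictly decreasing on $\mathbb{R}$ (by the Mills inequality $zM(z) < 1$ for $z > 0$ and trivially for $z \leq 0$), satisfies $M'(z) = zM(z) - 1$, and has the asymptotic $M(z) = 1/z - 1/z^3 + O(z^{-5})$. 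Hence $\kappa(z) \to -\infty$ as $z \to -\infty$ and $\kappa(z) = 1 + \sigma/z + O(z^{-2}) \to 1^+$ as $z \to \infty$. Since $y \mapsto z$ is a strictly increasing bijection with $\partial\kappa/\partial y = (d\kappa/dz)/(\sigma y)$, thresholds in $y$ correspond to thresholds in $z$; I work with $\underline{z}(\sigma), \overline{z}(\sigma)$ and recover $\underline{y} = e^{\sigma\underline{z}}$, $\overline{y} = e^{\sigma\overline{z}}$.

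For Part 2, the pivotal observation is that at any $z_0$ satisfying $\kappa(z_0) = 1$, one has $\sigma + z_0 = 1/M(z_0)$, and substituting into $d\kappa/dz = M + (\sigma+z)M'$ while using $M' = zM - 1$ collapses in a single line to
\[
(d\kappa/dz)(z_0) \;=\; M(z_0) - \sigma.
\]
Because $M$ is strictly decreasing, this quantity switches sign at most once (at $z^\star := M^{-1}(\sigma)$), so the signs of $d\kappa/dz$ at successive roots of $\kappa - 1$ cannot alternate arbitrarily. Combined with the one-sided asymptotic $\kappa \to 1^+$ (which would be violated if there were two crossings, since then $\kappa$ would end up below $1$ on the right), a short sign-pattern argument yields exactly one crossing $\underline{z}(\sigma)$, with $\kappa < 1$ below and $\kappa > 1$ above.

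For Part 1, $(d\kappa/dz)(z) = M(z)(z^2 + \sigma z + 1) - (\sigma + z)$ has limits $+\infty$ at $-\infty$ and $0^-$ at $+\infty$, giving existence of a zero. For uniqueness I compute $d^2\kappa/dz^2$, substitute the critical-point identity $M(z_0) = (\sigma + z_0)/(z_0^2 + \sigma z_0 + 1)$, and verify that after cancellation
\[
(d^2\kappa/dz^2)(z_0) \;=\; \frac{\sigma z_0 + \sigma^2 - 2}{z_0^2 + \sigma z_0 + 1}.
\]
A short case check excludes the regimes where the denominator is non-positive (they force $M(z_0) \leq 0$), so the denominator is positive at every valid critical point. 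If two consecutive critical points existed, alternating between maximum and minimum, the numerator $\sigma z_0 + \sigma^2 - 2$ — strictly monotone in $z_0$ — would force the rightmost critical point to be a local minimum, after which $\kappa$ would increase away from $1$, contradicting $\kappa \to 1^+$. Hence $\kappa$ has a unique strict maximum $\overline{z}(\sigma)$, necessarily satisfying $\overline{z}(\sigma) < 2/\sigma - \sigma$; and $\underline{z} < \overline{z}$ follows because $\kappa(\overline{z}) > 1$ (its decreasing tail reaches $1$ only in the limit).

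For the comparative statics, implicit differentiation at each threshold, simplified using the corresponding critical identity, gives $\underline{z}'(\sigma) = -M(\underline{z})/(M(\underline{z}) - \sigma) < 0$ (using $(d\kappa/dz)(\underline{z}) > 0$ from Part 2 since $\underline{z} < \overline{z}$) and $\overline{z}'(\sigma) = (1 - \overline{z}M(\overline{z}))/(d^2\kappa/dz^2)(\overline{z}) < 0$ (Mills inequality in the numerator, quasiconcavity in the denominator). Translating to $y$-space requires $\xi(\sigma) + \sigma\xi'(\sigma) < 0$ where $\xi = \sigma z$; at $\underline{z}$, substituting the explicit $\underline{z}'$ and using $(\sigma + \underline{z})M(\underline{z}) = 1$ reduces this to the single quadratic inequality $\xi^2 - (1-\sigma^2)\xi + \sigma^2 > 0$, whose discriminant $\sigma^4 - 6\sigma^2 + 1$ is negative for $\sigma \in (\sqrt{2}-1,\sqrt{2}+1)$ (making the inequality automatic in that range); outside this range one verifies that $\xi = \sigma\underline{z}$ lies outside the root interval using the asymptotics $\underline{z} \sim 1/\sigma - 2\sigma$ as $\sigma \to 0^+$ and $\underline{z} \sim -\sigma + O(e^{-\sigma^2/2})$ as $\sigma \to \infty$, the analogous calculation at $\overline{z}$ being controlled by the bound $\overline{z} < 2/\sigma - \sigma$. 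The main obstacle is the single-peakedness step — the algebraic cancellation that reduces $(d^2\kappa/dz^2)(z_0)$ at a critical point to the compact quotient above must be verified carefully, since it underpins the entire uniqueness argument; the second-most delicate step is translating $z$-monotonicity of the thresholds to $y$-monotonicity in the small-$\sigma$ regime, where the quadratic inequality becomes tight and requires the asymptotic control rather than a uniform algebraic estimate.
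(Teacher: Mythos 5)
Your proposal is correct, but it takes a genuinely different and in fact stronger route than the paper. The paper's own proof reduces the first part to showing that the function $\frac{1-\Phi}{\phi - y(1-\Phi)} - y$ is strictly decreasing and then candidly states: ``We have not found a simple means to prove this formally, but it is clearly true by inspection of the figure.'' So the published argument has an acknowledged gap filled only by a plot. You replace this with two clean analytic lemmas. For Part~2 you observe that at any level crossing $\kappa(z_0)=1$ the identity $(\sigma+z_0)M(z_0)=1$ collapses $\kappa'(z_0)$ to $M(z_0)-\sigma$, so strict monotonicity of the Mills ratio forbids more than two crossings, and the one-sided asymptotic $\kappa\to 1^{+}$ rules out the second. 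For Part~1 you derive the compact identity $\kappa''(z_0)=\frac{\sigma z_0 + \sigma^2 - 2}{z_0^2 + \sigma z_0 + 1}$ at critical points (I verified the cancellation; it is exact) and use the monotone numerator plus the sign constraint that the denominator is positive at any admissible critical point (because the quadratic's smaller root exceeds $-\sigma$, so a nonpositive denominator would force $M\leq 0$). This gives true quasi-concavity past the first critical point, which the paper never establishes rigorously. Your implicit-differentiation derivation of the threshold comparative statics is also cleaner than the paper's. The one wasted step is the final translation to what you call ``$y$-space'' via $\underline{y}=e^{\sigma\underline{z}}$: the paper's thresholds $\underline{y},\overline{y}$ are already defined in the standardized variable $y=\tfrac{\log p - \mu}{\sigma}$, i.e.\ precisely your $z$, so the monotonicity of $\underline{z}(\sigma)$ and $\overline{z}(\sigma)$ that you prove directly \emph{is} the assertion of the proposition. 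The extra quadratic inequality, discriminant bookkeeping, and small-$\sigma$ asymptotics in your last paragraph address a harder claim the proposition does not make, and can be deleted.
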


\begin{figure}
\begin{centering}
\includegraphics[width=5in]{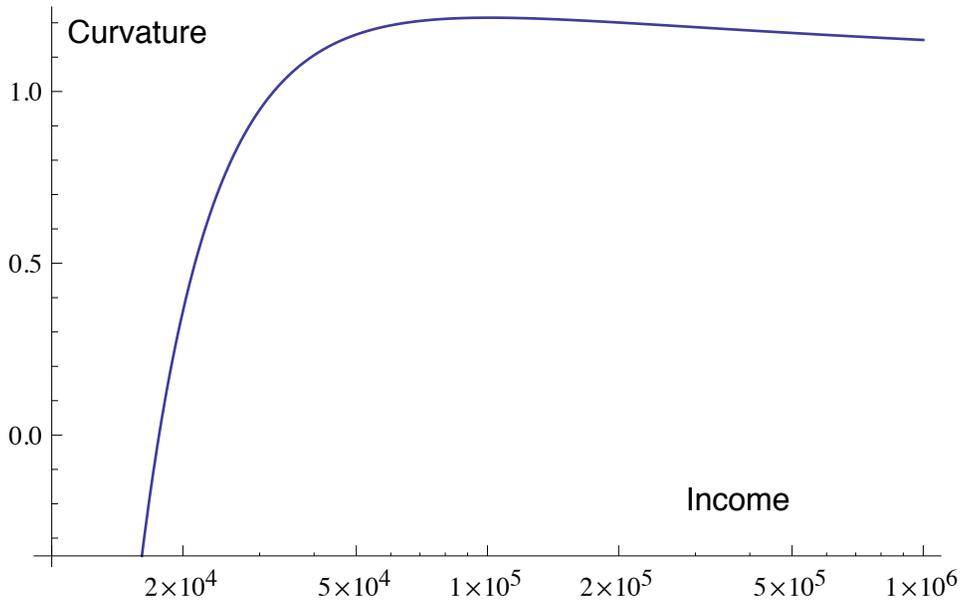} 
\par\end{centering}

\caption{Curvature of a lognormal distribution calibrated to the US income
distribution: parameters are $\mu=10.5$ and $\sigma=0.85$.}

\label{lnincome} 
\end{figure}

Under the lognormal distribution, the behavior depends critically on the
amount of inequality or equivalently the standard deviation of the
logarithm of the distribution: there is famously a one-to-one relationship
between the Gini coefficient associated with a lognormal distribution
and its logarithmic standard deviation. If inequality is not high,
the behavior of curvature like a normal distribution occurs except
at fairly high incomes levels; for a Gini of $.34$, for example,
monotonicity of $\kappa$ is preserved until the top 1\% of the income
distribution and log-concavity outside of the top 30\%. However, if
inequality is sufficiently high, in particular if the Gini coefficient
is above about $.72$, then the lognormal distribution has $\kappa>2$
over some range and then $\kappa$ converges back to $1$ for very
large incomes. This result is not discussed in the proposition but
can easily be seen by inspecting a graph of the expression for $\kappa$
given in the proof of the proposition for various values of $\sigma$
yielding Gini coefficients of various magnitudes around $.72$.

For intermediate levels of inequality between these, like that seen
in nearly every country, the lognormal distribution has curvature
that rises from $-\infty$ to above unity before gradually returning
towards unity. For an example calibrated to the US income distribution
(Figure \ref{lnincome}), the crossing to above unity occurs at an
income of about \$33k, between the mode and the median and the downward
slope begins at about \$100k. Despite this, curvature never falls
below unity again and in fact is at each quantile increasing in $\sigma$
(again, not discussed in the proposition). Again taking the example
of the US income-calibrated distribution, curvature peaks at about
$1.21$ and only falls to $1.20$ by \$200k, eventually leveling out
to about $1.1$ for the extremely wealthy.\footnote{Note, however, that in the true limit as $y\rightarrow\infty$, $\kappa\rightarrow1$.
However, in practice this occurs at such high income levels that the
asymptote to a bit above $1$ is a more realistic representation.} Thus, in practice, curvature is closer to flat at the top than significantly
declining.

\begin{proof} For a lognormal distribution with parameters $(\mu,\sigma)$,
$F(x)=\Phi\left(\frac{\log(x)-\mu}{\sigma}\right)$, so that 
\[
Q(p)=1-\Phi\left(\frac{\log(p)-\mu}{\sigma}\right),Q'(p)=-\frac{\phi\left(\frac{\log(p)-\mu}{\sigma}\right)}{\sigma p}
\]
and 
\[
Q''(p)=-\frac{\phi'\left(\frac{\log(p)-\mu}{\sigma}\right)}{\sigma^{2}p^{2}}+\frac{\phi\left(\frac{\log(p)-\mu}{\sigma}\right)}{\sigma p^{2}}=-\frac{\phi\left(\frac{\log(x)-\mu}{\sigma}\right)}{\sigma^{2}p^{2}}\left(\sigma+\frac{\log(x)-\mu}{\sigma}\right).
\]
where the second equality follows from the identities regarding the
normal distribution from the previous proof and $y\equiv\frac{\log(p)-\mu}{\sigma}$.
Thus 
\begin{equation}
\kappa\left(p(y)\right)=\frac{\left(y+\sigma\right)\left[1-\Phi(y)\right]}{\phi(y)}.\label{lognormalcurve}
\end{equation}
Note that we immediately see, as discussed above, that $\kappa$ increases
in $\sigma$ at each quantile as the inverse hazard rate $\frac{1-\Phi}{\phi}>0$;
similarly, for any quantile associated with $y$, $\kappa\rightarrow\infty$
as $\sigma\rightarrow\infty$ so it must be that the set of $y$ for
which $\kappa>1$ a) exists for sufficiently large $\sigma$ and b)
expands monotonically in $\sigma$. This implies that, if point 2)
of the proposition is true, $\underline{y}$ must strictly decrease
in $\sigma$. This also implies that for sufficiently large $\sigma$,
$\kappa>2$ for some $y$.

Now note that $\lim_{y\rightarrow\infty}\frac{y\left[1-\Phi(y)\right]}{\phi(y)}=1$.
To see this, note that both the numerator and denominator converge
to $0$ as $1-\Phi$ dies super-exponentially in $y$. Applying l'Hospital's
rule: 
\[
\lim_{y\rightarrow\infty}\frac{y\left[1-\Phi(y)\right]}{\phi(y)}=\lim_{y\rightarrow\infty}\frac{1-\Phi(y)-\phi(y)y}{\phi'(y)}=\frac{y\phi(y)-\left[1-\Phi(y)\right]}{y\phi(y)}=\frac{0}{0}.
\]
where the first equality follows from the identity for $\phi'$ we
have repeatedly been using, and from here on we no longer note the
use of. Again applying l'Hospital's rule: 
\[
\lim_{y\rightarrow\infty}\frac{y\left[1-\Phi(y)\right]}{\phi(y)}=\lim_{y\rightarrow\infty}\frac{\phi(y)+y\phi'(y)+\phi(y)}{\phi(y)+y\phi'(y)}=\lim_{y\rightarrow\infty}\frac{2\phi(y)-y^{2}\phi(y)}{\phi(y)-y^{2}\phi(y)}=\lim_{y\rightarrow\infty}\frac{2-y^{2}}{1-y^{2}}=1.
\]
The same argument, but one step less deep, shows that $\lim_{y\rightarrow\infty}\frac{\sigma\left[1-\Phi(y)\right]}{\phi(y)}=0$.
Together these imply that $\lim_{y\rightarrow\infty}\kappa\left(p(y)\right)=1$
and thus that, if $\kappa>1$ at some point, it must eventually decrease
to reach $1$.

Similar methods may be used to show, as discussed in the paper, that
$\kappa\rightarrow-\infty$ as $y\rightarrow-\infty$. Furthermore,
we know from the proof for the normal distribution above that $\frac{y\left[1-\Phi(y)\right]}{\phi(y)}$
is monotone increasing and that $\frac{\sigma\left[1-\Phi(y)\right]}{\phi(y)}$
is monotone decreasing. The latter point implies that the set of $y$
for which $\kappa$ is decreasing must be strictly increasing in $\sigma$
and thus that, if point 1) of the proposition is true, then $\overline{y}$
must strictly decrease in $\sigma$.

All that remains to be shown is that $\kappa$'s comparison to unity
and the sign of $\kappa'$ obey the threshold structure posited. Note
that we only need to show the cut-off structure for $\kappa'$ and
that this immediately implies the structure for $\kappa$, given the
smoothness of all functions involved, because if $\kappa$ increases
up to some threshold and then decreases monotonically while reaching
an asymptote of unity, it must lie above unity above some threshold.
Otherwise, if it ever crossed below unity, it would have to be increasing
in some region to asymptote to unity at very large $p$, violating
the threshold structure for $\kappa'$. Furthermore, the same logic
implies that the region where $\kappa>1$ must be strictly larger
than the region where $\kappa'<0$ (that $\overline{y}>\underline{y}$) as $\kappa$ must rise strictly above unity before sloping strictly
down towards it.

\begin{figure}
\begin{centering}
\includegraphics[width=3in]{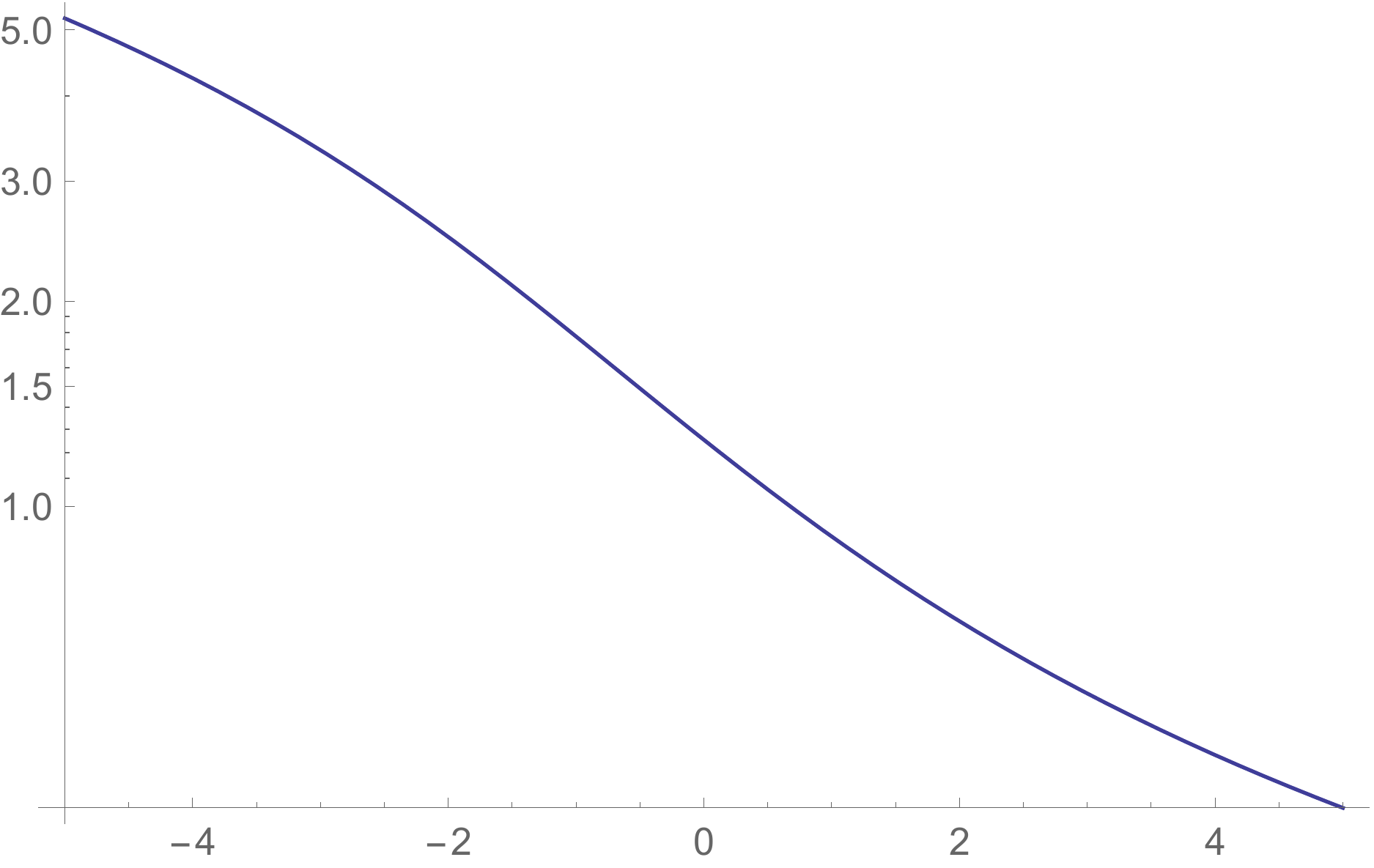} 
\par\end{centering}

\caption{The figure shows the value, in logarithmic scale, of the left-hand
side of Inequality \ref{lognormalcalculations}.}

\label{prooffigure} 
\end{figure}

We drop arguments wherever possible in what follows to ease readability.
We use the symbol $\propto$ to denote expressions having the same
sign, not proportionality as is typical. 
\[
\kappa'=\frac{\left(1-\Phi\right)\phi-(y+\sigma)\phi^{2}-(y+\sigma)(1-\Phi)\phi'}{\phi^{2}}=\frac{1-\Phi-(y+\sigma)\left[\phi-y(1-\Phi)\right]}{\phi}\propto
\]
\[
1-\Phi-(y+\sigma)\left[\phi-y(1-\Phi)\right]\propto\frac{1-\Phi}{\phi-y(1-\Phi)}-y-\sigma.
\]
where the last sign relationship follows by the common inequality
that $\phi(y)>y\left[1-\Phi(y)\right]$. Thus $\kappa'>0$ if and
only if 
\begin{equation}
\frac{1-\Phi}{\phi-y(1-\Phi)}-y>\sigma.\label{lognormalcalculations}
\end{equation}
Figure \ref{prooffigure} shows that the left-hand side of this inequality
is strictly decreasing. We have not found a simple means to prove
this formally, but it is clearly true by inspection of the figure.
Thus the left-hand side of Inequality \ref{lognormalcalculations}
must cross $\sigma$ at most once, and this must be from above to below.

It only remains to show that this expression does, in fact, make such
as single crossing for all values of $\sigma$. It suffices to show
that the small $y$ limit of the left-hand side of inequality \ref{lognormalcalculations}
is $\infty$ and that its large $y$ limit is $0$. We show these
in turn.

The first claim is easy: clearly $-y\left(1-\Phi\right)\rightarrow\infty$,
while $1-\Phi$ is finite, as $y\rightarrow-\infty$. Thus the first
term approaches $0$ and the second $\infty$ as $y\rightarrow-\infty$.

The second claim is more delicate. The expression is the same as 
\[
\frac{\left(1-\Phi\right)\left(1+y^{2}\right)-y\phi}{\phi-y\left(1-\Phi\right)}.
\]
This asymptotes to the indefinite expression $\frac{0}{0}$ as $y\rightarrow\infty$
as it is well-known that $\lim_{y\rightarrow\infty}\frac{\phi}{y\left(1-\Phi\right)}=1$.
Applying l'Hospital's rule yields 
\[
\lim_{y\rightarrow\infty}\frac{\left(1-\Phi\right)\left(1+y^{2}\right)-y\phi}{\phi-y\left(1-\Phi\right)}=\lim_{y\rightarrow\infty}\frac{-\phi\left(1+y^{2}\right)+2y\left(1-\Phi\right)-\phi-y\phi'}{\phi'-\left(1-\Phi\right)+y\phi}=
\]
(applying now-familiar tricks) 
\[
\lim_{y\rightarrow\infty}2\frac{\phi-y(1-\Phi)}{1-\Phi}=\frac{0}{0}.
\]
Again, we apply l'Hospital's rule: 
\[
\lim_{y\rightarrow\infty}2\frac{\phi-y(1-\Phi)}{1-\Phi}=\lim_{y\rightarrow\infty}2\frac{\phi'-(1-\Phi)+y\phi}{\phi}=\lim_{y\rightarrow\infty}-\frac{1-\Phi}{\phi}=0.
\]
\end{proof}

Even the slight decline in the lognormal distribution's curvature
at very high incomes is an artifact of its poor fit to incomes distributions
at very high incomes. It is well-known that at very high incomes the
lognormal distribution fits poorly; much better fit is achieved by
distributions with fatter (Pareto) tails, especially in countries
with high top-income shares like the contemporary United States \citep{topincomes}.
A much better fit is achieved by the dPln distribution \citep{reed}.
Figure \ref{income}'s left panel shows curvature as a function of
income for the parameters \citeauthor{reed} estimates (for the 1997
US income distribution). Curvature monotonically increases up the
income distribution.

\begin{figure}
\begin{centering}
\includegraphics[width=3in]{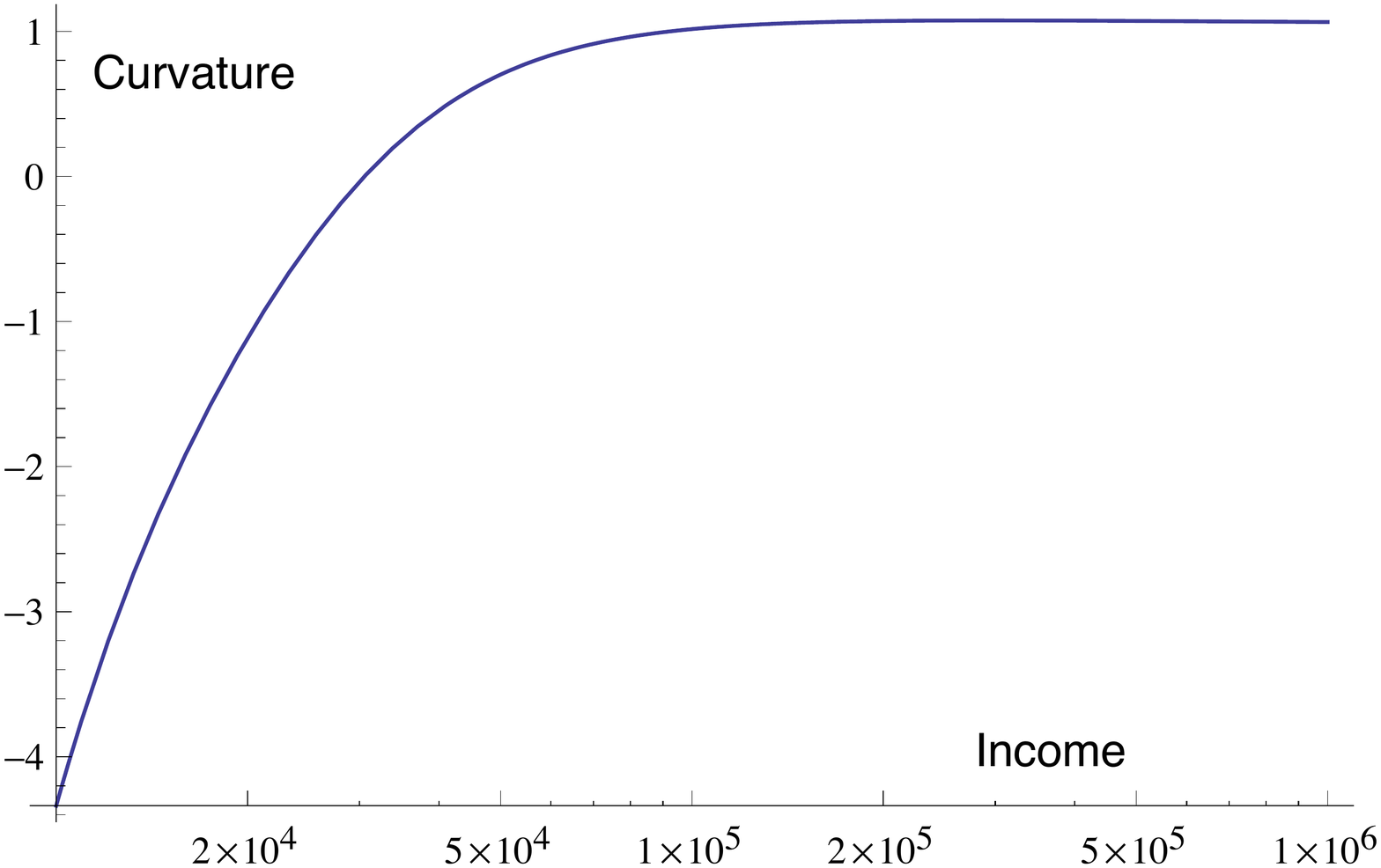} \includegraphics[width=3in]{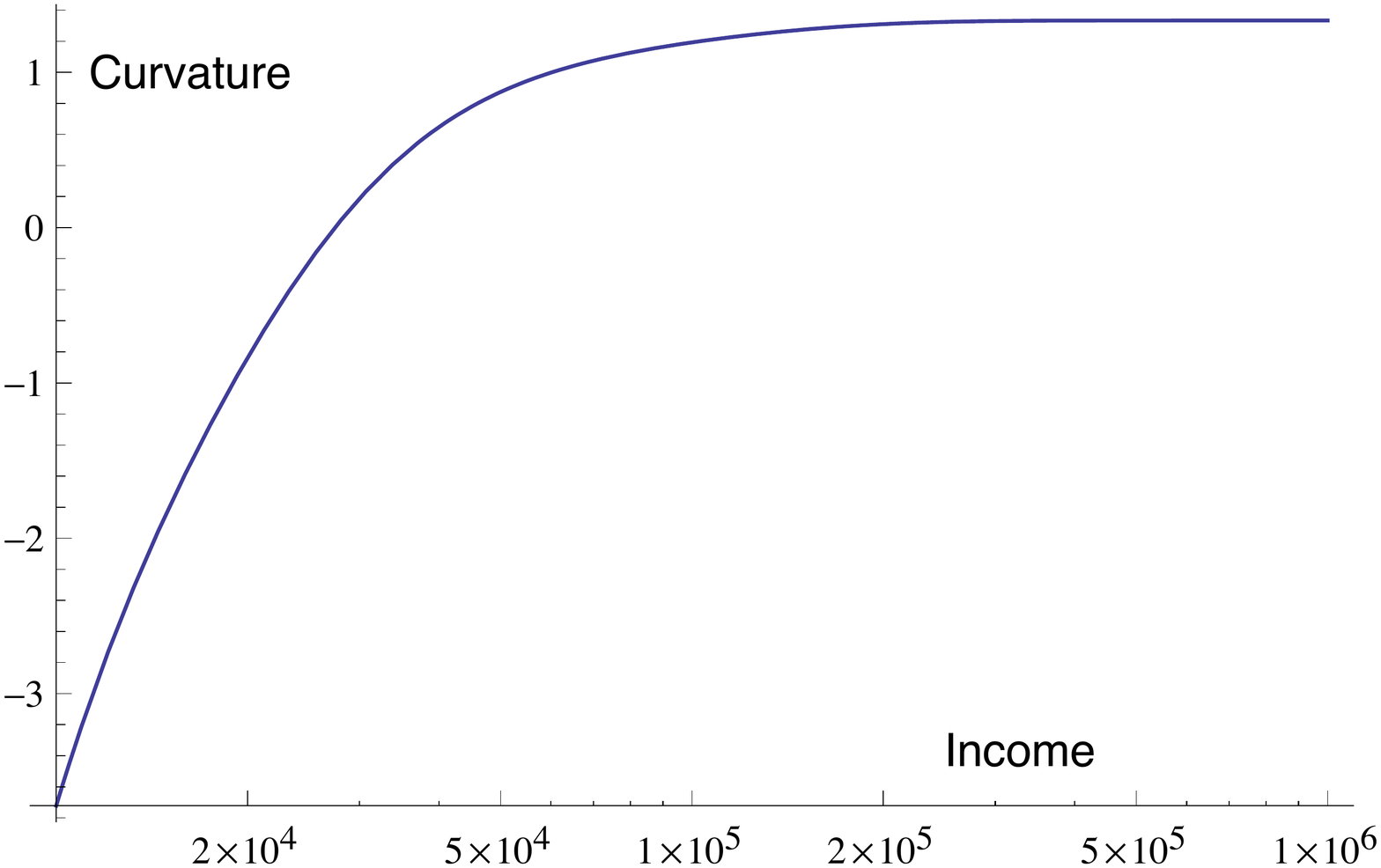} 
\par\end{centering}

\caption{Curvature of the double-Pareto lognormal distribution lognormal under
parameters estimated by \citep{reed} (left) and by updated by us
(right); parameters in the former case are $\alpha=22.43,\beta=1.43,\mu=10.9,\sigma=0.45$
ad in the latter case are $\alpha=3,\beta=1.43,\mu=10.9,\sigma=0.5$.
The x-axis has a logarithmic scale in income.}

\label{income} 
\end{figure}

However, it levels off at quite moderate income (it is essentially
flat beyond \$100k) and at a lower level ($\approx1.04$) than under
the lognormal calibration, except at exorbitant incomes, where the
lognormal distribution has thin tails. Thus it actually has a {\em
thinner} tail, except at the very extreme tail, than the lognormal
calibration, paradoxically. This is because \citeauthor{reed} calibrated
only to the mid-section of the US income distribution, given that
the survey he used is notoriously thin and inaccurate at higher incomes;
this led him to estimate a very high (thin-tailed) Pareto coefficient
in the upper tail of $22.43$. Consensus economic estimates, for example
\citet{progressive}, suggest that 1.5-3 is the correct range
for the Pareto coefficient of the upper tail of the income distribution
in the 2000's.

We therefore construct our own calibration consistent with that finding.
To be conservative we set the upper tail Pareto coefficient to $3$,
maintain $\beta=1.43$ to be consistent with \citeauthor{reed} and
because the lower-tail is both well-measured in his data and has not
changed dramatically in the last decade and a half \citep{striking}.
We then adjust $\mu$ and $\sigma$ in the unique way, given these
coefficients, to match the latest US post-tax Gini estimates ($.42$),
using a formula derived by \citet{dPlnglobal}, and average income
(\$53k). This yields the plot in the right panel of Figure \ref{income}.
There curvature continues to monotonically increase at a significant
rate up to quite high incomes: at \$50k it is $.87$, at \$100k it
is $1.19$ and by \$200k it has leveled off at $1.31$, near its asymptotic
value of $1+\frac{1}{\alpha}=\frac{4}{3}$. It is this last calibration
that we use to represent the dPln calibration US income distribution
in the paper.

Moreover, the monotone increasing nature of curvature is not only
true in the US data. While we have not been able to prove any general
results about this four-parameter class, we have calculated similar
plots to Figure \ref{income} for every country for which a dPln income
distribution has been estimated, as collected by \citeauthor{dPlnglobal}.
In every case curvature is monotone increasing in income, though in
some cases it levels off at a quite low level of income (typically
when the Gini is high relative to the upper tail estimate). Even this
leveling off seems to us likely to be a bit of an artifact, arising
from the lack of reliable top incomes tax data in many of the developing
countries on which \citeauthor{dPlnglobal} focus. In any case, it
appears that a ``stylized fact'' is that a reasonable model of most
country's income distributions has curvature that is significantly
below unity among the poor, rises above unity for the rich and monotone
increasing over the full range so long as top income inequality is
significant relative to overall inequality.

\end{document}